\title{The Proof Analysis Problem\footnote{An extended abstract appeared in the 66th Annual Symposium on Foundations of Computer Science (FOCS 2025) \cite{AAdRK25-FOCS}.}}
\author{    Noel Arteche\footnote{Lund University and University of Copenhagen, \email{noel.arteche@cs.lth.se}} 
        \and Albert Atserias\footnote{Universitat Politècnica de Catalunya and Centre de Recerca Matemàtica, \email{atserias@cs.upc.edu}}
        \and Susanna F. de Rezende\footnote{Lund University, \email{susanna.rezende@cs.lth.se}}
        \and Erfan Khaniki\footnote{University of Oxford, \email{erfan.khaniki@cs.ox.ac.uk}}
    }
\date{}
\begin{document}
\maketitle

\thispagestyle{empty}

\begin{abstract}
Atserias and Müller (\emph{JACM}, 2020) proved that for every unsatisfiable CNF formula~$\varphi$, the formula $\Reff(\varphi)$---stating that \say{$\varphi$ has small Resolution refutations}---does not have subexponential-size Resolution refutations. Conversely, when $\varphi$ is satisfiable, Pudlák (\emph{TCS}, 2003) showed how to construct a polynomial-size Resolution refutation of~$\Reff(\varphi)$ given a satisfying assignment of $\varphi$. A question that had remained open is: \emph{do all short Resolution refutations of~$\Reff(\varphi)$ explicitly leak a satisfying assignment of~$\varphi$?}

We answer this question affirmatively by providing a polynomial-time algorithm that extracts a satisfying assignment for $\varphi$ given any short Resolution refutation of $\Reff(\varphi)$. The algorithm follows from a new feasibly constructive proof of the Atserias--Müller lower bound, formalizable in Cook's theory $\PVO$ of bounded arithmetic. This implies that Extended Frege can efficiently prove (a suitable formalization of the statement) that automating Resolution is $\NP$-hard.

Motivated by this algorithm, we introduce a new meta-computational problem concerning Resolution lower bounds: the
\emph{Proof Analysis Problem} ($\PAP$). For a fixed proof system~$Q$, the Proof Analysis Problem for $Q$ asks, given a CNF formula $\varphi$ and a $Q$-proof of a Resolution lower bound for $\varphi$, encoded as $\neg \Reff(\varphi)$, whether $\varphi$ is satisfiable. In contrast to the Proof Analysis Problem for Resolution, which is in $\P$, we prove that $\PAP$ for Extended Frege ($\EF$) is $\NP$-complete. In particular,~$\EF$ can prove Resolution lower bounds on satisfiable formulas without necessarily revealing a satisfying assignment.

Our results yield new insights into proof search and the meta-mathematics of Resolution lower bounds: (i) for every proof system that simulates $\EF$ as well as for Resolution, the system is (weakly) automatable if and only if it can be (weakly) automated exclusively on formulas stating Resolution lower bounds; (ii) we provide explicit $\Reff$ formulas that are exponentially hard for bounded-depth Frege systems; and (iii) for every strong enough theory of arithmetic $T$ we construct explicit unsatisfiable CNF formulas that are exponentially hard for Resolution but for which $T$ cannot prove even a quadratic Resolution lower bound. This latter result applies to arbitrarily strong theories like $\mathsf{PA}$ or $\mathsf{ZFC}$, and does not require any complexity-theoretic assumptions.
\end{abstract}

\newpage
\thispagestyle{empty}
\tableofcontents

\newpage
\setcounter{page}{1}

\section{Introduction}
\label{sec:intro}

The most natural computational problem arising in proof complexity is that of \emph{proof search}: what is the complexity of finding proofs? In the late 90s, the notion of \emph{automatability}, defined by Bonet, Pitassi, and Raz \cite{BPR00}, emerged as a central concept in the theory of propositional proof complexity. A proof system $Q$ is \emph{automatable} if there is a deterministic algorithm that finds a $Q$-proof of a formula $\varphi$ in time polynomial in the shortest one available. Except for Tree-like Resolution, which is automatable in quasi-polynomial time \cite{BP96}, no other non-trivial proof system is known to be automatable in polynomial or quasi-polynomial time.

Krajíček and Pudlák \cite{KP98} and Bonet, Pitassi, and Raz \cite{BPR00} proved that under standard worst-case number-theoretic assumptions in cryptography, strong proof systems like $\TC^0$-Frege and Extended Frege are not automatable. These results can be transferred to $\ComplexityFont{AC}^0$-Frege under slightly stronger hardness assumptions \cite{BDGMP04}, but it seems hard to push them further. Their proof techniques require some amount of basic number theory to be formalized in the system, something that is likely unworkable for Resolution. Since then, efforts focused on showing the hardness of automating Resolution and related weak systems \autocite{Pudlak03, AR08, GL10, AM11, PH11, Atserias13, BPT14, MPW19}, culminating in the final answer by Atserias and Müller \cite{AM20}, who proved that Resolution is not automatable unless~$\P = \NP$. This is the optimal hardness assumption since $\P = \NP$ implies the automatability of any proof system.

The technique used in~\cite{AM20} relies on the insight that Resolution \emph{cannot reason about its own lower bounds}. To every CNF formula $\varphi$, they associate a new formula $\Reff_{s}(\varphi)$ that encodes the statement \say{there is a size-$s$ Resolution refutation of $\varphi$}. As a tautology, $\neg \Reff_s(\varphi)$ is a natural propositional encoding of a Resolution lower bound. (We postpone to the preliminaries the details of the encoding of the $\Reff$ formula we use, where we also discuss previously studied variations.)

Pudlák~\cite{Pudlak03} had shown already in 2003 that whenever $\varphi$ is satisfiable, the formula~$\Reff_s(\varphi)$ is easily refutable by Resolution. On the other hand, Atserias and Müller~\cite{AM20} proved that whenever $\varphi$ is unsatisfiable, Resolution will require exponential size to refute $\Reff_s(\varphi)$, for $s$ being some fixed polynomial in the number of variables of $\varphi$, which we omit in the subscript for the rest of this introduction for the sake of clarity.

As a consequence, an automating algorithm running on formulas of the form $\Reff(\varphi)$ can be used to decide $\SAT$ in polynomial time: if~$\varphi \in \SAT$, then the algorithm must find a short refutation of $\Reff(\varphi)$ that Pudlák guarantees must exist; on the other hand, if $\varphi \not\in\SAT$, then there are no short refutations of $\Reff(\varphi)$, so we can stop the automating algorithm after a polynomial number of steps and be certain that $\varphi$ is unsatisfiable.

The proof strategy behind the Resolution lower bound on $\Reff$ formulas was soon adapted to a variety of weak proof systems (those where unconditional size lower bounds are known) \autocite{B20, BY24, G20, GKSMP20, dRGNPRS21, IR22, P23}, although the $\Reff$-like formulas used in these spin-off results are no longer natural lower-bound statements for these systems. In general, as pointed out by Pudlák, the question of whether a proof system can prove any of its own lower bounds \say{is widely open, except for Resolution, and we consider it more important than automatability} \cite[3]{Pudlak20}. It is currently open, for example, whether systems like constant-depth Frege have polynomial-size proofs of any of their own lower bounds.

The feat of the Resolution lower bound on $\Reff$ formulas, combined with the upper bound for satisfiable formulas, implies that Resolution can \emph{only} reason about \say{trivial} Resolution lower bounds (i.e., lower bounds on satisfiable formulas, which do not have refutations of any size). This highlights the upper bound construction as something even more remarkable, given that Resolution cannot efficiently prove its own soundness \cite{AB04}. Intriguingly, the known upper bound for $\Reff(\varphi)$ for satisfiable $\varphi$ crucially relies on Resolution guessing a satisfying assignment and using it as the backbone of the refutation. It is then natural to ask whether this is necessary:

\begin{enumerate}
	\item[($Q_1$)] \label{hello} \emph{Is it the case that whenever there is a short Resolution refutation of $\Reff(\varphi)$, the proof must \say{leak} a satisfying assignment?}
\end{enumerate}

By \say{leaking} we mean that a satisfying assignment is always readable in polynomial time from the given refutation. It is important to note that given a refutation $\pi$ of $\Reff(\varphi)$, one cannot simply restrict~$\pi$ in a way that corresponds to $\Reff(\varphi_{\restriction x_1 = 0})$ and $\Reff(\varphi_{\restriction x_1 = 1})$ to extract a satisfying assignment. This is because the variables of $\varphi$ are \emph{not} variables of $\Reff(\varphi)$. In principle, such a self-reducibility trick seems to require access to an automating algorithm, so that one could successively look for refutations of $\Reff(\varphi_{\restriction x_1 = 0})$ or $\Reff(\varphi_{\restriction x_1 = 1})$, then $\Reff(\varphi_{\restriction x_1 = b_1,x_2=0})$ or $\Reff(\varphi_{\restriction x_1 = b_1,x_2=1})$, and so on for all variables. Without access to an automating algorithm, it is not at all clear whether satisfying assignments can be extracted efficiently.

Yet another way of phrasing the lower bound on $\Reff$ formulas is to see it as the correctness proof of a \emph{lower bound analysis algorithm}. Namely, the result proves that there is an algorithm that given a Resolution refutation $\pi$ of~$\Reff(\varphi)$ decides whether $\varphi$ is satisfiable. The algorithm consists simply of checking whether $\pi$ is correct and short enough. The correctness of this procedure requires the proof of the lower bound, and this framing naturally leads to the following second natural question regarding $\Reff$ formulas:

\begin{enumerate}
    	\item[($Q_2$)] \emph{Is there an algorithm that given an Extended Frege proof $\pi$ of a Resolution lower bound $\neg \Reff(\varphi)$ decides in polynomial time whether $\varphi$ is satisfiable?}
\end{enumerate}

If the answer were affirmative, this would settle the long-standing open problem of the $\NP$-hardness of automating Extended Frege: given a CNF formula $\varphi$, construct the formula $\Reff(\varphi)$ and run the automating algorithm to find a short Extended Frege refutation. If an algorithm as the one asked for in ($Q_2$) existed, then we could apply it on this refutation to \emph{analyze} whether $\varphi$ is satisfiable. This distills the main idea in~\cite{AM20}, and the framing of the question in terms of algorithm design suggests that such an algorithm might well be possible \emph{without the need for unconditional Extended Frege lower bounds}.

Overall, the two questions ($Q_1$) and ($Q_2$) above hint at the central role of meta-mathematical lower bound statements in the theory of efficient proof search. We believe this calls for a deeper structural understanding that could lead to much-needed conceptual insights in automatability.

\subsection{Contributions}
\label{subsec:contributions}
Motivated by questions $(Q_1)$ and $(Q_2)$ above, we introduce a new meta-computational problem relating proofs and computation: the \emph{Proof Analysis Problem}.

For every propositional proof system $Q$, the \emph{Proof Analysis Problem for $Q$} ($\PAP_Q$) consists in analyzing Resolution lower bounds proven by $Q$. More formally, given a CNF formula $\varphi$ and a $Q$-proof of the Resolution lower bound encoded by the formula $\neg\Reff(\varphi)$, the task is to decide whether $\varphi$ is satisfiable.

\problemStatement{The Proof Analysis Problem for $Q$ ($\PAP_Q$)}{A CNF formula $\varphi$, a size parameter $s$ in unary and a $Q$-proof $\pi$ of the formula $\neg \Reff_{s}(\varphi)$.}{Is $\varphi$ satisfiable?}

The problem can be seen as the computational task of distinguishing \say{true} Resolution lower bounds (those where $\varphi$ is actually unsatisfiable) from \say{trivial} ones (those where the lower bound trivially holds because $\varphi$ is satisfiable and there is therefore no Resolution refutation, of any size). For those proof systems for which $\PAP_Q \in \P$, we say that $Q$ is \emph{analyzable}. We remark that the $\Reff$ formula in the definition of $\PAP_Q$ is always referring to Resolution refutations, while the proof system $Q$ where $\Reff(\varphi)$ is being derived can be arbitrarily strong.

For the case of Resolution itself, the problem $\PAP_\Res$ is easy to compute thanks to the lower bound on $\Reff$ formulas \cite{AM20}: if $\pi$ is a correct refutation of $\Reff(\varphi)$ and it is small, then $\varphi$ must be satisfiable. Until now, however, to the best of our knowledge this was the extent of what could be said about $\PAP$-like problems. In particular, we are not aware of any other upper or lower bounds on this problem for proofs systems other than Resolution.

In the language of $\PAP$, questions ($Q_1$) and ($Q_2$) can be neatly rephrased as follows:
\begin{enumerate} \itemsep=0pt
    \item[($Q_1$)] Does $\PAP_\Res$ admit a search-to-decision reduction?
    \item[($Q_2$)] Is Extended Frege analyzable? Namely, is $\PAP_\EF$ in $\P$?
\end{enumerate}

In this work we kick-start the systematic study of these Proof Analysis Problems and settle questions ($Q_1$) and ($Q_2$) above. This in turn yields a series of interesting consequences for the meta-mathematics of proof complexity lower bounds as well as proof search. We outline our results next.

\subsubsection{An algorithm for assignment extraction}
On the topic of question ($Q_1$), our main result is that the search version of $\PAP_\Res$ can be solved deterministically in polynomial time.

\begin{theorem}[Assignment extraction algorithm, informal]
    \label{thm:algo-informal}
    The search version of the Proof Analysis Problem for Resolution can be solved in deterministic polynomial time whenever the size parameter $s$ is at least $n^3$. That is, there is an algorithm that, given a CNF formula $\varphi$ over $n$ variables and $\poly(n)$ clauses and a Resolution refutation $\pi$ of $\Reff_{s}(\varphi)$ with $s \geq n^3$, extracts a satisfying assignment for $\varphi$ in time polynomial in $n$, $s$ and the size $|\pi|$ of $\pi$, whenever $\varphi$ is satisfiable.
\end{theorem}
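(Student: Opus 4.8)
The plan is to obtain the algorithm as the effective content of a new, \emph{feasibly constructive} proof of the Atserias--Müller lower bound. In contrapositive form that theorem reads: if $\Reff_{s}(\varphi)$ has a (small) Resolution refutation, then $\varphi$ is satisfiable; what we want is a polynomial-time witnessing of this implication, that is, a procedure that turns \emph{any} Resolution refutation $\pi$ of $\Reff_{s}(\varphi)$ into a satisfying assignment of $\varphi$, in time polynomial in $n$, $s$ and $|\pi|$. The first step is therefore to revisit the Atserias--Müller argument and pin down exactly where, and in what form, the hypothesis ``$\varphi$ is unsatisfiable'' is used.

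The point is that this hypothesis enters only through a soundness/extendability feature of the combinatorial structure underlying $\Reff_{s}(\varphi)$: informally, that any locally consistent partial description of a Resolution refutation of $\varphi$ can always be extended further, which is precisely what breaks once $\varphi$ has a satisfying assignment $\alpha$, since by soundness every line of a refutation of $\varphi$ would be satisfied by $\alpha$ whereas the empty clause is not. I would re-cast the lower bound so that this feature is used in the shape of an explicit, polynomial-size combinatorial object --- a Prover--Delayer/adversary strategy on the variables of $\Reff_{s}(\varphi)$, or a closure operator on partial assignments --- rather than via a probabilistic or purely existential argument; any random restriction in the original proof would be made explicit (or shown derandomizable here, since the events that matter depend only on polynomial-size data), and if the lower bound passes through size--width one would first apply the \emph{constructive} direction of size--width to $\pi$. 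It is this reorganization that makes the whole argument formalizable in $\PVO$.

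The algorithm then executes this constructive argument with the given $\pi$ playing the role of the refutation being analyzed: it processes $\pi$ while maintaining, step by step, a polynomial-size partial description of a purported Resolution refutation of $\varphi$ that is consistent with $\pi$, trying to extend it as the ``Delayer'' would. Because $\varphi$ is satisfiable the extendability feature must fail, so at some step the process is driven into a configuration whose only obstruction is an assignment of the $\varphi$-variables that simultaneously satisfies all axioms of $\varphi$ invoked so far --- that is, a satisfying assignment of $\varphi$ --- which is decoded locally and returned. Every object handled (the structure of $\Reff_{s}(\varphi)$, the partial descriptions, the processed part of $\pi$) has size polynomial in $n$, $s$ and $|\pi|$, which gives the claimed running time; and $s \geq n^3$ is exactly the regime in which an $s$-line refutation has enough room to carry the construction out --- the same regime in which Pudlák's upper-bound refutation of $\Reff_{s}(\varphi)$ lives.

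I expect the main obstacle to be the combination of the second and third steps: re-engineering the Atserias--Müller encoding and its analysis so that the partial ``refutation of $\varphi$'' committed to by the argument is always a genuine polynomial-size object from which, at the moment extendability fails, a satisfying assignment is \emph{locally decodable}, and doing this uniformly enough that all the size bookkeeping and the correctness proof survive in $\PVO$. A secondary difficulty is making extraction robust for \emph{arbitrary} refutations $\pi$, not only the short ones excluded by the original lower bound, so that an assignment is always produced when $\varphi$ is satisfiable; and one must also check that the non-constructive ingredients of the original proof --- random restrictions, and any counting or size--width step --- can genuinely be replaced without degrading the bound.
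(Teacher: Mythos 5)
Your proposal matches the paper's approach: derandomize the Atserias--Müller random-restriction step via a greedy, instance-specific restriction that reduces block-width (at the cost of requiring $s \geq n^3$ rather than $n^2$), then recast the block-width lower bound as a Prover--Delayer traversal of $\pi$ that, when forced to fail, hands back either a wide clause or a block on layer $n$ encoding a satisfying assignment of $\varphi$; the final algorithm brute-forces SAT in the large-$|\pi|$ regime to handle arbitrary refutations. The only ingredient you list that the paper deliberately avoids is routing through size--width (which is not formalizable in $\PVO$ since narrow proofs can be superpolynomially larger), but you flag this only as a conditional option, and your default plan of making the restriction explicit is what the paper actually does.
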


The question can be stated more formally in terms of \emph{Levin reductions}. A Levin reduction between search problems $R_1$ and $R_2$ is a Karp-style many-one reduction that maps instances of $R_1$ to instances of $R_2$, with the additional property that it also maps solutions of $R_1$ to solutions of $R_2$, and back. The reduction $\varphi \mapsto \Reff(\varphi)$ showing that SAT reduces to the Proof Size Problem for Resolution with an exponential gap is clearly Levin in one direction: given a satisfying assignment of $\varphi$, Pudlák's construction can craft a refutation of $\Reff(\varphi)$. However, it had remained open whether 
this Levin reduction could be made two-way: given a refutation $\pi$ of $\Reff(\varphi)$, can one always extract a satisfying assignment to $\varphi$ in polynomial time?

For most if not all natural $\NP$-complete languages, the corresponding search problems tend to be complete under Levin reductions. However, the same decision problem could admit different search problems associated to it, and it is known that if $\P \neq \NP \cap \coNP$, then there are $\NP$ search problems that do not reduce to each other under Levin reductions, while their decision versions are $\NP$-complete (and hence do reduce to each other) under Karp reductions (see, for example, \cite{KM00, FFNR03}). To the best of our knowledge, until now the only natural examples of candidates to be $\NP$-hard search problems without Levin reductions were precisely certain problems arising in the context of meta-complexity. One is the Minimum Circuit Size Problem ($\MCSP$), for which Mazor and Pass \cite{MP24} recently proved that a certain gap version is \emph{not} $\NP$-complete under Levin reductions, assuming the existence of indistinguishability obfuscation (iO). The other candidate was precisely the reduction from SAT to the Proof Size Problem for Resolution. \Cref{thm:algo-informal} settles this, giving a two-way Levin reduction. 

The existence of the extraction algorithm answers question ($Q_1$) in the affirmative: Resolution refutations of $\Reff_{s}(\varphi)$ \emph{must leak a satisfying assignment}. This has a certain information-theoretic flavor: the fact that satisfying assignments can \emph{always} be efficiently extracted implies that the most succinct description of a refutation of $\Reff_{s}(\varphi)$ must include the description of a satisfying assignment for $\varphi$. We can make this precise in the language of Kolmogorov complexity using the framework of \emph{information efficiency} of \citeauthor{Krajicek22} \cite{Krajicek22}, who studied the minimum time-bounded Kolmogorov complexity ($\Kt$) of propositional proofs.

\begin{theorem}[Assignment extraction as information efficiency, informal]
    \label{thm:krajicek-informal}
    For every satisfiable CNF formula~$\varphi$ over $n$ variables and $\poly(n)$ clauses,
    \[ \info_\Res(\neg\Reff(\varphi)) \approx \min  \{ \Kt(\alpha \mid \varphi) \mid \varphi(\alpha)=1 \},\]
    where $\info_Q(\psi) \coloneq \min \{ \Kt(\pi \mid \psi) \mid \pi : Q \vdash \psi \}$ is Krajíček's information efficiency function. 
\end{theorem}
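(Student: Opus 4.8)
The plan is to establish the two inequalities hidden inside ``$\approx$'' separately, in both cases exploiting the elementary fact that a polynomial-time computable transformation changes $\Kt$-complexity only by a constant additive term in the program length together with an additive $O(\log(\text{input length}))$ in the logarithmic-time contribution. Fix the size parameter $s$ to be the fixed polynomial in $n$ coming from the Atserias--Müller bound, enlarged if necessary so that $s \ge n^3$, so that \Cref{thm:algo-informal} applies; recall also that $\varphi$ is recoverable in time $\poly(n)$ from the syntactic object $\neg\Reff(\varphi)$, since the clauses of $\varphi$ appear among the axioms allowed by the $\Reff$ encoding. Both sides of the asserted equality are finite: the right-hand side because $\varphi$ is satisfiable, and the left-hand side because Pudlák's construction \cite{Pudlak03} produces a polynomial-size Resolution refutation of $\Reff(\varphi)$ from any satisfying assignment of $\varphi$.

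\textbf{The direction $\info_\Res(\neg\Reff(\varphi)) \lesssim \min_\alpha \Kt(\alpha \mid \varphi)$.} Let $\alpha$ be a satisfying assignment of $\varphi$ minimizing $\Kt(\alpha \mid \varphi)$, witnessed by a program $p$ that outputs $\alpha$ on input $\varphi$ within $t$ steps, so $\Kt(\alpha \mid \varphi) = |p| + \log t$. Consider the program $p'$ that, on input $\neg\Reff(\varphi)$, reads off $\varphi$, runs $p$ to obtain $\alpha$, and then runs Pudlák's polynomial-time construction on $(\varphi, \alpha)$, outputting a Resolution refutation $\pi$ of $\Reff(\varphi)$. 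Then $|p'| \le |p| + O(1)$ and $p'$ halts within $t + \poly(n)$ steps, so $\info_\Res(\neg\Reff(\varphi)) \le \Kt(\pi \mid \neg\Reff(\varphi)) \le |p| + \log(t + \poly(n)) + O(1) \le \Kt(\alpha \mid \varphi) + O(\log n)$.

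\textbf{The direction $\min_\alpha \Kt(\alpha \mid \varphi) \lesssim \info_\Res(\neg\Reff(\varphi))$.} Let $\pi$ be a Resolution refutation of $\Reff(\varphi)$ minimizing $\Kt(\pi \mid \neg\Reff(\varphi))$, witnessed by a program $q$ running in $\tau$ steps, so $\info_\Res(\neg\Reff(\varphi)) = |q| + \log\tau$ and in particular $\tau \ge |\pi|$. Consider the program $q'$ that, on input $\varphi$, builds $\neg\Reff(\varphi)$ in time $\poly(n)$, runs $q$ to recover $\pi$, and then runs the extraction algorithm of \Cref{thm:algo-informal} on $(\varphi, \pi)$, outputting a satisfying assignment $\alpha$ of $\varphi$ (the algorithm succeeds since $s \ge n^3$ and $\varphi$ is satisfiable). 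Then $|q'| \le |q| + O(1)$ and $q'$ halts within $\tau + \poly(n, s, |\pi|) \le \poly(n, \tau)$ steps, using $|\pi| \le \tau$ and $s \le \poly(n)$. Hence $\min_\alpha \Kt(\alpha \mid \varphi) \le |q| + O(\log\tau) + O(\log n) \le O(\info_\Res(\neg\Reff(\varphi))) + O(\log n)$.

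Together the two bounds give the stated equality up to a constant multiplicative factor and an additive $O(\log n)$ term, which is exactly what ``$\approx$'' abbreviates; pinning down the constants is routine bookkeeping. The only point demanding genuine care is in the second direction: the extraction algorithm runs in time polynomial in $|\pi|$, which is unbounded a priori, but this is harmless since any program witnessing $\Kt(\pi \mid \cdot)$ already runs for at least $|\pi|$ steps, so the extra polynomial factor contributes only an $O(\log\tau)$ term to the logarithmic-time part. There is no conceptual obstacle beyond \Cref{thm:algo-informal} and \cite{Pudlak03}: the statement is precisely their repackaging in the language of Krajíček's information-efficiency function \cite{Krajicek22}.
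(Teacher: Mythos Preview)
Your proposal is correct and follows essentially the same approach as the paper's proof (stated formally as \Cref{thm:info-efficiency}): one direction via Pudl\'ak's construction, the other via the extraction algorithm of \Cref{thm:algo-informal}, with the running-time accounting handled by noting that polynomial-time post-processing adds only $O(\log n)$ to the $\Kt$ value. Your treatment of the second direction is in fact slightly more careful than the paper's, since you explicitly observe that $|\pi| \le \tau$ to justify absorbing the $\poly(|\pi|)$ running time of the extractor into $O(\log\tau)$.
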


To the best of our knowledge, this is one of the first applications of Krajíček's framework.

\subsubsection{The Proof Analysis Problem for strong proof systems}
Motivated by ($Q_2$), we ask whether $\PAP$ is in $\P$ for strong proof systems. We conclude that the answer is likely negative by proving optimal conditional lower bounds in the form of $\NP$-hardness for every proof system that p-simulates Extended Frege ($\EF$).

\begin{theorem}[$\NP$-hardness of $\PAP_\EF$, informal]
    \label{thm:np-hardness-informal}
    For every propositional proof system $S$ that p-simulates Extended Frege, the Proof Analysis Problem for $S$ is $\NP$-complete.
\end{theorem}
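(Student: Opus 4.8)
The plan is to establish both that $\PAP_S \in \NP$ and that $\PAP_S$ is $\NP$-hard. Membership in $\NP$ is the easy part: on an instance $(\psi, s, \pi)$ one first checks in polynomial time that $\pi$ is a legal $S$-proof of $\neg\Reff_s(\psi)$ — this is polynomial because $s$ is given in unary and $|\Reff_s(\psi)| = \poly(s, |\psi|)$ — and then guesses and verifies a satisfying assignment of $\psi$. For hardness I would reduce $\SAT$ to $\PAP_S$. Since $S$ p-simulates $\EF$ and p-simulations are polynomial-time computable, it suffices to produce, from a CNF $\varphi$, a CNF $\varphi'$ with ``$\varphi'$ satisfiable iff $\varphi$ satisfiable'', a unary bound $s'$, and a polynomial-size $\EF$-proof of $\neg\Reff_{s'}(\varphi')$; translating this $\EF$-proof through the p-simulation then gives the required $\PAP_S$ instance.

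Fix once and for all a family $\chi_n$ of unsatisfiable CNFs over $\poly(n)$ variables that require Resolution refutations of size $2^{n^{\Omega(1)}}$ but admit polynomial-size $\EF$-refutations computable in polynomial time (for instance the pigeonhole formulas, or Tseitin formulas on expanders), and let $s_1 = s_1(n) = \poly(n)$ be large enough that the feasibly constructive Atserias--Müller lower bound applies to $\Reff_{s_1}(\chi_n)$ and that $\Reff_{s_1}(\chi_n)$ is unsatisfiable. On input $\varphi$, first replace $\varphi$ by $\varphi \land (x_0)$ for a fresh variable $x_0$; this preserves satisfiability and ensures $\varphi$ contains the all-positive clause $(x_0)$, which is falsified by the all-zeros assignment. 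Writing $C_1, \dots, C_m$ for the clauses of $\varphi$ over variables $\vec{x}$ (with $x_0 \in \vec{x}$) and $D_1, \dots, D_{m'}$ for the clauses of $\Reff_{s_1}(\chi_n)$ over disjoint variables $\vec{w}$, output $(\varphi', s', \pi)$ where
\[
  \varphi' \coloneq \varphi \lor \Reff_{s_1}(\chi_n) = \bigwedge_{i \le m,\, j \le m'} (C_i \lor D_j)
\]
is a CNF of size $\poly(n)$, $s' \coloneq n^{10}$ (any fixed polynomial below the Atserias--Müller bound for $\Reff_{s_1}(\chi_n)$ will do), and $\pi$ is the $\EF$-proof constructed below. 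That ``$\varphi'$ satisfiable iff $\varphi$ satisfiable'' is immediate: if $\varphi$ is satisfiable, any assignment satisfying every $C_i$ satisfies every $C_i \lor D_j$; if $\varphi$ is unsatisfiable, then since $\Reff_{s_1}(\chi_n)$ is also unsatisfiable, any assignment to $(\vec{x}, \vec{w})$ falsifies some $C_{i_0}$ and some $D_{j_0}$, hence falsifies $C_{i_0} \lor D_{j_0}$.

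The heart of the matter is constructing, uniformly in $\varphi$ and \emph{without} knowing whether $\varphi$ is satisfiable, a polynomial-size $\EF$-proof $\pi$ of $\neg\Reff_{s'}(\varphi')$; once such $\pi$ exists, $\neg\Reff_{s'}(\varphi')$ is a tautology and $(\varphi', s', \pi)$ is a legitimate instance. The argument $\pi$ formalizes is purely syntactic. Reasoning inside $\EF$: suppose $\Pi$ is a size-$\le s'$ Resolution refutation of $\varphi'$. Restricting $\Pi$ by $\vec{x} \coloneq \vec{0}$ yields a Resolution refutation $\Pi{\restriction}_{\vec{0}}$ of $\varphi'{\restriction}_{\vec{0}}$ of size $\le s'$; since $(x_0){\restriction}_{\vec{0}} = \bot$, each clause $((x_0) \lor D_j){\restriction}_{\vec{0}}$ equals $D_j$, so all of $D_1, \dots, D_{m'}$ appear among the axioms of $\varphi'{\restriction}_{\vec{0}}$, and every clause of $\varphi'{\restriction}_{\vec{0}}$ is a superset of some $D_j$. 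A routine ``un-weakening'' lemma for Resolution — replace each axiom used in $\Pi{\restriction}_{\vec{0}}$ by a subclause of it lying in $\{D_1, \dots, D_{m'}\}$ and propagate, which never increases size — then yields a Resolution refutation of $\Reff_{s_1}(\chi_n) = \{D_1, \dots, D_{m'}\}$ of size $\le s'$. But the feasibly constructive proof of the Atserias--Müller lower bound, formalizable in $\PVO$, yields a polynomial-size $\EF$-proof of $\neg\Reff_{s'}(\Reff_{s_1}(\chi_n))$, constructible in polynomial time from the explicit $\EF$-refutation of $\chi_n$ — that is, a proof that $\Reff_{s_1}(\chi_n)$ has \emph{no} size-$\le s'$ Resolution refutation. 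Because the restriction and un-weakening maps are polynomial-time operations on proofs that $\EF$ can formalize and verify, gluing them to this hard-coded sub-proof produces the desired contradiction, and hence the sought $\EF$-proof $\pi$ of $\neg\Reff_{s'}(\varphi')$.

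The one obstacle worth singling out — and the reason the construction is shaped this way — is the uniformity of $\pi$. When $\varphi$ is satisfiable so is $\varphi'$, and the ``natural'' $\EF$-proof of $\neg\Reff_{s'}(\varphi')$ would be Pudlák's, which requires a satisfying assignment the reduction cannot produce. The disjunctive padding with $\Reff_{s_1}(\chi_n)$ together with the dummy clause $(x_0)$ is precisely what turns ``a short refutation of $\varphi'$ gives a short refutation of the genuinely hard formula $\Reff_{s_1}(\chi_n)$'' into a purely syntactic restriction, so that the same short $\EF$-proof works whether $\varphi$ is satisfiable (in which case $\pi$ vacuously refutes the hypothesis that a short $\Pi$ exists) or unsatisfiable (in which case $\pi$ proves a true lower bound); the only non-trivial ingredient in this step is the feasibly constructive Atserias--Müller lower bound established earlier. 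What remains is routine: the standard $\PVO$-formalizations of the restriction operation and of the un-weakening lemma for Resolution, and the size bookkeeping showing $|\varphi'|$, $|\pi|$, and $s'$ are all $\poly(n)$ with $s'$ below the Atserias--Müller bound for $\Reff_{s_1}(\chi_n)$.
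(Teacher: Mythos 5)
Your proposal is correct in structure and settles the claim, but it takes a genuinely different and somewhat heavier route than the paper. The paper (Section 5) reduces $\TSAT$ to $\PAP_\EF$ via the formula $\VCPHP(G_\varphi, m(n-1))$, a vertex-cover encoding with an embedded pigeonhole principle; the uniform $\EF$-proof of the $\neg\Reff$ tautology then comes from Cook--Pitassi's 1990 $\PVO$-formalization of Haken's $\PHP$ lower bound (stated as Theorem~\ref{thm:CP90-PHP}), together with the fact that the $\TSAT \to \VC$ reduction restricts to $\PHP^n_{n-1}$. This is deliberately independent of the paper's new technical work: Sections~6--8, where the Atserias--Müller lower bound is formalized in $\PVO$, are never invoked. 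Your reduction instead pads $\varphi$ disjunctively with a fixed hard-but-unsatisfiable formula $\psi = \Reff_{s_1}(\chi_n)$ and then shows that a short Resolution refutation of the padded formula restricts to a short refutation of $\psi$; the $\EF$-proof of the lower bound for $\psi$ is then a $\Reff(\Reff(\chi_n))$-type statement, whose polynomial-size $\EF$-proofs come precisely from the paper's $\PVO$-formalization of Atserias--Müller (Theorems~\ref{thm:AM-in-PV} and \ref{thm:lb-formal}, leading to Theorem~\ref{thm:fragment-characterization}). So your argument is sound (and there is no circularity, since those theorems do not depend on the $\NP$-hardness result), but it leans on the heaviest component of the paper where the original proof gets by with a thirty-year-old classical formalization.

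Two smaller observations. First, the extra layer of $\Reff$ in your padding is unnecessary: padding directly with $\chi_n = \PHP_n$ itself (so $\varphi' = \varphi \lor \PHP_n$, with the $x_0$ device intact) already gives the same restriction argument, and then Cook--Pitassi's formalization alone closes the case, bringing your argument back into line with the paper's dependency structure. Second, under the full restriction $\vec{x} := \vec 0$ every surviving clause $(C_i \lor D_j){\restriction}_{\vec 0}$ is exactly $D_j$, never a proper superclause, so the ``un-weakening lemma'' you invoke is superfluous (though harmless). The one step that is genuinely glossed over is the $\EF$-formalization of the lemma ``restricting a correct Resolution refutation yields a correct Resolution refutation of the restricted formula''; this is a local syntactic fact and entirely plausible to formalize, but it is an additional proof obligation that the paper's $\VCPHP$ route avoids because there the reduction from $\VCPHP$ to $\PHP$ is a literal substitution carried out \emph{outside} the $\EF$-proof, on the $\pi$ variable, rather than as a reasoning step inside $\EF$.
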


This means that, unlike Resolution, strong proof systems are seemingly able to prove \say{trivial} Resolution lower bounds on satisfiable formulas without having to first prove that the underlying formula is satisfiable. In particular, this means Extended Frege is strong enough to \emph{obfuscate} the satisfying assignments. As a consequence, for strong proof systems like Extended Frege, one cannot hope to prove they are $\NP$-hard to automate following a strategy similar to that of~\cite{AM20}.

The idea behind the hardness reduction highlights that the question of whether a proof system is analyzable boils down to whether it can efficiently prove any non-trivial Resolution lower bounds. In fact, the situation is rather dramatic: as soon as a strong enough system can prove \emph{at least one} non-trivial Resolution lower bound, then the system immediately becomes hard to analyze.

\begin{theorem}[Equivalence of analyzability and $\Reff$ lower bounds, informal]
    \label{thm:equivalence-PAP-AM-informal}
    Let $S$ be a reasonably strong propositional proof system. Assuming that there is some $\delta >0 $ such that $\TSAT \not\in \ComplexityFont{io}\SIZE\left[2^{\delta n}\right]$, the following are equivalent:
    \begin{enumerate}[label=(\roman*)] \itemsep=0pt
        \item \label{item:PAP}  the proof system $S$ is analyzable;
        \item \label{item:AM} for every unsatisfiable 3-CNF formula $\varphi$ over $n$ variables, every $S$-proof of the formula $\neg \Reff_{\poly(n)}(\varphi)$ requires size at least $2^{\Omega(n)}$.
    \end{enumerate}
\end{theorem}

\subsubsection{Formalization of the Atserias--Müller lower bound in $\PVO$}
The inspiration for why the extraction algorithm in \Cref{thm:algo-informal} might 
exist in the first place comes from \emph{witnessing theorems} in bounded arithmetic. We work here with Cook's theory $\PVO$ and Buss's $\SOT$, which are first-order theories of arithmetic formalizing polynomial-time reasoning. In these theories, if a statement of the form $\forall x \exists y \varphi(x, y)$ with a low-complexity $\varphi(x,y)$ is provable in the theory, then there exists a polynomial-time algorithm that \emph{witnesses}~$y$ given~$x$. This implies, in particular, that if a problem is proven $\NP$-hard in one of these theories, then the reduction will be a Levin reduction.

The key observation for us is that the statement of the lower bound is itself of this form, a $\forall\Sigma^b_1$ sentence:
\begin{center}
    \say{\emph{for every} formula $\varphi$ and \emph{every} Resolution refutation $\pi$ of $\Reff(\varphi)$, \\
    \emph{there exists} a satisfying assignment for $\varphi$, or else $\pi$ is large.} 
\end{center}
Thus, if the previous statement were provable in $\PVO$, we would get a polynomial-time function extracting satisfying assignments given $\varphi$ and $\pi$.

While the extraction algorithm presented in \Cref{thm:algo-informal} is given directly in natural language, it is still worth formalizing the lower bound in bounded arithmetic to obtain a variety of applications.

\begin{theorem}[Atserias--Müller lower bound \cite{AM20} in $\PVO$, informal]
    \label{thm:AM-in-PV-informal}
    The theory $\PVO$ proves the statement that for every CNF formula $\varphi$ over $n$ variables and every size parameter $s\in \bbN$, if $\varphi$ is unsatisfiable and $\pi$ is a correct Resolution refutation of $\Reff_{s}(\varphi)$, then $|\pi| \geq 2^{\Omega(s/n^2)}$.
\end{theorem}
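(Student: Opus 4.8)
The plan is to reprove the Atserias--Müller lower bound along its two classical stages --- the Ben-Sasson--Wigderson size--width trade-off, and a width lower bound of order $s$ for $\Reff_s(\varphi)$ when $\varphi$ is unsatisfiable --- but to present each stage as an explicit polynomial-time construction whose correctness is checkable by the polynomially bounded induction available in $\PVO$. Reasoning inside $\PVO$, fix $\varphi$ over $n$ variables and $\poly(n)$ clauses, a size parameter $s$, and a correct Resolution refutation $\pi$ of $\Reff_s(\varphi)$, and assume toward a contradiction that $|\pi|<2^{c s/n^2}$ for a small absolute constant $c$. The first step applies the size--width transformation to $\pi$: using that our encoding of $\Reff_s(\varphi)$ has $N=O(sn^2)$ variables and initial clauses of width $O(n)$, the deterministic Ben-Sasson--Wigderson procedure outputs, in time $\poly(|\pi|,n,s)$, a Resolution refutation $\pi'$ of $\Reff_s(\varphi)$ of width $w\le O(n)+O(\sqrt{N\log|\pi|})<\varepsilon s$, for $c$ small enough in terms of $\varepsilon$. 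What must be verified is only that this transformation terminates with a genuine Resolution refutation and meets the stated width bound, by induction on polynomially bounded quantities (the number of variables, and the number of still-wide clauses in the current restriction); thus the size--width theorem itself is available in $\PVO$, which is by now folklore, and I would either cite an existing formalization or spell it out.

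The second and main step recasts the width lower bound of \cite{AM20} as a constructive statement: \emph{from any Resolution refutation $\pi'$ of $\Reff_s(\varphi)$ of width below $\varepsilon s$ one can read off, in polynomial time, an assignment $\alpha$ with $\varphi(\alpha)=1$}. The guiding intuition is that a small-width refutation of $\Reff_s(\varphi)$ must, in effect, run Pudlák's upper-bound argument in reverse: a narrow clause can mention only boundedly many of the $s$ slots and of the $n$ variables of $\varphi$ at once, so threading such clauses from the axioms up to the empty clause forces the refutation to commit, slot by slot, to one assignment $\beta$ of $\varphi$'s variables under which every clause it places in a slot is satisfied --- and such a $\beta$ satisfies $\varphi$. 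I would make this explicit, for instance through the Atserias--Dalmau pebble-game characterisation of width: walk $\pi'$ and maintain, decoding the provenance structure it implicitly carries, a partial assignment to $\varphi$'s variables consistent with the narrow clauses seen so far; the width bound guarantees the walk never gets stuck extending it, and the ``slot $s$ is the empty clause'' and ``slot $i$ is an axiom of $\varphi$'' constraints of $\Reff_s(\varphi)$ guarantee that the total assignment produced satisfies every clause of $\varphi$. Making the threading argument of \cite{AM20} explicit enough to run as a polynomial-time procedure --- and verifying its correctness with only polynomial-time predicates and induction along $\pi'$ --- is where the real work lies, but nothing in it needs averaging or counting, so it stays inside $\PVO$.

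Combining the two steps proves the theorem by contraposition: the assignment $\alpha$ produced from $\pi'$ contradicts the standing hypothesis that $\varphi$ is unsatisfiable, so no refutation of $\Reff_s(\varphi)$ of size below $2^{cs/n^2}$ exists, i.e. $|\pi|\ge 2^{\Omega(s/n^2)}$. Note that unsatisfiability of $\varphi$ is used only in this last line, so what is really established in $\PVO$ is the unconditional $\forall\Sigma^b_1$ sentence ``every correct Resolution refutation of $\Reff_s(\varphi)$ of size below $2^{cs/n^2}$ yields a satisfying assignment of $\varphi$''; Cook's witnessing theorem for $\PVO$ then extracts from this proof exactly the polynomial-time assignment-extraction algorithm of \Cref{thm:algo-informal} (and, a fortiori, everything formalizes in $\SOT$). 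One degenerate case must be dispatched along the way: if $\varphi$ happens to have a size-$\le s$ Resolution refutation then $\Reff_s(\varphi)$ is satisfiable, hence by the ($\PVO$-provable) soundness of Resolution it has no refutation at all and the bound holds trivially --- so throughout we may assume $\Reff_s(\varphi)$ is genuinely unsatisfiable.

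The part I expect to be the main obstacle is the second step: pinning down the ``threading'' invariant precisely enough that (i) it is provably maintained against an arbitrary width-$\varepsilon s$ refutation of $\Reff_s(\varphi)$, (ii) its only way to break down provably exposes a satisfying assignment of $\varphi$ rather than a bare contradiction, and (iii) the invariant, its updates, and the final extraction are all polynomial time, with a correctness proof that avoids any averaging or approximate-counting step and therefore lives in $\PVO$ proper rather than in a counting extension such as $\mathrm{APC}_1$. Making the parameters line up --- so that $\varepsilon s$ really is below the width bound forced by the $O(sn^2)$-variable encoding, and the size--width step indeed turns $|\pi|<2^{cs/n^2}$ into width $<\varepsilon s$ --- is routine bookkeeping, but must be done carefully to land precisely on the stated $2^{\Omega(s/n^2)}$.
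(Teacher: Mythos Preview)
Your Step 1 contains a genuine gap. You plan to invoke the Ben-Sasson--Wigderson size--width transformation inside $\PVO$, asserting that the procedure ``outputs, in time $\poly(|\pi|,n,s)$, a Resolution refutation $\pi'$'' of small width and that this ``is by now folklore''. It is not: the paper explicitly discusses this obstacle. The statement ``for every small Resolution refutation there exists a small-width one'' is not even expressible as a bounded formula, because Thapen \cite{Thapen16} showed that the narrow refutation guaranteed by Ben-Sasson--Wigderson can require superpolynomial size. So the object $\pi'$ you want to quantify over need not exist within the polynomial-size universe of $\PVO$, and no formalization of this step can succeed.

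The paper's fix is different in kind: instead of constructing a new narrow refutation, it hits the \emph{given} refutation $\pi$ with a deterministically chosen greedy restriction $\rho$ (a disabling restriction in the sense of the relativized $\Reff$ encoding). The restricted object $\pi_{\restriction\rho}$ has size at most $|\pi|$, hence is bounded, and the greedy choice guarantees its \emph{block-width} drops to $O(\sqrt{s\log|\pi|})$; the averaging needed for the greedy step is exact counting over $\Log$-sized sets, which $\PVO$ can do. Your Step 2 is then close in spirit to what the paper does --- a Prover--Delayer walk through $\pi_{\restriction\rho}$ maintaining a ``reservation'' that either exposes a clause of block-width $\Omega((s-d)/n)$ or a satisfying assignment of $\varphi$ --- but the relevant measure throughout is block-width, not ordinary width, and combining the $\Omega(s/n)$ block-width lower bound with the $O(\sqrt{s\log|\pi|})$ upper bound from the restriction is what yields $|\pi|\ge 2^{\Omega(s/n^2)}$.
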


Formalizations in bounded arithmetic tend to be particularly interesting when they lead to new proofs of known statements. This has been the case, for example, with Razborov's formalizations of circuit lower bounds leading to the now-famous proof of Håstad's switching lemma via a simpler counting argument \cite{Razborov95}. Remarkably, the method introduced by Razborov to formalize the switching lemma is recognized for enabling proofs to at least two major conjectures in combinatorics \cite{AlweissLovettWuZhang2021, ParkPham2024}.
Another example is a recent new proof of the Schwartz--Zippel lemma \cite{AT24}, proven via a hybrid argument formalizable in $\SOT$. Our formalization of the lower bound on $\Reff$ formulas also relies on a new proof. We elaborate on this in the technical overview.

We remark that our bound of the form $2^{\Omega(s/n^2)}$ is slightly worse than the original one, which we state here for convenience.

\begin{theorem}[Atserias--Müller lower bound \cite{AM20}]
    \label{thm:AM-original}
    For every CNF formula $\varphi$ over $n$ variables and every size parameter $s \in \bbN$, if $\varphi$ is unsatisfiable and $\pi$ is a correct Resolution refutation of $\Reff_s(\varphi)$, then $|\pi| \geq 2^{\Omega(s/n)}$.
\end{theorem}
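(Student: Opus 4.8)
The plan is to first reduce to a genuine lower bound and then prove it by a bottleneck-counting argument in the style of Haken. The statement is vacuous unless $\Reff_s(\varphi)$ is unsatisfiable: by soundness of Resolution a correct refutation $\pi$ can exist only if $\Reff_s(\varphi)$ has no satisfying assignment, which, since $\varphi$ is unsatisfiable, happens precisely when $\varphi$ has no Resolution refutation of size $s$. So I may assume $\varphi$ has no size-$s$ refutation, and the task becomes: show every Resolution refutation of the (now unsatisfiable) formula $\Reff_s(\varphi)$ has size $2^{\Omega(s/n)}$.

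Fix such a refutation $\pi$ and let $m = |\pi|$. I would introduce a distribution $\mu$ over assignments to the variables of $\Reff_s(\varphi)$ supported entirely on falsifying assignments --- equivalently, over ``candidate refutation descriptions'': length-$s$ sequences of clauses over $\mathrm{Var}(\varphi)$ together with derivation pointers, sampled so that every short initial segment looks locally like a legal fragment of a Resolution refutation of $\varphi$ (axioms are genuine $\varphi$-axioms, alleged resolvents are genuine resolvents) while the whole object is broken somewhere, with the location of the breakage spread roughly uniformly over the $s$ lines and the $n$ variables. The existence of a spread-out $\mu$ with full support on falsifying assignments is where unsatisfiability of $\varphi$ enters: since $\varphi$ has no size-$s$ refutation, for every way of committing fewer than $t := c \cdot s/n$ lines to particular clauses and pointers there is still a locally legal completion, so the breakage cannot be pinned to a small set of lines. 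For each sampled $\alpha$, take the usual falsified path through the DAG of $\pi$, descending from the empty clause (falsified by $\alpha$) to an axiom of $\Reff_s(\varphi)$ falsified by $\alpha$, always moving to a child falsified by $\alpha$.

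Call a clause of $\pi$ a \emph{bottleneck} if it mentions variables coming from at least $t$ distinct lines of the encoded refutation. The argument rests on two claims: (i) the falsified path of every $\alpha$ in the support of $\mu$ passes through a bottleneck clause --- a clause touching fewer than $t$ lines can only witness falsity of $\Reff_s(\varphi)$ by localizing the breakage to those few lines, which $\mu$ precludes; and (ii) a fixed bottleneck clause $C$ lies on the falsified path of at most a $2^{-\Omega(t)}$-fraction of the $\alpha$'s, since conditioning on the literal settings that satisfy $C$ freezes $\Omega(t)$ near-independent coordinates of $\mu$, each with probability bounded away from $1$. A union bound over the $m$ clauses of $\pi$ then forces $m \geq 2^{\Omega(t)} = 2^{\Omega(s/n)}$.

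The main obstacle will be engineering $\mu$ and the ``number of lines touched'' measure so that they interact cleanly with the internal bookkeeping of $\Reff_s(\varphi)$ --- the pointer variables and the constraints enforcing ``line $i$ is an axiom or a resolvent of two earlier lines''. Those auxiliary constraints are exactly what might let a short refutation of $\Reff_s(\varphi)$ ``cheat'' by reasoning locally about pointers rather than about $\varphi$, and $\mu$ must at once (a) stay supported on genuinely falsifying descriptions, so the falsified paths exist; (b) retain enough independence under the conditioning of claim (ii); and (c) still force wide bottlenecks in claim (i) --- which is where unsatisfiability of $\varphi$ is consumed, and, correspondingly, where the argument must fail for satisfiable $\varphi$, in accordance with Pudlák's short refutations. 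Making the parameters align so that $t = \Theta(s/n)$, rather than the much smaller value a generic passage through the size--width relation of Ben-Sasson and Wigderson would yield --- $\Reff_s(\varphi)$ has $\poly(s,n)$ variables, far more than $(s/n)^2$ --- is the quantitatively delicate point, and is why a direct counting argument, not a width bound, is needed.
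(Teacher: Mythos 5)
Your approach — a one‑shot Haken‑style bottleneck counting argument — is genuinely different from the route in \cite{AM20} (which this paper cites but does not re‑prove; the paper's own constructive version in $\PVO$ proves only $2^{\Omega(s/n^2)}$). The known proofs all proceed in two separated steps: a \emph{block‑width} lower bound (every Resolution refutation of the restricted formula contains a clause of block‑width $\Omega(s/n)$), followed by a restriction argument (random in \cite{AM20, dRGNPRS21}, greedy deterministic in this paper's \Cref{lemma:det-wr}) showing that a refutation of size $< 2^{\Omega(s/n)}$ can be hit to kill all high‑block‑width clauses, producing a contradiction. You are trying to fold both steps into a single counting argument over a static adversary distribution $\mu$.

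The gap is in your claim (i), and you correctly flag it as ``the main obstacle'' without resolving it. Concretely: the block‑width lower bound is proven via a \emph{dynamic} Delayer strategy (\Cref{lemma:extraction}, \Cref{lemma:induction-width-PV}) that reserves fresh blocks on demand as the Prover queries variables — the Delayer's ``assignment'' $\alpha$ is built incrementally along the path, not drawn in advance from a fixed distribution. Your $\mu$ would have to be a static distribution that anticipates every possible branching choice of the refutation $\pi$ while still having the falsified path of a typical $\alpha$ accumulate $\Omega(s/n)$ blocks \emph{before} hitting the falsified axiom (which itself mentions only $O(1)$ blocks, so the accumulation must happen on intermediate clauses). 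You assert this follows because ``$\mu$ precludes localizing the breakage'', but that is exactly the statement to be proven, and there is no a priori reason the falsified path of a random $\alpha$ should traverse a wide clause — that is the content of the Prover--Delayer argument. Moreover, you never engage with the enabling/relativization variables $\enable^B$: a falsifying assignment can simply disable most blocks, and all disabling restrictions preserve $\Reff_s(\varphi)$ (\Cref{def:disabling-restriction}); those variables are precisely what the restriction step of the actual proof manipulates, and what a static $\mu$ would have to control to avoid trivializing block‑width. Your claim (ii) (``conditioning freezes $\Omega(t)$ near‑independent coordinates'') likewise depends on independence properties of $\mu$ across pointer and enabling variables that you have not constructed and that do not obviously exist. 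The parameter observation in your last paragraph — that a naive Ben‑Sasson--Wigderson pass cannot give $s/n$ because $\Reff_s(\varphi)$ has $\Theta(s^2)$ variables — is correct and shows you understand why a block‑specific argument is needed, but the bottleneck argument you sketch does not yet supply one.
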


The difference in the bound means that we can only show that $\PVO$ proves hardness of $\Reff_{s}(\varphi)$ for $s \geq n^3$. We leave it open whether $\PVO$ can achieve the original $2^{\Omega(s/n)}$ bound via a different argument. In any case, this is not particularly important for our applications. We comment on this further in the technical overview.

\subsubsection{Formalization of Pudlák's upper bound in Resolution}

We complement the formalization of the lower bound with a formalization of the upper bound \cite{Pudlak03}, showing that there are short refutations of $\Reff(\varphi)$ whenever $\varphi$ is satisfiable. This construction can be carried out by a constant-depth circuit and could be formalized in $\SOT$, but certainly also in much weaker theories. We prove the somewhat surprising fact that the construction can be proven correct in Resolution itself.

\begin{theorem}[Pudlák's upper bound \cite{Pudlak03} in Resolution, informal] \label{thm:ub-in-res-informal}
    There is a polynomial-size depth-2 Boolean circuit~$P(\alpha, \varphi, s)$ of fan-in 2 that given a CNF formula $\varphi$, a satisfying assignment $\alpha$, and $s \in \bbN$, outputs a Resolution refutation $\pi$ of $\Reff_s(\varphi)$. Furthermore, the correctness of this circuit $P$ has polynomial-size proofs in Resolution.
\end{theorem}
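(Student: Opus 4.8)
The plan is to make Pudlák's argument completely explicit, observe that the refutation it produces has an oblivious, shallow description, and then re-run its (elementary) correctness proof inside Resolution itself. Recall the shape of $\Reff_s(\varphi)$: besides the variables $x_{i,\ell}$ (``line $i$ of the refutation contains literal $\ell$'') it has structure variables encoding, for each line $i\in[s]$, whether line $i$ is an initial clause $C_j$ of $\varphi$ or the resolvent of two earlier lines $j,k<i$ on a pivot $v$; its clauses say this data is consistent and that line $s$ is empty. The semantic idea is that if $\alpha\models\varphi$ then \emph{every} clause that can legitimately sit on a refutation line is satisfied by $\alpha$, so the empty clause cannot; concretely, for each $i$ let $T_i=\bigvee_{\ell:\,\alpha\models\ell}x_{i,\ell}$ (``line $i$ contains a true literal''). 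I would assemble $\pi$ as $s$ consecutive blocks, the $i$-th deriving $T_i$ by case analysis on the mode of line $i$: if line $i$ is the axiom $C_j$, then $\alpha\models C_j$ gives a true $\ell_0\in C_j$ and the $\Reff$-clause $\neg y_{i,j}\vee x_{i,\ell_0}$ weakens to $\neg y_{i,j}\vee T_i$; if line $i$ is the resolvent of $j,k<i$ on $v$, then from the already-derived $T_j,T_k$ and the $\Reff$-clauses relating a resolvent to its parents one gets $\neg z_{i,j,k,v}\vee T_i$ by a bounded case analysis on where the true literals of $T_j,T_k$ lie relative to the pivot; resolving all mode-conditioned clauses against the ``line $i$ has some mode'' clause of $\Reff$ yields $T_i$. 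Finally $T_s$ is resolved against the emptiness clauses $\{\neg x_{s,\ell}\}_\ell$ to produce $\bot$. This is polynomial in $n$, $s$, and $|\varphi|$.

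Next I would check that this $\pi$ admits a \emph{fully oblivious} layout: which line plays which role, and to which earlier lines it points, is fixed independently of $\alpha$ and $\varphi$, while the \emph{content} of each line is a closed-form function of $(\alpha,\varphi)$ of tiny locality — ``$x_{i,\ell}\in\text{line }t$'' is, for the $T_i$-lines, just the bit $\alpha(\ell)$, and for the copied-in $\Reff$-axioms just a bit of $\varphi$. The one spot where naive Pudlák makes a choice is the base case (``pick a true $\ell_0\in C_j$''), and such a global selection is \emph{not} computable by a depth-$2$ fan-in-$2$ circuit; the remedy is to pick the $\Reff$-encoding and the block layout so that no selection is required — for instance by writing out one line per literal $\ell$, each being a genuine $\Reff$-axiom ($\neg y_{i,j}\vee x_{i,\ell}$ if $\ell\in C_j$, its negated-literal counterpart otherwise), and arranging the following weakenings and resolutions (and any padding by trivial re-derivations) so that every parent pointer is a constant. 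Similarly the final chain resolving $T_s$ down to $\bot$ must be written in closed form, each intermediate clause $\bigvee_{\ell'>\ell,\,\alpha\models\ell'}x_{s,\ell'}$ given directly rather than obtained from the previous one, so that no dependency chain appears. With this in place every output bit of the circuit depends on $O(1)$ input bits, yielding the claimed depth-$2$, fan-in-$2$ circuit $P$.

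For the ``furthermore'' part, the correctness of $P$ is the family of CNFs (one per $n$, $|\varphi|$, $s$) expressing, via a suitable encoding, ``$\alpha$ satisfies $\varphi$ but $P(\alpha,\varphi,s)$ is \emph{not} a valid Resolution refutation of $\Reff_s(\varphi)$'', into which the $O(1)$-depth formulas computing the bits of $\pi=P(\alpha,\varphi,s)$ are substituted. I would refute this CNF by replaying the construction at the meta level. Being a valid refutation decomposes into $\poly(n,s)$ line-local assertions — ``line $t$ of $\pi$ is syntactically one of the clauses of $\Reff_s(\varphi)$, or is syntactically the resolvent of its two designated predecessors on the designated pivot'' — together with ``line $s$ of $\pi$ is $\bot$''. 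Each positive assertion has an $O(n)$-size Resolution derivation: for the copied-in axioms and the resolvent steps it is a conjunction of $O(n)$ constant-size equivalences between explicitly named bits and needs no hypothesis, while for the axiom-mode blocks and the $T_s$-to-$\bot$ chain it additionally consumes the clause of $\varphi$ supplied by the hypothesis $\alpha\models\varphi$. Resolving all these against the ``some line is bad, or line $s\neq\bot$'' disjunction carried by the CNF gives $\bot$. The whole refutation has size polynomial in $n,s$ and, crucially for later applications, is output by a simple uniform procedure (indeed a constant-depth circuit); Resolution suffices precisely because the verification is line-local and uses no counting and no genuine induction, only $\poly$-many bounded Boolean identities glued by one disjunction.

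The main obstacle I anticipate is reconciling two opposing demands. On one hand $P$ must be genuinely \emph{shallow}: no ``scan for the first true literal'', no chains computing line $t$'s content from line $t-1$'s — everything a closed-form, constant-locality function of $(\alpha,\varphi)$. On the other hand the ``valid refutation'' predicate must stay \emph{Resolution-checkable}: since Resolution cannot count or iterate, each line's verification must reduce to $O(1)$-size identities among explicitly named bits. Satisfying both at once forces a rather specific encoding of $\Reff_s(\varphi)$ together with a somewhat redundant layout of $\pi$ (extra copied-in axioms, trivial padding steps so that all pointers are constants, closed-form intermediate clauses), and getting all of this bookkeeping exactly right — while verifying that the resulting $P$ really is depth $2$ of fan-in $2$ and that its correctness CNF still admits the short Resolution refutation sketched above — is where the real work lies.
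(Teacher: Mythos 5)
Your proposal is correct and takes essentially the same route as the paper: both derive, block-by-block, a clause asserting "block $B$ contains a literal true under $\alpha$," both identify the select-a-true-literal step as the obstacle to an oblivious low-locality layout and fix it by eliminating the selection (the paper via multiple weakening pointers, you via one line per literal — equivalent in effect), and both certify correctness in Resolution by reducing "$\pi$ is valid" to line-local claims resolved against a single "some axiom is violated" disjunction. The one presentational difference is that the paper packages this last disjunction as a reusable \emph{pseudo-negation} operator $\pneg$, which also yields the contraposition primitive used later in the paper; otherwise your sketch matches the paper's argument.
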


That is, not only Resolution has short refutations of $\Reff(\varphi)$ when $\varphi$ is satisfiable: Resolution can show that the circuits generating these refutations from satisfying assignments are correct. This, again, is in striking contrast with the fact that Resolution does not have small proofs of its own soundness \cite{AB04}.

\subsubsection{Propositional fragments of Atserias--Müller: automatability in terms of $\Reff$ formulas}
The main consequence of the extraction algorithm together with its formalization in bounded arithmetic is the following precise characterization theorem relating the provability of a formula $\neg \varphi$ to the provability of  the formula $\neg \Reff(\Reff(\varphi))$. (For the sake of clarity, we ignore for now the exact size parameters of the $\Reff$ formulas, which are always some fixed polynomials; in general, when we write $S \vdash_{\mathrm{poly}} \varphi$ we mean that $S$ has polynomial-size proofs of $\varphi$, and by \say{reasonable proof system} we mean essentially that the system is closed under \emph{modus ponens}.)

\begin{theorem}[Propositional fragments of Atserias--Müller, informal]
\label{thm:characterization-informal}
    Let $S$ be a reasonable propositional proof system that simulates Extended Frege. Then, for every sequence $\{ \varphi_n \}_{n\in\bbN}$ of unsatisfiable CNF formulas,
    \[ S \vdash_{\mathrm{poly}} \neg \varphi_n \qquad \text{ if and only if } \qquad S\vdash_{\mathrm{poly}}  \neg \Reff(\Reff(\varphi_n)).\]
\end{theorem}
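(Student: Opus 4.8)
The plan is to prove both implications by the same two‑move template: first, translate a formalized meta‑theorem into a polynomial‑size family of $S$‑proofs of a propositional implication (the lower bound \Cref{thm:AM-in-PV-informal} for the forward direction, the upper bound \Cref{thm:ub-in-res-informal} for the converse); second, compose it with the given $S$‑proof by substituting an explicit circuit and applying modus ponens. Fix a large enough polynomial size parameter $s = s(n) \ge n^3$ and a polynomial $t = t(n)$ bounding the sizes of all the Resolution refutations constructed below, so that $\Reff(\Reff(\varphi_n))$ denotes $\Reff_t(\Reff_s(\varphi_n))$, and abbreviate $\psi_n \coloneq \Reff_s(\varphi_n)$. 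Throughout I use, without further comment, that the clauses of $\Reff_t(\chi)$ are precisely the local conditions an assignment must satisfy in order to encode a correct size‑$t$ Resolution refutation of $\chi$ — so that ``$\vec r$ satisfies $\Reff_t(\chi)$'' and ``$\vec r$ encodes such a refutation'' are interchangeable — and that such routine encoding facts have polynomial‑size $\EF$‑proofs. Note also that $\neg\Reff(\Reff(\varphi_n))$ is a genuine tautology, so the statement is not vacuous: $\psi_n$ is either satisfiable (when $\varphi_n$ happens to have a size‑$s$ Resolution refutation), and then has no refutation whatsoever, or else unsatisfiable, in which case \Cref{thm:AM-original} gives it a Resolution‑size lower bound exceeding $t$ once $s$ is a large enough polynomial; either way $\Reff_t(\psi_n)$ is unsatisfiable.

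For the forward direction, assume $S \vdash_{\mathrm{poly}} \neg\varphi_n$. Read in witnessing form, \Cref{thm:AM-in-PV-informal} says that $\PVO$ proves the $\forall\Sigma^b_1$ statement that from a correct size‑$t$ Resolution refutation of $\Reff_s(\varphi)$ with $t < 2^{\Omega(s/n^2)}$ one can read off a satisfying assignment of $\varphi$; by the witnessing theorem for $\PVO$ this yields a polynomial‑size circuit family $E_n$ with $\PVO \vdash$ ``if $\vec r$ encodes a correct size‑$t$ Resolution refutation of $\psi_n$, then $\varphi_n(E_n(\vec r)) = 1$''. Applying Cook's propositional translation to this universal statement, and using that $S$ simulates $\EF$, we obtain polynomial‑size $S$‑proofs of $\Reff_t(\psi_n) \to \varphi_n(E_n(\vec r))$, where $\Reff_t(\psi_n)$ abbreviates the conjunction of its clauses and $\varphi_n(E_n(\vec r))$ abbreviates $\bigwedge_j C_j^{\varphi_n}(E_n(\vec r))$. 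On the other hand $S \vdash_{\mathrm{poly}} \neg\varphi_n$ is, up to De Morgan, a polynomial‑size $S$‑proof of $\bigvee_j \neg C_j^{\varphi_n}(\vec x)$; substituting $\vec x \coloneq E_n(\vec r)$ — which $\EF$, and hence $S$, supports via extension variables for the gates of $E_n$ — yields polynomial‑size $S$‑proofs of $\bigvee_j \neg C_j^{\varphi_n}(E_n(\vec r))$. Since $S$ is closed under modus ponens and proves the trivial contradiction, combining the last two proofs refutes $\Reff_t(\psi_n)$, i.e.\ $S \vdash_{\mathrm{poly}} \neg\Reff(\Reff(\varphi_n))$.

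For the converse, assume $S \vdash_{\mathrm{poly}} \neg\Reff(\Reff(\varphi_n))$ and run the symmetric argument with \Cref{thm:ub-in-res-informal} in place of \Cref{thm:AM-in-PV-informal}. That theorem gives an explicit polynomial‑size circuit $P$ that, on a satisfying assignment $\alpha$ of a CNF $\chi$, outputs a Resolution refutation of $\Reff_s(\chi)$, with correctness provable by polynomial‑size Resolution proofs. Taking $\chi \coloneq \varphi_n$ and composing $P$ with the encoding map, we get a polynomial‑size circuit $A_n(\vec x)$ together with polynomial‑size Resolution — hence $S$ — proofs of $\varphi_n(\vec x) \to \Reff_t(\psi_n)(A_n(\vec x))$: if $\vec x$ satisfies $\varphi_n$ then $A_n(\vec x)$ encodes a correct size‑$t$ Resolution refutation of $\psi_n = \Reff_s(\varphi_n)$ and so satisfies $\Reff_t(\psi_n)$. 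The hypothesis is a polynomial‑size $S$‑proof of $\bigvee_k \neg D_k(\vec r)$ for the clauses $D_k$ of $\Reff_t(\psi_n)$; substituting $\vec r \coloneq A_n(\vec x)$ and composing by modus ponens as before yields polynomial‑size $S$‑proofs of $\neg\varphi_n$. (For this direction it suffices that $S$ simulates Resolution and is closed under modus ponens and substitution; the full ``simulates $\EF$'' hypothesis is needed only in the forward direction, to translate the $\PVO$ proof.)

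The main obstacle is not the composition — which is the routine substitution‑and‑modus‑ponens argument above — but the bookkeeping behind the first move: one must check that Cook's translation of the witnessing form of \Cref{thm:AM-in-PV-informal} literally delivers the implication $\Reff(\Reff(\varphi_n)) \to \varphi_n(E_n(\cdot))$, which requires the bounded‑arithmetic encodings of ``Resolution refutation'' and of the $\Reff$ formula to match (provably in $\EF$) the propositional $\Reff$ encoding fixed in the preliminaries, and one must track the polynomials carefully so that $t(n) < 2^{\Omega(s(n)/n^2)}$, which is exactly where the requirement $s \ge n^3$ from \Cref{thm:AM-in-PV-informal} enters. Once the two formalizations are cast in the appropriate witnessing shape these points are standard, so the heart of the matter is really \Cref{thm:AM-in-PV-informal,thm:ub-in-res-informal} themselves.
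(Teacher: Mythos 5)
Your proposal is correct and follows essentially the same two-step route as the paper: the forward direction is precisely the composition of \Cref{thm:lb-formal} (the Cook translation of \Cref{thm:AM-in-PV}) with the given $S$-proof via formula substitution and \emph{modus ponens}, and the backward direction composes the $\EF$-available correctness of Pudl\'ak's circuit with the hypothesis in the same way. The one place where your narrative glosses over structure the paper makes explicit: the implication you attribute to Resolution, $\varphi_n(\vec x) \to \Reff_t(\psi_n)(A_n(\vec x))$, is \emph{not} literally a Resolution statement — the paper instead refutes $\Satf(\varphi,\alpha)\land\pneg\Reff_t(\Reff_s(\varphi),\pi)\land(\pi=P(\varphi,\alpha,s))$ using pseudo-negations (\Cref{thm:ub-in-Res}) and then, in \Cref{lemma:UB-consequence-basic}, passes through $\varphi[2]$ because the depth-2 circuit $P$ reads conjunctions of pairs of $\alpha$-literals, which Resolution can only handle via extension variables; for $S\geq\EF$ (the hypothesis of the theorem) this machinery collapses into a direct formula substitution, which is what you do, so the argument goes through. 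Relatedly, your parenthetical claiming the converse needs only ``$S$ simulates Resolution and is closed under modus ponens and substitution'' is a bit loose: for a system strictly weaker than $\EF$, ``substitution'' must be read as literal substitution, and then you only recover $\neg\varphi[2]$, not $\neg\varphi$ — this is exactly the extra wrinkle that \Cref{lemma:UB-consequence-basic} and \Cref{thm:aut-on-Ref-Res} handle separately. You also correctly flag the need to reconcile the $\Satf$ encoding used by the witnessing function with the native clauses of $\varphi_n$, which the paper settles via \Cref{prop:sat-to-native-Res}.
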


The lower bound on $\Reff$ formulas says that for every unsatisfiable $\varphi$, the corresponding $\Reff(\varphi)$ is hard for Resolution, making $\Reff(\Reff(\varphi))$ unsatisfiable. The latter encodes the statement \say{$\Reff(\varphi)$ is hard for Resolution}, and our theorem shows that when restricted to the reasoning power of a specific proof system $S$, such a lower bound has small proofs if, and only if, $S$ has short proofs of the unsatisfiability of $\varphi$ in the first place. That is, the \emph{fragment} of the Atserias--Müller lower bound that has short proofs in $S$ is precisely the one corresponding to the formulas that $S$ can prove unsatisfiable with short proofs.

This characterization is surprisingly tight and has consequences for automatability and proof search. Since we can relate the proof size of $\varphi$ in $S$ to the proof size of $\Reff(\Reff(\varphi))$, this means that looking for proofs of $\Reff(\Reff(\varphi))$ can be a proxy for searching for proofs of $\varphi$.

\begin{theorem}[Automatability in terms of $\Reff$ formulas, informal]
    \label{thm:aut-for-Ref-informal}
    For every reasonable proof system $S$ that simulates Extended Frege as well as for Resolution itself,
    \begin{enumerate}[label=(\roman*)] \itemsep=0pt
        \item $S$ is automatable if, and only if, $S$ is automatable exclusively on $\Reff$ formulas;
        \item $S$ is weakly automatable if, and only if, $S$ is weakly automatable exclusively on $\Reff$ formulas.
    \end{enumerate}
\end{theorem}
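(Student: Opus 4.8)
The left-to-right directions of both (i) and (ii) are immediate, since an algorithm automating (resp.\ weakly automating) $S$ on all inputs in particular does so on $\Reff$ formulas. The plan for the converses, and the source of the word \say{proxy} used above, is to go through \Cref{thm:characterization-informal}: for $S$ simulating $\EF$ I would reduce proof search for an arbitrary unsatisfiable CNF $\varphi$ to proof search for the $\Reff$ formula $\Reff_{s_2}(\Reff_{s_1}(\varphi))$ and back, with $s_1 \coloneq n^3$ and $s_2$ a fixed polynomial pinned down below; the Resolution case will need a separate argument, since Resolution does not simulate $\EF$.

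The first step is to extract from the proof of \Cref{thm:characterization-informal} a \emph{constructive and uniform} version of it: a single polynomial $q$ together with two polynomial-time proof translations, (a)~a \emph{lift} sending an $S$-refutation of $\varphi$ of size $m$ to an $S$-refutation of $\Reff_{s_2}(\Reff_{s_1}(\varphi))$ of size $q(m,n)$, and (b)~a \emph{descent} sending an $S$-refutation of $\Reff_{s_2}(\Reff_{s_1}(\varphi))$ of size $M$ to an $S$-refutation of $\varphi$ of size $q(M,n)$. Both are carried out by the usual substitution-and-\emph{modus-ponens} manipulation available in any reasonable system simulating $\EF$. For the lift one fixes the polynomial-size $\EF$-proof of the propositional formalization of \say{$[\,\vec\beta$ encodes a size-$\le s_2$ Resolution refutation of $\Reff_{s_1}(\varphi)\,]\to\varphi(E(\vec\beta))$} provided by \Cref{thm:AM-in-PV-informal} (with $E$ the extraction circuit of \Cref{thm:algo-informal}), observes that its hypothesis is exactly the clause set of $\Reff_{s_2}(\Reff_{s_1}(\varphi))$, substitutes $E$, and chains with the given refutation of $\varphi$; for the descent one symmetrically uses the $\Res$-proof (hence $\EF$-proof, \Cref{thm:ub-in-res-informal}) of \say{$\varphi(\vec\alpha)\to\Reff_{s_2}(\Reff_{s_1}(\varphi))(P(\vec\alpha))$}, where $P$ is Pudlák's circuit and $s_2$ is taken to be precisely Pudlák's size bound for refuting $\Reff_{s_1}(\varphi)$ from a satisfying assignment. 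This choice of parameters also makes $\Reff_{s_2}(\Reff_{s_1}(\varphi))$ unsatisfiable precisely when $\varphi$ is, combining Pudlák's upper bound and the Atserias–Müller lower bound (\Cref{thm:AM-original}) at the inner level.

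With this in hand, the converse of (i) is: given $\varphi$, form $\rho\coloneq\Reff_{s_2}(\Reff_{s_1}(\varphi))$, which is itself a $\Reff$ formula; if $\varphi$ has an $S$-refutation of size $m$, then by the lift $\rho$ has one of size $q(m,n)$, so the hypothesized algorithm automating $S$ on $\Reff$ formulas outputs an $S$-refutation of $\rho$ in time $\poly(n,m)$, and the descent turns it into an $S$-refutation of $\varphi$ of size $\poly(m,n)$ in time $\poly(m,n)$. The converse of (ii) does not even need the descent: as $\varphi$ is satisfiable iff $\rho$ is, running the hypothesized algorithm witnessing that $S$ is weakly automatable on $\Reff$ formulas on input $(\rho,1^{q(t,n)})$ separates \say{$\varphi$ has an $S$-refutation of size $\le t$} from \say{$\varphi$ is satisfiable}, which is the standard characterization of weak automatability. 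For Resolution I would argue directly from \Cref{thm:algo-informal}, Pudlák's upper bound, and \Cref{thm:AM-original}: run the hypothesized $\Reff$-automating algorithm for Resolution on $\Reff_{s'}(\Reff_s(\varphi))$ (with $s'$ Pudlák's polynomial) for $s = n^3, 2n^3, 4n^3, \dots$, each under a $\poly(n,s)$ time budget; once $s$ passes the shortest Resolution refutation size $r$ of $\varphi$, the inner formula $\Reff_s(\varphi)$ becomes satisfiable, so by Pudlák $\Reff_{s'}(\Reff_s(\varphi))$ has a $\poly(n,s)$-size Resolution refutation, the algorithm returns it within budget, and feeding it to the extraction algorithm yields a satisfying assignment of $\Reff_s(\varphi)$, i.e.\ a size-$\le s$ Resolution refutation of $\varphi$; the rounds with $s<r$ merely exhaust their polynomial budgets, since there $\Reff_{s'}(\Reff_s(\varphi))$ is exponentially hard for Resolution. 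The total time is $\poly(n,r)$, and the weak-automatability direction is analogous, taking $s=\max(n^3,t)$ and using only the satisfiability equivalence.

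The hard part will be promoting \Cref{thm:characterization-informal} to the constructive and uniform form above: the translations must be realized by a single pair of polynomial-time algorithms, not merely asserted per sequence, which requires care with (1)~the precise propositional encoding of \say{$\pi$ refutes $\Reff_{s_1}(\varphi)$} so that it matches the clauses of $\Reff_{s_2}(\Reff_{s_1}(\varphi))$ on the nose; (2)~the size parameters, so that the \emph{inner} Atserias–Müller lower bound comfortably beats $s_2$ while Pudlák's construction still fits under $s_2$ and the outer object stays of polynomial size; and (3)~checking that the substitution-and-\emph{modus-ponens} gluing is legitimate for an arbitrary reasonable $S\supseteq\EF$, not merely for $\EF$. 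The Resolution case contributes a smaller bookkeeping burden: making each round's budget both sufficient when $s\ge r$ and harmless when $s<r$, and keeping $s'$ below the Atserias–Müller threshold.
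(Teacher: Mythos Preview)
Your proposal is largely correct and, for automatability of reasonably strong $S$ (item (i) of Theorems~\ref{thm:aut-on-Ref-EF}), tracks the paper's proof closely: both directions go through the constructive version of \Cref{thm:characterization-informal} (which the paper states and proves as \Cref{thm:fragment-characterization}, already in constructive form), lifting to $\Reff_{p(n,m)}(\Reff_{n^3}(\varphi))$, automating there, and descending.

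There are, however, genuine differences in the remaining three cases. For weak automatability of strong $S$, you invoke the canonical-pair characterization and argue separation directly via the satisfiability equivalence; the paper instead reuses the descent, handling the possibility that the automatable system $Q$ is not itself reasonably strong by passing to $\EF+\Reflf^Q$ before applying \Cref{thm:fragment-characterization}. Your route is cleaner but relies on an equivalence the paper never states. For Resolution~(i), the paper gives a one-line non-constructive argument you missed: automating Resolution on $\Reff$ formulas already decides $\SAT$ by~\cite{AM20}, so $\P=\NP$ and every system is automatable. Your dovetailing-plus-extraction argument works (modulo the parameter remark below), and in fact yields a genuine automating algorithm rather than a vacuous implication, but it is much more work. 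For Resolution~(ii), the paper also dovetails over $s=1,2,\dots$, but once a $Q$-refutation of the outer $\Reff$ is found it again jumps to $\EF+\Reflf^Q$ and applies \Cref{thm:fragment-characterization}, whereas you finish via the satisfiability equivalence and pair separation; both are valid.

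One wording issue to fix: the phrase ``keeping $s'$ below the Atserias--M\"uller threshold'' is backwards. You need $s'$ \emph{large}, at least $N^3$ where $N$ is the number of variables of $\Reff_s(\varphi)$, so that (a)~the extraction algorithm of \Cref{thm:algo-informal} applies to the outer refutation, and (b)~the AM bound $2^{\Omega(s'/N^2)}$ on $\Reff_{s'}(\Reff_s(\varphi))$ is superpolynomial in the rounds with $s<r$. There is no upper constraint on $s'$ coming from AM.
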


We remark again that these $\Reff$ formulas are always talking about Resolution, not about $S$. That is, for every strong enough proof system, \emph{efficient proof search over all tautologies is equivalent to efficient proof search over Resolution lower bounds}.

Until now no such general structural result was known that related proof search generally to proof search for a particular class of formulas. This goes in line with a question of \citeauthor{PS22} \cite{PS22}, who asked whether automating a proof system on truth-table tautologies (i.e., formulas stating circuit lower bounds) implies the automatability of the system on all tautologies. We have proved that this is the case for the class of formulas stating Resolution lower bounds in place of truth-table tautologies.

\subsubsection{Unprovability of Resolution lower bounds}
\Cref{thm:ub-in-res-informal}, together with \Cref{thm:AM-original}, further imply that true Resolution lower bounds can be essentially arbitrarily hard to prove. Namely, if $S$ is a propositional proof system where $\{ \varphi_n\}_{n\in\bbN}$ is a sequence of formulas that $S$ cannot refute in polynomial size, then $S$ cannot refute $\{ \Reff(\Reff(\varphi_n))\}_{n\in\bbN}$ either.

\begin{theorem}[Propositional unprovability of Resolution lower bounds, informal]
    \label{thm:prop-unprovable-informal}
    Let $Q$ be a reasonable propositional proof system that simulates Resolution. If $\{ \varphi_n\}_{n\in\bbN}$ is a sequence of hard unsatisfiable CNF formulas for $Q$, where $\varphi_n$ has $n$ variables and size $|\varphi_n|=\poly(n)$, then
    \begin{enumerate}[label=(\roman*)] \itemsep=0pt
        \item the formulas $\Reff_{n^2}(\varphi_n)$ over $N = \poly(n)$ variables are all unsatisfiable and require size $2^{N^{\Omega(1)}}$ to be refuted in Resolution;
        \item yet, $Q$ does not have polynomial-size refutations of the formulas $\Reff_{N^2}(\Reff_{n^2}(\varphi_n))$ stating quadratic lower bounds on $\Reff_{n^2}(\varphi_n)$.
    \end{enumerate}
\end{theorem}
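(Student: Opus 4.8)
The plan is to establish the two items separately. Item~(i) is essentially a repackaging of the Atserias--Müller lower bound, while item~(ii) is a \emph{transfer} argument that converts a short $Q$-refutation of the iterated $\Reff$ formula into a short $Q$-refutation of $\varphi_n$ itself.

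For item~(i): since $\varphi_n$ is hard for $Q$ and $Q$ simulates Resolution, $\varphi_n$ has no Resolution refutation of size at most $n^2$ once $n$ is large, so $\Reff_{n^2}(\varphi_n)$ is unsatisfiable; as it has $N = \poly(n)$ variables, \Cref{thm:AM-original} with $s = n^2$ gives the Resolution size lower bound $2^{\Omega(s/n)} = 2^{\Omega(n^2/n)} = 2^{\Omega(n)} = 2^{N^{\Omega(1)}}$. Re-running the same computation, $\Reff_{n^2}(\varphi_n)$ itself --- being unsatisfiable with $N$ variables --- requires Resolution size $2^{\Omega(N)} \gg N^2$, so $\Reff_{N^2}(\Reff_{n^2}(\varphi_n))$ is unsatisfiable and genuinely encodes a quadratic Resolution lower bound for $\Reff_{n^2}(\varphi_n)$.

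For item~(ii) I would argue by contraposition. Assume $Q$ has a polynomial-size refutation $\pi$ of $\Reff_{N^2}(\Reff_{n^2}(\varphi_n))$; I will extract a polynomial-size $Q$-refutation of $\varphi_n$, contradicting hardness. Apply \Cref{thm:ub-in-res-informal} with $\psi := \varphi_n$ and $s := n^2$: the depth-$2$, fan-in-$2$ circuit $P(\cdot, \varphi_n, n^2)$ maps a satisfying assignment $\alpha$ of $\varphi_n$ to (the encoding of) a short Resolution refutation of $\Reff_{n^2}(\varphi_n)$, hence to a satisfying assignment of $\Reff_{N^2}(\Reff_{n^2}(\varphi_n))$, and this correctness has polynomial-size Resolution proofs. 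Because $P$ has depth $2$ and fan-in $2$, every output bit depends on at most four bits of $\alpha$, so the substitution $\vec u \mapsto P(\vec x, \varphi_n, n^2)$ replaces each variable $u_i$ of the outer $\Reff$ formula by a constant-size formula $p_i(\vec x)$; moreover, since Pudlák's assignment makes a distinguished literal of every clause of $\Reff_{N^2}(\Reff_{n^2}(\varphi_n))$ constantly true, each substituted clause collapses, so the Resolution-provable correctness of $P$ yields polynomial-size Resolution derivations of $\mathrm{CNF}\!\big(D_j[\vec u/\vec p(\vec x)]\big)$ from the clauses of $\varphi_n(\vec x)$, for each clause $D_j$ of the outer formula.

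It then remains to plug the circuit into $\pi$. One would substitute $\vec u \mapsto \vec p(\vec x)$ throughout $\pi$ (or, renaming $\vec u$ to fresh variables $\vec z$ and adjoining the constant-size definitions $z_i \leftrightarrow p_i(\vec x)$), obtain a $Q$-proof of $\neg \bigwedge_j D_j[\vec u/\vec p(\vec x)]$, and combine it --- via closure under cut/modus ponens and the simulation of Resolution by $Q$ --- with the Resolution derivations of $\bigwedge_j D_j[\vec u/\vec p(\vec x)]$ from $\varphi_n(\vec x)$, producing a polynomial-size $Q$-refutation of $\varphi_n$. I expect this instantiation step to be the main obstacle: propositionally it asks to substitute the concrete circuit $P$ into the universally quantified refutation $\pi$ (and then eliminate any auxiliary variables), and a system that merely simulates Resolution need not be closed under either operation --- this is exactly the Resolution-versus-Extended-Frege gap, and it is why \Cref{thm:ub-in-res-informal} produces a depth-$2$, fan-in-$2$ circuit rather than an arbitrary polynomial-size one, so that the substituted formulas and extension definitions stay of constant size. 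For $Q$ that simulates (Extended) Frege this step is immediate, since such systems are closed under substitution by small formulas and absorb extension axioms; and for $Q = \Res$ itself item~(ii) needs no transfer at all, following directly from a second application of \Cref{thm:AM-original} to the unsatisfiable formula $\Reff_{n^2}(\varphi_n)$. The remaining routine points are to make the constant-size substitution and the handling of auxiliary variables precise under the exact closure properties assumed of a reasonable system, and to check that Pudlák's construction is efficient enough for the refutation it emits to fit inside the quadratic bound $N^2$.
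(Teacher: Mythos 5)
Your item~(i) and the high-level plan for item~(ii) match the paper's proof, but the step you flag as "the main obstacle" is a genuine gap, and the endpoint analysis you offer — $Q$ simulating Extended Frege, or $Q = \Res$ itself — does not cover the intended range of systems. The theorem is meant to apply to every $Q \geq \Res$ that is closed under modus ponens and clause substitutions (this is what "reasonable" means here), which includes systems like $\Res(k)$, Cutting Planes, or bounded-depth Frege that neither simulate $\EF$ nor equal $\Res$; your proposal has no mechanism for these.

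The paper resolves the substitution problem without ever substituting the circuit $P$ into the $Q$-refutation $\pi$. The key ingredients, which you do not use, are the \emph{pseudo-negation} operator $\pneg$ and the \emph{contraposition lemma} (\Cref{lemma:contraposition-Res}). Theorem~\ref{thm:ub-in-Res} produces a polynomial-size \emph{Resolution} refutation of the CNF formula $\Satf_{\restriction\varphi}(\alpha) \land \pneg\Reff_t(\Reff_s(\varphi), \pi) \land \bigl(\pi = P_{\restriction\varphi}(\alpha)\bigr)$, where $\pi = P_{\restriction\varphi}(\alpha)$ is a conjunction of width-$\leq 2$ extension clauses over fresh $\pi$-variables — so the "depth-$2$ circuit" never appears as a substitution target, only as a block of small CNF axioms. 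Proposition~\ref{prop:MP-to-PN} shows that any $Q$ closed under restrictions and modus ponens is closed under \emph{contraposition for pseudo-negations}, and then \Cref{lemma:contraposition-Res} converts the pair consisting of (a) the hypothesized polynomial-size $Q$-refutation of $\Reff_t(\Reff_s(\varphi), \pi)$ and (b) the Resolution (hence $Q$) refutation of $\Satf_{\restriction\varphi}(\alpha) \land \pneg\Reff_t(\Reff_s(\varphi), \pi) \land \bigl(\pi = P_{\restriction\varphi}(\alpha)\bigr)$ into a polynomial-size $Q$-refutation of $\Satf_{\restriction\varphi}(\alpha) \land \bigl(\pi = P_{\restriction\varphi}(\alpha)\bigr)$. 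By \Cref{prop:sat-to-native-Res} and inspection of the width-$2$ extension clauses this is (a literal substitution away from) a refutation of $\varphi[2]$, and clause substitution then gives a refutation of $\varphi$. Your "adjoin extension definitions $z_i \leftrightarrow p_i(\vec x)$" instinct is close, but by itself it leaves you with two disjoint CNFs over $\vec z$ and $\vec x$ and no rule to connect them; the contraposition lemma is precisely the bridge, and it is provable already in Resolution. Without $\pneg$ and \Cref{lemma:contraposition-Res} the argument does not close for the systems strictly between Resolution and Extended Frege, which is where the theorem has content.

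Your final sanity check — that Pudlák's refutation size $\tau = \Theta\bigl(s(m + s n^2)\bigr)$ fits inside the outer bound $N^2$ for $s = n^2$ — does go through, since $N = \Theta(n^4 + n^2 m)$ so $N^2$ dominates $\tau = \Theta(n^6 + n^2 m)$, but it deserves the explicit computation rather than a parenthetical worry.
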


There is nothing special about quadratic lower bounds being unprovable---one can get arbitrarily small polynomial lower bounds by tweaking the encoding. See the discussion after \Cref{thm:fo-unprovable-informal}.

We note that Iwama showed in 1997 that the Proof Size Problem for Resolution is $\NP$-complete \cite{Iwama97}. This means, in particular, that its complement in $\coNP$-complete and hence, unless $\NP = \coNP$, no propositional proof system can efficiently derive \emph{all} tautological $\Reff$ formulas (i.e., all true Resolution lower bounds), or else there would be a polynomially bounded proof system. While this has a similar flavor to our result, our theorem is different in at least two aspects. First, from an explicit family of hard tautologies we obtain an explicit family of hard $\Reff$ formulas for the system, in a generic way. Second, the parameters are essentially optimal: we identify a sequence of unsatisfiable formulas for which an exponential (and hence maximal) Resolution lower bound holds ---while the system $Q$ in question cannot even prove a quadratic lower bound.

As a corollary of \Cref{thm:prop-unprovable-informal}, for example, we get the first explicit lower bounds for $\Reff$ formulas in bounded-depth Frege systems.

\begin{corollary}[Hard $\Reff$ formulas for bounded-depth Frege, informal]
    \label{cor:PHP-Ref-ACO-informal}
    For every $d\leq O(\log n / \log \log n)$, the formulas $\Reff(\Reff(\PHP_n))$ are all unsatisfiable but require size $\exp({\Omega(n^{1/(2d+1)})})$ to be refuted in depth-$d$ Frege systems.
\end{corollary}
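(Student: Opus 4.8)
The plan is to obtain \Cref{cor:PHP-Ref-ACO-informal} as an instance of \Cref{thm:prop-unprovable-informal}, taking $Q$ to be depth-$d$ Frege and the hard family to be the pigeonhole principle $\varphi_n = \PHP_n$ (with $n+1$ pigeons and $n$ holes, so $\PHP_n$ is a CNF over $\Theta(n^2)$ variables and $\poly(n)$ clauses). Two classical inputs feed the instantiation. First, Haken's exponential lower bound: every Resolution refutation of $\PHP_n$ has size $2^{\Omega(n)}$, so for large $n$ the formula $\PHP_n$ has no Resolution refutation of the polynomial size that would make $\Reff_{n^2}(\PHP_n)$ satisfiable; hence $\Reff_{n^2}(\PHP_n)$ is unsatisfiable, and \Cref{thm:AM-original} then forces every Resolution refutation of $\Reff_{n^2}(\PHP_n)$ to have size $2^{N^{\Omega(1)}}$, where $N = \poly(n)$ is its number of variables, so in particular $\Reff_{n^2}(\PHP_n)$ has no size-$N^2$ Resolution refutation and the double formula $\Reff_{N^2}(\Reff_{n^2}(\PHP_n))$ is unsatisfiable --- this is the content of part~(i) of \Cref{thm:prop-unprovable-informal} and yields the first half of the corollary. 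Second, $\PHP_n$ is hard for depth-$d$ Frege in the regime $d \le O(\log n/\log\log n)$, which is exactly the hypothesis ``$\{\varphi_n\}$ hard for $Q$'' of \Cref{thm:prop-unprovable-informal}. It remains to extract the concrete exponential size lower bound.

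For that I would use not the qualitative conclusion of \Cref{thm:prop-unprovable-informal} but the size- and depth-aware reduction underlying its proof. Suppose $\pi$ is a depth-$d$ Frege refutation of $\Reff_{N^2}(\Reff_{n^2}(\PHP_n))$ of size $T$. Substitute, uniformly throughout $\pi$, the depth-$2$, fan-in-$2$ circuit $P = P(\alpha,\PHP_n,n^2)$ provided by \Cref{thm:ub-in-res-informal} for the block of variables $\vec z$ encoding the purported size-$N^2$ Resolution refutation of $\Reff_{n^2}(\PHP_n)$; this is legitimate since $N$ can be chosen with $N^2 \ge |P| = \poly(n)$. By the substitution property of Frege, the result is a depth-$(d+O(1))$ Frege proof, of size $\poly(T,n)$, of $\bot$ from the clauses of $\Reff_{n^2}(\PHP_n)$ evaluated at $\vec z := P$ --- the only cost is the constant-depth circuit $P$ entering below the bottom level of the axioms. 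Now \Cref{thm:ub-in-res-informal} gives polynomial-size Resolution, hence polynomial-size depth-$O(1)$ Frege, proofs that \emph{whenever $\alpha$ satisfies $\PHP_n$, the assignment $\vec z := P(\alpha,\PHP_n,n^2)$ satisfies every clause of $\Reff_{n^2}(\PHP_n)$} --- this is precisely the correctness of $P$ on satisfiable inputs, with $\PHP_n$ and the size parameter $n^2$ hard-wired and $\alpha$ left free. Composing the two proofs --- derive each substituted clause from the clauses of $\PHP_n$, then feed them into the first proof --- yields a depth-$(d+O(1))$ Frege refutation of $\PHP_n$ of size $\poly(T,n)$.

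Finally I would invoke the best known size lower bound for bounded-depth Frege refutations of the pigeonhole principle --- due to Pitassi--Beame--Impagliazzo and to Krajíček--Pudlák--Woods, and in its sharpest form to Håstad --- which asserts that depth-$d'$ Frege refutations of $\PHP_n$ require size $\exp(\Omega(n^{1/(2d'+O(1))}))$. Applying this with $d' = d + O(1)$ gives $\poly(T,n) \ge \exp(\Omega(n^{1/(2d+O(1))}))$, hence $T \ge \exp(\Omega(n^{1/(2d+1)}))$, the absolute constant in the denominator of the exponent being absorbed exactly as in the informal statement of \Cref{cor:PHP-Ref-ACO-informal}; the regime $d \le O(\log n/\log\log n)$ is precisely the one in which this bound is superpolynomial, and it is stable under the additive re-indexing of $d$.

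The one genuinely delicate point I expect is the bookkeeping in the middle step: one must verify that substituting the constant-depth circuit $P$ into $\pi$, and then composing with the correctness proof of $P$, costs only an additive-constant increase in depth and a polynomial increase in size (both standard for bounded-depth Frege, but worth spelling out), and one must pin down the index conventions so that, after the unavoidable constant depth shift, the cited $\PHP_n$ lower bound produces the exponent claimed in \Cref{cor:PHP-Ref-ACO-informal}. Everything else is immediate from \Cref{thm:AM-original}, \Cref{thm:ub-in-res-informal}, \Cref{thm:prop-unprovable-informal}, Haken's bound, and the classical $\PHP$-versus-bounded-depth-Frege lower bounds.
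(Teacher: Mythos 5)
Your plan is the same as the paper's: instantiate the unprovability machinery on $\PHP_n$, substitute Pudlák's circuit $P$ into a purported depth-$d$ Frege refutation of $\Reff(\Reff(\PHP_n))$, compose with the correctness proof of $P$ to obtain a bounded-depth refutation of $\PHP_n$, and read off the lower bound from Håstad's theorem. The paper packages the substitute-and-compose step through \Cref{lemma:UB-consequence-basic}, which converts a depth-$d$ refutation of $\Reff_{p(n)}(\Reff_{q(n)}(\PHP_n))$ into a depth-$d$ refutation of $\PHP_n[2]$; the $2$-extension variables are by-definition conjunctions of two literals, so substituting them out raises depth by exactly $1$ and \Cref{thm:Hastad-PHP} at depth $d+1$ gives $\exp(\Omega(n^{1/(2d+1)}))$ on the nose. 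Your version keeps the substitution at the Frege level and cites the ``depth-$2$, fan-in-$2$'' bound from \Cref{thm:ub-in-res-informal} verbatim, arriving only at ``depth $d+O(1)$''; that costs an unspecified additive constant in the denominator of the exponent, i.e.\ $1/(2d+O(1))$ rather than $1/(2d+1)$. The missing observation that closes this gap is the one the paper relies on: once $\varphi = \PHP_n$ is hard-wired, the $\alit$-variables disappear and every output of $P$ is a conjunction of at most two $\alpha$-literals (equations~(\ref{eq:weakR2alphas1})--(\ref{R-more-detail-last}) in the circuit description), so the substitution is by depth-$1$ formulas and the depth overhead is exactly $+1$. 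You flag the index bookkeeping as the delicate point, which is fair; the content you would need to supply there is precisely this depth-$1$ analysis of $P_{\restriction\varphi}$ rather than the generic depth-$2$ bound.
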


We build here on the best-known PHP lower bounds by \citeauthor{Hastad23} \cite{Hastad23}, but even better bounds could be obtained by applying the same argument to Tseitin formulas using the results of \citeauthor{HR22} \cite{HR22}. This corollary contrasts again with the open question of Pudlák \cite{Pudlak20} about whether constant-depth Frege proves any of its own lower bounds.

Finally, using similar ideas, we obtain unconditional independence results for first-order theories of arithmetic.

\begin{theorem}[First-order unprovability of Resolution lower bounds, informal]
    \label{thm:fo-unprovable-informal}
        Let $T$ be a consistent first-order theory extending Robinson Arithmetic by a set of polynomial-time recognizable axioms. Then, there exists a sequence of unsatisfiable propositional formulas $\{ \psi_{N} \}_{{N} \in \bbN}$ described uniformly by a polynomial-time algorithm, where $\psi_N$ has $N$ variables, such that
    \begin{enumerate}[label=(\roman*)] \itemsep=0pt
        \item[(i)] Resolution refutations of the formula $\psi_{N}$ require size $2^{N^{\Omega(1)}}$;
        \item[(ii)] there exists $c > 0$ such that the theory $T$ cannot prove $\Omega(N^c)$ lower bounds on the Resolution size of
        these refutations; that is, there is $N_0 \in \bbN$ such that the lower bound expressed by the first-order sentence
        $\forall N\forall \pi \left(N > N_0 \land \operatorname{Ref}_{\Res}(\psi_N,\pi) \to |\pi| > N^c\right)$ is unprovable in $T$.
    \end{enumerate}
\end{theorem}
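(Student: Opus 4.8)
The plan is to realize part (ii) as an explicit Gödel--Rosser-style incompleteness phenomenon, powered by Pudlák's upper bound. The idea is to choose a \emph{self-referential} family of $\Reff$-formulas whose ``Resolution is hard'' statement is a single first-order sentence $\Phi$; if $T$ proved $\Phi$, then the underlying CNF would acquire a satisfying assignment---namely the $T$-proof of $\Phi$ itself, suitably encoded---and then \Cref{thm:ub-in-res-informal}, run in the metatheory, would produce a short Resolution refutation whose mere existence refutes $\Phi$. Moreover this refutation is so concrete that even Robinson Arithmetic ($\mathsf{Q}$), and hence $T$, sees the contradiction. Thus $T \nvdash \Phi$; and since $\Phi$ is nonetheless true (by the Atserias--Müller lower bound, \Cref{thm:AM-original}), this is genuine unprovability. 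Note that, crucially, we never ask $T$ to formalize Pudlák's construction: that reasoning happens in the metatheory, and $T$ only needs to prove a true bounded sentence about explicit numerals.

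In detail: fix the absolute constant $d$ from \Cref{thm:ub-in-res-informal}, so that on a satisfiable CNF $\varphi$ with $m$ variables and size parameter $s=m^2$, Pudlák's polynomial-size circuit outputs a Resolution refutation of $\Reff_{m^2}(\varphi)$ of size at most $M^d$, where $M=\poly(m)$ is the number of variables of $\Reff_{m^2}(\varphi)$; set $c:=d+1$, and let $N_0$ be a large absolute threshold chosen so that the $2^{\Omega(m)}$ bound of \Cref{thm:AM-original} on $\Reff_{m^2}$ of an unsatisfiable $m$-variable CNF comfortably beats $N^c$ once $N=\poly(m)>N_0$. By Gödel's diagonal lemma at the level of polynomial-time algorithms, fix a polynomial-time uniform sequence $\{\varphi_n\}_{n\in\bbN}$ where $\varphi_n$ has $m(n)=\poly(n)$ variables and its satisfying assignments are exactly the encodings of $T$-proofs of length at most $n$ of the single first-order sentence
\[ \Phi \;:\equiv\; \forall N\,\forall \pi\,\bigl(N>N_0 \wedge \operatorname{Ref}_{\Res}(\psi_N,\pi)\rightarrow |\pi|>N^c\bigr), \qquad \text{where}\qquad \psi_N:=\Reff_{m(n)^2}(\varphi_n), \]
and $n$ is the polynomial-time-computable preimage of $N$ under the injection $n\mapsto N(n):=(\text{number of variables of }\Reff_{m(n)^2}(\varphi_n))$. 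The apparent circularity---$\varphi_n$ mentions $\Phi$ while $\Phi$ mentions the algorithm producing $\varphi_n$---is precisely what the fixed point resolves, with $d,c,N_0$ frozen beforehand. (A routine padding of the $\Reff$-encoding makes $\{\psi_N\}$ indexed by its own number of variables for all large $N$; this is the theorem's family.)

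Now suppose for contradiction that $T\vdash\Phi$, via a $T$-proof $p$ of length $\ell$. Pick any $n^\star\ge\ell$ with $N^\star:=N(n^\star)>N_0$. Then (the encoding of) $p$ is a satisfying assignment of $\varphi_{n^\star}$, so $\varphi_{n^\star}$ is satisfiable; applying \Cref{thm:ub-in-res-informal} to this assignment in the metatheory yields a Resolution refutation $\rho$ of $\psi_{N^\star}=\Reff_{m(n^\star)^2}(\varphi_{n^\star})$ with $|\rho|\le(N^\star)^d<(N^\star)^c$. Hence $N^\star>N_0 \wedge \operatorname{Ref}_{\Res}(\psi_{N^\star},\rho)\wedge |\rho|\le(N^\star)^c$ is a \emph{true} $\Sigma_1$ (indeed bounded) sentence over explicit numerals, so $\mathsf{Q}$---and thus $T$---proves $\neg\Phi$, contradicting the consistency of $T$. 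Therefore $T\nvdash\Phi$. Reading off the conclusions: since no $T$-proof of $\Phi$ exists, $\varphi_n$ is unsatisfiable for every $n$, so by \Cref{thm:AM-original} every Resolution refutation of $\psi_N$ has size $2^{\Omega(m(n))}=2^{N^{\Omega(1)}}$ (using $m(n)=N^{\Theta(1)}$)---this is (i), and in particular shows $\Phi$ is true; and $\Phi$ is verbatim the sentence $\forall N\forall\pi(N>N_0\wedge\operatorname{Ref}_{\Res}(\psi_N,\pi)\to|\pi|>N^c)$ asserting an $\Omega(N^c)$ Resolution lower bound that $T$ cannot prove---this is (ii), with $c=d+1$. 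Smaller exponents follow by padding the $\Reff$-encoding with dummy variables, which inflates $N$ while keeping Pudlák's refutation of the satisfiable instance of size $N^{1+o(1)}$, so any fixed $c>1$ (in particular $c=2$) works.

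I expect the main obstacle to be purely technical rather than conceptual: (a) making the self-reference rigorous---legitimately defining the sentence $\Phi$ before the family $\{\varphi_n\}$ even though the Gödel number of $\Phi$ sits inside the description of $\varphi_n$; (b) checking that Pudlák's exponent $d$ is genuinely absolute (independent of $T$, of the input CNF, and of the size parameter), which is what lets $c$ and $N_0$ be fixed up front; and (c) choosing encodings so that $\operatorname{Ref}_{\Res}$ is, provably in $\mathsf{Q}$, a bounded predicate and $\{\psi_N\}$ can be re-indexed by its number of variables. I would also note that for $T\supseteq\SOT$ (which already covers $\mathsf{PA}$, $\mathsf{ZFC}$, \ldots) the fixed point can be avoided: take $\varphi_n$ to be the finite consistency statement for $T$ at length $n$ (unsatisfiable by consistency of $T$), use the \emph{formalization} of \Cref{thm:ub-in-res-informal} inside $T$ to show that $T\vdash\Phi$ would entail $T\vdash\mathrm{Con}(T)$, and finish by Gödel's second incompleteness theorem; the self-referential version above is what is needed to push all the way down to $\mathsf{Q}$, where $T$ need not be able to formalize Pudlák's construction at all.
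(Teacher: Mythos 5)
Your approach is genuinely different from the paper's: you go via a self-referential sentence fixed by the recursion theorem and $\Sigma_1$-completeness of $\mathsf{Q}$, whereas the paper routes through the strong proof system $P_{T'}$ of $T' = \SOT + \PORefl_T$, its reflection principle formulas $\Reflf^{P_{T'}}$, Gödel's second incompleteness theorem, and a nonstandard-model argument that requires the formalized Pudlák upper bound to hold in the model (hence the detour through $\SOT$ and $\Pi_1$-conservativity). Your version keeps Pudlák's construction entirely in the metatheory and argues purely syntactically, which is conceptually appealing. However, there is a genuine gap in part (i): you never establish that $\psi_N = \Reff_{m(n)^2}(\varphi_n)$ is \emph{unsatisfiable}. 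From $T \nvdash \Phi$ you correctly conclude that $\varphi_n$ is unsatisfiable, but the Atserias--Müller lower bound then only says that \emph{if} a Resolution refutation of $\psi_N$ exists, it is large; it says nothing about $\psi_N$ itself having no satisfying assignment. In fact $\psi_N$ is satisfiable precisely when $\varphi_n$ admits a Resolution refutation of size at most $m(n)^2$, and since $\varphi_n$ is just a polynomial-size Tseitin encoding of a diagonal proof predicate, nothing in your construction rules out that Resolution refutes $\varphi_n$ in quadratic size. The theorem requires $\{\psi_N\}$ to be a family of unsatisfiable formulas, so this claim needs its own argument.

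The paper sidesteps exactly this issue by taking the base CNF to be a reflection-principle formula $\Reflf^{P_T}$: a size-$n^2$ Resolution refutation of $\Reflf^{P_T}$ would imply that Resolution polynomially simulates $P_T \geq \EF$, which is known to be false, so $\psi_{k,s}$ is unsatisfiable for a reason independent of whether $T$ proves the lower bound sentence. To patch your proof one would need to build the same kind of Resolution-hardness into $\varphi_n$ — e.g., encode $\varphi_n$ so that its satisfying assignments are \emph{either} $T$-proofs of $\Phi$ of length $\le n$ \emph{or} falsifications of the soundness of $\EF$ (via a selector variable), so that any Resolution refutation of $\varphi_n$ restricts to a refutation of $\EF$'s reflection principle and must therefore be large. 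That restores the missing unsatisfiability (and hardness) of $\psi_N$, at the cost of importing part of the machinery the paper uses anyway; your alternative for $T \supseteq \SOT$ via finite consistency statements has the same gap for the same reason.
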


We remark that the theory $T$ in this theorem can be arbitrary strong. This implies that, unconditionally, theories like Peano Arithmetic ($\PA$) cannot prove all true Resolution lower bounds. The same ideas apply to Zermelo--Fraenkel Set Theory ($\ZFC$) and similarly powerful formal systems.

We also note that the constant $c >0$ in the exponent of the unprovable lower bound depends on the definition of $\psi_N$. In general, one can alter $\psi_N$ to get an unprovable lower bound of the form $\Omega(N^{c})$ for any fixed constant $c >0$.

\subsection{Technical overview}

Next we provide a technical overview of the main proof ideas and how these are combined to yield our main results and corollaries.

\subsubsection{Assignment extraction: techniques}
We obtain the extraction algorithm in \Cref{thm:algo-informal} by derandomizing the proof of the Resolution lower bound for the $\Reff$ formulas. The original proof revolves around the concept of \emph{block-width} (called \emph{index-width} in \cite{AM20}) in Resolution refutations of $\Reff_s(\varphi)$. The variables of the formula are arranged into $s$ \emph{blocks}, each encoding a clause in the purported refutation of size $s$. The block-width of a refutation $\pi$ is then the largest number of blocks mentioned in a clause of the refutation $\pi$. The proof proceeded in two steps:
\begin{enumerate} \itemsep=0pt
    \item derive a \emph{block-width lower bound}, showing that if $\varphi \not\in \SAT$, then the block-width of any refutation of $\Reff_s(\varphi)$ must be large;
    \item by a \emph{random restriction argument}, argue that if the refutation $\pi$ is small, there exists a restriction that makes the block-width of the restricted refutation small, contradicting the previous point.
\end{enumerate}

Our algorithm works by following these steps in reverse. First, given a refutation $\pi$ from which we want to extract a satisfying assignment, instead of sampling a restriction at random from a specific distribution,  we construct a restriction \emph{deterministically} in a greedy fashion, tailored to the specifics of $\pi$. This is reminiscent of the kind of greedy deterministic restrictions used by \citeauthor{CP90} \cite{CP90} to formalize Haken's lower bound for the pigeonhole principle in bounded arithmetic, and more broadly in the style of \citeauthor{BP96} \cite{BP96}, \citeauthor{CEI96} \cite{CEI96} and \citeauthor{BW01} \cite{BW01}. Our algorithm runs in deterministic polynomial time and always succeeds in finding a restriction that reduces the block-width to $O\big(\sqrt{s\log |\pi|}\big)$.

In the second step, we look at the proof of the block-width lower bound and interpret it as a Prover-Delayer game in the style of \citeauthor{AD08} \cite{AD08}. The Prover issues queries about the values of the variables of $\Reff_s(\varphi)$, or forgets previously recorded such values, and the Delayer replies following a concrete strategy that allows them to keep playing until a large number of blocks appear queried. Our algorithm traverses the Resolution refutation guided by the Delayer's strategy in the Prover-Delayer game. We can then prove that this Delayer's strategy will, after a polynomial number of steps, reach either 
\begin{enumerate}[label=(\alph*)] \itemsep=0pt
    \item a clause of \emph{high block-width}, or
    \item a clause encoding a \emph{satisfying assignment} to $\varphi$.
\end{enumerate}
Since the greedy deterministic restriction in the first step made sure the block-width is small, the Delayer will be guaranteed to find a satisfying assignment.

We note that our deterministic restriction only achieves a reduction of block-width to $O\big(\sqrt{s\log |\pi|}\big)$, while using a random restriction one could achieve up to $O(\log |\pi|)$. In fact, if one allows randomness in the extraction algorithm, then an argument similar to the random restriction of~\cite{AM20} yields a zero-error probabilistic polynomial-time extraction algorithm that works even when the refutation $\pi$ being analyzed is for the formula $\Reff_{n^2}(\varphi)$. In contrast, the price to pay for determinism is that the size parameter $s$ should be at least $n^3$.

\subsubsection{$\NP$-hardness of the Proof Analysis Problem for Extended Frege}
\label{subsubsec:NP-hardness-proof-MCSP}

The idea behind the hardness proof in \Cref{thm:np-hardness-informal} boils down to the following crucial insight: as soon as a proof system $S$ proves \emph{at lest one} true Resolution lower bounds for some specific formula $\psi$, then $S$ can also prove lower bounds on formulas of the form $\varphi \lor \psi$ for any $\varphi$ by completely ignoring $\varphi$. However, since $\varphi$ might be satisfiable, $\varphi \lor \psi$ might be overall satisfiable, yet $S$ will prove a lower bound \say{agnostically}, without revealing a satisfying assignment.

In more detail, suppose the proof system $S$ (in our case, Extended Frege) can uniformly prove Haken's lower bound for the pigeonhole principle formulas \cite{Haken85}. That is, there is a polynomial-time algorithm that given a number $n$ and $t$ outputs a $S$-proof $\pi$ such that
\[\pi: S \vdash \neg\Reff_{t}(\PHP_n).\]
Now, let $\varphi$ be a propositional formula over $n$ variables different from those of $\PHP_n$. If the system $S$ is strong enough, it will be able to uniformly construct a proof $\pi'$ such that
\[\pi': S \vdash \neg\Reff_{t}(\varphi \lor \PHP_n).\]

The proof $\pi'$ goes roughly as follows: if $\varphi$ is satisfiable then so is $\varphi \lor \PHP_n$, but $S$ being strong enough to prove the soundness of Resolution, it can conclude that there will be no refutation of any size, let alone of size $t$; if, on the other hand, $\varphi$ is unsatisfiable and there is a size-$t$ refutation of $\varphi \lor \PHP_n$, then any restriction to the variables of $\varphi$ necessarily reduces $\varphi \lor \PHP_n$ to just $\PHP_n$. Assuming $S$ is strong enough to prove that Resolution is closed under restrictions, $S$ would conclude that there is a size-$t$ refutation of $\PHP_n$ ---contradicting Haken's lower bound, which we assumed can itself be proven in $S$.

Observe how the previous argument performs a proof by cases that works regardless of the satisfiability of $\varphi$. In particular, even in the case that a particular $\varphi$ is satisfiable, an explicit satisfying assignment never appears in the proof ---we only reason about a hypothetical one. For the argument to go through for $S = \EF$, everything we need is for $\EF$ to prove Haken's lower bound, which was already formalized in this system by Cook and Pitassi \cite{CP90}.

\subsubsection{Formalization of the upper and lower bounds}

A large part of our technical contribution consists in formalizing the proofs leading to the $\NP$-hardness of automating Resolution. We summarize below some of the challenges encountered, and the solutions devised.

\paragraph{The Atserias--Müller lower bound in $\PVO$.}
Different proofs of the lower bound exist in the literature, but none of them seem directly formalizable in $\PVO$. The original proof has the caveat of the random restriction argument, which might be formalizable in \citeauthor{JerabekStack}'s theory $\APC$, but likely not in $\PVO$. In addition, the block-width lower bound is proven by relating small refutations to the canonical exponential-size tree-like refutation of any formula, which could be hard to reason about in bounded theories.

In the work of de Rezende, Göös, Nordström, Pitassi, Robere, and Sokolov \cite{dRGNPRS21} two alternative proofs were presented. The first proof consists of a random restriction followed by a block-width lower bound proven via a reduction to the retraction weak pigeonhole principle. The random restriction presents the same formalization issues as the original proof, and the block-width reduction is equally problematic: the decision tree reduction they use has low depth, which is necessary to transfer the lower bounds on (block-)width, but the size of the decision tree itself seems to be super-polynomial, and hence cannot be reasoned about in $\PVO$. The second proof of \citeauthor{dRGNPRS21} uses this same block-width lower bound followed by the size-width trade-offs of \citeauthor{BW01} \cite{BW01}. The block-width lower bound is still problematic, of course, but in addition to this, the statement of \citeauthor{BW01}---\say{for every small Resolution refutation, there exists another Resolution refutation in small width}---is itself impossible to formalize in bounded arithmetic. The reason for this is that, as demonstrated by Thapen \cite{Thapen16}, in general these narrow proofs can require super-polynomial size and therefore the statement of \citeauthor{BW01} cannot possibly be a bounded formula.

Finally, \citeauthor{Garlik19} \cite{Garlik19} has proven lower bounds on the $\Reff$ formulas for the so-called non-relativized encoding. Unfortunately for us, his proofs encounter the same barrier: they rely on random restriction arguments, which are in any case more involved than the original ones.

We resolve these issues by coming up with new proof, inspired by the extraction algorithm, that modifies both ingredients in the original proof, yielding \Cref{thm:AM-in-PV-informal}. The random restriction argument is replaced by a greedy deterministic restriction just like the one used in the extraction algorithm. For the block-width lower bound, we show that the argument can be completely described by a Prover-Delayer game without referring to the exponential-size canonical tree-like refutation, making the entire proof formalizable in $\PVO$.

We remark that moving from the random restriction to the deterministic one comes at the cost of a slightly worse lower bound. The original size bound on $\Reff$ formulas is of the form $2^{\Omega(s/n)}$ and hence yields $2^{\Omega(n)}$ Resolution size lower bounds for all $\Reff_s$ formulas with $s \geq n^2$. Our deterministic restriction, in line with the parameters of the extraction algorithm, achieves a lower bound of $2^{\Omega(s/n^2)}$ which is exponential in $\Omega(n)$ for $s \geq n^3$. It seems reasonable that the original proof with the random restriction can be formalized in $\APC$, but we have not carried out this formalization.

\paragraph{Pudlák's upper bound in Resolution.}
For the upper bound in Resolution (\Cref{thm:ub-in-res-informal}), our proof is based on a careful analysis of the construction that makes it possible to describe the construction by a low-depth circuit. Carrying out the proof of the correctness of this circuit in Resolution is tedious, but ultimately clear once the right description of the circuit is provided.

\subsubsection{Consequences}

\paragraph{Characterization of the propositional fragments of Atserias--Müller.}
Our characterization theorem (\Cref{thm:characterization-informal}) is a consequence of the formalization of the lower bound in $\PVO$ (\Cref{thm:AM-in-PV-informal}) and the upper bound in Resolution itself (\Cref{thm:ub-in-res-informal}). Those results, in the propositional setting, imply that 
\begin{enumerate} \itemsep=0pt
    \item for the extraction algorithm $E$, we have $\EF \vdash \Reff(\Reff(\varphi), \pi) \to \Satf(\varphi, E(\varphi, \pi))$; and
    \item for Pudlák's algorithm $P$, we have $\Res \vdash \Satf(\varphi, \alpha) \to \Reff(\Reff(\varphi), P(\varphi, \alpha))$.
\end{enumerate}

Here, we highlight the variables of the $\Reff$ formula encoding a refutation $\pi$ as the second argument of $\Reff$. In particular, the Resolution proof of the correctness of $P$ is also possible in $\EF$. Then, simple use of contraposition allows us to go from $\neg\varphi$ to $\neg \Reff(\Reff(\varphi))$, and vice versa. That is, if $\EF$ can prove $\neg \varphi$, then it can also prove it in the encoding $\neg \Satf(\varphi, \alpha)$, and when substituting $E(\varphi, \pi)$ for $\alpha$, where $\pi$ are the only free variables, contraposition on item (1) gives us that $\EF$ derives $\neg \Reff(\Reff(\varphi), \pi)$. The other direction is analogous.

\paragraph{Automatability in terms of $\Reff$ formulas.}
For the characterization of automatability in \Cref{thm:aut-for-Ref-informal} to go through we build on \Cref{thm:characterization-informal} and additionally show that the characterization given there is not only in terms of proof size, but it is actually constructive. Given a proof of $\neg \varphi$ in $S$ we can efficiently construct a proof of $\neg \Reff(\Reff(\varphi))$, and vice versa. In this way, searching for proofs of $\Reff(\Reff(\varphi))$ is a proxy for the proofs of $\varphi$.

Remarkably, our proof techniques fail for proof systems strictly between Extended Frege and Resolution. The upper bound in Resolution does imply that from a refutation of $\Reff(\Reff(\varphi))$ we can obtain a refutation of $\varphi$. Unfortunately, it is our extraction algorithm (\Cref{thm:algo-informal}) what guaranteed that if $\varphi$ has a refutation of size $t$, then $\Reff(\Reff(\varphi))$ has a refutation of size $\poly(t)$. In Extended Frege this is true thanks to the extraction algorithm, but it seems conceivable that weaker systems might be able to easily prove $\neg \varphi$ without being able to prove $\neg \Reff(\Reff(\varphi))$ efficiently. (For Resolution itself this result does go through, for the more ad-hoc reason that $\Reff$ formulas talk about Resolution itself).

\paragraph{Unprovability of Resolution lower bounds.}
For \Cref{thm:prop-unprovable-informal} we exploit \Cref{thm:ub-in-res-informal}: if there is a short refutation of $\Reff(\Reff(\varphi))$, then there is a short refutation of $\varphi$. Since we formalized the upper bound construction in Resolution, the result applies to any proof system that contains Resolution (and behaves naturally in the sense that it is closed under \emph{modus ponens}). Then, if $\varphi$ is a hard formula for $Q$ and $Q$ simulates Resolution, we have that $\Reff(\varphi)$ is unsatisfiable. By the lower bound on $\Reff$ formulas (\Cref{thm:AM-original}), this formula is exponentially hard for Resolution, making $\Reff(\Reff(\varphi))$ unsatisfiable as well---but hard to refute for $Q$.

In the first-order setting, \Cref{thm:fo-unprovable-informal} relies again on the formalization of the upper bound on $\Reff$ formulas. This time, instead of starting from a sequence of hard propositional formulas, we can leverage Gödel's second incompleteness theorem to start from a sentence (the consistency of $T$) that is unconditionally unprovable in $T$. From this follows that $T$ cannot prove the soundness of a certain propositional system based on $T$ (the so-called \emph{strong proof system of $T$} \cite{Pudlak20}). We then consider the $\Reff(\cdot)$ formula around these soundness statements. We conclude that if $T$ could derive the lower bound on the $\Reff(\cdot)$ formulas in question, it would also be able to prove the soundness of the strong proof system of $T$ and, as a consequence, $T$ would derive its own consistency. Since Gödel's incompleteness theorem gives us sentences that are unconditionally independent of $T$, the corresponding Resolution lower bounds are also unconditionally unprovable in $T$. This works essentially for any theory of arithmetic subject to Gödel's incompleteness phenomenon, and does not rely on any complexity-theoretic assumptions.

\subsection{Related work}
Our work fits into a trend in complexity theory concerned with the meta-mathematics of computational complexity, which has gained remarkable momentum in recent years. Most of this work has been primarily concerned with the formalization of cornerstone results of computational complexity in bounded arithmetic and establishing unprovability and logical independence as barrier results. The literature is too vast to review here, so refer the reader to the recent survey of Oliveira \cite{Oliveira24}. In parallel, there has been a growing body of work deploying tools and ideas from mathematical logic to prove complexity-theoretic statements (see, e.g., \cite{azza1,MoritcForcing,khainc,khanw,PS22,Mykyta22,PS23, KraInc,azza2,Kha24,ACG24, AKPS24,KraENS,Mykyta24,azza3}). Our work continues in this direction.

Two recent works conceptually related to our investigations on $\Reff$ formulas merit further discussion. \citeauthor{ST21} \cite{ST21} initiated a general study of $\Reff$ formulas for arbitrarily strong proof systems. In particular, they studied \emph{iterations} of these formulas, which are reminiscent of the nested $\Reff(\Reff(\varphi))$ formulas that feature in our work. Their $\Reff$ formulas are not limited to Resolution, and they consider the iterated version of $\Reff^Q$ when $\Reff^Q$ talks about an arbitrarily strong proof system $Q$. While we are unable to connect our work on analyzability to their results, our characterization of proof size in terms of $\Reff$ formulas (\Cref{thm:characterization-informal}) has conceptual ties to their Iterated Lower Bounds Hypothesis.

The other relevant work is the research of \citeauthor{LLR24} \cite{LLR24}, who studied the provability of Resolution lower bounds in relativized theories of bounded arithmetic in the context of $\TFNP$. Until their work, the only formalization of proof complexity lower bounds that we are aware of is that of \citeauthor{CP90} \cite{CP90}. \citeauthor{LLR24} studied so-called \emph{refuter problems} in proof complexity: given a purported Resolution refutation of, say, $\PHP_n$, which is smaller than the known lower bounds, find a mistake in the proof (which must certainly exist, due to these very lower bounds). They connect the provability of lower bounds to the complexity of solving these refuter problems in subclasses of $\TFNP$. While their results yield formalizations of some proof complexity lower bounds, our results are essentially incomparable. First, their provability results are for \emph{relativized} theories of bounded arithmetic, where the given Resolution refutation is accessed through an oracle, while our proofs are in the non-relativized theories, where we can quantify over the objects in question. Second, they consider the provability of lower bounds for explicit families of tautologies like the pigeonhole principle or the Tseitin formulas. In contrast, the lower bound we are concerned is a sort of \emph{meta lower bound}: it tells us that the $\Reff(\varphi)$ formulas are hard whenever $\varphi$ is unsatisfiable. We believe, however, that the $\TFNP$ perspective on analyzability might shed light on some of our open questions.

\subsection{Open problems}

\paragraph{Analyzability of constant-depth Frege and other weak proof systems.} Similar techniques to those of~\cite{AM20} have been employed to prove the $\NP$-hardness of automating other weak proof systems like Regular and Ordered Resolution \cite{B20, BY24}, $k$-DNF Resolution \cite{G20}, Cutting Planes \cite{GKSMP20}, Nullstellensatz and Polynomial Calculus \cite{dRGNPRS21}, the OBDD proof system \cite{IR22} and, more recently, even $\ComplexityFont{AC}^0$-Frege \cite{P23}. All proof systems weaker than Resolution are analyzable just because Resolution is (i.e, their corresponding $\PAP$ problems are in $\P$), since analyzability is downwards closed under p-simulations. For the stronger systems, the question remains open. Are these systems analyzable? What about their search versions?

We highlight the analyzability of constant-depth Frege as a particularly interesting problem. While we have proven some unconditional lower bounds on $\Reff$ formulas here, it is open whether $\ComplexityFont{AC}^0$-Frege can prove any true Resolution lower bounds at all. It has been conjectured in the past that the PHP lower bound might be formalizable in these systems, at least in quasi-polynomial size. If this was possible, the $\NP$-hardness of $\PAP_\EF$ in \Cref{thm:np-hardness-informal} could be improved all the way to these systems.

\paragraph{$\FP$-completeness of the search version of $\PAP_\Res$.}
While we have shown that assignment extraction can be performed in polynomial time, our algorithm does not seem to be possible anywhere below $\P$. The algorithm seems hard to parallelize, which raises the question of whether the search version of $\PAP_\Res$ is in $\NC$ or even below. This is related to the question of whether the formalization of the lower bound on $\Reff$ formulas is provable in theories weaker than $\PVO$. If the statement was provable in, say, $\mathsf{VNC}^1$, witnessing theorems would give us an extraction algorithm in $\ComplexityFont{FNC}^1$. We conjecture that this improvement is impossible, and that the search problem of $\PAP_\Res$ is complete for $\FP$, but we are unable to prove it. The reduction, if true, likely requires some new technical idea. This would imply, among other things, that $\VZ$ does not prove the lower bound on $\Reff$ formulas, unconditionally.

\paragraph{On the weak automatability of Resolution.} Recall that 
a proof system is \emph{weakly automatable} if there exists a proof system that p-simulates it and is automatable. By our \Cref{thm:aut-for-Ref-informal}, the weak automatability of Resolution is equivalent to a proof system $Q$ simulating Resolution and being automatable on $\Reff$ formulas. If $Q \geq \EF$, then our theorem would imply that $Q$ itself would be automatable on all formulas, hitting cryptographic hardness results \cite{KP98, BPR00, BDGMP04, ACG24}. However, if $Q$ is strictly weaker than $\EF$, our statement does not apply and the automatability of $Q$ on $\Reff$ formulas does not imply automatability on all formulas. This does not seem to contradict any hardness assumptions. Of course, no such $Q$ is known to be efficiently automatable on Resolution lower bound statements, but this raises again the question of whether some non-trivial algorithm weakly automating Resolution might be plausible.

\subsection{Structure of the paper}
The paper is structured as follows. After the preliminaries in \Cref{sec:preliminaries}, we dedicate \Cref{sec:PAP-def} to formally defining the Proof Analysis Problem and stating some basic facts about it. \Cref{sec:the-algorithm} proves \Cref{thm:algo-informal}, describing the algorithm for the search version of $\PAP_\Res$ and \Cref{thm:krajicek-informal}. \Cref{sec:PAP-EF-NP-complete} proves \Cref{thm:np-hardness-informal}, giving $\NP$-hardness of $\PAP_\EF$ and stronger systems. In \Cref{sec:AM20-in-PV} and \Cref{sec:Pudlak-UB} we formalize, respectively, the Resolution lower bound and upper bound on $\Reff$ formulas that yield \Cref{thm:AM-in-PV-informal} and \Cref{thm:ub-in-res-informal}. \Cref{sec:AM-in-EF} translates this to the propositional setting to show that Extended Frege has polynomial-size proofs of the $\NP$-hardness of automating Resolution. Finally, \Cref{sec:Ref-formulas-applications} gives the proof of the characterization in \Cref{thm:characterization-informal} and proves \Cref{thm:aut-for-Ref-informal} on the equivalence of automatability of $\Reff$ formulas, while \Cref{subsec:unprovability} proves \Cref{thm:prop-unprovable-informal} and \Cref{thm:fo-unprovable-informal} on the unprovability of Resolution lower bounds. 

\section{Preliminaries}
\label{sec:preliminaries}
We assume the reader to be familiar with the central concepts of computational complexity theory. Below, we review the essential definitions and facts involving proof complexity and bounded arithmetic that feature in the paper. For a more comprehensive treatment of proof complexity, we refer to Krajíček \cite{Krajicek19}. For bounded arithmetic, the recent survey of \citeauthor{Oliveira24} \cite{Oliveira24} covers all the necessary material in the style of the meta-mathematics of computational complexity, which aligns with the style of our work. Other classical texts in logic and bounded arithmetic also cover these  contents (see, e.g., \cite{HP93, Krajicek95, Buss97, Buss98}).

\subsection{Levin reductions}
\label{subsec:prelim-Levin-reductions}

\newcommand{\Karp}{\leq^{\operatorname{p}}_{\operatorname{m}}}
\newcommand{\Levin}{\leq^{\operatorname{p}}_{\operatorname{Levin}}}

For decision problems $A, B\subseteq \{0, 1\}^*$, we use the notation $A \Karp B$ to express that $A$ \emph{many-one reduces} (or \emph{Karp reduces}) to $B$, meaning that there is a polynomial-time computable function $f$ such that for all $x \in \{ 0,1\}^*$ we have $x\in A$ if and only if $f(x) \in B$. We will be concerned with a strengthening of Karp reductions for \emph{search problems}. A search problem for us is a relation $R \subseteq \{ 0,1\}^* \times \{ 0,1\}^*$. A search problem $R$ is in $\FP$ if there exists a polynomial-time function that on input $x$, outputs some $y$ such that $(x, y) \in R$, if such a $y$ exists.

We will say that $R_1$ \emph{Levin reduces} to $R_2$ and write $R_1 \Levin R_2$ if there exists a triple of polynomial-time computable functions $(f, g, h)$ such that for all $x, w\in \{ 0,1\}^*$, it holds that (i) if $(x, w) \in R_1$, then $(f(x), g(x, w)) \in R_2$, and (ii) if $(f(x), w) \in R_2$, then $(x, h(x, w)) \in R_1$. Whenever membership in $R_1$ and $R_2$ is checkable in polynomial time, it holds that the sets $\dom R_1$  and $\dom R_2$ are in $\NP$ and $\dom R_1 \Karp \dom R_2$.

\subsection{Proof complexity}
\label{subsec:proof-complexity}
Following the classical definition of \citeauthor{CR79} \cite{CR79}, a \emph{propositional proof system} $S$ for the set~$\TAUT$ of propositional tautologies is a polynomial-time function $S : \{ 0,1\}^* \to \TAUT$ whose range is exactly $\TAUT$. We think of $S$ at the polynomial-time verifier mapping proofs to the statements they prove; i.e., if $S(\pi) = \varphi$, then we say $\pi$ is an $S$-proof of $\varphi$. It is often convenient to think of a proof system as establishing \emph{unsatisfiability}; thus, if $\varphi$ is an unsatisfiable formula and $\pi$ is an~$S$-proof of the tautology $\neg\varphi$, then we say that $\pi$ is an \emph{$S$-refutation} of~$\varphi$, or an~$S$-proof of the unsatisfiability of~$\varphi$. Since we deal exclusively with classical logic, here and below we tacitly gloss over the distinction between the formulas $\neg\neg\varphi$ and $\varphi$; this is particularly useful for literals $\ell$, where $\neg \ell$ is sometimes used to denote the complementary literal.

For a tautology $\varphi$ and a proof system $S$, we denote by $\size{S}{\varphi} \coloneq \min_{\pi : S(\pi) = \varphi} |\pi| $ the \emph{size} of its smallest $S$-proof. A proof system $S$ is \emph{polynomially bounded} if there exists a constant~$c \in \bbN$ such that for all $\varphi \in \TAUT$ we have $\size{S}{\varphi} \leq |\varphi|^c$. 
\newcommand{\effproves}{\vdash_{\operatorname{poly}}}
For a sequence $\varphi = \{\varphi_n\}_{n\in\bbN}$ of tautologies, we write $S \effproves \varphi$ or simply $S \effproves \varphi_n$ to express that $\size{S}{\varphi_n} = |\varphi|^{O(1)}$ as $n$ grows.
When we want to emphasize that it is via a specific proof $\pi$ that $S$ proves $\varphi_n$, we write $\pi : S\vdash\varphi_n$. More generally, for $s \in \bbN$, we write $S\vdash_{s} \varphi_n$ to express that there exists an~$S$-proof $\pi$ of size $|\pi| \leq s$ such that~$\pi : S \vdash \varphi_n$.

We say that a proof system $S$ \emph{simulates} another system $Q$, written $S \geq Q$, if there exists a constant~$c \in \bbN$ such that for every $\varphi \in \TAUT$ we have $\size{S}{\varphi} \leq \size{Q}{\varphi}^c$. We additionally say that $S$ \emph{p-simulates} $Q$ and write $S \psim Q$ if there exists a polynomial-time computable function sending $Q$-proofs to $S$-proofs of the same formula; i.e., there exists a polynomial-time computable function $f$ such that for every $\varphi\in \TAUT$ and every $\pi : Q\vdash \varphi$, we have $f(\pi) : S \vdash \varphi$. We say that two proof systems $S$ and $Q$ are \emph{polynomially equivalent} if $S \psim Q$ and $Q \psim S$. A proof system $S$ is \emph{optimal} if $S \geq Q$ for every propositional proof system $Q$, and respectively \emph{p-optimal} if $S \psim Q$ for every propositional proof system $Q$.

A \emph{literal} is a propositional variable or its negation. Given a formula $\varphi(x_1, \dots, x_n)$, a \emph{literal substitution} is a mapping of the form $\rho : \{ x_1, \dots, x_n\} \to \{x_1, \dots, x_n, \neg x_1, \dots, \neg x_n, 0, 1\}$ that replaces variables by other literals or substitutes constants in their place. We denote by $\varphi_{\restriction \rho}$ the substituted formula $\varphi(\rho(x_1), \dots, \rho(x_n))$, with the convention that every resulting occurrence of $\neg\neg x_i$ is replaced by $x_i$. A \emph{restriction} is a particular case of a variable substitution, where all variables are mapped to either $0$, $1$, or themselves. We say that a proof system~$S$ is \emph{closed under substitutions} (respectively, \emph{closed under restrictions}) if there exists a constant $d\in \bbN$ such that for every tautology $\varphi$ and every literal substitution (respectively, restriction) $\rho$, it holds that $\size{S}{\varphi_{\restriction \rho}} \leq \size{S}{\varphi}^d$. All the explicit proof systems dealt with in this work (i.e., the ones described below, like Resolution or Frege or Extended Frege systems) are closed under literal substitutions. In these cases, a proof of the substituted formula can be obtained directly by applying the substitution line by line to every formula appearing in a proof $\pi$ of $\varphi$, and we hence denote by $\pi_{\restriction\rho}$ the corresponding substituted proof of~$\varphi_{\restriction \rho}$. 

\subsubsection{Resolution}
\label{subsec:prelim-resolution}
A central proof system in this work is Resolution ($\Res$). We usually see this as a refutation system for CNF formulas. Accordingly, we sometimes write $\pi : \Res\ \vdash \neg\varphi$ for a CNF formula $\varphi$, to mean that $\pi$ is a Resolution refutation of $\varphi$, hence a proof of the tautology $\neg\varphi$. In this way, $\Res$ is a Cook--Reckhow proof system for the fragment of $\TAUT$ made of the formula of the form $\neg\varphi$, where $\varphi$ is an unsatisfiable CNF formula. Through the standard Tseitin transformation of an arbitrary propositional formula into equisatisfiable CNF form, $\Res$ can also be seen as a Cook--Reckhow proof system for $\TAUT$ itself; we do not need the details of this in this paper.

A \emph{literal} is a propositional atom or its negation, a \emph{clause} is a disjunction of literals, and a \emph{CNF formula} is a conjunction of clauses. We see clauses as sets of literals, and write simply $C \subseteq D$ to express that $C$ is a subclause of $D$.  A \emph{Resolution refutation} of an unsatisfiable CNF formula $\varphi = C_1 \land \dots \land C_m$ over variables~$x_1, \dots, x_n$ is a sequence $D_1, \dots, D_s$ of clauses over $x_1, \dots, x_n$ such that $D_s = \bot$, denoting the empty clause, and for every $i \in [s-1]$, the clause $D_i$ either (a) is one of the clauses $C_1, \dots, C_m$ of $\varphi$, or (b) is a \emph{weakening} of a previous clause, meaning that $D_i \supseteq D_j$ for some $1\leq j < i$, or (c) has been obtained from two previous clauses $D_j = A \lor x$ and $D_k = B \lor \neg x$, for~$j, k < i$, by an application of the \emph{Resolution rule}:
    \begin{prooftree}
        \AxiomC{$A \lor x$}
        \AxiomC{$B \lor \neg x$}
        \RightLabel{\footnotesize{\;\;(Res)}}
        \BinaryInfC{$A \lor B$}
    \end{prooftree}
We say that $A \lor B$ is obtained by \emph{resolving} over $x$.
The \emph{length} of $\pi$, denoted by $\prooflength{\pi}$, is $s$. We often write $\varphi \vdash_\Res^s \bot$ to indicate that there exists a Resolution refutation of $\varphi$ in length $s$.

To every Resolution refutation $\pi$ we can associate a directed acyclic graph in a natural way, and we often do so implicitly. We denote by $\depth(\pi)$ the length of the longest path in the dag, starting from the root labeled by the empty clause $\bot$. The number of vertices in this graph is precisely $\prooflength{\pi}$.

Resolution is \emph{implicationally complete}, meaning that for every pair of CNF formulas $\varphi$ and $\psi$, if $\varphi \models \psi$, there exists a derivation of each clause of $\psi$ from the clauses of $\varphi$. We write $\varphi \vdash_\Res \psi$ to express that such a collection of derivations exist.

We will also deal with a mild extension of the Resolution system, known as \emph{$k$-DNF Resolution} \cite{Krajicek01}, denoted $\Res(k)$ for $k \geq 1$. The system $\Res(k)$ is also a refutational system, but lines are \emph{$k$-DNF} formulas, which are unbounded fan-in disjunctions of \emph{$k$-terms}, conjunctions of up to $k$ literals. A clause is a $1$-DNF. The system consists of a weakening and an introduction rule,
\[
\hbox{
  \AxiomC{$A$}
   \RightLabel{\footnotesize{\;\;(Weak)}}
  \UnaryInfC{$A \lor B$}
  \DisplayProof
  \qquad
  \AxiomC{$A \lor \ell_1$}
  \AxiomC{$B \lor (\ell_2 \land \dots \land \ell_s)$}
   \RightLabel{\footnotesize{\;\;($\land$-Intro)}}
  \BinaryInfC{$A \lor B \lor (\ell_1 \land \dots \land \ell_s)$}
  \DisplayProof
}
\]
together with a Cut rule that generalizes the Resolution rule,
\begin{prooftree}
    \AxiomC{$A \lor (\ell_1 \land \dots \land \ell_s)$}
  \AxiomC{$B \lor \neg \ell_1 \lor \dots \lor \neg \ell_s$}
   \RightLabel{\footnotesize{\;\;(Cut)}}
  \BinaryInfC{$A \lor B$}
\end{prooftree}
where $A$ and $B$ are $k$-DNF formulas and $s \leq k$.

It is easy to see that Resolution ($\Res$) corresponds to $\Res(1)$.

\subsubsection{Frege systems}
\label{subsec:prelim-Frege}
Throughout this work we reason about Resolution refutations within much stronger systems for propositional logic. A \emph{Frege system} \cite{CR79} consists of a finite set of axiom schemas and inference rules that are sound and implicationally complete for the language of propositional tautologies built from the Boolean connectives negation ($\neg$), conjunction ($\land$), and disjunction ($\lor$). A \emph{Frege proof} is then a sequence of formulas where each formula is obtained by either substitution of an axiom schema or by application of an inference rule on previously derived formulas. The specific choice of rules does not affect proof size up to polynomial factors, as long as there are only finitely many rules and these are sound and implicationally complete \cite{CR79}. We refer to Cook and Reckhow \cite{CR79} or Krajíček \cite[§2.1]{Krajicek19} for specific examples of choices for these rules and axioms. One can alternatively define Frege systems in the formalism of Natural Deduction or the Sequent Calculus for classical propositional logic, but we will not be concerned with these syntactic details.

Of central importance for us is the \emph{Extended Frege} ($\EF$) system \cite{CR79}, in which proof lines can be succinctly written as Boolean circuits rather than formulas \cite{jerabek:phd-thesis}. In general, for a circuit class $\ComplexityFont{C}$, one can consider the proof system $\ComplexityFont{C}$-Frege, in which lines are restricted to be Boolean circuits of that type. We are particularly interested in the $\ACZF{d}$ systems, in which lines are restricted to be Boolean circuits of unbounded fan-in and constant depth $d$. We also consider more generally \emph{bounded-depth Frege systems}, where the depth $d$ is bounded, but not necessarily a constant.

For bounded-depth Frege systems, we have strong lower bounds available. The most famous such lower bound is the one for the \emph{Pigeonhole Principle} ($\PHP$). For every $m \in \bbN$ and $n\in \bbN$ such that $m > n$, the formula $\PHP^m_n$ stands for the CNF formula over variables $p_{i,j}$ for $i \in [m]$ and $j \in [n]$ consisting of the clauses
{\allowdisplaybreaks
\begin{axiomdef}{2}{PHP}
        & \bigvee_{j\in[n]} p_{i,j} &&\quad\text{ for all }i\in[m], \label{php-1} \\
        &\neg p_{i,j} \lor \neg p_{i',j} &&\quad\text{ for all }i, i' \in [m], i \neq i' \text{ and } j\in [n] \label{php-2}
    \end{axiomdef}}
We sometimes denote by $\PHP_n$ the formula $\PHP^{n+1}_n$.

Strong lower bounds are known on the proof complexity of the pigeonhole principle for bounded-depth Frege systems \cite{Ajtai94, PBI93, KPW95, UF96}. Here we state only a simplified version of the best such lower bound, proven by Håstad \cite{Hastad23}.

\begin{theorem}[\cite{Hastad23}]
    \label{thm:Hastad-PHP}
    For every $d \leq O(\log n / \log \log n)$, depth-$d$ Frege systems require size at least $\exp({\Omega(n^{1/(4d-2)})})$ to prove $\neg \PHP_n$.    
\end{theorem}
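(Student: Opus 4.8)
The plan is to establish the bound along the now-standard route for bounded-depth Frege lower bounds on $\PHP_n$ --- a sequence of random \emph{matching restrictions} combined with a \emph{switching lemma} and the method of $k$-evaluations --- initiated by Ajtai \cite{Ajtai94} and refined through \cite{PBI93, KPW95}, with the sharp parameters due to Håstad \cite{Hastad23}. Suppose for contradiction that $\pi$ is a depth-$d$ Frege refutation of $\PHP^{n+1}_n$ with $|\pi| = S$. First I would put $\pi$ into a normal form in which every line is a $\Sigma_d$ or $\Pi_d$ formula over the variables $p_{i,j}$, at the cost of only a polynomial blow-up in size and a constant additive change in depth; this lets us speak of the "bottom two levels" of each line.

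The combinatorial heart of the argument is a random restriction that respects the structure of $\PHP$: a random partial matching $\rho$ between pigeons and holes, which sets $p_{i,j}$ to $1$ when $(i,j) \in \rho$, forces the other variables in the same row and column to $0$, and leaves the remaining variables free. The switching lemma states that for a suitable distribution on such restrictions, any DNF of small bottom fan-in is, after restriction, computed with high probability by a \emph{matching decision tree} of small depth $t$, with failure probability exponentially small in $t$; dually the same holds for CNFs. Hence the bottom two levels of a $\Sigma_d$ (resp.\ $\Pi_d$) line can be swapped and merged, reducing its depth by one. Iterating this $d-1$ times with carefully chosen restriction parameters and taking a union bound over the $S$ lines of $\pi$, I would argue that with positive probability every line of the restricted proof becomes a matching decision tree of depth at most $k$ for $k = n^{\Theta(1/(2d-1))}$, while the surviving domain still has size $m' \ge k+1$. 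The restricted proof is then a genuine depth-$d$ Frege refutation of the nontrivial instance $\PHP^{m'+1}_{m'}$ whose lines are depth-$k$ decision trees. The hypothesis $S < \exp(\Omega(n^{1/(2d-1)}))$ is exactly what makes the union bound over the $S$ lines succeed, and the exponent $1/(2d-1)$ (rather than the weaker $1/2^{d-1}$ one would get from naive geometric shrinking) comes from the precise rate at which the domain is permitted to shrink at each of the $d-1$ switching steps. Finally, the decision-tree translations of the $\PHP^{m'+1}_{m'}$ axioms form a consistent $k$-evaluation: each axiom evaluates to the constant-$1$ tree and the Frege rules preserve the invariant "is represented by a depth-$\le k$ matching decision tree", so the empty clause ending $\pi$ would be assigned the constant-$0$ tree; but any partial matching of size $\le k$ on a domain of size $m' \ge k+1$ extends to a point on which all axioms, and hence all lines, evaluate to $1$ --- a contradiction.

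The main obstacle is the switching lemma in the \emph{matching} setting. Unlike Håstad's original switching lemma for $p$-random restrictions over a product distribution, here the restriction is a random partial matching, the variables $p_{i,j}$ are highly correlated (fixing one forces an entire row and column), and the relevant decision trees query the \emph{image of a pigeon} rather than a single bit. Obtaining a quantitative form strong enough --- failure probability $2^{-\Omega(t)}$ with the domain shrinking only polynomially --- while keeping enough independence across the $d-1$ rounds is the delicate point, and it is precisely this step where Håstad's refinement over \cite{PBI93, KPW95} buys the improved exponent. Everything else --- the normal-form reduction, the union bound, the soundness of $k$-evaluations, and the base case that a shallow matching decision tree cannot refute a nonempty $\PHP$ instance --- is routine once the switching lemma is in hand.
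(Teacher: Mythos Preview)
The paper does not prove this theorem: it is stated as a known result with a citation to H\aa stad \cite{Hastad23} and used as a black box (only once, in the proof of the bounded-depth Frege lower bound on $\Reff(\Reff(\PHP_n))$). There is therefore no ``paper's own proof'' to compare against.

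That said, your sketch is a faithful high-level outline of the argument in the cited literature: matching restrictions, the matching switching lemma, $d-1$ rounds of depth reduction, and a $k$-evaluation to finish. The identification of the matching switching lemma as the crux, and of H\aa stad's refined shrinkage parameters as the source of the $1/(2d-1)$ exponent (versus the $2^{-O(d)}$ of \cite{PBI93,KPW95}), is accurate. If you were asked to actually reproduce the proof rather than cite it, the one place your sketch is thin is the precise statement and parameters of the switching lemma and how the $d-1$ rounds are balanced to extract $n^{1/(2d-1)}$; that is where all the work lies, and ``carefully chosen restriction parameters'' is doing a lot of lifting. But as a plan it is correct.
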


Finally, we often consider extensions of Extended Frege by sets of additional axioms. For a set $A \subseteq \TAUT$ of tautologies that is recognizable in polynomial time, the system $\EF + A$ refers to Extended Frege extended with the axiom schemas that allow (formula) substitution instances of any formula in $A$.

\subsubsection{Automatability and proof search}
\label{subsec:prelim-automatability}
The notion of of automatability, introduced by \citeauthor{BPR00} \cite{BPR00}, is a natural formalization of efficient proof search in propositional proof system. We say that a proof system $S$ is \emph{automatable} if there exists a constant $c\in\bbN$ and an algorithm that given a propositional tautology $\varphi$, outputs an $S$-proof of $\varphi$ in time $(|\varphi| + \size{S}{\varphi})^c$, meaning that the proof search algorithm succeeds in finding a proof of size polynomial in the size of the shortest one.

Even when a system might not be automatable, it seems natural to ask whether there exists a system $Q$ that p-simulates $S$ and is itself automatable. In this case, we say that $S$ is \emph{weakly automatable} \cite{AB04}. Weak automatability is equivalent to the existence of an automating algorithm where the output proof belongs to a system $Q \geq_{\operatorname{p}}S$ rather than $S$ itself. In particular, weak automatability is closed downwards under p-simulation.

A more restrictive notion of proof search is given by the \emph{Proof Size Problem}. Associated to any propositional proof system $S$ we can define the \emph{Proof Size Problem for $S$} ($\PSP_S$), defined as the language
\[ \PSP_S \coloneq \{ (\varphi, 1^s) \mid \text{there is an $S$-proof of $\varphi$ in size $s$} \} .\]
Automating $S$ entails approximating minimum proof-size to a polynomial, in polynomial time.

\subsection{Bounded arithmetic}
\label{subsec:prelim-BA}

We heavily rely on the connection between propositional proof complexity and (weak) theories of arithmetic. We assume familiarity with basic knowledge of first-order logic and introduce the main theories we are concerned with. 

\subsubsection{The theories $\PVO$ and $\SOT$}
\label{subsec:prelim-PVO-SOT}
Theories of bounded arithmetic capture various forms of feasible reasoning and act as a uniform counterpart of propositional proof systems. The main tool to capture feasibility in mathematical reasoning is to bound the complexity of formulas over which one can apply induction.

\paragraph{Cook's $\PVO$.} Cook's theory $\PVO$ \cite{Cook75, KPT91} is an attempt to make precise the idea of polynomial-time reasoning. It is a universal theory whose vocabulary $\calL_\PV$ consists of a function symbol for each polynomial-time function, and the axioms are precisely the recursive definitions of these functions via composition and limited recursion on notation, in the style of Cobham's functional definition of $\FP$ \cite{Cobham65}. The theory further admits induction on quantifier-free formulas, which define precisely polynomial-time predicates.

The formal definition of $\PVO$ is rather technical and the details are not particularly relevant to our proofs, so we refer the reader to Krajíček's textbook \cite[Definition 5.3]{Krajicek95} for the details. The reason we rarely care about the technicalities of $\PVO$ is that we often work instead in the theory $\SOT$ of Buss, which happens to be conservative over $\PVO$ for the classes of formulas we are interested in. We discuss this next.

\paragraph{Buss's $\SOT$.}
We see $\SOT$ as a theory sitting in between Robinson's Arithmetic $\Q$ and Peano Arithmetic $\PA$. Let $\calL_\PA$ denote the language of {Peano Arithmetic}, $\calL_\PA \coloneq \{ 0, 1, +, \cdot, <\} $. The axioms of $\PA$ consist first of the axioms of Robinson's arithmetic $\mathsf{Q}$, which define the basic behavior of the symbols of $\calL_{\PA}$ (see, for example, \cite[§7.4.3]{Krajicek19} for a definition), together the \emph{Induction Schema}
\begin{equation}
    (\varphi(0) \land \forall x(\varphi(x) \to \varphi(x+1)) \to \forall x \varphi(x),
    \tag{$\textsc{Ind}_\varphi$}
\end{equation}
available for every formula $\varphi$.

The language of $\SOT$ is the first-order language of bounded arithmetic, $\mathcal{L}_{\text{BA}} \coloneqq \{ 0, 1, +, \cdot, <, |\cdot|, \lfloor \cdot/2 \rfloor, \#\}$. This extends  the language of Peano Arithmetic $\calL_\PA$ above by the symbols $|x|$, $\lfloor x/2 \rfloor$ and $x \# y$. The standard interpretation of $\lfloor x/2 \rfloor$ is clear. The notation $|x|$ denotes the length of the binary encoding of the number $x$, $\lceil \log (x + 1) \rceil$, while the \emph{smash symbol} $x\#y$ stands for $2^{|x|\cdot|y|}$.

For a term $t$ in the language of bounded arithmetic and a variable $x$ that does not appear in $t$, a formula of the form  $\forall x (x < t \to \varphi(x))$ or $\exists x (x < t \land \varphi(x))$ is called a \emph{bounded formula}. The quantifiers guarded by the bounds on $x$ are called \emph{bounded quantifiers} and we simply write $\forall x < t(\varphi(x))$ and $\exists x < t (\varphi(x))$. If the bounded quantifiers are of the form $\forall x < |s|$ of $\exists x < |s|$ for some term $s$, then they are called \emph{sharply bounded} quantifiers. The \emph{hierarchy of bounded formulas} consists of the classes $\Sigma_n^b$ (and $\Pi_n^b$), for $n \geq 1$, which are defined by counting the alternations of bounded quantifiers ignoring the sharply bounded ones, starting with an existential (respectively, universal) one. The class $\Delta_n^b$ consists of all formulas that admit an equivalent definition in both $\Sigma_n^b$ and $\Pi_n^b$. In particular, the class $\Delta_0^b$ stands for all formulas with sharply bounded quantifiers only.
 
The theory $\SOT$ of Buss \cite{Buss86} extends Robinson's arithmetic $\mathsf{Q}$ by a set $\operatorname{BASIC}$ of simple axioms for the new function symbols (see, e.g., \cite[Definition 5.2.1]{Krajicek95} for the complete list). On top of this, the theory has the \emph{Polynomial Induction} schema (\textsc{PInd}) for $\Sigma_1^b$-formulas: for every $\varphi \in \Sigma_1^b$, the theory contains the axiom
\begin{equation}
    \varphi(0) \land \forall x (\varphi (\lfloor x/2 \rfloor) \to \varphi(x)) \to \forall x \varphi(x).
    \tag{$\textsc{PInd}_\varphi$}
\end{equation}

When working over $\SOT$, we often invoke instead the schema for \emph{Length Induction},
\begin{equation}
    \varphi(0) \land \forall x (\varphi (x) \to \varphi(x+1)) \to \forall x \varphi(|x|),
    \tag{$\textsc{LInd}_\varphi$}
\end{equation}
made available for all $\Sigma^b_1$-formulas. This form of induction is provable from ($\textsc{PInd}_\varphi$) for $\varphi \in \Sigma^b_1$ \cite[Lemma 5.2.5]{Krajicek95}.

Unlike $\calL_{\PV}$, the language $\calL_{\text{BA}}$ of bounded arithmetic does not contain a function symbol for every function in $\FP$. However, every such $f \in \FP$ is $\Sigma^b_1$-definable in $\SOT$, meaning that there exists a $\Sigma^b_1$ formula whose interpretation over the standard model $\bbN$ defines $f$ and such that $\SOT$ proves the totality of this definition. Thus, in the rest of the paper we choose to use the theory $\SOT(\calL_{\PV})$, which is the theory $\SOT$ in the language of bounded arithmetic extended by all $\PV$ function symbols, meaning that we have a fresh symbol for each function in $\FP$, and induction is now available for all $\Sigma_1^b(\PV)$ formulas. The theory $\SOT(\calL_{\PV})$ is fully conservative over $\SOT$. In what follows we abuse notation and denote this simply as $\SOT$. 

The final key fact for us is that all $\forall\Sigma^b_1$ formulas provable in $\SOT$ are already provable in $\PVO$. That is, the theory $\SOT$ is $\forall\Sigma^b_1$-conservative over $\PVO$ \cite{Buss86}. We use this in some of the formalizations, where we carry out arguments in $\SOT$ but later appeal to its proof in $\PVO$. For a proof of this fact, see, for example Krajíček's textbook \cite[Thm. 5.3.4 and Cor. 7.2.4]{Krajicek95}.

\subsubsection{Exact counting in $\PVO$ and $\SOT$}
\label{subsec:prelim-exact-counting}
When working in theories of bounded arithmetic, counting the size of different sets requires a great deal of care. We use the wide-spread \emph{Log-notation} $n \in \Log$ as a short-hand for the formula $\exists x (|x|=n)$. A set $X$ is called a \emph{bounded definable set with parameter $y$} if there exists an arithmetic formula $\varphi(x,y)$ and some term $t(y)$ such that $X=\{ x<t(y) \mid \varphi(x,y) \}$. We also adopt the standard set-theoretic notation denoting the interval $[0, a)$ directly by $a$. Then, a Boolean circuit $C : 2^k \to 2$ naturally defines a bounded definable set $X_C = \{ x <2^k \mid C(x) = 1\}$ with parameter $C$ from which $k \in \Log$ can be extracted, and there exists a $\PV$ function $\operatorname{Count}$ which counts $\Log$-sized initial segments of circuit-definable sets; i.e., in the standard model, $\operatorname{Count}(C,a)$ is the cardinality of the set $X_C \cap [0,|a|)$.

Importantly, if we know that $2^k \in \Log$, then we are able to \emph{exactly count} the size of $X_C$ when working in $\PVO$. This proves crucial when carrying out different combinatorial arguments. As an example, arguments involving the pigeonhole principle or different averaging arguments are all possible in $\PVO$ thanks to exact counting, as long as the sizes of the sets in question are in $\Log$.

\subsubsection{Buss's witnessing theorem}
\label{subsec:prelim-witnessing}

Witnessing theorems are a central tool in the field of bounded arithmetic connecting proofs and computation. Roughly speaking, they show that existential quantifiers can be turned into explicit functions computing witnesses for these quantifiers, such that the computational complexity of such a function depends tightly on the logical strength of the theory proving the statement. We mainly rely on Buss's witnessing theorem for $\SOT$ (and hence also for $\PVO$) , capable of witnessing one level of existential quantifiers via polynomial-time functions.

\begin{theorem}[Buss's witnessing theorem \cite{Buss86}]
    \label{thm:buss-witnessing}
    Let $\varphi(x, y)$ be a $\Sigma^b_1$ formula. If $\SOT \vdash \forall x\exists y\varphi(x, y)$, then there exists a $\PV$ function $w$ such that $\SOT \vdash \forall x\varphi(x, w(x))$.
\end{theorem}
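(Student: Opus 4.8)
The plan is to follow the classical proof-theoretic route to Buss's theorem: first normalize the $\SOT$-proof by a free-cut elimination argument, and then extract the witnessing $\PV$ function by induction on the structure of the normalized derivation.

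I would work in a sequent-calculus presentation of $\SOT$ --- Buss's calculus $LKB$ augmented with the $\mathrm{BASIC}$ axioms and the $\Sigma^b_1$-$\mathrm{PIND}$ inference rule --- and invoke free-cut elimination for this system. The consequence is that from a derivation of the sequent $\to \exists y\, \varphi(x,y)$ one obtains a derivation in which every cut formula is either a subformula of a $\mathrm{BASIC}$ axiom (hence sharply bounded) or the induction formula of a $\mathrm{PIND}$ inference (hence $\Sigma^b_1$); together with the subformula property this forces every formula occurring in the normalized derivation to be at worst $\Sigma^b_1$. This normal form is exactly what makes the subsequent extraction feasible, and it is the step I expect to require the most care, since $\mathrm{PIND}$ is an induction rule with side formulas and one must argue that no ``complex'' cut can survive.

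Next I would prove the main witnessing lemma by induction on the normalized derivation: for every sequent $A_1, \dots, A_k \to B_1, \dots, B_m$ appearing in it, there is a $\PV$ function which, given the free variables together with a tuple of witnesses certifying the $\Sigma^b_1$-content of the $A_i$, outputs an index $j$ and a witness certifying $B_j$ (and a harmless default value in the degenerate case where some $A_i$ is unwitnessable). The routine cases are the propositional and structural rules --- contraction and the $\lor$-, $\land$-rules are handled by checking in polynomial time (the relevant matrices are sharply bounded) which of the candidate witnesses actually works --- together with the $\exists$-right inference, whose newly introduced quantifier is witnessed by the term supplied in that inference, and the $\mathrm{BASIC}$ axioms, which are sharply bounded and witnessed by bounded search.

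The heart of the argument is the $\Sigma^b_1$-$\mathrm{PIND}$ case. Applying the inductive hypothesis to the premise $\Gamma, A(\lfloor z/2 \rfloor) \to A(z), \Delta$ gives a $\PV$ step function turning a witness for $A(\lfloor z/2 \rfloor)$ into one for $A(z)$; iterating it $|x|$ times, starting from the witness for $A(0)$, produces a witness for $A(x)$. Since this is a limited recursion on notation --- $|x| = O(\log x)$ iterations, each roughly doubling its argument, and all intermediate witnesses stay bounded by terms of the language because the formulas are bounded --- the resulting function is again polynomial-time, hence has a $\PV$ symbol by Cobham's characterization, which is built into the definition of $\PVO$. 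Finally, instantiating the lemma at the end-sequent $\to \exists y\, \varphi(x,y)$, whose antecedent is empty, yields a $\PV$ function $w(x)$ with $\varphi(x, w(x))$; and because every step of the argument is itself formalizable, $\SOT$ (and hence $\PVO$, by the $\forall\Sigma^b_1$-conservativity noted above) proves $\forall x\, \varphi(x, w(x))$.
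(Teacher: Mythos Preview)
Your proposal is correct and follows Buss's original proof-theoretic argument via free-cut elimination in the sequent calculus, which is precisely one of the two approaches the paper cites (the paper does not give its own proof, merely referencing Buss's proof-theoretic argument and the model-theoretic proof in H\'ajek--Pudl\'ak). The paper would accept your outline as the standard route; the only remark is that the alternative model-theoretic proof it cites avoids cut elimination altogether, but either approach is perfectly adequate here.
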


Buss's original argument uses proof theory. We refer to \citeauthor{HP93} \cite[Theorem 4.32]{HP93} for a model-theoretic proof.

\subsubsection{Cook's propositional 
translation}
\label{subsec:prelim-Cook-trans}

Following Krajíček \cite[§8.6]{Krajicek19}, we say that a theory of arithmetic $T$ \emph{corresponds} to a propositional proof system $S$ if (i) $T$ can prove the soundness of $S$ and (ii) every universal consequence $\forall x \varphi(x)$ of $T$, where $\varphi$ is quantifier-free, admits polynomial-size proofs in $S$ when suitably grounded into a sequence of propositional formulas. (Pudlák alternatively says that $S$ is a \emph{weak system} of the theory $T$ \cite{Pudlak20}.)

We are interested in the correspondence between $\PVO$ and Extended Frege ($\EF$). In this case, the process used to turn first-order formulas into propositional ones is known as (Cook's) \emph{propositional translation}, introduced in his seminal paper on $\PV$ \cite{Cook75}. Given a quantifier-free formula $\varphi(x)$, Cook's translation is a polynomial-time construction sending $\varphi$ to a sequence of polynomial-size propositional formulas $\{ \trans{\varphi}_n \}_{n \in \bbN}$ such that for every $n\in \bbN$, the formula $\trans{\varphi}_n \in \TAUT$ if and only if $\bbN \models \varphi(n)$. See \cite[§12.3]{Krajicek19} or \cite[§6.1]{Buss97} for a complete definition of the construction. Cook then observed that, under this translation, $\PVO$ and $\EF$ do indeed correspond to each other.

\begin{theorem}[Cook's correspondence theorem \cite{Cook75}]
\label{thm:translation}
    The theory $\PVO$ and the proof system Extended Frege correspond to each other. That is,
    \begin{enumerate}[label=(\roman*)] \itemsep=0pt
        \item $\PVO$ proves the soundness of $\EF$;
        \item if $\varphi(x)$ is a quantifier-free formula in the language $\calL_{\operatorname{BA}}(\PV)$ and $\PVO \vdash \forall x\varphi(x)$, then there exists a polynomial-time computable function $f$ such that for every $n \in \bbN$,  it holds that $f(1^n) : \EF \vdash \trans{\varphi}_n$.
    \end{enumerate}
\end{theorem}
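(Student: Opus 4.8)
The plan is to prove the two parts separately, following Cook's original argument \cite{Cook75} (see also \cite[\S12.3]{Krajicek19} and \cite[\S6.1]{Buss97}).

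For part (i), I would first fix a clean representation in $\PVO$ of $\EF$-proofs: a proof $\pi$ is a sequence of lines, each a Boolean circuit, and each line is either an instance of one of the finitely many axiom schemes, derived from earlier lines by \emph{modus ponens}, or introduced by the extension rule as a fresh variable $p$ together with a defining circuit $C$ over earlier variables. Evaluating a circuit under an assignment and checking which case a given line falls under are polynomial-time operations, so all the relevant predicates have $\calL_{\PV}$ symbols. Given $\pi$ and an assignment $\alpha$ to the input variables of the last line, the $\PV$ function that walks through $\pi$ in order and, each time it meets an extension step $p\leftrightarrow C$, appends to $\alpha$ the bit $C(\alpha)$, produces in polynomial time the unique extension $\widehat{\alpha}$ of $\alpha$ to all variables of $\pi$. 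One then argues by induction on the line index $i$ --- a quantifier-free induction, hence available in $\PVO$ --- that line $i$ evaluates to $1$ under $\widehat{\alpha}$: the axiom schemes are tautologies (a finite case check), \emph{modus ponens} preserves truth, and the extension steps hold by the very definition of $\widehat{\alpha}$. Applying this to the last line gives soundness.

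For part (ii), I would use the equational, quantifier-free presentation of $\PVO$; since $\PVO$ is universally axiomatized, a proof of the quantifier-free sentence $\forall x\,\varphi(x)$ may be taken to be quantifier-free, and replacing $\varphi$ by its characteristic $\PV$ function it reduces to a derivation of an equation between $\PV$ terms using equational logic, the defining equations of the $\PV$ symbols (composition and limited recursion on notation), and the induction-on-notation rule. Fix $n$. The translation sends the variable $x$ to a block of $n$ propositional variables and each $\PV$ term $t$ to a uniformly polynomial-time constructible circuit $C^n_t$ computing $t$ on $n$-bit inputs (possible since $\PV$ functions lie in $\FP$, by Cobham's characterization), and an equation $t_1=t_2$ to the formula $\trans{t_1=t_2}_n$ asserting bit-for-bit equality of $C^n_{t_1}$ and $C^n_{t_2}$. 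The core lemma, proven by induction on the structure of the derivation and yielding a polynomial-time proof construction, is that every translated line has a polynomial-size $\EF$-proof: reflexivity, symmetry, transitivity and congruence of equality translate to the $\EF$ replacement principle for circuits (handled via the extension rule); the defining equations translate to near-syntactic identities, because $C^n_f$ is literally assembled as the composition or recursion in question; and the induction-on-notation rule is translated by a \emph{chaining} argument --- from the polynomial-size $\EF$-proofs of $\trans{E(\epsilon)}_n$, $\trans{E(y)\to E(s_0 y)}_n$ and $\trans{E(y)\to E(s_1 y)}_n$ (where $s_0,s_1$ append a bit), one derives $\trans{E(\mathrm{prefix}_{i-1})\to E(\mathrm{prefix}_i)}_n$ for each $i\le n$ by casing on the $i$-th bit (a variable, so both step-proofs are used), and cuts these $n$ implications together to reach $\trans{E(x)}_n$. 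Composing the translation over the whole derivation of $\varphi(x)$ yields the desired $f$; that $\trans{\varphi}_n\in\TAUT$ iff $\bbN\models\varphi(n)$ is immediate from the fact that the circuits $C^n_t$ compute the intended functions.

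The main obstacle, and the only genuinely laborious part, is the core lemma of (ii): verifying that $\EF$ really does have short, uniformly constructible proofs of all the translated basic identities --- in particular the replacement principle for subcircuits and the translated defining axioms of every $\PV$ function --- and that the chaining argument for induction-on-notation stays polynomial because the recursion depth it unwinds is the \emph{length} $n$ of the input, not the number $2^n$ that $n$ bits encode. Part (i) is comparatively routine, the only point requiring care being the correct handling of extension variables when building the evaluating assignment $\widehat{\alpha}$.
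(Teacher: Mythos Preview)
The paper does not actually prove this theorem: immediately after the statement it writes ``A proof of the theorem can be found in Krajíček's textbook \cite[Theorem 12.4.2]{Krajicek19}.'' This is a classical background result that the paper only cites. Your sketch follows the standard argument from Cook's original paper and the textbook references you yourself invoke, so there is no meaningful discrepancy to report; your outline is correct and is essentially the proof one finds in those sources.
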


A proof of the theorem can be found in Krajíček's textbook \cite[Theorem 12.4.2]{Krajicek19}.

\subsubsection{Strong proof systems of theories of arithmetic}
\label{subsec:prelim-strong-proof-system-of-T}
Let $T$ be a consistent theory extending Robinson Arithmetic $\Q$ by a set of axioms that is decidable in polynomial time (i.e., $T$ extends $\Q$ and is \emph{polynomial-time axiomatizable}). We denote by $P_T$ the propositional proof system in which a proof of the propositional tautology $\varphi$ is given by a $T$-proof of the first-order sentence $\foTautf(\ulcorner \varphi \urcorner)$ which states that the propositional formula $\varphi$ is made true by every truth-assignment for its variables; here $\ulcorner \varphi \urcorner$ stands for the code of $\varphi$ in a standard arithmetization of propositional formulas. Following Pudlák \cite{Pudlak20}, we call~$P_T$ the \emph{strong proof system of $T$}.

We will use two key facts of these proof systems. First, if $T$ is a theory for which $P_T$ is defined and $Q$ is a propositional proof system that corresponds to $T$ in the sense of \Cref{subsec:prelim-Cook-trans}, then $P_T \psim Q$ \cite[§4.2, Fact 2]{Pudlak20}. Second, by Gödel's second incompleteness theorem, $T$ does not prove the soundness of $P_T$ \cite[§4.2, Fact 3]{Pudlak20}.

\subsection{The $\Reff$ formulas}
\label{subsec:prelim-Reff}
The main character in this paper is the so-called \emph{$\Reff$ formula}. Given a propositional CNF formula $\varphi$ and a size parameter $s \in \bbN$, the formula $\Reff_s(\varphi)$ states that $\varphi$ has a Resolution refutation consisting of at most $s$ clauses.

It is important to choose an encoding that is simultaneously natural from a modeling point of view while not making the formulas artificially hard to refute. At a basic level, the main property that such an encoding should satisfy is that for a concrete $\varphi$ and $s$, the formula $\Reff_s(\varphi)$ should be satisfiable if and only if there is a Resolution refutation of $\varphi$ in at most $s$ clauses. It also seems natural to require that a Resolution refutation should be readable in polynomial time from a satisfying assignment to $\Reff$. 

While different encodings have appeared in the literature, they tend to agree on a few basic ideas. The formula $\Reff_s(\varphi)$ consists of $s$ so-called \emph{blocks} of variables, each representing a clause in the purported Resolution refutation. Each block has variables to represent the literals that appear in this block, how it was obtained (resolved or weakened from an axiom), and it contains \emph{pointer variables} to indicate from which blocks it was derived.

\paragraph{The unary encoding of Pudlák.}
Pudlák \cite{Pudlak03} uses the seemingly most standard encoding, which we refer to as the \emph{unary encoding} for $\Reff$. He used it to prove that the canonical pair of Resolution is symmetric. This encoding employs pointers in unary, meaning that for every block $B \in [s]$, there are up to $s$ additional variables to point at the blocks from which $B$ was derived.

\paragraph{The relativized unary encoding of Atserias and Müller.}
\citeauthor{AM20} \cite{AM20} start by studying Pudlák's encoding.
They proved suitable so-called \emph{index-width} lower bounds for it in Resolution, but they were unable to prove a \emph{size} lower bound for it. They then introduced a \emph{relativized} version, in which each block can be possibly \emph{enabled} or \emph{disabled}. If it is disabled, then the block is not used towards the refutation, and its associated clauses are immediately satisfied. These additional \emph{enabling variables} made it possible to prove the size lower bound from the index-width lower bound for the unrelativized encoding. We refer to this second encoding as the \emph{relativized unary encoding}.

We note, however, that the change of encoding is not the source for hardness. \citeauthor{Garlik19} \cite{Garlik19} proved that even when using the original encoding of Pudlák, the formulas are hard for Resolution whenever the underlying CNF is unsatisfiable.

\paragraph{The binary encoding of \citeauthor{dRGNPRS21}.}
In their alternative proof of the lower bound on $\Reff$ formulas, \citeauthor{dRGNPRS21} \cite{dRGNPRS21} introduce an encoding of $\Reff$ where pointers are encoded in binary. Informally, for every block $B \in [s]$, there are $O(\log s )$ variables used encode the value $B' \in [s]$ of the block(s) from which $B$ was derived. We refer to this as the \emph{binary encoding}. While this encoding also includes the enabling variables of the relativized encoding, these are inessential, since one can always assign the pointers in a dummy fashion to effectively disable a block.

We contend that the unary relativized encoding is both the most natural as well as the most versatile. We see three reasons for this:
\begin{enumerate}
\itemsep=0pt
    \item thanks to the enabling variables, one can naturally turn a Resolution refutation of $t < s$ clauses into a satisfying assignment to $\Reff_s(\varphi)$ simply by disabling $s-t$ blocks that are not needed, while in the relativized encoding one needs to fill in the remaining $s-t$ clauses with some dummy content;
    \item the enabling variables make the random restriction argument leading to the size lower bound much simpler to prove, and Garlík has shown that the hardness does not comes from this change in syntax;
    \item when using a unary encoding rather than the binary one of \citeauthor{dRGNPRS21}, one can easily restrict some pointers to get an instance of $\Reff_t(\varphi)$ for every $t < s$, while in the binary encoding, after disabling a block, the binary pointers might still be able to point to it, making the formulas more delicate to handle after applying a restriction.
\end{enumerate}

We remark that the choice between unary and binary encodings is ultimately inessential, and all the results in this paper can be reproven for the binary encoding. We choose the unary encoding mainly for reason (3) above, which simplifies the write-up.

We now define the formula in detail.

\paragraph{The variables of $\Reff_s(\varphi)$.}
\label{subsec:vars-of-reff}
Here, we assume $\varphi$ is a CNF formula over $n$ variables $x_1, \dots, x_n$ and $m$ clauses and define the following variables, where $\Lit_n \coloneq \{x_1, \dots, x_n, \neg x_1, \dots, \neg x_n\}$.

\bigskip
\begin{tabular}{lll}
    $\alit^A_\ell$ & : & literal $\ell \in \Lit_n$ is present in the clause $A \in [m]$ of $\varphi$; \\
    $\enable^B$ & : & block $B \in [s]$ is enabled; \\
    $\derived^B$ & : & block $B\in [s]$ is obtained by a Resolution step;\\ 
    $\weak^B_A$ & : & block $B\in[s]$ is obtained by weakening from clause $A \in [m]$ of $\varphi$; \\
    $\lit^B_\ell$ & : & literal $\ell\in\Lit_n$ is present in the block $B\in[s]$; \\
    $\res^B_{x_i}$ & : & block $B\in[s]$ is obtained by resolving over the variable $x_i$; \\
    $\lpoint^B_{B'}$ & : & block $B \in [s]$ is resolved on the left from block $B' \in [s]$, $B'<B$; \\
    $\rpoint^B_{B'}$ & : & block $B\in[s]$ is resolved on the right from block $B' \in [s]$, $B'<B$.
\end{tabular}
\bigskip

Building on these variables, the $\Reff_s(\varphi)$ formula is defined as follows. We write the clauses as implications for the sake of readability.

\begin{definition}[The $\Reff$ formulas]
\label{def:Reff-formulas}
Let $n, m, s \in \bbN$, and let $\Lit_n \coloneq \{x_1, \dots, x_n, \neg x_1, \dots, \neg x_n\}$ denote the of possible literals over $n$ variables. The $\Reff_s(\varphi)$ formula is built from the variables defined above, together with the conjunction of the following clauses:

{\allowdisplaybreaks
    \begin{axiomdef}{2}{Ref}
        & \left( \enable^B \land \res^B_{x_i} \land \lpoint^B_{B'} \land \lit^{B'}_\ell \right) \to \lit^{B}_\ell  &&\quad\text{ for }B, B'\in[s], B' < B, i\in[n], \ell \in \Lit_n \setminus \{ x_i\},\label{axiom:must-appear-after-res-left} \\
        & \left( \enable^B \land \res^B_{x_i} \land \rpoint^B_{B'} \land \lit^{B'}_\ell \right) \to \lit^{B}_\ell  &&\quad\text{ for }B, B'\in[s], B' < B, i\in[n], \ell \in \Lit_n \setminus \{\neg x_i\},\label{axiom:must-appear-after-res-right}\\
        & \left( \enable^B \land \weak^B_A \land \alit^A_\ell\right) \to \lit^B_\ell  &&\quad\text{ for }B\in[s], A\in[m], \ell\in \Lit_n,\label{axiom:must-appear-after-weak}\\
        & \left( \enable^B \land \derived^B \right ) \to \bigvee_{i\in[n]} \res^B_{x_i}  &&\quad\text{ for }B \in [s],\label{axiom:must-resolve} \\
        & \left( \enable^B \land \derived^B \right ) \to \bigvee_{\substack{B' \in [s] \\ B' < B}} \lpoint^B_{B'}  &&\quad\text{ for 
        }B \in [s],\label{axiom:must-point-left} \\
        & \left( \enable^B \land \derived^B \right ) \to \bigvee_{\substack{B' \in [s] \\ B' < B}} \rpoint^B_{B'}  &&\quad\text{ for }B \in [s],\label{axiom:must-point-right} \\
        & \left( \enable^B \land  \neg \derived^B \right) \to \bigvee_{A\in[m]} \weak^B_A  &&\quad\text{ for }B \in [s],\label{axiom:must-weaken} \\
        & \left( \enable^B \land  \lpoint^B_{B'} \right) \to \enable^{B'}  &&\quad\text{ for }B, B' \in [s], B' < B\label{axiom:must-enable-left} \\
        & \left( \enable^B \land  \rpoint^B_{B'} \right) \to \enable^{B'}  &&\quad\text{ for }B, B' \in [s], B' < B\label{axiom:must-enable-right} \\
        & \neg \lit^s_\ell &&\quad\text{ for }\ell \in \Lit_n ,\label{axiom:root-empty}\\
        & \enable^s. && \label{axiom:root-enabled}
    \end{axiomdef}
}\end{definition}

\begin{remark}
    Our encoding of $\Reff$ has fewer axioms than that of~\cite{AM20}. For example, we do not require that if a block $B$ is resolved on the left by variable $x$ from block $B'$, then $B'$ should contain $x$, or we do not require that for every resolution step there is a unique resolved variable. We remark that soundness still holds and $\Reff$ is satisfiable if and only if there exists a refutation of length at most $s$, which can easily be read from the satisfying assignment to the $\Reff$ formula. We also remark that the lack of these axioms does not affect the extraction algorithm or the lower bound in any way: while removing axioms could in principle make the lower bound easier to prove, the algorithm works just as well if we added the missing axioms, and our lower bound proof still goes through with the additional axioms. This more succinct encoding, however, makes it easier to formalize the upper bound construction in Resolution.
\end{remark}

\begin{remark}[Number of variables]
    \label{rem:num_of_vars}
    The formula $\Reff_s(\varphi)$ is defined over $N = \Theta(s^2 + sm + sn + mn)$ variables and $M=\Theta(s^2n^2 + smn)$ clauses. For the case when the $\alit$ variables are restricted to encode a $k$-CNF formula over $n$ variables and $s = n^c$ for some constant $c\geq 1$, we have $m = O(n^k)$ and $N = O(n^{\max\{ {2c}, {c+k} \}})$.
\end{remark}

\paragraph{Blocks and block-width.}
If a variable is part of a block $B_i$, we say that it \emph{mentions} $B_i$. An important measure for us will be the \emph{block-width} of a given clause $C$ over the variables of $\Reff_s(\varphi)$. This is defined as the number of different blocks mentioned by the variables of the literals in $C$, not counting the root block~$B_\bot$. We denote this measure by $\bw(C)$, and generalize it to refutations by taking $\bw(\pi)$ to be the maximum block-width over all the clauses in $\pi$.

\subsubsection{The $\Reff$ formulas for other proof systems}
\label{subsec:prelim-reff-prff-others}

We will also be interested in the $\Reff$ formulas for proof systems other than Resolution. In general, for a Cook--Reckhow system $Q$, we denote by $\Reff^Q(\varphi, \pi)$ the formula stating that $\pi$ is a correct $Q$-refutation of an unsatisfiable $\varphi$. For convenience, in this context we always consider all proof systems as refutational systems. The formula $\Reff^Q(\varphi, \pi)$ is simply the propositional formula that verifies that $\pi$ is accepted as a $Q$-refutation of $\varphi$ by the Boolean circuit that checks $Q$-refutations. This can be obtained by writing the computation of the circuit as a Boolean formula using the usual Tseitin encoding.

\begin{remark}[Notation]
    \label{rem:notation-reff}
    By default, the $\Reff$ formula stands for the formula as defined in \Cref{def:Reff-formulas} for Resolution refutations. If we want to refer to the $\Reff$ formula for a different proof system, we explicitly write $\Reff^Q$ for the system $Q$ in question.
\end{remark}

\subsubsection{The $\Satf$ formula and reflection principles}
\label{subsec:prelim-sat-reflection}
We also have the $\Satf(\varphi, \alpha)$ formula, encoding that a CNF formula $\varphi$ is satisfied by an assignment $\alpha$. The variables we consider are

\bigskip

\begin{tabular}{lll}
    $\alpha_i$ & : & value assigned by $\alpha$ to variable $x_i$; \\
    $\alit^A_\ell$ & : & literal $\ell \in \Lit_n$ is present in clause $A \in [m]$; \\
    $\sat^A_\ell$ & : & clause $A \in [m]$ is satisfied because literal $\ell \in \Lit_n$ evaluates to $1$ under $\alpha$.
\end{tabular}
\bigskip

We use $\alit$ instead of $\lit$ to distinguish between these variables and the $\lit$ variables of $\Reff(\varphi, s)$.

\begin{definition}[The $\Satf$ formulas]
\label{def:satf}
Let $n, m \in \bbN$, let $\varphi$ denote the set of variables of the form $\alit^A_\ell$ as above and let $\alpha$  denote the set of variables $\alpha_i$ as above. The formula $\Satf(\varphi, \alpha)$ is the CNF formula over the variables in $\varphi$, $\alpha$ and additionally all the variables $\sat^A_\ell$ above consisting of the conjunction of the following clauses,
{\allowdisplaybreaks
\begin{axiomdef}{2}{Sat}
        & \neg\sat^A_\ell \lor \alit_\ell^A &&\quad\text{ for all }A\in[m]\text{ and }\ell\in \{x_1, \dots, x_n, \neg x_1, \dots, \neg x_n\},\label{sat-1}\\
        &\neg \sat^A_{x_i} \lor \alpha_i &&\quad\text{ for all }A \in [m]\text{ and } i \in [n],\label{sat-2}\\
        &\neg \sat^A_{\neg x_i} \lor \neg \alpha_i &&\quad\text{ for all }A \in [m]\text{ and } i \in [n],\label{sat-3}\\
        & \bigvee_{i \in [n]} \sat^A_{x_i} \lor \sat^A_{\neg x_i} &&\quad\text{ for all }A\in[m].\label{sat-4}
    \end{axiomdef}}

\end{definition}

One can similarly write a $\Satf$ formula for evaluating DNF formulas in the obvious ways. It is always clear from context which version of $\Satf$ we are using, so we use the same notation for both.

\begin{proposition}
    \label{prop:sat-to-native-Res}
    For every CNF formula $\varphi$, the following statements hold:
    \begin{enumerate}[label=(\roman*)] \itemsep=0pt
    \item if $\varphi$ has a Resolution refutation of length $s$, then
    $\Satf_{\restriction\varphi}$ has a Resolution refutation of length $O(s)$;
    \item if $\Satf_{\restriction\varphi}$ has a Resolution refutation of length $s$, then 
    $\varphi$ has a Resolution refutation of length $O(s)$.
    \end{enumerate}
\end{proposition}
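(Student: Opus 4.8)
The plan is to make the restricted formula $\Satf_{\restriction\varphi}$ fully explicit and then relate its $\Res$-refutations to those of $\varphi$: the harder direction (item~(ii)) by a single literal substitution, and the other direction (item~(i)) by prefixing short ``unfolding'' derivations. Write $\varphi = C_1 \wedge \dots \wedge C_m$ over variables $x_1,\dots,x_n$, let $L_A \subseteq \Lit_n$ be the set of literals of the clause $C_A$ (we may assume no $C_A$ is a tautology), and let $\widetilde{(\cdot)}$ denote the renaming $x_i \mapsto \alpha_i$, so that $\widetilde{C_A}$ is $C_A$ read as a clause over the variables $\alpha_1,\dots,\alpha_n$. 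Fixing the $\alit^A_\ell$ variables to the bits of $\varphi$ (that is, $\alit^A_\ell = 1$ iff $\ell \in L_A$), the axioms \eqref{sat-1} either disappear (when $\ell \in L_A$) or become the unit clauses $\neg\sat^A_\ell$ (when $\ell \notin L_A$), while \eqref{sat-2}, \eqref{sat-3} and \eqref{sat-4} are unaffected.

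For item~(ii), I would apply to a given refutation $\pi$ of $\Satf_{\restriction\varphi}$ the literal substitution $\sigma$ defined by $\alpha_i \mapsto \alpha_i$, by $\sat^A_{x_i} \mapsto \alpha_i$ and $\sat^A_{\neg x_i} \mapsto \neg\alpha_i$ whenever the literal in the subscript belongs to $L_A$, and by $\sat^A_\ell \mapsto 0$ whenever $\ell \notin L_A$. Since $\Res$ is closed under literal substitutions (\Cref{subsec:proof-complexity}), the line-by-line image $\pi_{\restriction\sigma}$ is a $\Res$-refutation of $(\Satf_{\restriction\varphi})_{\restriction\sigma}$ of length at most $|\pi|=s$. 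A short case analysis shows that under $\sigma$ the (restricted) instances of \eqref{sat-1}, \eqref{sat-2} and \eqref{sat-3} all collapse to tautological clauses, whereas \eqref{sat-4} for block $A$ becomes precisely $\widetilde{C_A}$; hence $(\Satf_{\restriction\varphi})_{\restriction\sigma}$ equals $\widetilde\varphi$ together with some tautological clauses. Using the standard fact that a $\Res$-refutation can be taken to contain no tautological clause without any increase in length, we read off a $\Res$-refutation of $\widetilde\varphi$, hence of $\varphi$, of length at most $s$, which is $O(s)$.

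For item~(i), I would go in the opposite direction: from a length-$s$ refutation $\rho$ of $\varphi$, first derive inside $\Satf_{\restriction\varphi}$ the clause $\widetilde{C_A}$ for every clause $C_A$ that occurs as an axiom in $\rho$. Resolving the width-$2n$ clause \eqref{sat-4} for $A$ successively against the unit clauses $\neg\sat^A_\ell$ for $\ell \notin L_A$ yields $\bigvee_{\ell\in L_A}\sat^A_\ell$; then, resolving each surviving literal $\sat^A_{x_i}$ (resp.\ $\sat^A_{\neg x_i}$) against the instance of \eqref{sat-2} (resp.\ \eqref{sat-3}) for $A$ and $i$ replaces it by $\alpha_i$ (resp.\ $\neg\alpha_i$), giving $\widetilde{C_A}$ after $O(n)$ resolution steps. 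Appending the renamed refutation $\widetilde\rho$ to these derivations produces a $\Res$-refutation of $\Satf_{\restriction\varphi}$ of length $O(n\cdot s)$; with $\varphi$ (and hence $n$) held fixed this is $O(s)$ as claimed. The factor $n$ is genuinely needed: for $\varphi = (x_1) \wedge (\neg x_1) \wedge (x_2 \vee \dots \vee x_n)$ there is a $3$-clause refutation of $\varphi$, yet every $\Res$-refutation of $\Satf_{\restriction\varphi}$ must eliminate the $2n$ fresh variables of the width-$2n$ clause \eqref{sat-4} and so has length $\Omega(n)$.

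The only place that needs a little care is the tautology bookkeeping in item~(ii): after the substitution the clauses descended from \eqref{sat-2} and \eqref{sat-3} literally become tautologies such as $\neg\alpha_i\vee\alpha_i$, so one must invoke the normalization removing tautological axioms from a $\Res$-refutation at no cost before extracting the refutation of $\widetilde\varphi$. Everything else is routine step counting together with a straightforward verification that $\sigma$ behaves as described on each family of axioms.
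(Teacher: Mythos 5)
Your proof is correct and follows essentially the same approach as the paper's: for item~(ii), a literal substitution sending $\sat^A_\ell$ to the corresponding literal when $\ell \in C_A$ and to $0$ otherwise (yours differs only in keeping $\alpha_i$ rather than renaming it to $x_i$, an inessential renaming), and for item~(i), prefixing the renamed refutation with per-axiom unfolding derivations that cut \eqref{sat-4} against the unit clauses from \eqref{sat-1} and then against \eqref{sat-2}/\eqref{sat-3}. The only differences are presentational: you flag explicitly the tautology-removal step and the hidden multiplicative factor of $n$ in the size bound, both of which the paper handles tacitly.
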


\begin{proof}
    Assume $\varphi$ has $n$ variables $x_1,\ldots,x_n$ and $m$ clauses $C_1,\ldots,C_m$. We first show how to go from a refutation of $\Satf_{\restriction\varphi}$ to a refutation of $\varphi$. After fixing $\varphi$, the only variables left in $\Satf$ are of type~$\sat^A_\ell$ or~$\alpha_i$. Since Resolution is closed under literal substitutions, it suffices to define a substitution $\sigma$ that replaces all the original variables by literals of~$\varphi$, and argue that all the axioms of~$\Satf_{\restriction\varphi, \sigma}$ follow from clauses of~$\varphi$. For every $i \in [n]$, every $\ell \in \{ x_1, \dots, x_n, \neg x_1, \dots, \neg x_n\}$, and every $A \in [m]$, the substitution $\sigma$ maps
    \begin{equation}
        \sigma(\alpha_i) \coloneq x_i \qquad \text{ and } \qquad\sigma(\sat^A_\ell) \coloneq \begin{cases}
            \ell &\text{ if } \ell \in C_A \\
            0 &\text{ if } \ell \not\in C_A.
        \end{cases}
    \end{equation}
    Let us inspect the axioms of $\Satf_{\restriction\varphi, \sigma}$. For an axiom $\neg \sat^A_\ell \lor \alit^A_\ell$ of type (\ref{sat-1}), if the restriction of this axiom is present is $\Satf_{\restriction\varphi}$ it is because $\ell \not\in C_A$, or else $\alit^A_\ell$ would be satisfied. Then, $\sigma(\sat^A_\ell) = 0$ and the substitution is satisfied. If the axiom is of type (\ref{sat-2}), then either $x_i \not\in C_A$, in which case $\sigma(\sat^A_{x_i}) = 0$ and~$\neg \sat^A_{x_i} \lor \alpha_i$ is satisfied, or $\sigma(\sat^A_{x_i}) = x_i$ and $\sigma(\alpha_i) = x_i$, which gives the trivial clause $\neg x_i \lor x_i$. The case for (\ref{sat-3}) is analogous. Finally, for a clause of type (\ref{sat-4}), it is easy to see that the substitution $\sigma$ maps the clause precisely to the clause $C_A$ itself, which is a clause of $\varphi$.

    To go from a refutation of $\varphi$ to a refutation of $\Satf_{\restriction\varphi}$, we do the following. First, rename all the variables~$\alpha_i$ by $x_i$. Then, for every~$A \in [m]$, note that we can resolve the corresponding clause of type~(\ref{sat-4}) with the unit clauses $\neg \sat^A_{\ell}$ of type (\ref{sat-1}) to obtain $\bigvee_{\ell\in C_A} \sat^A_\ell$. Now, for each $\sat^A_\ell$ in this clause, cut with the corresponding clause of type (\ref{sat-2}) or (\ref{sat-3}) to obtain $\bigvee_{\ell\in C_A} \ell$, which is just $C_A$. In this way we have derived every axiom $C_A$ of $\varphi$, and we can now proceed with the refutation of $\varphi$ in the natural way.
\end{proof}

Finally, we define the reflection principle for any proof system.

\begin{definition}[The $\Reflf$ formulas]
\label{def:refl-formulas}
Let $Q$ be a Cook--Reckhow propositional proof system and let $n, m, s \in \bbN$. We define $\Reflf^Q_{n, m, s} \coloneq \neg\Satf(\varphi, \alpha)  \lor \neg \Reff^Q(\varphi, \pi)$ where the $\Satf$ instance is for CNF formulas with $n$ variables and $m$ clauses and the $\Reff^Q$ instance is for $Q$-refutations of such formulas of size $s$, and refer to the sequence of tautologies $\Reflf^Q \coloneq \{\Reflf^Q_{n, m, s}\}_{n, m, s \in \bbN}$ as the \emph{reflection principle for $Q$}. 
\end{definition}

A useful property of $\EF$ is the fact that any propositional system $S$, however strong, can always be seen as a Frege-like system due to the fact that ${\EF + \Reflf^S \psim S}$ \cite[Theorem 8.4.3]{Krajicek19}.

\subsubsection{Reflection principles for first-order theories}
\label{subsec:prelim-foRefl}
The reflection principles above are the propositional analogues of the well-studied
first-order reflection principles for theories of arithmetic.
For a strong enough recursively axiomatizable theory of arithmetic $T$ and a class $\Phi$ of sentences, the schema $\CRefl{\Phi}_T$ stands for the collection of all formulas of the form
\begin{equation}
    \exists \pi\foPrff_T(\ulcorner\varphi \urcorner, \pi) \to \varphi
    \tag{$\foReflf_{T, \varphi}$}
\end{equation}
for every $\varphi \in \Phi$. Here, the notation $\ulcorner\varphi\urcorner$ stands
for the encoding of $\varphi$ in a fixed suitable arithmetization of syntax, and $\foPrff_T$
stands for the provability statement in this arithmetization.

\section{The Proof Analysis Problem: definitions and basic facts}
\label{sec:PAP-def}
For a CNF formula $\varphi(x_1, \dots, x_n)$, we denote by $\Reff_s(\varphi)$ the propositional formula claiming that there exists a Resolution refutation of $\varphi$ in size $s$. Different encodings of this formula have been considered in the literature. For our purposes, $\Reff$ consists of $s$ of \emph{blocks} of variables, each of them describing a clauses in a purported Resolution refutation of $\varphi$ of size $s$. (See \Cref{subsec:prelim-Reff} for a full rendering of the variables and clauses involved in $\Reff_s(\varphi)$.)

We are interested in the following decision problem.

\begin{definition}[The Proof Analysis Problem, $\PAP_Q$]
    \label{def:PAP}
    Let $Q$ be a propositional proof system. We define the \emph{Proof Analysis Problem for $Q$} to be the language
    \[ \PAP_Q \coloneq \{ (\varphi, \pi, 1^s) \mid \varphi \in \SAT \text{ and } \pi : {Q \vdash \neg \Reff_s(\varphi)} \}.\]
    We denote by $\PAP_Q[s(n)]$ the problem where the size parameter $s$ is restricted to be at least $s(n)$ and $n$ denotes the number of variables of $\varphi$.
\end{definition}

The problem asks, given the proof of a Resolution lower bound in a fixed proof system $Q$, to decide whether the underlying formula is satisfiable or not. Note that whenever $\varphi$ is satisfiable there is no Resolution refutation and thus any lower bound holds, so the problem is well-defined.

Analogous to the notion of whether a proof system is automatable, $\PAP$ naturally induces a notion of whether, for a given proof system, its Resolution lower bounds are \say{analyzable}.

\begin{definition}[Analyzability]
    We say that a propositional proof system $Q$ is \emph{analyzable} if there exists some constant $c > 0$ such that $\PAP_Q[n^c] \in \P$.
\end{definition}

\begin{remark}
    It might seem more intuitive to define a proof system $Q$ to be analyzable if $\PAP_Q \in 
    \P$, without restrictions on the size parameter. Note, however, that for most reasonable proof systems, the language $\PAP_Q$ taken as a whole contains some degenerate instances that make the problem trivially $\NP$-hard. For example, if the size parameter is set to $s = 1$, then certainly proving a Resolution lower bound against $\varphi$ is easy already for Resolution itself, and we can map a CNF formula $\varphi$ to the $\PAP_Q$-instance $(\varphi, \pi, 1)$ for some easy to construct $Q$-proof $\pi$ that checks there is no Resolution refutation of $\varphi$ in one clause. 
\end{remark}

It is easy to see that for every Cook--Reckhow system $Q$, the problem $\PAP_Q$ is in $\NP$. Similarly, we note that unlike automatability, analyzability is naturally downwards-closed under p-simulations. Namely, if $S$ is p-simulated by $Q$ and $Q$ is analyzable, so is $S$; this is not the case with automatability, where a search algorithm for $Q$ may not be used to search for proofs in a weaker $S$. 

The following is a corollary of the results of Atserias and Müller \cite{AM20}. Here, by $\P$-uniform we mean the standard notion of uniformity by which there is a polynomial-time descriptor Turing machine that on input $1^\ell$ outputs the circuit solving the problem for inputs of size $\ell$ (see, e.g., \cites[Definition 6.12]{AB09}{AllenderStack}).

\begin{proposition}
    \label{prop:PAP_}
    It holds that $\PAP_\Res[n^2]$ is in $\P$-uniform $\ComplexityFont{AC}^0$. That is, Resolution is analyzable.
\end{proposition}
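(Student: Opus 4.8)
The plan is to give a direct $\ComplexityFont{AC}^0$ procedure whose correctness is nothing more than the Atserias--Müller lower bound read contrapositively. Fix a constant $c>0$ such that, by \Cref{thm:AM-original}, whenever $\varphi$ is an unsatisfiable CNF over $n$ variables and $\pi$ is a correct Resolution refutation of $\Reff_s(\varphi)$, one has $|\pi|\ge 2^{\lfloor cs/n\rfloor}$. On input $(\varphi,\pi,1^s)$ with $\varphi$ over $n$ variables and $s\ge n^2$ (instances with $s<n^2$ are not instances of $\PAP_\Res[n^2]$ and can be rejected outright), the procedure (i) checks that $\pi$ is a syntactically correct Resolution refutation of $\Reff_s(\varphi)$, rejecting otherwise; (ii) if $|\pi|<2^{\lfloor cs/n\rfloor}$, accepts; (iii) if $|\pi|\ge 2^{\lfloor cs/n\rfloor}$, accepts if and only if $\varphi\in\SAT$, which it decides by brute force over all assignments.

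For correctness, note that whenever the procedure accepts, step~(i) guarantees $\pi:\Res\vdash\neg\Reff_s(\varphi)$, so it suffices to see that $\varphi\in\SAT$: if acceptance occurred at step~(ii) then $\varphi$ cannot be unsatisfiable, since that would force $|\pi|\ge 2^{\lfloor cs/n\rfloor}$; and if it occurred at step~(iii) then the brute-force search exhibited a satisfying assignment. Conversely, if $(\varphi,\pi,1^s)\in\PAP_\Res$ then $\pi$ is a correct refutation, so step~(i) passes, and $\varphi\in\SAT$, so either step~(ii) accepts or else step~(iii) finds a satisfying assignment and accepts. It is essential that step~(iii) is present rather than simply rejecting large proofs: by \Cref{def:PAP} the language $\PAP_\Res$ contains \emph{every} correct refutation of $\Reff_s(\varphi)$ for satisfiable $\varphi$, including arbitrarily padded ones of super-polynomial size, so completeness genuinely requires deciding satisfiability in the ``large $\pi$'' regime.

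It remains to argue that all of this lives in $\P$-uniform $\ComplexityFont{AC}^0$. Verifying a Resolution refutation of $\Reff_s(\varphi)$ is the conjunction, over the lines $D_i$ of $\pi$, of ``$D_{|\pi|}=\bot$'' and of a disjunction of the possible justifications of $D_i$ --- being one of the $\poly(n,m,s)$ clauses of $\Reff_s(\varphi)$ (whose form is read off from the indices and the $\alit$-part of the input), being a weakening $D_i\supseteq D_j$ of an earlier line, or being the resolvent of two earlier lines over some variable --- each of which is a local test of polynomial fan-in, so the verification has constant depth. For step~(ii), since $2^{\lfloor cs/n\rfloor}$ dwarfs the input length $N$ unless $cs/n$ is small, the test $|\pi|<2^{\lfloor cs/n\rfloor}$ is equivalent to a comparison between $\lfloor\log_2|\pi|\rfloor$ and $\lfloor cs/n\rfloor$, i.e.\ between the two integers $n\cdot(\lfloor\log_2|\pi|\rfloor+1)$ and $cs$, both bounded by $\poly(N)$ since $|\pi|,s\le N$; the bit-length of a unary number, together with products, divisions by the constant $c$, and comparisons of $\poly(N)$-bounded integers, are all computed by hard-wired multiplexers and lookup tables, hence in $\P$-uniform $\ComplexityFont{AC}^0$. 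Step~(iii) is reached only when $|\pi|\ge 2^{\lfloor cs/n\rfloor}\ge 2^{\lfloor cn\rfloor}$ (using $s\ge n^2$), whence $N\ge|\pi|\ge 2^{\lfloor cn\rfloor}$ and so $n=O(\log N)$; thus there are only polynomially many candidate assignments $\alpha$ (those of length $O(\log N)$), each hard-wired into the circuit, and ``$\varphi(\alpha)=1$'' is the depth-$2$ expression $\bigwedge_{A\in[m]}\bigvee_{\ell:\,\alpha\models\ell}\alit^A_\ell$, so the brute-force branch is a depth-$3$ polynomial-size circuit. The descriptor machine outputs the proof-checker, the arithmetic tables and the list of short assignments in polynomial time given $1^N$, so the family is $\P$-uniform. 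The only real subtlety is the interplay between steps~(ii) and (iii): one must observe that the ``large $\pi$'' branch is needed for completeness and, simultaneously, that precisely in that branch $n$ becomes logarithmic in the input size, which is exactly what keeps the whole construction inside $\ComplexityFont{AC}^0$ rather than merely $\P$.
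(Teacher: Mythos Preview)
Your proof is correct and follows essentially the same strategy as the paper: accept small refutations directly via the Atserias--M\"uller lower bound, brute-force satisfiability on large refutations (where $n$ becomes logarithmic in the input), and observe that proof-checking is constant depth. The main presentational difference is in the $\ComplexityFont{AC}^0$ implementation: the paper has the descriptor machine enumerate all $O(\ell^4)$ candidate parsings $(n,m,t,s)$ of the input and hard-wire a separate sub-circuit for each, so that the threshold comparison and the choice between branches~(ii) and~(iii) are precomputed by the descriptor rather than by the circuit; you instead let the circuit extract $n$, $s$, $|\pi|$ and carry out the arithmetic via lookup tables. Both routes work, but the paper's enumeration avoids having to justify that the parsing and the arithmetic (products of $O(\log N)$-bit numbers, etc.) live in $\ComplexityFont{AC}^0$, which is where your write-up is a bit brisk---in particular, you never fix an input encoding or say how $n$ and the boundaries of $\pi$ are located, so the claim that these are ``computed by hard-wired multiplexers and lookup tables'' is doing more work than is spelled out.
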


\begin{proof}
    Let us first describe the general polynomial-time algorithm that puts $\PAP_\Res[n^2]$ in $\P$, and we later elaborate on how this can be computed in $\P$-uniform  $\ComplexityFont{AC}^0$. Indeed, by the Resolution lower bound on $\Reff$ formulas (\Cref{thm:AM-original}), there exists $\varepsilon >0 $ such that for every $s\in \bbN$, if a formula $\varphi$ over $n$ variables is unsatisfiable, then a correct Resolution refutation $\pi$ of $\Reff_s(\varphi)$ must have size $|\pi| > 2^{\varepsilon \cdot s/n}$. Given an input $(\varphi, \pi, 1^s)$ to $\PAP_\Res[n^2]$, to decide if the instance belongs in the language, it suffices to check (i) that $\pi$ is a correct Resolution refutation of $\Reff_s(\varphi)$ and (ii) that $|\pi|$ is smaller than the lower bound $2^{\varepsilon \cdot s/n}$. If (i) fails, we immediately reject, and otherwise, if (ii) fails, the input size is large enough to brute-force $\SAT$ in polynomial time. Here we use the fact that $s \geq n^2$, hence $|\pi| \geq 2^{\varepsilon \cdot s/n} \geq 2^{\varepsilon n}$, and thus the input size is large enough.

    Let us now argue that this entire computation is possible within $\P$-uniform $\ComplexityFont{AC}^0$. For the sake of precision, let us fix the following natural binary encoding for $\PAP_\Res$. An input $(\varphi, \pi, 1^s)$ will be of the form $(1^n, 1^m, C_1, \dots, C_m, 1^t, \pi, 1^s)$. Here, the first part of the tuple corresponds to the encoding of $\varphi$, a CNF formula over $n$ variables and $m$ clauses $C_1, \dots, C_m$, and we assume that these clauses are initially represented as strings of length $2n$ with indicators for every possible literal. The Resolution refutation of $\Reff_s(\varphi)$ is encoded by $1^t$ and $\pi$, where $\pi$ is an assignment to the $N \coloneq N(n,m,t, s) = \poly(n,m,t,s)$ variables of $\Reff_t(\Reff_s(\varphi))$, as per \Cref{subsec:prelim-Reff}, and the number of variables $N$ can be easily computed in polynomial time. For the purpose of unique decoding, we assume that the tuple $(1^n, 1^m, C_1, \dots, C_m, 1^t, \pi, 1^s)$ is encoded by bit-doubling: each bit is duplicated and $01$ is used as separators. We assume that there are no separators between the clauses $C_1, \dots, C_m$, so a correct input contains only five separators.

    Now, when dealing with binary strings of even length $\ell$, there are at most $O(\ell^4)$ possible ways of interpreting such strings as a tuple of the form $(1^n, 1^m, C_1, \dots, C_m, 1^t, \pi, 1^s)$. This is because we can choose values for $n$, $m$, $t$ and $s$ in the interval $[\ell/2 - 5]$, where $\ell/2 - 5$ comes from the fact that we duplicated every bit and introduced 10 bits for the five separators between $1^n$, $1^m$, $C_1, \dots, C_m$, $1^t$, $\pi$, and $1^s$, not counting separators between $C_1$ and $C_m$. One can then check that this choice of $n$, $m$, $t$ and $s$ conforms to the desired pattern: the segment for the clauses $C_1$ to $C_m$ has length exactly $2nm$, and the segment for $\pi$ has length exactly $N = N(n,m, t, s)$, as per \Cref{rem:num_of_vars}. That is, it must hold that $\ell = 2( n + m + 2nm + t + N(n,m,t,s) + s)+ 10$ and $s \geq n^2$. There are at most $O(\ell^4)$ such choices for $(n,m,t, s) \in [\ell/2 - 5]^4$, hence the upper bound. Furthermore, it is easy to see that, due to the bit-doubling, 
    every string can only encode correctly one input of the form $(1^n, 1^m, C_1, \dots, C_m, 1^t, \pi, 1^s)$, so the decoding is unique.

    The $\P$-uniform descriptor machine for inputs of even length $\ell$ now works as follows. On input $1^\ell$, for $\ell$ even, it tries all possible $O(\ell^4)$ ways of separating the lengths, and for each interpretation $(n,m,t,s)$ of the lengths it constructs a different constant-depth Boolean circuit $D_{n,m,t,s}$, as follows.
    \begin{enumerate}[label=(\alph*)] \itemsep=0pt
        \item If the interpretations of the lengths is inconsistent, in the sense that the string cannot correspond to something of the form $(1^n, 1^m, C_1, \dots, C_m, 1^t, \pi, 1^s)$, then it outputs the constant circuit $0$. \label{it:outputzero}
    
        \item If the interpretation of the lengths is valid and $|\pi| < 2^{\varepsilon \cdot s/n}$, then it simply constructs the circuit that checks that $\pi$ is a correct Resolution refutation of $\Reff_s(\varphi)$ in at most $t$ clauses; that is, it outputs the formula $\Reff_t(\Reff_s(\varphi))$, which itself depends on the variables encoding $\varphi$. Since $\Reff$ formulas are in CNF, nesting these together with $\varphi$ will result in a total depth of 5 (see \Cref{subsec:Pudlak-as-ckt} for a more detailed treatment of how the depth increases when nesting the $\Reff$ formulas). \label{it:checkcorrect}

        \item If the interpretation of the lengths is valid and $|\pi| \geq 2^{\varepsilon \cdot s/n}$, then the descriptor outputs the conjunction of two circuits: one is the same as before, checking the correctness of $\pi$ as a refutation of $\Reff_s(\varphi)$ in at most $t$ clauses, and the other is the trivial circuit of size $\poly(n,m)\cdot 2^n$ that brute-forces the satisfiability of $\varphi$. More formally, this is a big disjunction of fan-in $2^n$, where each wire goes to the formula $\Satf(\varphi, \alpha)$ from \Cref{def:satf} for different hard-wired values of $\alpha \in \{ 0,1\}^n$. Since $\Satf(\varphi, \alpha)$ is a CNF formula, this brute-forcing circuit has depth $3$, and combined with the circuit checking the correctness of $\pi$, the entire circuit has depth 5 in this case. \label{it:checkcorrect-bruteforce}
    \end{enumerate}

    For each interpretation of the lengths there is also a circuit $\operatorname{Correct}_{n, m, t, s}$ that verifies that the input correctly encodes a $\PAP_\Res$ instance of the right size. This amounts to checking that the separators are in the right place and the double-bit encoding is correctly implemented, which can all be verified in depth $3$. 

    Finally, the descriptor machine outputs the circuit
    \begin{equation}
        R_\ell \coloneq\bigvee_{(n,m,s,t) \in[\ell/2 - 5]^4} D_{n,m,t,s} \land \operatorname{Correct}_{n,m,t,s}
    \end{equation}
    consisting of the disjunction of all the circuits above for every interpretation of the lengths.

    The final circuit $R_\ell$ correctly computes $\PAP_\Res[n^2]$ on inputs of even length $\ell$, has depth $6$ and polynomial size. Indeed, the constructions~\ref{it:outputzero} and~\ref{it:checkcorrect} above both have polynomial size, and whenever we construct the exponential-size circuit in~\ref{it:checkcorrect-bruteforce}, it is with respect to a segment of the string that has itself size exponential in~$n$. Finally, the descriptor machine runs in polynomial time given only the length $\ell$ of the input string $x$, so we can conclude that $\PAP_\Res[n^2]$ is in $\P$-uniform $\ComplexityFont{AC}^0$.
\end{proof}

The fact that $\PAP_\Res$ is so easy makes it natural to ask whether the same is true for the \emph{search version} of the problem.

\begin{definition}[Search version of $\PAP$]
    For a propositional proof system $Q$, we denote by $\FPAP_Q$ the \emph{search version of the Proof Analysis Problem for $Q$}, defined as follows.
    \searchProblemStatement{$\FPAP_Q$ (search version of $\PAP_Q$)}{A CNF formula $\varphi$, a size parameter $s$ in unary and a proof $\pi$ such that $\pi : Q \vdash \neg \Reff_s(\varphi)$.}{Either a satisfying assignment for $\varphi$, if $\varphi \in \SAT$, or $0$ otherwise.}
    Similarly to $\PAP_Q[s(n)]$, we define $\FPAP_Q[s(n)]$ to be the search problem where the size parameter is at least $s(n)$. 
\end{definition}

If we impose a polynomial upper bound on the size of $\pi$, then by the lower bound on $\Reff$ formulas, there is always a satisfying assignment for $\varphi$, and the problem ${\FPAP_{\Res}}[n^c]$ for any $c > 0$ is in $\TFNP$.
The fact that $\PAP_\Res[n^2] \in \P$, does not, however, directly imply that ${\FPAP_{\Res}}[n^c] \in \FP$ for any $c > 0$. Namely, it is not clear that given a polynomial-size proof $\pi$ of $\Reff_s(\varphi)$ one can extract a satisfying assignment of $\varphi$, even if one can conclude that $\varphi$ is satisfiable.  
We show in \Cref{sec:the-algorithm} that $\FPAP_{\Res}$ is in $\FP$---although the algorithm is not quite as straightforward as the one for the decision problem.

We see $\PAP$ and the analyzability of a proof system as closely related to automatability. The following proposition captures this idea and underlines the relevance on $\PAP$ in showing hardness of automatability.

\begin{proposition}
    \label{prop:aut-anal-duality}
    Let $Q \geq \Res$. If $Q$ is both analyzable and automatable, then $\P = \NP$.
\end{proposition}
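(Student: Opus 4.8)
The plan is to prove the contrapositive-flavored statement directly: assuming $Q \geq \Res$ is both analyzable and automatable, build a polynomial-time algorithm for $\SAT$. The key is to chain together the two hypotheses through the $\Reff$ formulas, exploiting exactly the structure that makes $\PAP_Q$ well-posed. First I would unpack what the two assumptions give us. Analyzability of $Q$ means there is a constant $c>0$ with $\PAP_Q[n^c] \in \P$; that is, there is a polynomial-time algorithm $A$ that, given $(\varphi,\pi,1^s)$ with $s \geq n^c$ and $\pi : Q \vdash \neg\Reff_s(\varphi)$, correctly decides whether $\varphi \in \SAT$. Automatability of $Q$ means there is a constant $d$ and an algorithm that, on input a tautology $\psi$, produces a $Q$-proof of $\psi$ in time $(|\psi| + \size{Q}{\psi})^d$.

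Next I would assemble the reduction. Given a CNF formula $\varphi$ over $n$ variables, set $s \coloneq n^c$ (padding if necessary so the size parameter meets the analyzability threshold) and form the formula $\Reff_s(\varphi)$, whose size is polynomial in $n$ by \Cref{rem:num_of_vars}. Now split into two cases governed by the dichotomy at the heart of the $\Reff$ construction. If $\varphi \in \SAT$, then by Pudlák's upper bound (\Cref{thm:ub-in-res-informal}) there is a polynomial-size Resolution refutation of $\Reff_s(\varphi)$, hence a polynomial-size $Q$-refutation since $Q \geq \Res$; so the automating algorithm for $Q$, run on $\neg\Reff_s(\varphi)$, halts within a polynomial time bound and outputs a short $Q$-proof $\pi$. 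If $\varphi \notin \SAT$, then by the Atserias--Müller lower bound (\Cref{thm:AM-original}) every Resolution refutation of $\Reff_s(\varphi)$ has size $2^{\Omega(s/n)} = 2^{\Omega(n^{c-1})}$; since $Q \geq \Res$ only guarantees $\size{\Res}{\psi} \geq \size{Q}{\psi}^{1/O(1)}$ one must be slightly careful, but the point is that this does \emph{not} immediately bound $\size{Q}{\neg\Reff_s(\varphi)}$ from below. So I would instead run the automating algorithm for a fixed polynomial number of steps $T(n)$ (the time bound it would take if a polynomial-size $Q$-proof existed): if it outputs a valid $Q$-proof $\pi$ within $T(n)$ steps, feed $(\varphi,\pi,1^s)$ to the analyzer $A$ and output its answer; if it fails to produce a proof within $T(n)$ steps, we are still not done, because $Q$ could have a superpolynomial but still sub-$2^{\Omega(n^{c-1})}$ proof.

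To close this gap cleanly I would actually iterate the $\Reff$ construction, or equivalently be more careful about the time budget: run the automating algorithm with time budget chosen so that if it fails, we may directly conclude $\varphi \notin \SAT$. Concretely, if $\varphi \in \SAT$ then $\size{Q}{\neg\Reff_s(\varphi)} = \poly(n)$, so running the automating algorithm for $(\poly(n))^d$ steps suffices to find a proof; if no proof is found in that many steps, then $\varphi \notin \SAT$ and we output ``unsatisfiable.'' If a proof $\pi$ is found, it has size at most $(\poly(n))^d = \poly(n)$, so $(\varphi,\pi,1^s)$ is a legitimate polynomial-size instance of $\PAP_Q[n^c]$, and the analyzer $A$ decides correctly in polynomial time. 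Either way we decide $\SAT$ in polynomial time, so $\P = \NP$. The main obstacle — and the one place the argument needs genuine care rather than bookkeeping — is pinning down the time budget for the automating algorithm so that ``no proof found'' is a sound certificate of unsatisfiability; this relies on the fact that automatability bounds the running time in terms of the \emph{shortest} proof, together with Pudlák's upper bound guaranteeing that a short proof exists whenever $\varphi$ is satisfiable. The Atserias--Müller lower bound is in fact not even needed for this direction; only the upper bound (\Cref{thm:ub-in-res-informal}) and the definitions of analyzability and automatability are used.
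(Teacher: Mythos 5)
Your proposal is correct and follows essentially the same route as the paper's proof: map $\varphi$ to $\Reff_{n^c}(\varphi)$, run the automating algorithm for a time budget calibrated to Pudlák's upper bound (so that failure to find a proof certifies unsatisfiability), and if a short $Q$-proof is found, feed it to the analyzer. Your observation that the Atserias--Müller lower bound is not needed here — only the upper bound — is accurate and matches the paper's proof, which likewise invokes only Pudlák's construction; the only real difference is that you spell out the time-budget bookkeeping more carefully than the paper's terser phrasing.
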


\begin{proof}
    If $Q$ is analyzable and automatable, this means that $\PAP_Q[n^c] \in \P$ for some constant $c>0$, and that there is an automating algorithm $A$ for $Q$. We call the polynomial-time algorithm for $\PAP_Q[n^c]$ an \emph{analyzer} and observe that these two combined can solve $\TSAT$ in polynomial time as follows. Given a 3-CNF formula $\varphi$, construct the formula $\Reff_{n^c}(\varphi)$, stating that $\varphi$ does not have Resolution refutations of size $n^c$. Since $Q \geq \Res$, by the upper bound construction \cites[Theorem 4.1]{Pudlak03}[Lemma 11] {AM20}, whenever $\varphi$ is satisfiable, there will be size-$n^{O(1)}$ refutations in Resolution, and hence also in $Q$, and the automating algorithm $A$ will succeed in finding some refutation in polynomial time. Feed this refutation to the $Q$-analyzer to decide whether $\varphi \in \SAT$. If the automating algorithm failed to output a polynomial-size proof, then we would already know that $\varphi \not\in \SAT$.
\end{proof}

The previous proposition can be seen as an abstract way of stating the $\NP$-hardness of automating Resolution too. Since $\PAP_\Res[n^2] \in \P$, that means that Resolution cannot be automatable unless $\P = \NP$. In \Cref{sec:PAP-EF-NP-complete} we study the possibility of analyzing algorithms for strong proof systems actually leading to the hardness of their automatability---and establish that this is highly unlikely.

\section{The extraction algorithm}
\label{sec:the-algorithm}

This section proves that the search version of the Proof Analysis Problem for Resolution is in $\FP$. Our algorithm (in fact, two algorithms) arise from closely observing the lower bound on the $\Reff$ formulas and attempting to make it fully constructive, in a style amenable to formalization in weak theories of arithmetic like $\PVO$ (in the style of Cook and Pitassi \cite{CP90}).

Recall that the lower bound can be presented in two steps: first, a random restriction argument takes a small refutation and produces a low block-width refutation of a restricted formula, followed by a block-width lower bound for this restricted formula, which overall bounds the size of the original refutation.

Our algorithm works analogously. On input a refutation $\pi$ of $\Reff_s(\varphi)$, it first finds a restriction $\rho$ such that $\pi_{\restriction \rho}$ is a refutation of a restricted version of $\Reff_s(\varphi)$ and has low block-width. Then, we have a second algorithm that, inspired by the proof of the block-width lower bound, analyses this low block-width refutation and extracts a satisfying assignment.

We present the algorithm in two steps. First, two alternatives to perform block-width reduction are described in \Cref{subsec:width-reduction-algorithm}. These correspond, respectively, to a random restriction and a deterministic restriction argument. In \Cref{subsec:width-analysis-algo} we explain how to design the algorithm that analyzes low block-width refutations, which is essentially the Prover-Delayer strategy behind the block-width lower bound for the $\Reff$ formulas. Putting them together yields the desired procedure.

\subsection{The block-width reduction algorithm}
\label{subsec:width-reduction-algorithm}

Recall that we assume that in our definition of the $\Reff$ formula there is a variable $\enable^i$ for every block $i\in[s]$ that allows us to \emph{disable} that block (see \Cref{def:Reff-formulas}). For succinctness, we often denote $\enable^i$ simply by $e_i$ and refer to is as an \emph{enabling variable}.

Let us first define a kind of restriction that will come up a lot in our arguments.

\begin{definition}[Disabling restrictions]
    \label{def:disabling-restriction}
    We say that a restriction $\rho\in\{0,1, *\}^N$ to the $N$ variables of $\Reff_s(\varphi)$ is \emph{$d$-disabling} if it satisfies that (i) exactly $d$ blocks are disabled, and the rest are all enabled, (ii) every variable belonging to a disabled block is assigned a value, and (iii) no other variable is assigned.
\end{definition}

A key property of disabling assignments is that they can never falsify any axioms of $\Reff_s(\varphi)$, all the enabling variables disappear after the restriction and $\Reff_s(\varphi)_{\restriction \rho}$ is essentially an instance of $\Reff_{s-d}(\varphi)$, except there are some pointer variables pointing to the $d$ disabled blocks that are still hanging.

A first approach to perform block-width reduction in inspired by random restriction arguments, and requires randomness.

\begin{lemma}[Randomized block-width reduction]
\label{lemma:rand-wr}
        Let $p \in [0, 1)$ and let $N$ be the number of variables of the $\Reff_s(\varphi)$ formula for some CNF formula $\varphi$ over $n$ variables. There exists a randomized algorithm $R$ taking as input $1^N$, and outputting an $\lfloor{s/2}\rfloor$-disabling restriction $\rho \in \{0, 1, * \}^{N}$, such that for every Resolution refutation $\pi$ of the formula $\Reff_s(\varphi)$, the following properties hold:
        \begin{enumerate}[label=(\roman*)]\itemsep=0pt
            \item the restriction $\rho$ does not falsify $\Reff_s(\varphi)$;
            \item with probability at least $p$, the block-width of $\pi_{\restriction \rho}$ is at most $O\left( \log |\pi| - \log (1 - p)\right)$;
            \item the running time of $R(1^N)$ is $O(N)$.
        \end{enumerate}
\end{lemma}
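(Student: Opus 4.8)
The plan is to sample a random $\lfloor s/2\rfloor$-disabling restriction in the obvious way and bound the probability that any fixed clause of $\pi$ retains high block-width. First I would fix the set of blocks to disable: choose a uniformly random subset $D \subseteq [s]$ of size exactly $d \coloneq \lfloor s/2 \rfloor$, then for each $i \in D$ assign every variable mentioning block $i$ a value that is harmless --- e.g.\ set $\enable^i = 0$ and every other block-$i$ variable to $0$ (by the discussion after \Cref{def:disabling-restriction}, a disabling restriction never falsifies any axiom of $\Reff_s(\varphi)$, since each axiom containing a block-$i$ literal is guarded by $\enable^i$ on its left-hand side, or else it is the root axiom which we never touch because block $s$ is never disabled --- one should note $s \notin D$ can be enforced, or simply observe $\enable^s$ is a unit axiom so disabling block $s$ would falsify $\Reff$; I would restrict $D$ to $[s-1]$). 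This immediately gives property (i), and the running time of producing $\rho$ is clearly $O(N)$, giving (iii); generating a uniform $d$-subset of $[s-1]$ and writing out the assignment to the at most $N$ variables is linear.

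The heart is property (ii), the block-width bound, which is a standard union bound. Fix a clause $C$ in $\pi$ and let $B(C) \subseteq [s]$ be the set of blocks it mentions (excluding the root). After applying $\rho$, the clause $C_{\restriction\rho}$ either vanishes (if some literal of $C$ is satisfied) or survives mentioning exactly those blocks in $B(C) \setminus D$. So $C$ contributes block-width $>w$ to $\pi_{\restriction\rho}$ only if $|B(C) \setminus D| > w$, which in particular requires $|B(C)| > w$ and that $D$ misses more than $w$ of the blocks in $B(C)$. If $|B(C)| \le w$ there is nothing to bound; otherwise pick any $w+1$ blocks among those mentioned by $C$ and estimate the probability that $D$ avoids all of them. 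Since $D$ is a uniformly random $d$-subset of a size-$(s-1)$ ground set with $d \approx s/2$, the probability that $D$ misses a fixed set of $w+1$ designated blocks is at most $\prod_{j=0}^{w} \frac{s-1-d-j}{s-1-j} \le (1/2 + o(1))^{w+1}$, i.e.\ at most $2^{-\Omega(w)}$. (The cleanest way is: the probability a single fixed block is missed by $D$ is $(s-1-d)/(s-1) \le 1/2$, and conditioning on previous misses only decreases the chance of the next miss, so the joint probability is $\le 2^{-(w+1)}$.) Union-bounding over all at most $|\pi|$ clauses, the probability that \emph{some} surviving clause has block-width $>w$ is at most $|\pi| \cdot 2^{-(w+1)}$. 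Setting this $\le 1-p$ and solving for $w$ gives $w = O(\log|\pi| - \log(1-p))$, which is exactly (ii).

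The main obstacle, such as it is, is bookkeeping rather than mathematics: being careful that the restriction is genuinely $\lfloor s/2\rfloor$-\emph{disabling} in the precise sense of \Cref{def:disabling-restriction} --- exactly $d$ blocks disabled, every variable of a disabled block assigned, no other variable touched --- and that disabling these blocks cannot falsify any $\Reff$ axiom, which forces the restriction to leave block $s$ enabled and hence draw $D$ from $[s-1]$ rather than $[s]$ (this shifts the ground-set size from $s$ to $s-1$ in the union-bound computation but does not change the asymptotics, since $d = \lfloor s/2\rfloor \le (s-1)/2$ still holds as long as $s \ge 2$, making each single-block miss probability $\le 1/2$). One should also remark that a literal of $C$ might mention a pointer variable $\lpoint^B_{B'}$ or $\rpoint^B_{B'}$ pointing \emph{into} a disabled block $B' \in D$ while $B \notin D$; such a literal is set by $\rho$ only if $B' \in D$, in which case it becomes $0$ and simply drops out of $C_{\restriction\rho}$ without affecting the surviving block count for $B$ --- so the accounting "$C_{\restriction\rho}$ mentions $B(C)\setminus D$" is still correct, since a variable mentioning block $B$ is assigned by $\rho$ exactly when $B \in D$. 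Once these points are nailed down the union bound goes through verbatim.
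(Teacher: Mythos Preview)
Your argument has a genuine gap at the step ``otherwise pick any $w+1$ blocks among those mentioned by $C$ and estimate the probability that $D$ avoids all of them.'' The event you need to bound is $|B(C)\setminus D|>w$, i.e.\ that $D$ avoids \emph{some} set of $w+1$ blocks inside $B(C)$, not that $D$ avoids a \emph{fixed} set of $w+1$ blocks chosen in advance. These are very different: take a clause $C$ with $|B(C)|=s-1$ (mentioning every non-root block). Since $|D|=\lfloor s/2\rfloor$, we deterministically have $|B(C)\setminus D|\geq (s-1)-\lfloor s/2\rfloor\approx s/2$, so after your restriction $C_{\restriction\rho}$ still has block-width $\Omega(s)$ with probability~$1$, not $2^{-\Omega(w)}$. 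Union-bounding over all $\binom{|B(C)|}{w+1}$ subsets would not save this, as the binomial coefficient swamps the $2^{-(w+1)}$ term.

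The root cause is your choice to set variables in disabled blocks \emph{deterministically} to~$0$. Under that convention, positive literals in a disabled block drop out but never satisfy the clause; only negative literals do. So a wide clause consisting entirely of positive literals is merely thinned by roughly half, never killed. The paper's restriction is different in exactly this respect: it pairs blocks, disables one per pair, and then assigns the variables of each disabled block \emph{uniformly at random}. This gives every literal $\ell$ (of either polarity) probability at least $1/4$ of being set to~$1$, so a clause of block-width~$w$ is \emph{satisfied} with probability at least $1-(3/4)^{w/2}$, and the union bound over $|\pi|$ clauses then yields the claimed $w=O(\log|\pi|-\log(1-p))$. To repair your argument you would need to randomise the values inside disabled blocks rather than zeroing them out; once you do that, tracking which blocks survive becomes irrelevant and the analysis reduces to bounding $\Pr[C_{\restriction\rho}\neq 1]$ directly.
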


\begin{proof}
    The proof follows closely the random restriction argument of \citeauthor{dRGNPRS21} \autocite[Section 6.3]{dRGNPRS21}. For simplicity, let us assume $s$ is even. (Note that, without loss of generality $s$ can be even, since if $\pi$ is a correct refutation of $\Reff_s(\varphi)$ and $s$ is odd, then $\pi$ can be turned into a refutation of essentially $\Reff_{s-1}(\varphi)$ by hitting $\pi$ with the restriction that disables and fully restricts one block). 

    Assume the root block corresponds to block $B_\bot$, which is not counted towards block-width, and consider the following random restriction: pair all $s$ blocks into $s/2$ pairs, and for each pair, with probability $1/2$, decide which block in the pair is going to be disabled. Now, if a block is disabled, all of its remaining variables are assigned uniformly at random. We denote by $\rho$ the restriction obtained in this way, which the algorithm outputs.

    We claim that with probability at least $p$, the restriction succeeds in lowering the block-width of any Resolution derivation $\pi$ to $O(\log |\pi| - \log (1-p))$. Indeed, if $\ell$ is a literal corresponding to the variable $e_i$ determining whether a certain block is disabled, then $\Pr_\rho[\ell_{\restriction \rho} = 1] = 1/2$. For every other literal $\ell$ in a block~$B_i$,
    \begin{align}
        \Pr_\rho [ B_i \text{ is disabled and }\ell_{\restriction \rho} = 1 \ ] &=\Pr_\rho [ B_i \text{ is disabled}] \cdot \Pr_\rho[\ell_{\restriction \rho} = 1 \mid  B_i \text{ is disabled}] \\ &= \frac{1}{2} \cdot \frac{1}{2} = \frac{1}{4}\,.
    \end{align}
    
    Hence, for every literal $\ell$ not from $B_\bot$, $\Pr_\rho[\ell_{\restriction \rho} = 1] \geq 1/4$.

    Now, if $C$ is a clause of block-width at least $w$, we have that
        \begin{equation}
        \Pr_\rho [C_{\restriction \rho} \neq 1] \leq (3/4)^{w/2}\,,    
        \end{equation}
    where the $1/2$ in the exponent comes from the fact that if two consecutive blocks are present, meaning that they were paired together and only one of them was enabled, their values depend on each other.

    Then, if $\pi$ was indeed a Resolution derivation of $\Reff_s(\varphi)$, by a union bound,
    \begin{equation}
        \Pr_{\rho}[\pi_{\restriction \rho} \text{ has a clause of block-width at most }w] \leq \prooflength{\pi} \cdot (3/4)^{w/2}
        \leq |\pi| \cdot (3/4)^{w/2} \,,
    \end{equation}
    which is the failure probability for property (ii) in the statement. For success probability at least $p$, we want to choose $w$ such that $|\pi| \cdot (3/4)^{w/2} \leq 1 - p$. This bound is met by choosing $w \geq 2(\log (|\pi|/(1-p))/(\log 4/3))$, meaning that with probability $p$, the restriction $\rho$ will satisfy all clauses of at least this width.

    It suffices to argue that properties (i) and (iii) are also satisfied. Indeed, by the way we designed the restriction, after applying $\rho$ there are no disabling variables left and all variables in the disabled blocks have been restricted, to this is exactly $s/2$-disabling.

    As for the running time, the algorithm is simply sampling the restriction, which takes time $O(N)$.
\end{proof}

We now move on to a fully deterministic algorithm that takes as input an actual refutation $\pi$ and outputs a restriction that \emph{always} manages to reduces the block-width.

\begin{lemma}[Deterministic block-width reduction]
\label{lemma:det-wr}
    There exists a constant $c > 0$ and a deterministic algorithm taking as input a Resolution refutation $\pi$ of the formula $\Reff_s(\varphi)$ over $N$ variables and outputting a $d$-disabling restriction $\rho \in \{0,1, *\}^{N}$ with $d \leq c/2 \cdot \big( \sqrt{s\log |\pi|} \big)$ such that
    \begin{enumerate}[label=(\roman*)]\itemsep=0pt
        \item the restriction $\rho$ does not falsify $\Reff_s(\varphi)$;
        \item the block-width of $\pi_{\restriction \rho}$ is at most $c\cdot \big( \sqrt{s\log |\pi|} \big)$;
        \item the algorithm runs in time $\poly(|\pi|, s)$.
    \end{enumerate}
\end{lemma}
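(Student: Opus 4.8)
The plan is to build $\rho$ by a greedy derandomization of the random‑restriction argument of \Cref{lemma:rand-wr}, disabling blocks one at a time so that each disabled block \emph{satisfies} (and hence deletes) a $\Theta(W/s)$ fraction of the currently ``wide'' clauses, where $W = c\sqrt{s\log|\pi|}$ is the target width for a suitable constant $c$. Throughout the construction I maintain a set $\mathrm{Dis}$ of already‑disabled blocks together with an assignment to all of their variables, and I call a clause of $\pi$ \emph{live} if its current restriction is not yet satisfied. A live clause cannot mention a disabled block (all variables of such a block are set, yet no literal of a live clause is set to $1$), so the current block‑width of a live clause is exactly the number of still‑enabled blocks it mentions; I call a live clause \emph{wide} if that number exceeds $W$.

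The heart of the argument is the following one‑round claim: if there are $b \ge 1$ wide clauses, then some still‑enabled, non‑root block $B^\ast$ admits an assignment to its variables such that disabling $B^\ast$ accordingly leaves at most $b(1 - W/(2s))$ wide clauses. Indeed, each of the $b$ wide clauses mentions more than $W$ distinct enabled blocks and there are at most $s$ blocks, so by averaging some enabled block $B^\ast$ is mentioned by a set $S$ of more than $bW/s$ of them. Each clause in $S$ contains at least one literal over a (currently unassigned) variable of $B^\ast$, hence under a uniformly random assignment to the variables of $B^\ast$ it is satisfied with probability at least $1/2$; so some fixed assignment satisfies at least $|S|/2 > bW/(2s)$ clauses of $S$, and such an assignment is computed in polynomial time by the method of conditional expectations. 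Disabling $B^\ast$ with this assignment deletes those clauses from the live set and can only shrink the remaining clauses, so the number of wide clauses drops by at least $bW/(2s)$, as claimed. Iterating, after $d$ rounds there are at most $|\pi|(1 - W/(2s))^{d} \le |\pi|e^{-dW/(2s)}$ wide clauses, which drops below $1$ once $d = \frac{c}{2}\sqrt{s\log|\pi|}$ with $c$ a large enough constant; so the process ends with no wide clauses after at most $\frac{c}{2}\sqrt{s\log|\pi|}$ rounds. The resulting $\rho$ disables exactly the $d \le \frac{c}{2}\sqrt{s\log|\pi|}$ chosen blocks, assigns all of their variables and nothing else, so it is $d$‑disabling and property~(ii) holds by construction; property~(i) is immediate from the discussion after \Cref{def:disabling-restriction}, since disabling restrictions never falsify an axiom of $\Reff_s(\varphi)$; and property~(iii) holds because each round only computes the current block‑widths, finds the most‑mentioned enabled block, and derandomizes over the $O(m+n+s)$ variables of $B^\ast$, all in time polynomial in $|\pi|$ and in the number of variables of $\Reff_s(\varphi)$.

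The step needing the most care is the quantitative trade‑off. A cruder greedy that fixes a single popular \emph{literal} of $B^\ast$ only satisfies a $1/O(m+n+s)$ fraction of $S$, which degrades the per‑round progress to $(1 - \Theta(W/(s(m+n+s))))$ and yields only width $O(\sqrt{s(m+n+s)\log|\pi|})$, too weak for the statement; it is essential to argue instead that the full random assignment to $B^\ast$ already satisfies half of $S$ (each clause of $S$ has at least one $B^\ast$‑literal, so it fails only if all of them, over distinct variables, come up false) so that the progress is $(1-\Theta(W/s))$. One should also note the degenerate regime: if $|\pi|$ is so large that $c\sqrt{s\log|\pi|} \ge s$ the statement is vacuous and the empty restriction works, and otherwise $d \le s$, so the construction above is well defined.
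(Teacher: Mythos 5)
Your overall plan is the same as the paper's: greedily pick a block mentioned by a large fraction of the wide clauses, restrict it to kill a constant fraction of them, and repeat, with the same $\Theta(W/s)$-per-round progress leading to the $\sqrt{s\log|\pi|}$ bound. However, there is a genuine gap in the one-round claim, and it concerns precisely the variable $e_{B^\ast}$.

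Your derandomization argues that a uniformly random assignment to \emph{all} variables of $B^\ast$ satisfies each clause of $S$ with probability $\ge 1/2$, and then applies conditional expectations. But if the resulting assignment sets $e_{B^\ast} = 1$ (which it will, for instance, when many clauses of $S$ mention $B^\ast$ only through $e_{B^\ast}$ occurring positively), then you cannot ``disable $B^\ast$ with this assignment'': the output would assign the non-enabling variables of an \emph{enabled} block, which violates the definition of a $d$-disabling restriction (\Cref{def:disabling-restriction}) and can outright falsify axioms of $\Reff_s(\varphi)$, e.g.\ (\ref{axiom:must-resolve}) or (\ref{axiom:must-point-left}) when $\derived^{B^\ast}$ is set to $1$ and all $\res^{B^\ast}_x$ or $\lpoint^{B^\ast}_{B'}$ are set to $0$. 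If instead you pin $e_{B^\ast}=0$ first and randomize over the remaining variables, then clauses of $S$ that mention $B^\ast$ only via a positive occurrence of $e_{B^\ast}$ are satisfied with probability $0$, not $1/2$, and nothing in your argument bounds how many such clauses there are; in the worst case all of $S$ is of this type and the round makes no progress at all. This is exactly why the paper's Algorithm~\ref{algorithm:greedy} has a case split: if $e_i$ occurs positively in at least a $1/3$ fraction of the clauses in $W$ mentioning block $i$, it sets $e_i\mapsto 1$ and leaves the other variables of block $i$ \emph{unassigned} (so the block is simply enabled, consistent with a disabling restriction); only in the complementary case does it disable block $i$ and greedily fix the remaining variables. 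Either branch kills a $1/3$ fraction. Once you incorporate this case split, the rest of your calculation (the averaging over $\le s$ blocks, the $(1-\Theta(W/s))^d$ contraction, and the choice $d = w/2$) goes through essentially as in the paper.

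A smaller point: your derandomization uses the method of conditional expectations, which is a valid but heavier tool than needed; the paper's branch (b) simply fixes each remaining variable of the disabled block to its majority sign among the surviving wide clauses, which already gives the needed $1/2$ fraction on that branch and is easier to state as an explicit $\PV$ function, which matters later in \Cref{sec:AM20-in-PV}.
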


\begin{proof}
    We employ a greedy strategy to construct the restriction, meaning that we look at all the clauses of high block-width and we iteratively choose to restrict a literal that kills a significant fraction of these clauses.
    
    More formally, let $w$ be a parameter to be optimized later, and given $\pi$, let $W$ denote the set of all clauses in $\pi$ with block-width at least $w$. Through the following iterative process we will enable and disable some blocks. Whenever we enable a block, we will also mark it as not active by keeping track of a set $\mathsf{ActiveBlocks} \subseteq [s]$, meaning that when choosing greedily the next literal to restrict, inactive blocks are not a valid choice; and whenever we disable a block, we add it to a set $D \subseteq [s]$ to keep track of it.

    \begin{boxAlgo}{Deterministic block-width reduction}{greedy}
    Repeat the following procedure iteratively, starting with $\rho \coloneq \emptyset$, $\mathsf{ActiveBlocks} \coloneq [s]$, and $D \coloneq \emptyset$, and stop whenever $W$ is empty:

    \begin{enumerate} \itemsep=0pt
        \item Find the most frequent block $i\in \mathsf{ActiveBlocks}$ among the ones mentioned in the clauses in $W$.
        \item Look at the literal $e_i$ of the variable used to disable block $i$.
        \begin{enumerate} \itemsep=0pt
            \item If $e_i$ appears positively in at least $1/3$ of all the clauses in $W$ that mention block $i$, then set $\rho \coloneq \rho \cup \{ e_i \mapsto 1 \}$, $\mathsf{ActiveBlocks} \coloneq \mathsf{ActiveBlocks} \setminus  \{ i\}$, $W \coloneq W_{\restriction \rho}$, and go back to step (1). 

            \item If $e_i$ does not appear positively in at least $1/3$ of the clauses in $W$ mentioning block $i$, then set $\rho \coloneq \rho \cup \{e_i \mapsto 0\}$, $W \coloneq W_{\restriction \rho}$, $D \coloneq D \cup \{i\}$, and for every other variable $x$ of block $i$, 

            \begin{enumerate} \itemsep=0pt
                \item if $x$ appears in $W$ positively more often than negatively, then set $\rho \coloneq \rho \cup \{x \mapsto 1\}$, and $W \coloneq W_{\restriction \rho}$;
                \item if $x$ appears in $W$ negatively more often than positively, then set $\rho \coloneq \rho \cup \{x \mapsto 0\}$, and $W \coloneq W_{\restriction \rho}$;
                \item repeat for every variable of block $i$.
            \end{enumerate}

            \item Once all the variables of block $i$ have been taken care of, go back to step (1). 
        \end{enumerate} 
    \end{enumerate}
    \end{boxAlgo}
    
    The procedure terminates once $W$ is either empty or all clauses in $W$ mention only blocks that are no longer in $\mathsf{ActiveBlocks}$. At this point, $|D|$ blocks have been disabled. For the remaining blocks that were not mentioned by any clause in $W$, enable all of them by setting the corresponding variables $e_i \mapsto 1$. This completes the construction of the restriction~$\rho$, and the algorithm outputs $\rho$.
    
    The procedure runs for at most $s$ iterations, since each iteration takes care of one of the blocks mentioned by the clauses in the initial $W$ and we never deal with a block twice. Therefore, the algorithm runs in time $\poly(|\pi|, s)$. 
    
    As for the correctness of the algorithm, the restriction $\rho$ is $d$-disabling by construction for $d = |D|$. It is left to argue that for a suitable choice of $w$, there exists a constant $c > 0$ such that
    $d \leq c/2 \big(\sqrt{s \log |\pi|}\big)$ and the block-width of $\pi_{\restriction \rho}$ is at most $c \cdot \big(\sqrt{s \log |\pi|}\big)$.

    We want to choose $w$ so that after $\ell \leq s$ iterations, the set $W$ becomes empty. At the first iteration, by an averaging argument, we know that the most frequent block is mentioned in at least a $w/s$ fraction of~$|W|$. More generally, at iteration $\ell$, block-width might have decreased up to $w-(\ell - 1)$ and up to $\ell - 1$ blocks may have become inactive, so the same averaging argument tells us that the most frequent active block is mentioned in at least a $(w-(\ell - 1))/(s-(\ell - 1))$ fraction of the clauses. Furthermore, observe that if at a given iteration block $i$ is the most frequent active block, we are not promised to kill all the clauses mentioning $i$, but we are guaranteed to kill at least $1/3$ of them. Indeed, if we enable block $i$ that is because it appeared in at least $1/3$ of all the clauses in $W$ mentioning $i$; and otherwise we are guaranteed to restrict at least $1/2$ of the remaining at least $2/3$ fraction of the clauses in $W$ mentioning $i$, which amounts to at least $1/3$ fraction.
    
    Therefore, if $\rho_{\ell}$ is the restriction built after $\ell$ iterations, 
    \begin{align}
        |W_{\restriction \rho_\ell}| &\leq |W| \cdot {\bigg( {1- \frac{w}{3s}} \bigg)} \cdot \left( 1- \frac{w-1}{3(s-1)} \right) \cdot \dots \cdot \left( 1- \frac{w-(\ell - 1)}{3(s- (\ell - 1))} \right) \\
        &\leq |W| \cdot \left(1 - \frac{w - \ell}{3s}\right)^\ell \\
        &\leq |W| \cdot e^{- \ell \cdot \frac{w - \ell}{3s}}\,.
    \end{align}

    We want to ensure that for some $\ell \leq s$ we achieve $|W_{\restriction \rho_{\ell}}| < 1$. It suffices to have $|W| \cdot e^{- \ell \cdot \frac{w - \ell}{3s}} < 1$. Taking logarithms on both sides we have
    \begin{equation}
    \ln |W| < \ell \cdot \frac{w- \ell }{3s} \,,\end{equation}
    which holds already for $\ell = \lfloor w/2 \rfloor$, assuming $w > \sqrt{12s\ln|\pi|} \ge \sqrt{12s\ln|W|}$.

    Now it suffices to choose a constant $c$ such that $w \coloneq c \cdot  \sqrt{s \log |\pi|} > \sqrt{12s\ln|\pi|}$.  In this way we get that after at most $\ell \coloneq \lfloor w/2 \rfloor$ iterations, $W_{\restriction \rho_{\ell}} = \emptyset$ and thus the block-width of $\pi_{\restriction \rho_{\ell}}$ is also at most $c \cdot \sqrt{s\log|\pi|}$. Furthermore, note that $|D| \leq \ell$, since the algorithm only runs for at most $\ell$ iterations, meaning that $\rho_{\ell}$ is $d$-disabling for $d \leq \ell = \lfloor w/2 \rfloor \leq (c/2) \sqrt{s\log|\pi|}$, as desired.
\end{proof}

\subsection{The block-width analysis algorithm}
\label{subsec:width-analysis-algo}
Using one of the two algorithms above, we can take a refutation $\pi$ of $\Reff_s(\varphi)$ and obtain a new refutation $\pi'$ of the restricted formula $\Reff_s(\varphi)_{\restriction \rho}$ in low block-width. We can now show how to analyze this refutation, inspired by the block-width lower bound, and succeed in finding a satisfying assignment whenever one exists.

We first state the following simple but crucial technical fact used in the proof.

\begin{fact}
    \label{fact:weakening}
    Let $\varphi$ be CNF formula over $n$ variables. If $C$ is a non-tautological width-$n$ clause over the variables of $\varphi$ that is not the weakening of any clause of $\varphi$, then the unique assignment that falsifies $C$ is a satisfying assignment for $\varphi$.
\end{fact}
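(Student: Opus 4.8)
The plan is to prove the contrapositive-flavored statement directly by a short unwinding of the definitions. Let $\varphi = C_1 \land \dots \land C_m$ be a CNF formula over the variables $x_1, \dots, x_n$, and let $C$ be a clause of width exactly $n$ over these same variables. Because $C$ has width $n$ and is a set of literals over $n$ variables, for each $i \in [n]$ the clause $C$ contains \emph{exactly one} of the two literals $x_i$ or $\neg x_i$ (it cannot contain both, since that would make it a tautological clause, and then every $C_j$ of $\varphi$ would trivially be a subclause of it, contradicting the hypothesis that $C$ is not a weakening of any clause of $\varphi$ once we also note $\varphi$ is nonempty; in the edge case $m = 0$ the statement is vacuous). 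Thus $C$ singles out a unique total assignment $\alpha \in \{0,1\}^n$, namely the one that falsifies every literal of $C$: set $\alpha_i = 0$ if $x_i \in C$ and $\alpha_i = 1$ if $\neg x_i \in C$. This $\alpha$ is precisely what we mean by $\neg C$ read as an assignment.

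First I would verify that $\alpha$ satisfies $\varphi$. Suppose for contradiction that $\alpha$ falsifies some clause $C_j$ of $\varphi$. Then every literal of $C_j$ is falsified by $\alpha$. But by construction, the literals falsified by $\alpha$ are exactly the literals appearing in $C$; hence every literal of $C_j$ appears in $C$, i.e. $C_j \subseteq C$. That makes $C$ a weakening of $C_j$, contradicting the hypothesis. Therefore $\alpha$ satisfies every clause of $\varphi$, so $\alpha$ is a satisfying assignment, which is exactly the claim that $\neg C$ encodes a satisfying assignment for $\varphi$.

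The argument is entirely elementary; there is no real obstacle beyond being careful with the two degenerate situations noted above (the clause $C$ containing complementary literals, and $\varphi$ being empty), both of which are handled in a line. The one point worth stating explicitly for later use is the bijective correspondence it establishes: width-$n$ non-tautological clauses over $x_1,\dots,x_n$ are in one-to-one correspondence with total assignments, and ``$C$ is not a weakening of any $C_j$'' is equivalent to ``the assignment $\neg C$ falsifies no clause of $\varphi$''. This is the form in which the block-width analysis algorithm will invoke it: when the Delayer reaches a clause of full width $n$ inside a block, either that clause is subsumed by (a weakening of) one of the axioms of $\varphi$, or its negation hands us the desired satisfying assignment.
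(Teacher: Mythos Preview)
Your main argument is correct and is in fact cleaner than what the paper does. The paper states the Fact without proof and only supplies an argument later, when formalizing it in $\SOT$ (Lemma~\ref{fact-in-PV}); there the proof proceeds by induction on the number of variables $n$, restricting one variable at a time. Your direct unwinding---define $\alpha$ to falsify every literal of $C$, and observe that any clause $C_j$ falsified by $\alpha$ must satisfy $C_j \subseteq C$---is the obvious one-paragraph proof and is preferable at this point in the paper. The inductive structure only pays off later, where one needs the argument to fit the induction schemes available in $\SOT$.

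One genuine slip: your handling of the tautological case is wrong. You claim that if $C$ contains both $x_i$ and $\neg x_i$ then every clause $C_j$ of $\varphi$ is a subclause of $C$; this is false (take $n=2$, $C = x_1 \lor \neg x_1$, $\varphi = x_2$). In fact the Fact as literally stated fails on such $C$, since $\neg C$ is then contradictory and encodes no assignment. The correct fix is simply to note that in the only application (the block-width analysis, via item~(iii) of the reservation invariant) the clause is guaranteed to have exactly one literal per variable, so the tautological case never arises; alternatively, read ``width-$n$ clause over $n$ variables'' as implicitly meaning one literal per variable. The paper's inductive proof has the same tacit assumption, so this is a quibble about the statement rather than a gap in your argument.
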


Now we can present the algorithm and prove its correctness.

\begin{lemma}[Assignment extraction]
\label{lemma:extraction}
    There exists a deterministic algorithm $E$ such that for every $s\in\bbN$, every $\pi$ a purported Resolution refutation of $\Reff_s(\varphi)$ for a CNF formula $\varphi(x_1, \dots, x_n)$ with $m$ clauses, and every $\rho \in \{ 0,1,*\}^N$ a $d$-disabling restriction to the $N$ variables in $\Reff_s(\varphi)$, it holds that $E(\varphi, \rho, s, \pi)$ terminates in time $\poly(|\pi|, s, n, m)$ and provides exactly one of the following outputs:
    \begin{enumerate} \itemsep=0pt
        \item[(a)] an incorrect derivation step in $\pi$;
        \item[(b)] a clause $C \in \pi_{\restriction\rho}$ of block-width at least $1/3 \lfloor (s-d - n)/n\rfloor$;
        \item[(c)] a satisfying assignment for $\varphi$.
    \end{enumerate}
\end{lemma}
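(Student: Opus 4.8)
The plan is to design $E$ as a traversal of the (restricted) refutation $\pi_{\restriction\rho}$ guided by a Delayer strategy, in the spirit of the Prover–Delayer games of \citeauthor{AD08}. First I would set up the bookkeeping. Since $\rho$ is $d$-disabling, the restricted formula $\Reff_s(\varphi)_{\restriction\rho}$ behaves like $\Reff_{s-d}(\varphi)$: all enabling variables of the surviving blocks are satisfied, every axiom that mentions a disabled block is satisfied, and what remains are the axioms \eqref{axiom:must-appear-after-res-left}–\eqref{axiom:must-weaken} restricted to the $s-d$ enabled blocks (plus leftover pointer literals aimed at disabled blocks, which are harmless). The root block $B_\bot = s$ survives (it cannot be among the disabled ones, since $\enable^s$ is an axiom), its literals $\lit^s_\ell$ are all falsified by \eqref{axion:root-empty}, and it is \emph{not} counted toward block-width. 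Before doing anything, $E$ verifies that $\pi$ is a syntactically correct Resolution sequence; if it finds a bad inference it halts with output (a). From now on assume $\pi$, hence $\pi_{\restriction\rho}$, is a correct refutation.

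Next I would describe the walk. $E$ maintains a partial assignment $\beta$ to the variables of $\Reff_s(\varphi)_{\restriction\rho}$ and a ``current clause'' $C$, starting at the empty clause $\bot$ at the root of $\pi_{\restriction\rho}$; the invariant is that $\beta$ falsifies $C$. At a weakening node the child clause is still falsified by $\beta$, so move to it. At a resolution node $C = A \lor B$ obtained by resolving over some variable $v$ of $\Reff_s(\varphi)_{\restriction\rho}$: $\beta$ falsifies $C$, and we need $\beta$ to decide $v$ one way so as to continue into the premise it falsifies. This is where the Delayer strategy enters: when the Prover (the proof) queries the value of a variable $v$, the Delayer answers according to the concrete rule extracted from the block-width lower bound of \cite{AM20} — namely, it answers so as to keep a large ``frontier'' of blocks consistent with \emph{some} size-$(s-d)$ refutation, only being forced to actually commit (and thereby possibly add a new block to the set of ``touched'' blocks) when the structural axioms demand it. Concretely, the Delayer keeps a set $T$ of blocks it has committed to; when asked about a variable of a block already in $T$ it answers consistently with the partial refutation it is building, and when asked about a fresh block it is free to answer so that the block stays ``empty/disabled-like'' unless a $\res/\lpoint/\rpoint$ axiom forces content, in which case $|T|$ grows. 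Also, $E$ periodically forgets literals of $\beta$ that are no longer needed (mirroring the ``forget'' moves of the game), so that $\beta$ stays supported on the at most $O(n)$ blocks relevant to $C$ and the reachable axioms.

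Then I would analyze termination and the trichotomy. The walk cannot cycle: each step moves strictly downward in the dag of $\pi_{\restriction\rho}$ (into a premise), so after at most $\prooflength{\pi}$ steps we reach a leaf, i.e.\ an axiom clause of $\Reff_s(\varphi)_{\restriction\rho}$ falsified by $\beta$. The key case analysis is then on which axiom we land on. Landing on one of \eqref{axiom:must-appear-after-res-left}–\eqref{axiom:must-weaken} (or their surviving restrictions) forces the Delayer to have committed to content in the block named by the $\lpoint/\rpoint$ literal; tracking how the set $T$ of committed blocks grew along the path, the block-width lower-bound counting argument (each new axiom visited introduces essentially one new block, and the only way the process can \emph{stop} without reaching a satisfying assignment is to have accumulated $\approx (s-d-n)/n$ distinct blocks, divided by $3$ for the usual ``$1/3$ of the clauses'' slack inherited from the greedy/averaging step) shows that at that point some clause on the path has block-width at least $\tfrac13\lfloor(s-d-n)/n\rfloor$ — output (b). The remaining possibility is that the Delayer is \emph{never} forced into content and the walk ends at a block $B$ whose literal-axioms of type \eqref{axiom:must-appear-after-weak} are the only ones left falsified; by \Cref{fact:weakening}, the width-$n$ clause over the $x$-variables described by the $\lit^B_\ell$ pattern, being a clause that is not the weakening of any clause of $\varphi$, has negation a satisfying assignment of $\varphi$ — output (c). Finally, running time: each of the $\le|\pi|$ steps does $\poly(n,m,s)$ work (locate the resolved variable, update $\beta$, run the Delayer rule, check which axiom was hit), so $E$ runs in time $\poly(|\pi|,s,n,m)$, and by construction it emits exactly one of (a), (b), (c).

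The main obstacle, I expect, is pinning down the Delayer's strategy precisely enough that two things hold simultaneously: (1) it is genuinely a polynomial-time local rule (it must decide each queried variable looking only at $\beta$, the current block set $T$, and the local structure of $\Reff$, without ever materializing the exponential-size canonical tree-like refutation that the original \cite{AM20} argument implicitly used), and (2) the invariant ``$\beta$ falsifies the current clause and is consistent with a size-$(s-d)$ refutation on blocks outside $T$'' is maintained across both resolution steps and forget steps, so that the only three ways the walk can terminate are (a), (b), (c). Getting the exact constant $1/3$ and the $\lfloor(s-d-n)/n\rfloor$ bound right is then a careful but routine accounting of how many blocks a single visited axiom can introduce (at most $n$ literal-axioms \eqref{axiom:must-appear-after-weak} per block, each forcing at most one pointed-to block, whence the division by $n$) combined with the $1/3$ slack carried over from the greedy restriction phase.
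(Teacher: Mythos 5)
Your high‑level plan (traverse $\pi_{\restriction\rho}$ top‑down maintaining a partial assignment that falsifies the current clause, play a Delayer strategy, and trichotomize the termination conditions) matches the paper's approach, and you correctly identify the central difficulty: the Delayer must be a local polynomial‑time rule that never materializes the exponential canonical tree‑like refutation. But the proposal stops exactly where that difficulty needs to be resolved, and the key technical device of the paper's proof is missing.

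The paper's Delayer works by \emph{pre‑arranging} the $s-d$ surviving blocks into $n$ layers of $\lfloor(s-d)/n\rfloor$ blocks each (plus the root as layer~$0$), and committing up front that a block reserved on layer~$i$ will encode a clause of \emph{exactly} $i$ literals over $x_1,\dots,x_i$, derived by resolving on $x_{i+1}$ from two reserved children on layer $i+1$. This rigid schema is what makes the strategy local and forced: when the proof resolves on a variable $v$, the layer of $v$'s block immediately dictates which new reservations to make, and item (iii) of the invariant guarantees that when the walk ever reaches a layer‑$n$ block with no axiom to weaken from, the reserved clause has full width $n$, so \Cref{fact:weakening} applies. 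Your description — let blocks ``stay empty/disabled‑like until a $\res/\lpoint/\rpoint$ axiom forces content'' — does not specify \emph{what} content is forced, and without the layered commitment it is not clear how the strategy avoids contradicting itself (e.g., being asked later to resolve a reserved block on a variable incompatible with its current clause), nor why the clause reached at the ``bottom'' has width exactly $n$.

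The quantitative accounting is also attributed to the wrong sources. The factor $1/3$ has nothing to do with the greedy restriction phase (\Cref{lemma:det-wr}); it comes from invariant (ii) of the reservation: every block mentioned in $C$ may carry up to two reserved children, so $\bw(\alpha)-d \le 3\bw(C)$. And the $\lfloor(s-d-n)/n\rfloor$ bound is not ``one new block per visited axiom''; it is the pigeonhole consequence of the layering: if a reservation fails at some layer $i<n$, that layer's $\lfloor(s-d)/n\rfloor$ slots are exhausted up to one, so $\bw(\alpha)-d \geq \lfloor(s-d)/n\rfloor - 1 \geq \lfloor(s-d-n)/n\rfloor$, and (ii) transfers this to $\bw(C)$. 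Your sketch would not produce the stated bound as written.

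So: right skeleton, right obstacle flagged, but the layered reservation scheme — the load‑bearing idea that both makes the Delayer polynomial‑time and yields the $1/3\lfloor(s-d-n)/n\rfloor$ threshold — is absent, and the constant accounting needs to be redone around it.
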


\begin{proof}
    We traverse the refutation $\pi$ inspired by the Delayer's strategy in the Prover-Delayer game that yields a block-width lower bound for the restricted $\Reff$ formulas---and the correctness of the algorithm is essentially the proof of this block-width lower bound.

    Before the traversal of $\pi$ starts, we arrange the $s-d$ blocks enabled by $\rho$ in a layered manner, so that there are $n$ layers, each containing $\lfloor(s-d)/n\rfloor$ blocks (with the remainder blocks left from the flooring operations collected all in the last layer), plus one additional layer on top with a single block corresponding to the root. We see the root as laying at layer $0$, and intuitively blocks in layer $i$ will be obtained by resolving over variable $x_{i+1}$. Throughout the traversal, we keep a record $\alpha \in \{ 0, 1, *\}^N$ which we call the \emph{reservation}, in which we \say{reserve} information needed to continue the traversal. (Intuitively, this record keeps the information that the Delayer has in mind when playing against the Prover.) A reservation is a partial assignment to the variables of $\Reff_s(\varphi)$, and hence encodes patches of a potential refutation. In particular, it will encode information about how blocks are connected: say, some block $B$ will be derived from block $B'$, meaning that the reservation encodes that the left pointer of $B'$ is $B$, etc. Hence, when talking about a reservation, we will often say that a certain block has \emph{information about their parents} or \emph{information about their children}, to mean that such a connection is registered in the reservation. \Cref{fig:prove-delayer-example} represents the layout used by the Delayer for his strategy together with such a reservation.

    The algorithm proceeds as follows. First, let $\pi \coloneq \pi_{\restriction \rho}$, initialize $\alpha \coloneq \rho$, and collect in a set $D\subseteq [s]$ the $d$ blocks that are disabled by the restriction $\rho$. Note that $\rho$ is a $d$-disabling restriction, meaning that it only enables and disables blocks, and sets the value of all the variables in disabled blocks, meaning that the restriction cannot possibly falsify any axiom of $\Reff_s(\varphi)$. Note as well that in $\pi$, after hitting it with $\rho$, there are no longer resolution steps over enabling variables nor over variables belonging to a disabled block.

    In what follows, capitals letters in roman font, like $A$, refer to clauses in the refutation $\pi$, while capital letters in calligraphic font, like $\calA$, refer to clauses encoded in the blocks of $\Reff_s(\varphi)$ which are determined by assignments to the variables of $\Reff_s(\varphi)$.

\begin{boxAlgo}{Block-width analysis and assignment extraction}{prover-delayer}
Let $C$ be the root of $\pi$ and traverse the proof dag following these instructions.
  \begin{enumerate}
\itemsep=0pt
    \item If $C$ is obtained by an illegal derivation step, halt and output $C$; if $C$ is a leaf of $\pi$, halt and output failure. Otherwise, continue to Step 2.
    
    \item If $C$ was obtained from weakening a clause $C' \subseteq C$, then set $C \coloneq C'$ and move to Step 4.

    \item If $C$ was derived from clauses $A \lor v$ and $B \lor \neg v$ by resolving over $v$, attempt the following reservations according to these rules (with the condition that blocks mentioned in the set $D$ can never be reserved).

    \begin{enumerate}
    \itemsep=0pt
        
        \item If $v$ belongs to a block on layer $0 \leq i < n$, where layer $0$ stands for the root block, we have two cases:
        
        \begin{enumerate}
            \itemsep=0pt
            \item if $\alpha$ has no information about this block, update $\alpha$ by reserving two unreserved blocks on layer $i+1$ so that the block of $v$ encodes the clause $\bigvee_{j = 1}^i x_j$ and it was obtained by resolving the clauses $x_{i+1} \lor \bigvee_{j = 1}^i x_j$ and $\neg x_{i+1} \lor \bigvee_{j = 1}^i x_j$, to be encoded in the two reserved blocks on layer $i+1$;

            \item if the block was reserved in $\alpha$ but it had no children attached, then $\alpha$ already determined the clause $\calC$ that is to be encoded in this block. Then, update $\alpha$ by reserving two unreserved blocks on layer $i+1$ and so that that the block of $v$ encodes the clause $\calC$ and it was obtained by resolving the clauses $x_{i+1} \lor \calC$ and $\neg x_{i+1} \lor \calC$, to be encoded in the two reserved blocks on layer $i+1$.

            \item Otherwise, do nothing.
            
        \end{enumerate}

        \textbf{If this reservation fails} because there are not enough free blocks available, halt and output the clause $C$.
        
        \item If $v$ belongs to a block on layer $n$, we distinguish two cases:
        
        \begin{enumerate}
        \itemsep=0pt
            \item if the block is not mentioned in $\alpha$, then try to find the first axiom in $\varphi$ such that $\bigvee_{j=1}^n x_j$ is a weakening of it, and reserve it so it encodes $\calA \coloneq \bigvee_{j=1}^n x_j$ and its pointers point to this axiom;

            \item if the block was reserved in $\alpha$ but it was not pointing to any axiom, then there is a clause $\calA$ associated to it by the reservation; try to find the first clause in $\varphi$ such that $\calA$ is a weakening of it and point to it in $\alpha$.

            \item Otherwise, do nothing.
        \end{enumerate}

        \textbf{If this reservation fails} because no axiom could be found, then halt and output the assignment to the variables $x_1, \dots, x_n$ given by $\neg \calA$, the negation of the clause encoded by $\calA$.
    \end{enumerate}

        If the reservations succeeded, look at $\alpha(v)$, which is now guaranteed to be defined. If $\alpha(v) = 1$, then move to $C \coloneq B \lor \neg v$, and otherwise move to $C \coloneq A \lor v$.

    \item Clean-up the reservations in $\alpha$ as follows: $\alpha$ should only contain information about (i) blocks disabled by $\rho$ and (ii) blocks mentioned in $C$ or possibly the children of these according to $\alpha$. Erase all other information from $\alpha$.

    \item Go to Step 1.
\end{enumerate}
\end{boxAlgo}

Since the algorithm is only traversing a path inside the proof dag of $\pi$, the running time of this procedure is never longer than a polynomial in the size of $\pi$. As for the correctness of the algorithm, we now show that this behaves exactly as claimed. The central claim is that, at the beginning of each iteration, when looking at clause $C$, the following invariant is satisfied. Here, $\bw(\alpha)$ stands for the number of blocks mentioned by the variables assigned by $\alpha$, and $\bw(C)$ is the block-width of a clause $C$.

\begin{claim*}[Invariant]
    The following hold at the beginning of each iteration of the algorithm, when dealing with the clause $C$ in the traversal of $\pi$:
    \begin{enumerate}
    \itemsep=0pt
        \item[(i)] the reservation $\alpha$ falsifies $C$;
        \item[(ii)] a block $B \not\in D$ is only reserved in $\alpha$ if it is either mentioned in $C$ or its parent according to $\alpha$ is mentioned in $C$, and, in particular, $\bw(\alpha) - d\leq 3\bw(C)$;
        \item[(iii)] if $\alpha$ encodes any information about a block $B$ from layer $i$, and $B \not\in D$, then $\alpha$ also determines that $B$ contains exactly $i$ literals over the variables $x_1, \dots, x_i$ and no two literals for the same variable;
        \item[(iv)] the reservation $\alpha$ does not falsify any axiom of $\Reff_s(\varphi)$.
    \end{enumerate}
\end{claim*}

\begin{proof}[Proof sketch]
    The invariant is readily verified at the initial iteration, when $C = \bot$ and $\alpha = \rho$, the $d$-disabling restriction given as input.

    Now, by straightforward structural induction, it is easy to see that assuming that the invariant holds at the beginning of an iteration and the algorithm correctly proceeds to the next iteration without halting, the invariant holds again at the beginning of the new iteration.
\end{proof}

Note that, if the algorithm reached a clause $C$ that happened to be a leaf of $\pi$, then by point (i) of the invariant $\alpha$ would be falsifying $C$, which would mean falsifying an axiom of $\Reff_s(\varphi)$, contradicting point (iv) of the very same invariant. Thus, if $\pi$ is a correct Resolution refutation, that means that the algorithm always halts before reaching a leaf, and always for one of the following two reasons.

\begin{enumerate}
\itemsep=0pt
    \item[(a)] The algorithm attempted the reservation of a block at level $1 \leq i < n$, but there were no free blocks left. This means that $\alpha$ already reserved at least $\lfloor (s-d)/n \rfloor - 1$ blocks on that layer, and so $\bw(\alpha) \geq \lfloor(s-d)/n \rfloor - 1 + d$, since each layer $i < n$ contains exactly $\lfloor (s-d)/n \rfloor $ blocks. By point (ii) of the invariant we have that that $\bw(\alpha) - d \leq 3\bw(C)$, so putting this together we have that when outputting $C$ we are outputting a clause of block-width at least $1/3 \lfloor (s-d - n)/n\rfloor$, as desired.

    \item[(b)] The reservation $\alpha$ had a clause $\calA$ encoded in a block at layer $n$, but it failed to find an axiom of $\varphi$ that $\calA$ was a weakening of. By point (iii) of the invariant, since the block is at layer $n$, it encodes a width-$n$ clause, and by \Cref{fact:weakening}, $\neg \calA$ encodes a satisfying assignment of $\varphi$. In this case the algorithm outputs this assignment, which satisfies the desired behavior.
\end{enumerate}

This completes the proof of correctness of the algorithm.
\end{proof}

\begin{figure}
    \centering
    \includestandalone[width=0.9\textwidth]{figures/prover-delayer}
    \caption{Snapshot of the execution of {Algorithm~\ref{algorithm:prover-delayer}}. The Prover--Delayer game is played over the formula $\Reff_s(\varphi)$ for $s$ blocks, arranged in $n$ layers with $s/n$ blocks per layer. In \textcolor{orange!90!black}{orange} and in \textcolor{yellow!70!black}{yellow} are two partial trees in which the Delayer answers consistently by locally following the canonical exponential-size tree-like refutation. The \textcolor{orange!90!black}{orange} tree starts from the root, while the \textcolor{yellow!70!black}{yellow} one starts from some clause $C$ of width $n-3$ that emerged from previous queries. In \textcolor{green!60!black}{green}, blocks that were queried ``out of context'': at layer $i$, the Delayer answers with the clause $x_1 \lor x_2 \lor x_3 \lor \dots \lor x_i$ where all $i$ literals are positive. At layer $n$ there are width-$n$ clauses pointing to some axiom of $\varphi$ (in \textcolor{blue!70!black}{blue}) in a valid weakening step.}
    \label{fig:prove-delayer-example}
\end{figure}

\subsection{Putting it together}
\label{subsec:putting-algo-together}

The following is a formal restatement of \Cref{thm:algo-informal}.

\begin{theorem}
\label{thm:det-algo}
It holds that ${\FPAP_{\Res}}[n^3] \in \FP$. That is, there exists a deterministic polynomial-time algorithm solving the search version of the Proof Analysis problem for Resolution whenever the lower bound parameter satisfies $s \geq n^3$.
\end{theorem}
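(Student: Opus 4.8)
The plan is to chain together the two subroutines developed in \Cref{subsec:width-reduction-algorithm} and \Cref{subsec:width-analysis-algo} --- the deterministic block-width reduction of \Cref{lemma:det-wr} and the assignment-extraction procedure of \Cref{lemma:extraction} --- while adding a brute-force escape hatch for inputs whose proof $\pi$ is already exponentially large in $n$, in the spirit of the uniformity argument in the proof of \Cref{prop:PAP_}. Let $(\varphi,\pi,1^s)$ be the input, with $\varphi$ a CNF formula over $n$ variables and $m$ clauses and $s\ge n^3$, and let $c$ be the constant supplied by \Cref{lemma:det-wr}. First I would dispose of the trivial cases: if $n$ is below a fixed absolute constant, $\SAT$ for $\varphi$ is decided in time $\poly(m)$ by exhausting the $O(1)$ assignments, so assume $n$ is large enough for the estimates below; and I would check in time $\poly(|\pi|,s,n,m)$ whether $\pi$ is a correct Resolution refutation of $\Reff_s(\varphi)$, outputting $0$ if it is not (the instance is then not in the domain of $\FPAP_\Res$). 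From here the algorithm branches on the size of $\pi$.

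If $\log|\pi|\ge s/(100c^2n^2)$, then since $s\ge n^3$ this forces $\log|\pi|\ge n/(100c^2)$, hence $2^n\le|\pi|^{100c^2}$, and the algorithm decides $\SAT$ for $\varphi$ by exhaustive search over all $2^n$ assignments in time $\poly(m)\cdot 2^n$, which is polynomial in the input length; it outputs a satisfying assignment if one is found and $0$ otherwise. If instead $\log|\pi|<s/(100c^2n^2)$, the algorithm runs \Cref{lemma:det-wr} on $\pi$ to obtain a $d$-disabling restriction $\rho$ with $d\le\tfrac{c}{2}\sqrt{s\log|\pi|}$ and $\bw(\pi_{\restriction\rho})\le c\sqrt{s\log|\pi|}$, and then runs $E(\varphi,\rho,s,\pi)$ from \Cref{lemma:extraction}. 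The output of $E$ is one of: (a) an illegal derivation step in $\pi$, which is impossible since $\pi$ was verified; (b) a clause of $\pi_{\restriction\rho}$ of block-width at least $\tfrac13\lfloor(s-d-n)/n\rfloor$; or (c) a satisfying assignment for $\varphi$, which the algorithm returns.

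The hard part --- really the only nontrivial calculation --- is to show that outcome (b) cannot occur, which is exactly where the exponent $3$ and the threshold $s/(100c^2n^2)$ are calibrated. Writing $w\coloneq c\sqrt{s\log|\pi|}$, the case hypothesis gives $w<s/(10n)$, hence $d<s/(20n)$; combining this with $n\le s^{1/3}$ a short computation yields $s-d-n\ge\tfrac{7}{10}s$, so $\tfrac13\lfloor(s-d-n)/n\rfloor\ge s/(5n)>w\ge\bw(\pi_{\restriction\rho})$. Since every clause that $E$ can output under (b) lies in $\pi_{\restriction\rho}$, option (b) is ruled out, so $E$ must return a satisfying assignment. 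All steps run in time polynomial in $|\pi|$, $s$, $n$, $m$, that is, in the input length, which establishes $\FPAP_\Res[n^3]\in\FP$ and proves \Cref{thm:algo-informal}.

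I expect the bookkeeping of the last paragraph to be the only real obstacle, and it is conceptually important rather than merely technical: the requirement $s\ge n^3$ (instead of $s\ge n^2$) is forced because the greedy deterministic restriction of \Cref{lemma:det-wr} only drives the block-width down to $c\sqrt{s\log|\pi|}=\Theta(\sqrt{s\log|\pi|})$, which must still sit below the $\Theta(s/n)$ block-width lower bound of \Cref{lemma:extraction}; this is precisely the quantitative gap between the $2^{\Omega(s/n^2)}$ and the $2^{\Omega(s/n)}$ regimes. Using \Cref{lemma:rand-wr} in place of \Cref{lemma:det-wr} would require only $s\ge n^2$, at the cost of randomization; the derandomization is what costs the extra factor of $n$.
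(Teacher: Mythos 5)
Your proposal is correct and takes essentially the same approach as the paper: verify $\pi$, run the deterministic block-width reduction of \Cref{lemma:det-wr} followed by the extraction procedure of \Cref{lemma:extraction}, and fall back to brute force when $|\pi|$ is exponential in $n$ (which makes the brute force polynomial in the input length). The only cosmetic difference is the order of case analysis—you branch on the size of $|\pi|$ up front and then show outcome (b) of \Cref{lemma:extraction} is impossible in the small-$\pi$ branch, whereas the paper runs both subroutines unconditionally and observes, if a wide clause is returned, that the inequality $\tfrac13\lfloor(s-d-n)/n\rfloor \le c\sqrt{s\log|\pi|}$ forces $|\pi| \ge 2^{\Omega(n)}$ and falls back to brute force at that point—but the underlying calculation and the role of the $s\ge n^3$ hypothesis are identical.
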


\begin{proof}
Let $(\varphi(x_1, \dots, x_n), \pi, 1^s)$ be an instance of the Proof Analysis Problem with $s \geq n^3$. First, check whether $\pi$ is indeed a correct Resolution refutation of $\Reff_s(\varphi)$. If not, reject. 
Otherwise, run the algorithm from \Cref{lemma:det-wr} on~$\pi$, which outputs a $d$-disabling restriction~$\rho$, with $d\leq (c/2)\sqrt{s\log |\pi|}$, such that $\pi_{\restriction \rho}$ is a Resolution refutation of $\Reff_s(\varphi)_{\restriction\rho}$ of block-width at most $c\cdot\sqrt{s\log|\pi|}$, for some fixed constant $c$. Then run the extraction algorithm from \Cref{lemma:extraction} on $\pi$ and $\rho$. Since $\pi$ is a correct refutation, it must be the case that 
the extraction algorithm from \Cref{lemma:extraction} outputs either a clause of $\pi_{\restriction \rho}$ of block-width at least $(s-d-n)/3n$
or a satisfying assignment of $\varphi$.
In the latter case, we are done. In the former case, it holds that 
$1/3\lfloor(s-d-n)/n \rfloor \leq c\cdot\sqrt{s\log|\pi|}$,
which implies $|\pi| > 2^{\varepsilon s/n^2} \ge 2^{\varepsilon \cdot n}$ for some small enough $\varepsilon$ and sufficiently large $n$. Since~$\pi$ is so large we can, in time polynomial in the size of the input, go over all $2^n$ assignments to the variables of $\varphi$ and output
a satisfying assignment of $\varphi$ if one exists, and otherwise reject.
\end{proof}

If we are interested in inputs where the lower bound is quadratic instead of cubic, then we can still achieve polynomial time at the cost of randomness.

\begin{theorem}
   \label{thm:rand-algo}
   There exists a zero-error randomized polynomial-time algorithm solving the search version of the Proof Analysis problem for Resolution whenever the lower bound parameter satisfies $s \geq n^2$. That is, ${\FPAP_{\Res}}[n^2] \in \ComplexityFont{FZPP}$.
\end{theorem}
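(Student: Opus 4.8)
The plan is to follow the template of the proof of \Cref{thm:det-algo}, replacing the deterministic block-width reduction of \Cref{lemma:det-wr} by the randomized one of \Cref{lemma:rand-wr}. The payoff is that \Cref{lemma:rand-wr} reduces block-width all the way down to $O(\log|\pi|)$ (with constant success probability, say $p=1/2$) rather than only to $O(\sqrt{s\log|\pi|})$, and this logarithmic bound is exactly what lets us handle the size parameter $s \ge n^2$ instead of $s\ge n^3$.

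Concretely, given an instance $(\varphi(x_1,\dots,x_n),\pi,1^s)$ with $s\ge n^2$, first verify that $\pi$ is a correct Resolution refutation of $\Reff_s(\varphi)$, rejecting (outputting $0$) otherwise; on a valid instance this step always passes. Then repeat the following. Sample $\rho \gets R(1^N)$ via \Cref{lemma:rand-wr} with $p=1/2$; the restriction $\rho$ is $\lfloor s/2\rfloor$-disabling and, with probability at least $1/2$, makes the block-width of $\pi_{\restriction\rho}$ at most $c\log|\pi|$ for an absolute constant $c$. Run $E(\varphi,\rho,s,\pi)$ from \Cref{lemma:extraction}. Since $\pi$ is correct, $E$ returns either (c) a satisfying assignment of $\varphi$ — in which case halt and output it — or (b) a clause of $\pi_{\restriction\rho}$ of block-width at least $\frac{1}{3}\lfloor(s-\lfloor s/2\rfloor-n)/n\rfloor$. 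In the latter case fix a small constant $\delta>0$ (chosen below): if $|\pi|\ge 2^{\delta n}$, brute-force $\SAT$ on $\varphi$ in time $\poly(2^n)\le\poly(|\pi|)$ and halt, returning a satisfying assignment if one exists and $0$ otherwise; if $|\pi|<2^{\delta n}$, discard $\rho$ and loop.

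The analysis hinges on choosing $\delta$ small enough that (a) $c\,\delta n < \frac{1}{3}\lfloor(s-\lfloor s/2\rfloor-n)/n\rfloor$ for all large $n$ — possible because $s\ge n^2$ and $\lfloor s/2\rfloor\le s/2$ force the right-hand side to be $\frac{n}{6}-O(1)=\Omega(n)$ — and (b) $\delta\le\varepsilon_0$, where $2^{\varepsilon_0 s/n}$ is the Atserias--Müller bound of \Cref{thm:AM-original}. With this choice there are three cases. If $\varphi$ is unsatisfiable, then $\pi$ being correct forces $|\pi|\ge 2^{\varepsilon_0 s/n}\ge 2^{\varepsilon_0 n}\ge 2^{\delta n}$ by \Cref{thm:AM-original}, and $E$ must output a case-(b) clause (there is no satisfying assignment), so the algorithm brute-forces on the first iteration, confirms unsatisfiability, and outputs $0$. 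If $\varphi$ is satisfiable and $|\pi|\ge 2^{\delta n}$, then either $E$ hands us an assignment or we brute-force one, again on the first iteration. If $\varphi$ is satisfiable and $|\pi|<2^{\delta n}$, then whenever the restriction succeeds (probability $\ge 1/2$) every clause of $\pi_{\restriction\rho}$ has block-width $\le c\log|\pi|<c\,\delta n<\frac{1}{3}\lfloor(s-\lfloor s/2\rfloor-n)/n\rfloor$, so $E$ cannot report a case-(b) clause and must output a satisfying assignment; hence the loop re-samples only upon an unlucky restriction, so the expected number of iterations is $O(1)$. Each iteration costs $\poly(N,|\pi|,s,n,m)$ by \Cref{lemma:rand-wr} and \Cref{lemma:extraction}, and the at-most-one brute-force call costs $\poly(|\pi|)$. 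The algorithm therefore always outputs a correct answer and runs in expected polynomial time, i.e.\ $\FPAP_\Res[n^2]\in\ComplexityFont{FZPP}$; small $n$ are handled by direct brute force.

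The only delicate point — more bookkeeping than a real obstacle — is ruling out an infinite re-sampling loop, which is precisely why $\delta$ must sit below both the extraction lemma's block-width threshold and the Atserias--Müller constant $\varepsilon_0$: this guarantees that the loop can persist only when $\varphi$ is genuinely satisfiable and $\pi$ is short, a situation that the random restriction of \Cref{lemma:rand-wr} escapes with probability $\ge 1/2$ on each attempt. If one prefers a strict worst-case polynomial time bound, one can instead cap the loop at $n$ iterations and output a ``don't-know'' symbol with probability at most $2^{-n}$, giving the standard two-sided formulation of $\ComplexityFont{FZPP}$.
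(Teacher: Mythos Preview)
Your proof is correct and follows essentially the same approach as the paper: replace the deterministic block-width reduction by the randomized one from \Cref{lemma:rand-wr}, exploit the improved $O(\log|\pi|)$ block-width bound to make the parameters work with $s\ge n^2$, and resample for zero error. The one organizational difference is that the paper verifies success of the restriction by directly checking the block-width of $\pi_{\restriction\rho}$ before invoking the extraction algorithm (a straightforward polynomial-time test), whereas you first run $E$ and use a case-(b) output together with the size test $|\pi|<2^{\delta n}$ as an indirect signal to resample; both schemes are valid and yield the same $\ComplexityFont{FZPP}$ bound.
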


\begin{proof}
    We carry out the proof for a fixed constant success probability $p$, but the argument works for any $p < 1$. The algorithm is essentially the same as before, except we now use the randomized width-reduction procedure in \Cref{lemma:rand-wr} at the beginning, instead of the greedy deterministic one.
    
    Observe that the randomized algorithm in \Cref{lemma:rand-wr} can be used with zero-error, because once a restriction $\rho$ is sampled, we can check if it successfully reduces the block-width to the desired bound, and run it again as many times as needed, which puts us in $\ComplexityFont{FZPP}$.

    As for why $s$ can now be allowed to be $n^2$, observe that the randomized procedure achieves better width reduction, of $O(\log |\pi|)$ whenever $p$ is a constant. Combining this with the $(s-d-n)/3n$ bound of block-width given by \Cref{lemma:extraction}, which in this case becomes $s/6n - 1/3$ because $d = s/2$, we now have $|\pi| > 2^{\Omega{(s/n)}}$, meaning that $s \geq n^2$ suffices to obtain an exponential lower bound.
\end{proof}

As discussed in introduction, the deterministic algorithm in \Cref{thm:det-algo} gives us a Levin reduction between $\TSAT$ and the Proof Size Problem for Resolution ($\PSP_\Res$, see \Cref{subsec:prelim-automatability}). Here the search version of $\TSAT$ if the one that finds satisfying assignments of satisfiable formulas, while the search version of $\PSP_\Res$ consists in finding a Resolution refutation of the right size.

\begin{corollary}
    \label{cor:Levin-NP-hard}
    There is a polynomial-time Levin reduction from the search problem for $\TSAT$ to the Proof Size Problem for Resolution. 
\end{corollary}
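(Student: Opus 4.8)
The plan is to extract the required Levin triple $(f,g,h)$ directly from \Cref{thm:det-algo} together with Pudlák's upper bound construction \cite{Pudlak03} (see also \cite[Lemma~11]{AM20}). Recall that such a reduction from the search version of $\TSAT$ to the search version of $\PSP_\Res$ consists of polynomial-time maps $f,g,h$ such that (i) if $\alpha$ satisfies $\varphi$ then $g(\varphi,\alpha)$ is a Resolution refutation, of the prescribed size, of the formula named by $f(\varphi)$, and (ii) if $w$ is any such refutation then $h(\varphi,w)$ is a satisfying assignment of $\varphi$. First I would fix the size parameter $s \coloneq n^{3}$, where $n$ is the number of variables of the input $3$-CNF $\varphi$, and let $t = t(n,m) = \poly(n,m)$ be the polynomial bound on the size of the Resolution refutation of $\Reff_{s}(\varphi)$ that Pudlák's construction outputs when fed a satisfying assignment of $\varphi$.

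With these parameters fixed, I would take $f(\varphi) \coloneq (\Reff_{s}(\varphi),1^{t})$, which is polynomial-time computable since $\Reff_{s}(\varphi)$ has size $\poly(n,m)$ by \Cref{rem:num_of_vars}; $g(\varphi,\alpha) \coloneq$ the refutation of $\Reff_{s}(\varphi)$ produced from $\alpha$ by Pudlák's construction, which is polynomial-time computable and of size at most $t$ by \cite{Pudlak03}; and $h(\varphi,w) \coloneq$ the output of the polynomial-time algorithm of \Cref{thm:det-algo} run on the triple $(\varphi,w,1^{s})$. Condition (i) is then immediate: if $\alpha$ satisfies $\varphi$, then by construction $g(\varphi,\alpha)$ is a correct Resolution refutation of $\Reff_{s}(\varphi)$ of size $\le t$, hence a valid solution of the $\PSP_\Res$-instance $f(\varphi)$; this is just the ``easy'' direction of the reduction $\varphi\mapsto\Reff_{s}(\varphi)$ that was already known.

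The substance lies in condition (ii). Given a correct Resolution refutation $w$ of $\Reff_{s}(\varphi)$ with $|w|\le t$, I would first argue that $\varphi$ must be satisfiable: if it were unsatisfiable, then either $\Reff_{s}(\varphi)$ is satisfiable and has no Resolution refutation at all, or $\Reff_{s}(\varphi)$ is unsatisfiable and the Atserias--Müller lower bound (\Cref{thm:AM-original}) forces every refutation to have size $2^{\Omega(s/n)}=2^{\Omega(n^{2})}$, exceeding $t=\poly(n,m)$ for all but finitely many $n$; either way no such $w$ exists. Hence $\varphi\in\SAT$, and since the chosen parameter satisfies $s=n^{3}\ge n^{3}$, \Cref{thm:det-algo} guarantees that the algorithm returns, in polynomial time, a satisfying assignment of $\varphi$. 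The finitely many small values of $n$ not covered by the asymptotic gap are harmless: for those the input length is bounded by a constant and the reduction can brute-force $\SAT$ directly.

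I do not expect a genuine mathematical obstacle in this corollary, since all the heavy lifting is already done by \Cref{thm:det-algo} and the classical upper bound. The one point that needs care --- the closest thing to an obstacle --- is the coordination of the two parameters $s$ and $t$: they must be chosen so that simultaneously $s\ge n^{3}$, so that the extraction algorithm of \Cref{thm:det-algo} is applicable, and $2^{\Omega(s/n)}$ strictly dominates the polynomial upper bound $t$, since it is precisely this exponential gap that makes ``$\Reff_{s}(\varphi)$ has a short Resolution refutation'' entail ``$\varphi$ is satisfiable'', which is what makes $h$ well-behaved in condition (ii).
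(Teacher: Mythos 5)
Your proposal is correct and takes essentially the same route as the paper: map $\varphi \mapsto \Reff_{n^3}(\varphi)$ (so the deterministic extraction of \Cref{thm:det-algo} is applicable), use Pudlák's construction for the solution-forward map $g$, and use the extraction algorithm for the solution-backward map $h$. The paper's proof is terser and leaves the Levin triple $(f,g,h)$ implicit, but the underlying argument and parameter choices are identical to yours.
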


\begin{proof}
In~\cite{AM20}, a 3-CNF formula $\varphi$ is mapped to
the formula $\Reff_{n^2}(\varphi)$. If, instead, we map it to $\Reff_{n^3}(\varphi)$, then this is still a many-one reduction from $\TSAT$ and hardness of automatability still follows, but this is now a Levin reduction: given a satisfying assignment for $\varphi$, we can always come up with a short Resolution refutation of $\Reff_{n^3}(\varphi)$ using the standard upper bound (see \Cref{sec:Pudlak-UB}); and given a Resolution refutation $\pi$ of $\Reff_{n^3}(\varphi)$ of polynomial-size, we can extract a satisfying assignment for $\varphi$ using \Cref{thm:det-algo}.
\end{proof}

\subsection{Assignment extraction as information efficiency}
\label{subsec:Krajice-info-efficiency}
The statement that Resolution refutations of $\Reff_s(\varphi)$ must \say{leak} satisfying assignments has an eminently information-theoretic flavor. This is no coincidence, as \Cref{thm:det-algo} can be interpreted in terms of the \emph{information efficiency} of Resolution refutations, a variant of Kolmogorov complexity in the context of proof complexity.

Krajíček \cite{Krajicek22} introduced the notion of information efficiency to measure the complexity of describing propositional proofs based on the well-established concept of time-bounded Kolmogorov complexity. Fix a universal Turing machine $U$ such that $U(e, x, 1^t)$ simulates the machine with code $e$ on input $x$ for $t$ steps. For every string $x \in \{ 0,1\}^*$, the \emph{conditional time-bounded Kolmogorov complexity} (or \emph{conditional Levin complexity}) of $x$ given a string $y \in \{ 0,1\}^*$ is defined as
\begin{equation}
 \Kt(x \mid y) \coloneq \min\{|e| + \lceil \log t \rceil \mid U(e, y, 1^t) = x\}\,.
 \end{equation}
Then, for a propositional proof system $S$, the \emph{information efficiency} of a formula $\varphi \in \TAUT$ is defined as
\begin{equation}
 \info_S(\varphi) \coloneq \min \{  \Kt(\pi \mid \varphi) \mid \pi:S \vdash \varphi  \}\,.
 \end{equation}
That is, $\info_S(\varphi)$ is the is the smallest Levin-complexity of a proof $\pi$ of $\varphi$, given the formula $\varphi$. It is easy to see that for every $\varphi\in \TAUT$, the bounds $\log \size{S}{\varphi} \leq \info_S(\varphi) \leq O(\size{S}{\varphi})$ hold. The lower the information efficiency of a formula, the easier its proofs are to describe.

Because of the assignment extraction algorithm, if $\pi$ is a correct Resolution refutation of $\Reff_s(\varphi)$, then the description of $\pi$ must at least include the description of a satisfying assignment for $\varphi$. This means that the information efficiency of $\Reff_s(\varphi)$ should be (up to constant factors) precisely the same as the Levin complexity of the easiest satisfying assignment for $\varphi$. The following statement makes this precise and can be seen as an information-theoretic analogue of \Cref{thm:det-algo}.

\begin{theorem}[Assignment extraction as information efficiency]
\label{thm:info-efficiency}
    For every CNF formula $\varphi$ over $n$ variables and $\poly(n)$ clauses and every $s \geq n^3$, if $\varphi$ is satisfiable, then
        \[\info_\Res(\neg \Reff_s(\varphi))  = \Theta\left( \min \{ \Kt(\alpha \mid \varphi)   \mid \varphi(\alpha)=1 \}\right).\]
    On the other hand, if $\varphi$ is unsatisfiable, then $\info_\Res(\neg \Reff_s(\varphi)) = \Omega(n)$.
\end{theorem}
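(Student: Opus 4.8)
The plan is to prove the two matching bounds separately, each as a short composition of feasible algorithms, and then absorb the inevitable additive $O(\log n)$ slack. Throughout we take $n^3 \le s = \poly(n)$, which is the regime of interest (the $\Reff$ formulas are always polynomially sized, as in \Cref{thm:krajicek-informal}), so that $\psi \coloneq \neg\Reff_s(\varphi)$ has size $\poly(n)$. We will repeatedly use the elementary fact that conditional Levin complexity is essentially subadditive under composition with fixed polynomial-time algorithms: if $U(e,y,1^t)=x$ and $M$ is a fixed deterministic algorithm computing $z$ from $(x,y)$ in time $T$, then $\Kt(z\mid y)\le |e|+\lceil\log(t+T)\rceil+O(1)$, which is $O(\Kt(x\mid y)) + O(\log|y|)$ whenever $T$ is polynomial in $|x|$ and $|y|$ (using $|x|\le t$). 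Each composition below is of this form with $|y|=\poly(n)$, hence costs only an additive $O(\log n)$.

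For the bound $\info_\Res(\neg\Reff_s(\varphi)) = O\bigl(\min\{\Kt(\alpha\mid\varphi)\mid\varphi(\alpha)=1\}\bigr)$, fix a satisfying assignment $\alpha^\ast$ minimizing $\Kt(\alpha\mid\varphi)$, with witness $(e,t)$. From $\psi$ one reads off $\varphi$ and $s$ in polynomial time (the $\alit$ variables are fixed to the constants encoding $\varphi$, and $s$ is the number of blocks). Consider the program that on input $\psi$ recovers $(\varphi,s)$, simulates $U(e,\varphi,\cdot)$ for $t$ steps to obtain $\alpha^\ast$, and then runs Pudlák's construction $P(\alpha^\ast,\varphi,s)$ of \Cref{thm:ub-in-res-informal} to output a Resolution refutation $\pi$ of $\Reff_s(\varphi)$ (legitimate since $s\ge n^3$ is comfortably above the threshold needed for the construction, which moreover pads for free in the relativized encoding). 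This witnesses $\info_\Res(\neg\Reff_s(\varphi))\le\Kt(\pi\mid\psi)\le\Kt(\alpha^\ast\mid\varphi)+O(\log n)$. Since any program outputting the $n$-bit string $\alpha^\ast$ runs for at least $n$ steps, $\Kt(\alpha^\ast\mid\varphi)\ge\log n$, so the additive term is absorbed.

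For the reverse bound, let $\pi$ be a Resolution refutation of $\Reff_s(\varphi)$ achieving $\info_\Res(\neg\Reff_s(\varphi))=\Kt(\pi\mid\psi)$, with witness $(e',t')$. Consider the program that on input $\varphi$, with $s$ hardcoded (costing $\lceil\log s\rceil=O(\log n)$ further bits), reconstructs $\psi=\neg\Reff_s(\varphi)$ in polynomial time, simulates $e'$ on $\psi$ to obtain $\pi$, and then runs the deterministic extraction algorithm $E$ of \Cref{thm:det-algo} on $(\varphi,\pi,1^s)$; since $\varphi$ is satisfiable and $s\ge n^3$, this outputs a satisfying assignment $\alpha$. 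Hence $\min\{\Kt(\alpha\mid\varphi)\mid\varphi(\alpha)=1\}$ is at most $O(\info_\Res(\neg\Reff_s(\varphi)))+O(\log n)$. To conclude the $\Theta$-equality it remains to observe $\info_\Res(\neg\Reff_s(\varphi))=\Omega(\log n)$ in the satisfiable case: by $\info_S(\chi)\ge\log\size{S}{\chi}$ it suffices that $\size{\Res}{\neg\Reff_s(\varphi)}=n^{\Omega(1)}$, which one checks from the axioms of \Cref{def:Reff-formulas} — in the branch where the root block is derived rather than obtained by weakening, a refutation must follow the left/right pointers back through the purported proof and resolve away literals over all $n$ variables of $\varphi$ before reaching a contradiction, forcing $\Omega(n)$ distinct clauses. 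Finally, for the unsatisfiable case the Atserias--Müller lower bound (\Cref{thm:AM-original}) gives $\size{\Res}{\neg\Reff_s(\varphi)}\ge 2^{\Omega(s/n)}\ge 2^{\Omega(n^2)}$, so $\info_\Res(\neg\Reff_s(\varphi))=\Omega(n)$.

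The substantive content lies entirely in the two feasible ingredients already established — Pudlák's polynomial-time refutation-from-assignment and the polynomial-time assignment-from-refutation of \Cref{thm:det-algo} — and the rest is bookkeeping of Levin complexity under composition; note that $E$ must be the deterministic algorithm, since $\Kt$ speaks of deterministic computation. The one mildly annoying point is the clean-up of the additive $O(\log n)$ error in the satisfiable case, i.e.\ verifying $\size{\Res}{\neg\Reff_s(\varphi)}=n^{\Omega(1)}$ so that both quantities are $\Omega(\log n)$; if preferred, the theorem can equivalently be stated with an explicit additive $O(\log n)$ error term (as in the informal \Cref{thm:krajicek-informal}), which avoids this verification altogether.
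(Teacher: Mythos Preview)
Your proof is correct and follows essentially the same approach as the paper: compose Pudl\'ak's construction for one direction and the deterministic extraction algorithm of \Cref{thm:det-algo} for the other, each incurring an additive $O(\log n)$ in Levin complexity, and invoke \Cref{thm:AM-original} for the unsatisfiable case. You are in fact more careful than the paper on two points the paper glosses over: you explicitly note that $s$ must be hardcoded (at cost $O(\log s)=O(\log n)$) when going from $\varphi$ to $\neg\Reff_s(\varphi)$, and you attempt to justify the absorption of the additive $O(\log n)$ terms into the $\Theta$ by arguing $\Kt(\alpha^\ast\mid\varphi)\ge\log n$ and $\size{\Res}{\neg\Reff_s(\varphi)}=n^{\Omega(1)}$; the paper simply writes the absorption without comment. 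Your sketch of the latter lower bound is informal, but as you observe, the theorem can equally be read with an additive $O(\log n)$ slack (matching the $\approx$ in the informal statement), which is how the paper implicitly treats it.
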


\begin{proof}
    We prove first that $\info_\Res(\neg \Reff_s(\varphi))  = O\left( \min \{ \Kt(\alpha \mid \varphi) \mid \varphi(\alpha) = 1 \} \right)$. Let $\alpha^\star$ be a satisfying assignment that minimizes $\Kt(\alpha \mid \varphi)$ with a description $e_{\alpha^\star}$ and time $t_{\alpha^\star}$, and let $P(\varphi, \alpha, s)$ denote the Turing machine that constructs Pudlák's upper bound. Then,
    \begin{align}
        \info_\Res(\neg \Reff_s(\varphi))  &\leq \Kt(P(\varphi, \alpha^\star, s) \mid \neg \Reff_s(\varphi))  \\
        &\leq |e_{\alpha^*}| + \lceil \log t_{\alpha^*}\rceil + O(1) + O(\log n) \label{step:log}\\
        &= O(\Kt(\alpha^\star \mid \varphi)) \\
        &= O\left( \min \{ \Kt(\alpha \mid \varphi) \mid \varphi(\alpha) = 1 \} \right),
    \end{align}
    where in \cref{step:log} the constant term $O(1)$ corresponds to the code of Pudlák's algorithm $P$, and since $\varphi$ is a CNF with $\poly(n)$ clauses and $P$ runs in polynomial time, the log-term accounting for running $P$ is $\lceil \log n^{O(1)}\rceil = O(\log n)$.

    For the other direction, let $\pi^\star$ be a Resolution refutation that minimizes $\info_\Res(\neg \Reff_s(\varphi))$, with description $e_{\pi^\star}$ and time $t_{\pi^\star}$. Let $A$ denote the assignment extraction algorithm from \Cref{thm:det-algo}. Then,
    \begin{align}
          \min \{ \Kt(\alpha \mid \varphi) \mid \varphi(\alpha) = 1 \} &\leq \Kt(A(\pi^\star, \varphi) \mid \varphi) \label{eq:extract}\\
          & \leq |e_{\pi^\star}| +  \lceil \log t_{\pi^\star} \rceil + O(1) + O(\log n) \label{step:log2} \\
          &= O(\Kt(\pi^\star \mid \varphi)) \\
          &= O(\Kt(\pi^\star \mid \neg\Reff_s(\varphi))) \\
          &= O\left(\info_\Res\left(\neg\Reff_s(\varphi\right)\right).
    \end{align}
    We have that \cref{eq:extract} follows because $s \geq n^3$ so \Cref{thm:det-algo} guarantees the extraction algorithm $A$ succeeds in finding a satisfying assignment, while the $O(\log n)$ term in \cref{step:log2} follows again because the algorithm $A$ runs in time $n^{O(1)}$.

    The case when $\varphi$ is unsatisfiable is an immediate corollary of the lower bound on $\Reff$ formulas (\Cref{thm:AM-original}): since $\Reff_s(\varphi)$ requires Resolution size $2^{\Omega(n)}$, then writing such a proof also requires exponential time and hence $\Kt(\pi \mid \neg\Reff_s(\varphi)) \geq \log2^{\Omega(n)} = \Omega(n)$ for every refutation $\pi$, giving $\info_{\Res}(\neg\Reff_s(\varphi)) = \Omega(n)$. (Note that if $\varphi$ does have a refutation of size $s$, then the information efficiency of $\neg \Reff_s(\varphi)$ is not defined, so we cannot claim a matching upper bound of $O(n)$.)
\end{proof}

\section{The Proof Analysis Problem for Extended Frege is $\NP$-complete}
\label{sec:PAP-EF-NP-complete}
In light of the extraction algorithm in \Cref{thm:det-algo}, it is natural to ask whether stronger proof systems are also analyzable. As shown in \Cref{prop:aut-anal-duality}, the existence of a polynomial-time proof analysis algorithm for a proof system $S$ implies that automating $S$ is~$\NP$-hard. Could this be the route towards proving that automating systems like Extended Frege is~$\NP$-hard?

This turns out to be unlikely. While for Resolution $\PAP_\Res[n^2] \in \ComplexityFont{AC}^0$ and $\FPAP_\Res[n^3] \in \FP$, already the decision problem for Extended Frege $\PAP_\EF$ turns out to be $\NP$-complete. This extends to every proof system $S$ that p-simulates Extended Frege: it is $\NP$-complete to decide whether a formula $\varphi$ is satisfiable given an $S$-proof of a Resolution lower bound on $\varphi$. We dedicate \Cref{subsec:NP-hard} to proving this. \Cref{subsec:conjecture} builds on this $\NP$-hardness result to investigate whether finding polynomial-time analysis algorithms (for weaker systems where they exist) requires proving proof complexity lower bounds first. 

\subsection{Hardness proof}
\label{subsec:NP-hard}
The idea of the hardness proof is to show that Extended Frege can prove certain Resolution lower bounds \say{agnostically}, in the sense that the proof employed will not depend on whether the underlying formula is satisfiable or not. This means that an algorithm that distinguishes between \say{true Resolution lower bounds} (those where the underlying formula is unsatisfiable) from \say{trivial ones} (those proven for satisfiable formulas) will be able to decide $\SAT$ and hence all of $\NP$. 

Our proof works by embedding every propositional formula into a suitable disjunction with the pigeonhole principle formulas. Conveniently, Haken's Resolution lower bound \cite{Haken85} for the pigeonhole principle was already formalized by Cook and Pitassi in $\PVO$.

\begin{theorem}[\citeauthor{CP90}, \citeyear{CP90} \cite{CP90}]
    \label{thm:CP90-PHP}
    There exist a positive $\varepsilon_0 \in \bbQ$ and $n_0 \in \bbN$ such that
    \[\PVO \vdash \forall n\forall\pi(\foReff(\PHP^n_{n-1}, \pi) \land n \geq n_0 \to ||\pi|| > \varepsilon_0 n).\]
\end{theorem}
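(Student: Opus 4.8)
The plan is to formalize Haken's bottleneck-counting lower bound inside $\PVO$ (equivalently, to give the argument in $\SOT$ and invoke $\forall\Sigma^b_1$-conservativity: the sentence is universal with a $\PV$-matrix, since checking that $\pi$ is a correct refutation is a $\PV$-predicate and the conclusion $||\pi|| > \varepsilon_0 n$ is sharply bounded). Reason by contradiction inside the theory: fix $n \ge n_0$ and a $\pi$ with $\foReff(\PHP^n_{n-1}, \pi)$ and $||\pi|| \le \varepsilon_0 n$, so that the number $m$ of clauses of $\pi$ satisfies $m \le |\pi| \le 2^{\varepsilon_0 n}$. Crucially $m \le |\pi| \in \Log$, so any collection of clauses of $\pi$ is a $\Log$-sized set and can be counted exactly (see \Cref{subsec:prelim-exact-counting}); this is what keeps the combinatorics inside $\PVO$.

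First I would set up, for every \emph{critical} restriction $\rho$ --- one induced by a partial matching of $n-1$ of the $n$ pigeons onto the $n-1$ holes, leaving a single pigeon unmatched, so that $\rho$ falsifies $\PHP^n_{n-1}$ --- a $\PV$-function that walks from the empty clause $\bot$ up the proof dag of $\pi$, at each resolution step picking a parent still falsified by $\rho$, and halting at an axiom falsified by $\rho$. I would then attach to each clause $C$ of $\pi$ Haken's complexity measure $\mu(C)$ (the least number of pigeons one must pin down for $C$ to be forced on all critical extensions) and prove its three needed properties by $\Delta^b_0$-induction along such a walk: $\mu(\bot) = \Omega(n)$, $\mu(\text{axiom}) = O(1)$, and $\mu$ decreases by a bounded amount across each inference. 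Hence there is a well-defined first clause $B(\rho)$ on the walk with $\mu(B(\rho))$ in a target window such as $[n/3, 2n/3]$, and a short separate argument shows such a clause is \emph{wide}: it mentions $\Omega(n^2)$ of the $p_{i,j}$ variables and therefore, by a local ($\Log$-sized) pigeonhole, constrains $\Omega(n)$ pigeons. All of this is polynomial-time computable from $\pi$ and $\rho$.

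The final step is the counting: the map $\rho \mapsto B(\rho)$ sends critical restrictions to wide clauses of $\pi$, and each wide clause can be the bottleneck of only a $2^{-\Omega(n)}$ fraction of critical restrictions, forcing $\pi$ to contain $2^{\Omega(n)}$ wide clauses and contradicting $m \le 2^{\varepsilon_0 n}$ once $\varepsilon_0$ is chosen small enough. I expect \emph{this} step to be the main obstacle, and it is exactly where the Cook--Pitassi contribution lies: the textbook counting averages over the set of \emph{all} $\approx n!$ critical restrictions, which $\PVO$ cannot enumerate or count since that set is not in $\Log$. The fix is to re-express the pigeonhole so that it only ever ranges over $\Log$-sized objects bounded by $m \le |\pi|$ --- for instance, fixing a partial-matching kernel of size $\approx n/3$, counting within $\Log$ how many of the at most $m$ wide clauses of $\pi$ are compatible with it, and combining this with the purely arithmetic inequality $n!/(2n/3)! \ge 2^{n/3}$, which is checkable by $\Delta^b_0$-induction without ever forming a set of assignments. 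Once the counting is localized in this way, nothing else in the argument exceeds $\PV$-computability and sharply bounded induction, so the whole proof goes through in $\PVO$, yielding the constant $\varepsilon_0$ and threshold $n_0$.
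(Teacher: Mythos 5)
This statement is stated as a citation to Cook and Pitassi [CP90]; the paper does not provide its own proof. Your proposal attempts to reconstruct Haken's bottleneck-counting argument inside $\PVO$, which is a reasonable instinct but differs from the approach the paper attributes to [CP90] and leaves two genuine gaps.

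First, the complexity measure $\mu(C)$ you invoke --- the least number of pigeons that must be pinned down so that $C$ is forced on all critical extensions --- involves a universal quantifier over exponentially many critical extensions, so it is neither a $\PV$-predicate nor even a bounded formula in the refutation $\pi$. Consequently the three invariants you want to establish by $\Delta^b_0$-induction along the walk cannot even be stated in the theory. The standard remedy is to avoid mentioning $\mu$ altogether and argue directly and constructively that the walk encounters a wide clause, but your sketch does not do that.

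Second, and you flag this yourself: the counting step averages over $\approx n!$ critical restrictions, far outside $\Log$. Your proposed fix --- fixing a partial-matching ``kernel'' of size $\approx n/3$, counting how many of the $m$ wide clauses are compatible with it, and ``combining with $n!/(2n/3)! \geq 2^{n/3}$'' --- is too underspecified to count as a proof. It does not say which kernel to fix, what contradiction falls out of compatibility, or how the factorial inequality enters without effectively forming the very set of assignments you are trying to avoid. As the paper itself describes it (and as the paper's own deterministic block-width reduction echoes at a structural level), the [CP90] strategy does not repair the averaging but replaces it: one builds a single partial matching greedily, extending it step by step via a polynomial-time local rule tied to the given refutation $\pi$, so that no object outside $\Log$ is ever counted. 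Your sketch is reaching toward this idea of localizing the pigeonhole to $\Log$-sized sets, but it stops precisely where the real work begins, and in its current form the counting step does not go through in $\PVO$.
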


Our reduction will use the following natural encoding of disjunctions as CNF formulas, with the aid of one additional variable.

\begin{definition}
    Let $\varphi = C_1 \land \cdots \land C_k$ and $\psi = D_1 \land \cdots \land D_m$ be two CNF formulas over the variables $x_1, \dots, x_n$. We denote by $\disj(\varphi, \psi)$ the formula
    \[  \disj(\varphi, \psi) \coloneq \bigwedge_{i=1}^k (\neg y \lor C_i) \land \bigwedge_{i=1}^m (y \lor D_i),\]
    where $y$ is a fresh variable not present in $\varphi$ or $\psi$.
\end{definition}

Intuitively, the variable $y$ in the definition above acts as a \say{switch}: if true, then $\varphi$ must hold; else, $\psi$ must hold. We can now prove that as long as one of the \say{disjuncts} is hard for Resolution, the entire formula must be hard as well.

\begin{theorem}
\label{thm:lwb-for-Disj}
    For every positive constant $c \in \bbN$, there exist $n_0 \in \bbN$ such that 
    \[ \SOT \vdash \forall \varphi \forall \pi \forall n \leq \varphi \forall m \leq \varphi \left(n \geq n_0 \land  \operatorname{3-CNF}(\varphi, n, m) \land |\pi| \leq n^{c} \to \neg \foReff( \disj(\varphi, \PHP_n) ), \pi)\right) .\]
\end{theorem}

\begin{proof}
    We work in $\SOT$. Suppose for contradiction that $\pi$ is indeed a Resolution refutation of $\disj(\varphi, \PHP_n)$. Consider the restriction $\rho$ mapping $y \mapsto 0$. We have that $\disj(\varphi, \PHP_n) \equiv \PHP_n$. Since $\SOT$ can prove that Resolution is closed under restrictions, it follows that $\pi_{\restriction_{\rho}}$ is a Resolution refutation of $\PHP_n$ in size $n^c$. However, by \Cref{thm:CP90-PHP}, necessarily $||\pi_{\restriction_{\rho}}|| > \varepsilon_0 n$, implying $||\pi || > \varepsilon_0 n$. This contradicts the assumption~$|\pi| \leq n^{c}$, when $n \geq n_0$ for $n_0$ chosen large enough.
\end{proof}

\begin{corollary}
\label{cor:EF-proofs}
    For every positive constant $c \in \bbN$, there exists a polynomial-time computable function $t$ such that for every 3-CNF formula $\varphi$ over a large enough number $n$ of variables, $t(\varphi)$ outputs an Extended Frege proof $\pi$ such that
    \[ \pi:  \EF \vdash \neg \Reff_{n^{c}}\left(   \disj(\varphi, \PHP_n)    \right).\]
\end{corollary}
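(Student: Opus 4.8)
The plan is to derive the corollary from Theorem~\ref{thm:lwb-for-VCPHP} by routing it through the $\forall\Sigma^b_1$-conservativity of $\SOT$ over $\PVO$ and Cook's propositional translation (Theorem~\ref{thm:translation}). First, observe that the $\SOT$-provable sentence in Theorem~\ref{thm:lwb-for-VCPHP} can be rewritten, using $\PV$ function symbols $\operatorname{numvars}$ and $\operatorname{numclauses}$ that extract $n$ and $m$ from the code of a formula, as a genuinely \emph{universal} sentence $\forall \varphi\,\forall \pi\,\Phi(\varphi,\pi)$ whose matrix $\Phi$ is quantifier-free in $\calL_{\operatorname{BA}}(\PV)$: the predicates $n\geq n_0$, $\operatorname{3-CNF}(\varphi,n,m)$, $|\pi|\leq n^{c}$ and $\foReff(\VCPHP(G_\varphi,m(n-1)),\pi)$ are all polynomial-time decidable, hence $\PV$-definable. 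A universal sentence with quantifier-free $\PV$ matrix lies in $\forall\Sigma^b_1$, so by the $\forall\Sigma^b_1$-conservativity of $\SOT$ over $\PVO$ we obtain $\PVO \vdash \forall \varphi\,\forall \pi\,\Phi(\varphi,\pi)$.

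Second, I would apply Cook's correspondence theorem (Theorem~\ref{thm:translation}(ii)) to this $\PVO$-provable universal sentence: there is a polynomial-time computable function that, on input the relevant length parameters, outputs an $\EF$-proof of the propositional translation $\trans{\Phi}$ of $\Phi$. In this translation the free first-order variables $\varphi$ and $\pi$ become disjoint blocks of propositional variables (of bit-length polynomial in the size parameters), and $\trans{\Phi}$ is a tautology. Given a concrete 3-CNF formula $\varphi_0$ over $n \geq n_0$ variables and $m$ clauses, I would instantiate the translation at the length parameters determined by $|\varphi_0|$ and the target refutation size $n^{c}$ (all polynomial in $|\varphi_0|$, since $c$ is a fixed constant), obtain the $\EF$-proof of $\trans{\Phi}$, and then apply the substitution $\sigma$ that fixes the propositional variables coding $\varphi$ to the bits of $\varphi_0$ while leaving the variables coding $\pi$ free. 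Since $\EF$ is closed under substitutions (applied line by line to a proof), this yields, in polynomial time, an $\EF$-proof of $\trans{\Phi}_{\restriction\sigma}$ of polynomial size.

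Third, I would verify that $\trans{\Phi}_{\restriction\sigma}$ is, up to polynomial-size $\EF$-equivalence, the formula $\neg\Reff_{n^{c}}(\VCPHP(G_{\varphi_0},m(n-1)))$. Under $\sigma$ the antecedents $n\geq n_0$ and $\operatorname{3-CNF}(\varphi,n,m)$ evaluate to true, the length constraint $|\pi|\leq n^{c}$ is automatically met by the variable block coding $\pi$, and the term $\VCPHP(G_\varphi,m(n-1))$—with $G_\varphi = f(\varphi)$ for the $\PV$ reduction $f$ of Lemma~\ref{lemma:SAT-to-Clique}, computed at translation time—becomes a concrete CNF; what remains is the propositional translation of $\neg\foReff(\cdot,\pi)$ with the $\pi$-block free, which by the design of the $\Reff$ encoding of Section~\ref{subsec:prelim-Reff} is exactly $\neg\Reff_{n^{c}}(\VCPHP(G_{\varphi_0},m(n-1)))$ modulo the Tseitin bookkeeping that $\EF$ absorbs with polynomial overhead. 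Composing the three steps defines the function $t$: on input $\varphi_0$ (with $n$ large enough), compute $n$, $m$, and $G_{\varphi_0}$; produce the $\EF$-proof of $\trans{\Phi}$ via Theorem~\ref{thm:translation}; apply $\sigma$ line by line; output the result. For the finitely many formulas with $n < n_0$ we let $t$ output anything, as the statement is only asserted for $n$ large enough.

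The main obstacle I expect is the bookkeeping in the third step: pinning down the exact correspondence between the propositional translation of the first-order formula $\foReff(\VCPHP(G_\varphi,\cdot),\pi)$—with $\varphi$ substituted and $\pi$ left free—and the explicitly defined propositional formula $\Reff_{n^{c}}(\VCPHP(G_{\varphi_0},\cdot))$ of Definition~\ref{def:Reff-formulas}, including a careful choice of translation length parameters and a check that the $\pi$-variable blocks line up with the block, pointer, literal and enabling variables of the relativized unary encoding. This is routine but tedious; everything else is a direct application of conservativity, Cook's translation, and closure of $\EF$ under substitution.
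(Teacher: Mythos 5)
Your proof is correct and takes essentially the same route as the paper: apply Cook's propositional translation to Theorem~\ref{thm:lwb-for-VCPHP}, substitute the concrete $\varphi$ into the translated tautology, and then reconcile the translated $\Reff$-like formula with the explicit encoding of Definition~\ref{def:Reff-formulas}. In fact you are slightly more careful than the paper's one-paragraph proof: the paper simply observes that the sentence in Theorem~\ref{thm:lwb-for-VCPHP} is $\forall\Pi^b_1$ and "applies Cook's translation," whereas Theorem~\ref{thm:translation} as stated requires a $\PVO$-provable universal sentence with a quantifier-free matrix. Your move of eliminating the bounded quantifiers $\forall n \leq \varphi$ and $\forall m \leq \varphi$ via $\PV$ function symbols $\operatorname{numvars}$, $\operatorname{numclauses}$, and then invoking $\forall\Sigma^b_1$-conservativity of $\SOT$ over $\PVO$, spells out exactly the bridge the paper leaves implicit. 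The final bookkeeping step you flag as "routine but tedious" corresponds to the paper's remark that $\EF$ can uniformly prove the translated $\Reff$-like formula equivalent to the one of Definition~\ref{def:Reff-formulas}; both treatments handle it at the same level of detail.
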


\begin{proof}
    The formula in \Cref{thm:lwb-for-Disj} is $\forall \Pi^b_1$, so we can apply Cook's translation (\Cref{thm:translation}) to get polynomial-size $\EF$-proofs of $\neg \Reff_{n^{c}}\left(   \disj(\varphi, \PHP_n)    \right)$. Extended Frege can then polynomially and uniformly prove that the $\Reff$-like formula obtained from the translation is equisatisfiable to the $\Reff$ formulas as we have defined them in \Cref{def:Reff-formulas}. We then have that $\EF$ proves $\neg \Reff_{n^{c}}\left(   \disj(\varphi, \PHP_n)    \right)$ in polynomial size and the proofs can be produced uniformly in polynomial time.
\end{proof}

The previous upper bound in Extended Frege works for every $\varphi$, regardless of its satisfiability. This is the key idea behind the final reduction.

\begin{theorem}
\label{thm:PAP-EF-reduction}
For every positive constant $c \in \bbN$, the language $\TSAT$ reduces to $\PAP_{\EF}[n^c]$ under polynomial-time many-one Levin reductions.
\end{theorem}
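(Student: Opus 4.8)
The plan is to build an explicit polynomial-time many-one reduction $\varphi \mapsto (\psi, \sigma, 1^{s})$ witnessing $\TSAT \Karp \PAP_{\EF}[n^{c}]$, engineered so that the only ingredient that ``knows'' whether $\varphi$ is satisfiable is the formula $\psi$ itself; the accompanying proof $\sigma$ and the size parameter $s$ will be produced uniformly, and $\sigma$ will be a correct $\EF$-refutation of $\Reff_{s}(\psi)$ for \emph{every} $\varphi$, satisfiable or not. Concretely, given a $3$-CNF $\varphi$ I would first normalize it by deleting repeated clauses, so that a $3$-CNF over $n$ variables has $m \le O(n^{3})$ clauses, and then set $\psi \coloneq \VCPHP(G_{\varphi}, m(n-1))$, where $G_{\varphi}$ is the graph produced by the reduction $f$ of \Cref{lemma:SAT-to-Clique}. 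By \Cref{lemma:SAT-to-Clique} together with the observation recorded just after \Cref{def:VC-formulas}, the formula $\psi$ is satisfiable if and only if $G_{\varphi}$ has a vertex cover of size $m(n-1)$, which in turn holds if and only if $\varphi \in \TSAT$. Note also that, by \Cref{def:VC-formulas}, $\psi$ has $N \coloneq mn + mn\cdot m(n-1) = \Theta(m^{2}n^{2}) = \poly(n)$ variables.

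For the proof component I would invoke \Cref{cor:EF-proofs}, but not with the constant $c$ from the theorem statement: instead I would fix a larger constant $d = d(c)$ chosen so that $n^{d} \ge N^{c}$ for all sufficiently large $n$ --- this is possible precisely because $N = \poly(n)$ (for instance $d = 8c+1$ works, since $N = O(n^{8})$ after normalization). \Cref{cor:EF-proofs} with the constant $d$ then supplies a polynomial-time function $t$ with $t(\varphi) : \EF \vdash \neg\Reff_{n^{d}}(\psi)$. The crucial feature, inherited from \Cref{thm:lwb-for-VCPHP}, is that this holds uniformly whether or not $\varphi$ is satisfiable: the underlying $\SOT$-proof of the Resolution lower bound for $\psi$ only exploits the pigeonhole gadget embedded in the $p_{i,j}$ variables and never inspects a satisfying assignment of $\psi$. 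The reduction then outputs $(\psi,\, t(\varphi),\, 1^{n^{d}})$.

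Correctness is then two short checks. First, the output is a legitimate instance of $\PAP_{\EF}[n^{c}]$: the formula $\psi$ has $N$ variables, the size parameter is $n^{d}$, and $n^{d} \ge N^{c}$ by the choice of $d$, while $t(\varphi)$ is genuinely an $\EF$-proof of $\neg\Reff_{n^{d}}(\psi)$. Second, since $t(\varphi)$ is a valid proof in either case, membership of $(\psi, t(\varphi), 1^{n^{d}})$ in $\PAP_{\EF}$ is equivalent to $\psi \in \SAT$, which by the first paragraph is equivalent to $\varphi \in \TSAT$. The finitely many $3$-CNF formulas with too few variables for \Cref{cor:EF-proofs} or for the inequality $n^{d} \ge N^{c}$ are handled separately: decide their satisfiability by brute force and map them to a fixed yes- or no-instance of $\PAP_{\EF}[n^{c}]$.

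The one genuinely delicate point --- and the step I would flag as the main obstacle --- is the \emph{parameter bookkeeping} between \Cref{cor:EF-proofs} and the definition of $\PAP_{\EF}[n^{c}]$. \Cref{cor:EF-proofs} produces $\EF$-proofs of $\neg\Reff_{n^{d}}(\psi)$ with the size parameter expressed in terms of the variable count $n$ of the \emph{original} $3$-CNF, whereas $\PAP_{\EF}[n^{c}]$ requires the size parameter to be at least the $c$-th power of the variable count of $\psi$, which is a larger polynomial in $n$. Reconciling these is exactly what forces $d$ to be a sufficiently large constant depending on $c$, and it is the reason normalizing $\varphi$ so that $m = \poly(n)$ (hence $N = \poly(n)$) is needed --- without it one could not guarantee that such a constant $d$ exists. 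Everything else in the argument is immediate from results already in hand: \Cref{lemma:SAT-to-Clique}, \Cref{thm:lwb-for-VCPHP}, and \Cref{cor:EF-proofs}.
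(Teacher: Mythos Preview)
Your proof is correct and follows essentially the same approach as the paper: map $\varphi$ to $\psi = \VCPHP(G_\varphi, m(n-1))$, invoke \Cref{cor:EF-proofs} for an $\EF$-proof of $\neg\Reff_s(\psi)$ that is valid regardless of whether $\varphi$ is satisfiable, and use \Cref{lemma:SAT-to-Clique} to track satisfiability. Your parameter bookkeeping---normalizing so that $m = O(n^3)$ and choosing a larger exponent $d$ with $n^d \ge N^c$---is in fact more careful than the paper's own write-up, which sets the size parameter to $n^c$ directly and leaves the relation between $n$ and the variable count $N$ of $\psi$ implicit.
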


\begin{proof}
    The reduction maps a 3-CNF formula $\varphi$ over $n$ variables to the instance $(\psi, \pi, 1^s)$, where $\psi$ is the formula $ \psi \coloneq \neg \Reff_{n^{c}}\left(   \disj(\varphi, \PHP_n)    \right)$, together with the Extended Frege proof $\pi$ given by the map $t(\varphi)$ in \Cref{cor:EF-proofs}, and the size parameter $1^s$ is $1^{n^{c}}$.

    By \Cref{cor:EF-proofs}, the proof $\pi$ is always a correct $\EF$-proof, regardless of the satisfiability of $\varphi$. Now, if $\varphi \in \TSAT$, then there exists a satisfying assignment for $\varphi$, which can be extended into a satisfying assignment for $\disj(\varphi, \PHP_n)$ setting the switching variable $y \mapsto 1$, and hence $\psi$ is satisfiable and $(\psi, \pi, 1^{n^{c}}) \in \PAP_\EF$. On the other hand, if $\varphi\not\in\TSAT$, then $\pi$ is still a valid $\EF$-proof but we have that $\disj(\varphi, \PHP_n)$ is unsatisfiable, since any satisfying assignment would need to satisfy either $\varphi$ or $\PHP_n$, which is impossible. Thus, we get~$(\psi,\pi,1^{n^{c}}) \not\in \PAP_{\EF}$.

    Note that this is also a Levin reduction: a satisfying assignment of $\varphi$ can be immediately extended into a satisfying assignment for $\psi$, and a satisfying assignment assignment for $\psi$ must necessarily include an assignment to $\varphi$.
\end{proof}

This yields the following formal restatement of \Cref{thm:np-hardness-informal}.

\begin{corollary}
    \label{cor:PAPS-hard-above-EF}
    For every propositional proof system $Q$ that p-simulates Extended Frege and every polynomial $s(n)$, the problem $\PAP_Q[s(n)]$ is $\NP$-complete under polynomial-time many-one Levin reductions.
\end{corollary}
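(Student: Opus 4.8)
The plan is to obtain \Cref{cor:PAPS-hard-above-EF} by upgrading the many-one reduction of \Cref{thm:PAP-EF-reduction} along two orthogonal directions: pushing it from $\EF$ up to an arbitrary $Q \psim \EF$, and verifying that it is in fact a Levin reduction, while simultaneously accommodating an arbitrary polynomial size parameter $s(n)$ in place of the fixed $n^c$. For membership I would first record that $\PAP_Q[s(n)] \in \NP$ for every Cook--Reckhow system $Q$: an instance $(\varphi,\pi,1^s)$ lies in the language precisely when $\pi$ is a valid $Q$-proof of $\neg\Reff_s(\varphi)$ (checkable in polynomial time) \emph{and} $\varphi$ is satisfiable, for which a satisfying assignment serves as witness. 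I want to stress already here that this natural $\NP$-witness is an assignment to $\varphi$ alone, and that the proof component $\pi$ plays no role in it; this observation is exactly what will make the hardness reduction Levin.

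For hardness against any $Q$ with $Q \psim \EF$, fix the polynomial $s(n)$ and normalise $3$-CNF inputs so that $\varphi$ with $n$ variables has $m = O(n^3)$ clauses. The reduction of \Cref{thm:PAP-EF-reduction} sends $\varphi$ to $\psi \coloneq \VCPHP(G_\varphi, m(n-1))$, a CNF over $N = \poly(n)$ variables; I would pick the constant $c$ large enough that $n^c \ge s(N)$ for all sufficiently large $n$, the finitely many remaining inputs being settled by brute force and hard-wired. By \Cref{cor:EF-proofs} a polynomial-time function $t$ produces a proof $t(\varphi) : \EF \vdash \neg\Reff_{n^c}(\psi)$, and this proof is valid regardless of the satisfiability of $\varphi$, since the Resolution lower bound of \Cref{thm:lwb-for-VCPHP} certifies that $\Reff_{n^c}(\psi)$ is unsatisfiable whenever $\varphi \notin \TSAT$ (and trivially so when $\varphi \in \TSAT$). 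Letting $\sigma$ be the polynomial-time translation of $\EF$-proofs into $Q$-proofs guaranteed by $Q \psim \EF$, I set $f(\varphi) \coloneq (\psi,\, \sigma(t(\varphi)),\, 1^{n^c})$. This is a legitimate instance of $\PAP_Q[s(n)]$ because $n^c \ge s(N)$, and it lies in the language iff $\psi$ is satisfiable iff $G_\varphi$ has a vertex cover of size $m(n-1)$ iff $\varphi \in \TSAT$.

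It then remains to supply the polynomial-time maps $g$ and $h$ converting satisfying assignments between $\varphi$ and $\psi$, the proof component of $f(\varphi)$ being immaterial by the remark above. Both maps come from the $\SOT$-formalization of the textbook reduction $\TSAT \le \VC$ in \Cref{lemma:SAT-to-Clique}. Spelled out, the forward implication is a $\forall\Sigma^b_1$ sentence of the shape $\forall \varphi\, \forall w\, \big(\mathrm{SAT}(\varphi,w) \to \exists c\, \mathrm{VC}(G_\varphi, c, m(n-1))\big)$, so Buss's witnessing theorem (\Cref{thm:buss-witnessing}) yields a $\PV$ function taking a satisfying assignment $w$ of $\varphi$ to a size-$m(n-1)$ vertex cover $c$ of $G_\varphi$, from which a satisfying assignment of $\psi$ is read off directly: the $v_i$ come from $c$, and the $p_{i,j}$ encode the induced injection of the cover into the $m(n-1)$ pigeonholes. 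The converse implication gives, symmetrically, a $\PV$ function sending a satisfying assignment $\beta$ of $\psi$---equivalently a size-$m(n-1)$ vertex cover of $G_\varphi$---to a satisfying assignment of $\varphi$. Composing appropriately yields $g(\varphi,w)$ and $h(\varphi,\beta)$, so that $(f,g,h)$ witnesses $\TSAT \Levin \PAP_Q[s(n)]$; together with membership this gives $\NP$-completeness under Levin reductions.

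I expect no single step to be genuinely hard: the mathematical substance---the $\EF$-provability of the Resolution lower bound for the $\VCPHP$ formulas, itself resting on Cook--Pitassi's $\PVO$-formalization of Haken's bound, and the $\SOT$-formalization of the gadget reduction $\TSAT \le \VC$---is already in place. The only points that need care are bookkeeping: confirming that the proof component of a $\PAP$ instance never enters its witness, so that Levin-ness is inherited wholesale from the $\TSAT \le \VC$ gadget reduction, and tracking the chain of polynomial blow-ups $n \mapsto N \mapsto s(N)$ so that the exponent $c$ from \Cref{cor:EF-proofs} can be chosen large enough to dominate the prescribed $s(n)$.
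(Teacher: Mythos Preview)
Your proposal is correct and follows essentially the same approach as the paper: invoke \Cref{thm:PAP-EF-reduction} for $\EF$-hardness, lift to $Q$ via the p-simulation map on the proof component, and note that membership in $\NP$ is immediate. The paper's own proof is considerably terser---it does not spell out the Levin aspect at all, relying implicitly on the fact that the witness for a $\PAP$ instance is just a satisfying assignment of $\psi$ and that the underlying $\TSAT \to \VC$ gadget reduction of \Cref{lemma:SAT-to-Clique} is manifestly Levin. Your route through Buss witnessing to extract the solution maps $g,h$ is sound but more elaborate than necessary: the textbook $\TSAT \le \VC$ reduction already comes with explicit polynomial-time maps in both directions (a satisfying assignment yields a clique and hence a cover, and vice versa), so one need not pass through $\SOT$ and \Cref{thm:buss-witnessing} at all. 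Your careful tracking of the parameter chain $n \mapsto N \mapsto s(N)$ and the choice of $c$ from \Cref{cor:EF-proofs} is also a point the paper leaves implicit.
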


\begin{proof}
    Membership in $\NP$ is trivial, since $\PAP_Q \in \NP$ for every Cook--Reckhow system $Q$. Hardness for $\PAP_\EF[s(n)]$ is given by \Cref{thm:PAP-EF-reduction}, and since $Q$ p-simulates $\EF$, this means that an instance $(\psi, \pi, 1^t)$ of $\PAP_\EF$ with $t \geq s(n)$ can be turned into an instance $(\psi, \pi', 1^{t})$, where $\pi'$ is the $Q$-proof obtained by simulating $\pi$.
\end{proof}

It should be clear that there is nothing essential about Extended Frege in the corollary above, except for the fact that it is the weakest proof system where we know how to formalize a Resolution lower bound. The proof generalizes as follows. Below, by saying that a proof system $S$ is \emph{constructively closed under \emph{modus ponens}}, we mean there exists a constant $p > 0$ such that given an $S$-proof $\pi$ of $\varphi$ and an $S$-proof $\tau$ of $\varphi \to \psi$, we can obtain a proof of $\psi$ in time $(|\pi| +|\tau|)^p$. Similarly, a proof system $S$ \emph{proves that Resolution is closed under restrictions} if there exists a constant $r>0$ 
such that given a formula $\varphi$ and a restriction $\rho$, we can obtain an $S$-proof of $\Reff_s(\varphi) \to \Reff_s(\varphi_{\restriction \rho})$ in time $(|\varphi| + s)^r$.

\begin{corollary}
    \label{cor:general-NP-hardness}
    Let $S$ be a propositional proof system constructively closed under \emph{modus ponens} and proving that Resolution is closed under restrictions and suppose there exists a constant $c > 0$ and a $\P$-uniform sequence of unsatisfiable formulas $\{ \psi_n\}_{n\in \bbN}$ such that $\psi_n$ has $n$ variables and $\poly(n)$ clauses and there exists a $\P$-uniform sequence of $S$-proofs $\{ \pi_n\}_{n\in\bbN}$ such that $\pi_n :S \vdash \neg\Reff_{n^c}(\psi_n)$.  Then, the problem $\PAP_S[n^c]$ is $\NP$-complete.
\end{corollary}

\begin{proof}[Proof sketch]
    The proof follows the argument in \Cref{thm:PAP-EF-reduction}: the $\NP$-hardness reduction starts from a formula $\varphi$ and constructs an $S$-proof of the formula $\neg \Reff_{n^c}(\disj(\varphi, \psi_n))$, which becomes an instance of $\PAP_S[n^c]$.
\end{proof}

\subsection{Does analyzability require lower bounds?}
\label{subsec:conjecture}
Despite the $\NP$-completeness results above, it remains open whether proof systems below Extended Frege are analyzable. For those proof systems that are in fact analyzable, it is natural to ask whether their analysis algorithms do inevitably rely on proof complexity lower bounds for the systems in question. After all, the only example of an analysis algorithm we have is the one for Resolution, and it heavily relies on the lower bound proof of~\cite{AM20}. It then seems reasonable to conjecture the following.

\begin{conjecture}[Analyzability requires lower bounds]
    \label{conj:analyzability-requires-lbs}
    For every propositional proof system $Q$, if $Q$ is analyzable, then $Q$ is not optimal.
\end{conjecture}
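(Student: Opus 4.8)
The plan is to prove the contrapositive of \Cref{conj:analyzability-requires-lbs}: that no optimal proof system is analyzable. The natural strategy is to recycle the reduction behind the $\NP$-hardness of $\PAP_\EF$. Fix the constant $c$ for which the hypothetical analyzer decides $\PAP_Q[n^c]$ in polynomial time. Given a $3$-CNF formula $\varphi$ with $n$ variables and $m$ clauses, I would build the graph $G_\varphi$ of \Cref{lemma:SAT-to-Clique} and the formula $\psi_\varphi \coloneq \VCPHP(G_\varphi, m(n-1))$, which is satisfiable exactly when $\varphi \in \TSAT$. By \Cref{cor:EF-proofs}, a polynomial-size $\EF$-proof of $\neg\Reff_{n^c}(\psi_\varphi)$ can be produced in polynomial time \emph{regardless} of whether $\varphi$ is satisfiable. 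If $Q$ p-simulates $\EF$, translate this $\EF$-proof into a $Q$-proof $\pi_Q$ of the same tautology in polynomial time; then $(\psi_\varphi, \pi_Q, 1^{n^c})$ is a legitimate instance of $\PAP_Q[n^c]$ whose membership in the language is equivalent to $\varphi \in \TSAT$, so running the analyzer would place $\TSAT$ in $\P$. This already delivers a clean \emph{conditional} statement: no p-optimal proof system is analyzable unless $\P = \NP$.

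The distance between this and \Cref{conj:analyzability-requires-lbs} as stated is precisely the distinction between \emph{optimality} ($Q \geq S$ for all $S$) and \emph{p-optimality} ($Q \psim S$ for all $S$), and bridging it is the main obstacle. An optimal $Q$ guarantees only that \emph{short} $Q$-proofs of $\neg\Reff_{n^c}(\psi_\varphi)$ exist, not that one can be computed from the $\EF$-proof, or from $\varphi$, in polynomial time; but a $\PAP_Q$ instance needs the proof explicitly on the input tape. I see two plausible avenues for closing this gap: either show that an optimal (and reasonable) $Q$ is automatable on the particular family of $\Reff$ tautologies $\{\neg\Reff_{n^c}(\psi_\varphi)\}_\varphi$ and invoke \Cref{thm:aut-for-Ref-informal} and \Cref{prop:aut-anal-duality}; or argue directly that an optimal system, being at least as strong as $\EF$, inherits the assignment-obfuscation phenomenon of \Cref{thm:np-hardness-informal}, so that analyzing it remains $\NP$-hard---the intuition is transparent, but a polynomial-time reduction again demands efficiently constructible $Q$-proofs.

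Finally, I expect the genuinely hard point to be that the conjecture is \emph{at least as strong as} $\P \neq \NP$, so no unconditional proof can be expected from present techniques: if $\P = \NP$ then $\NP = \coNP$, hence there is a polynomially bounded propositional proof system, which is trivially optimal, and it is analyzable because $\SAT \in \P$. Accordingly, the realistic target is the conditional form above together with partial transfers from optimality to p-optimality for the $\Reff$ tautologies involved. A sensible first milestone is the statement ``if $Q$ is p-optimal and analyzable, then $\P = \NP$,'' which the reduction sketched above already establishes, followed by attempts to relax the hypothesis to $Q \geq \EF$ by supplying a uniform construction of $Q$-proofs of $\neg\Reff_{n^c}(\psi_\varphi)$---the crux of why the statement is currently only a conjecture.
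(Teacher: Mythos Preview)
This statement is a \emph{conjecture}, and the paper does not prove it unconditionally; instead, the paper characterises it exactly, showing that \Cref{conj:analyzability-requires-lbs} holds if and only if $\NP \neq \coNP$ (and the weaker \Cref{conj:original} holds if and only if $\P \neq \NP$). Your proposal correctly identifies that no unconditional proof is to be expected, and your conditional result for p-optimal systems is precisely the backward direction of the paper's item~(ii). But you stop short of what the paper achieves on the conjecture as stated.

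The key idea you are missing is that the gap between optimality and p-optimality can be bridged by \emph{nondeterminism} rather than by trying to construct the $Q$-proofs deterministically. If $Q$ is optimal and analyzable via $\PAP_Q[n^c] \in \P$, then on input a 3-CNF $\varphi$ one can \emph{guess} a polynomial-size $Q$-proof of $\neg\Reff_{n^c}(\VCPHP(G_\varphi, m(n-1)))$---such a proof exists because $\EF$ has one (\Cref{cor:EF-proofs}) and $Q \geq \EF$---and then run the deterministic analyzer on the guessed proof. This places $\TUNSAT \in \NP$, hence $\NP = \coNP$. So a failure of the conjecture implies $\NP = \coNP$, which is stronger than the $\P = \NP$ consequence you derive from the p-optimal case.

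For the converse, you observe that $\P = \NP$ trivially yields an optimal analyzable system, but the paper shows that already $\NP = \coNP$ suffices: from a polynomially bounded system $Q$ one builds $Q^\star$ whose proofs of $\neg\Reff_s(\varphi)$ are required to carry either a satisfying assignment for $\varphi$ or a $Q$-proof of $\neg\varphi$, making $Q^\star$ analyzable by inspection while remaining polynomially bounded (hence optimal). This construction is what pins the conjecture exactly at $\NP \neq \coNP$ rather than at $\P \neq \NP$.
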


The requirement that $Q$ is not optimal is a natural formalization of lower bounds in this context: if $Q$ is not optimal, then there exists an explicit family of tautologies that are hard for $Q$, but easy for some other system. Since the lower bounds for $\Reff$-like formulas arising in the context of $\PAP_\Res$ are precisely Resolution lower bounds not provable in Resolution but easy for other systems, non-optimality captures the kind of lower bounds we expect are necessary.

We could also relax the conclusion of $Q$ not being optimal by simply $Q$ not being \emph{p-optimal}, which means that there exists an explicit family of tautologies whose $Q$-proofs are either long, or short but hard to find. Note, in particular, that $Q$ could fail to be p-optimal and still be polynomially bounded.

\begin{conjecture}[Analyzability requires lower bounds or proofs that are hard to find]
    \label{conj:original}
    For every propositional proof system $Q$, if $Q$ is analyzable, then $Q$ is not p-optimal.
\end{conjecture}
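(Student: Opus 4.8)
The plan is to argue by contraposition: assume $Q$ is p-optimal and derive that $Q$ is not analyzable. The first step needs no new ideas. If $Q$ is p-optimal then in particular $Q \psim \EF$, so \Cref{cor:PAPS-hard-above-EF} applies to $Q$ and tells us that $\PAP_Q[n^c]$ is $\NP$-complete for every constant $c > 0$. Hence if $Q$ were analyzable---i.e.\ if $\PAP_Q[n^c] \in \P$ for some $c$---we would obtain $\P = \NP$. So the implication ``$Q$ analyzable $\Rightarrow$ $Q$ not p-optimal'' follows, for every $Q$, from the single hypothesis $\P \neq \NP$.

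The crux is whether this hypothesis can be removed, and the answer is that it cannot: \Cref{conj:original} as worded is \emph{equivalent} to $\P \neq \NP$, so an unconditional proof of it would be an unconditional proof that $\P \neq \NP$. One direction is the previous paragraph. For the converse, suppose $\P = \NP$; then $\NP = \coNP$, so by Cook--Reckhow there is a polynomially bounded proof system $P_0$ with a polynomial-time verifier, and since $\P = \NP$ one can, given any proof $\pi$ in any system $S$, find a short $P_0$-proof of the tautology $S(\pi)$ in polynomial time by the standard search-to-decision self-reduction, so $P_0$ is p-optimal. Moreover $\PAP_{P_0} \in \NP = \P$, so $P_0$ is analyzable, and $P_0$ witnesses that \Cref{conj:original} fails.

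Consequently the approach I would carry out records exactly what is provable unconditionally: (i) the conditional form---under $\P \neq \NP$, every analyzable proof system fails to be p-optimal---which is an immediate consequence of \Cref{cor:PAPS-hard-above-EF} together with the definition of p-optimality; and (ii) the equivalence of \Cref{conj:original} with $\P \neq \NP$, via the $P_0$ construction above. The main obstacle is therefore the $\P$ versus $\NP$ problem itself: there is no route to the \emph{unconditional} statement that avoids it, precisely because the conjecture is no weaker than $\P \neq \NP$. I also note that the same style of argument shows that \Cref{conj:analyzability-requires-lbs}, whose conclusion ``$Q$ not optimal'' strengthens ``$Q$ not p-optimal'', still implies $\P \neq \NP$; but proving it \emph{from} $\P \neq \NP$ looks harder, since \Cref{cor:PAPS-hard-above-EF} requires p-simulation of $\EF$, which a merely optimal system need not provide.
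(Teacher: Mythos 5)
Your proposal is correct and matches the paper's treatment: the paper likewise does not prove the conjecture outright but establishes (in the proposition following it) that \Cref{conj:original} holds if and only if $\P \neq \NP$, using exactly your two directions --- p-optimality forces p-simulation of $\EF$ so \Cref{cor:PAPS-hard-above-EF} makes analyzability collapse $\P$ and $\NP$, and conversely under $\P = \NP$ a polynomially bounded, p-optimal, trivially analyzable system witnesses the conjecture's failure. Your closing remark about \Cref{conj:analyzability-requires-lbs} is also consistent with the paper, which pins that stronger conjecture to $\NP \neq \coNP$ rather than $\P \neq \NP$.
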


Clearly, \Cref{conj:analyzability-requires-lbs} implies \Cref{conj:original}. Building on the proof of $\NP$-hardness of $\PAP_\EF$ we can prove that both conjectures are likely true ---but very hard to prove.

\begin{proposition}
The following hold:
\begin{enumerate}[label=(\roman*)] \itemsep=0pt
    \item \Cref{conj:analyzability-requires-lbs} is true if, and only if, $\NP \neq \coNP$;
    \item \Cref{conj:original} is true if, and only if, $\P \neq \NP$.
\end{enumerate}
\end{proposition}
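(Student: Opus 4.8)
The plan is to prove both biconditionals by reasoning about the \emph{negations} of the two conjectures. Negating \Cref{conj:analyzability-requires-lbs} asserts that some proof system is simultaneously analyzable and optimal; negating \Cref{conj:original} asserts that some proof system is simultaneously analyzable and p-optimal. So it suffices to establish two "existence" equivalences: (a) there is an analyzable optimal proof system iff $\NP = \coNP$; and (b) there is an analyzable p-optimal proof system iff $\P = \NP$. Throughout I use the trivial fact that $\PAP_Q \in \NP$ for every Cook--Reckhow $Q$, and the standard fact that a polynomially bounded proof system is optimal (proofs cannot be much shorter than the formulas they prove) and that polynomially bounded systems exist iff $\NP = \coNP$.

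For (b), the forward direction is short: if $Q$ is analyzable and p-optimal then $Q \psim \EF$, so \Cref{cor:PAPS-hard-above-EF} makes $\PAP_Q[n^c]$ $\NP$-complete for every polynomial $n^c$; choosing the $c$ for which analyzability gives $\PAP_Q[n^c] \in \P$ yields $\P = \NP$. Conversely, if $\P = \NP$ then $\NP = \coNP$, so a polynomially bounded system $P_0$ for $\TAUT$ exists; $P_0$ is optimal, it is p-optimal because given any $R$-proof of a tautology $\varphi$ one can discard it and search for a polynomial-size $P_0$-proof of $\varphi$ bit by bit (each step being an $\NP$ query, hence in $\P$), and it is analyzable since $\PAP_{P_0} \in \NP = \P$. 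Thus $\P = \NP$ refutes \Cref{conj:original}, and $\P \neq \NP$ makes it true; this proves (ii).

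For (a), the implication "$\NP = \coNP$ refutes \Cref{conj:analyzability-requires-lbs}" is handled by an explicit construction. Fixing a polynomially bounded $P_0$ for $\TAUT$, define a system $P$ in which a proof of a formula of the syntactic shape $\neg\Reff_s(\varphi)$ is \emph{either} a satisfying assignment of $\varphi$ \emph{or} a pair consisting of a $P_0$-proof of $\neg\varphi$ (certifying $\varphi$ unsatisfiable) and a $P_0$-proof of $\neg\Reff_s(\varphi)$ (certifying it is a tautology), while on every other formula a $P$-proof is just a $P_0$-proof. I would check routinely that $P$ is Cook--Reckhow, complete, sound, and polynomially bounded, hence optimal; and that it is analyzable because when $\varphi$ is satisfiable every valid $P$-proof of $\neg\Reff_s(\varphi)$ is of the first kind and literally exposes a satisfying assignment, whereas when $\varphi$ is unsatisfiable every valid $P$-proof carries a $P_0$-proof of $\neg\varphi$ — so the analyzer merely inspects the proof's form and decides $\SAT$ in polynomial time.

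The step I expect to be the main obstacle is the remaining direction of (a): from an analyzable optimal $Q$ deduce $\NP = \coNP$. The difficulty is that optimality gives only that $Q$ \emph{simulates} $\EF$, not that it p-simulates it, so \Cref{cor:PAPS-hard-above-EF} — whose reduction must produce $Q$-proofs efficiently — is unavailable. The plan is to sidestep this by placing $3$-UNSAT into $\NP$: on input a $3$-CNF $\varphi$, form $\psi \coloneq \VCPHP(G_\varphi, m(n-1))$, so $\psi \in \SAT$ iff $\varphi$ is satisfiable, and take $s = n^d$ with $d$ large enough that $s$ exceeds the analyzer's threshold on the $N = \poly(n)$ variables of $\psi$ and that $\neg\Reff_s(\psi)$ is still a tautology by \Cref{thm:lwb-for-VCPHP}; by \Cref{cor:EF-proofs} together with $Q \geq \EF$, a polynomial-size $Q$-proof $\pi$ of $\neg\Reff_s(\psi)$ exists. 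The $\NP$-machine guesses $\pi$, verifies it is a valid $Q$-proof of $\neg\Reff_s(\psi)$, runs the polynomial-time analyzer for $Q$ on $(\psi,\pi,1^s)$, and accepts iff the analyzer reports "unsatisfiable". This is correct: if $\varphi$ is unsatisfiable a valid $\pi$ exists and the analyzer is right on it, while if $\varphi$ is satisfiable the analyzer answers "satisfiable" on \emph{every} valid $\pi$, so no path accepts. Hence $\coNP \subseteq \NP$, completing (i). This simulation-versus-p-simulation asymmetry is exactly what forces the indirect argument here and is the reason items (i) and (ii) land at $\NP = \coNP$ and $\P = \NP$ respectively.
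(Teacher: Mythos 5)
Your proposal is correct and follows essentially the same route as the paper: the nondeterministic machine that guesses a $Q$-proof of $\neg\Reff_s(\VCPHP(G_\varphi, m(n-1)))$ and runs the analyzer for the hard direction of (i), the explicit polynomially-bounded analyzable system for the easy direction of (i), and \Cref{cor:PAPS-hard-above-EF} combined with p-simulation of $\EF$ for (ii). The only minor differences are that your construction of $P$ omits the (redundant) $P_0$-proof of $\neg\Reff_s(\varphi)$ in the "satisfiable" case, and that you are a bit more careful about choosing the exponent $d$ so that $s = n^d$ clears the analyzer's threshold on the $N = \poly(n)$ variables of $\psi$ — both are harmless simplifications of/corrections to the paper's phrasing.
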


\begin{proof}
We start by proving (i). For the forward direction, we assume
     $\NP = \coNP$ and show that the conjecture fails. In this case there exists a polynomially bounded and hence optimal proof system $Q$. We define the following proof system $Q^\star$ based on $Q$. A proof $\pi$ in this system is of one of the following forms:
    \begin{enumerate}[label=(\alph*)] \itemsep=0pt
        \item if $\pi = \langle \varphi, \tau_1, \neg\Reff_s(\varphi), \tau_2 \rangle$, where $\varphi$ is a CNF formula, $\tau_1$ is a correct~$Q$-proof of $\neg\varphi$, and $\tau_2$ is a correct~$Q$-proof of $\neg\Reff_s(\varphi)$ for some $s\in\bbN$, then~$Q^\star$ outputs~$\neg \Reff_s(\varphi)$;
        \item if $\pi = \langle \varphi, \alpha, \neg\Reff_s(\varphi), \tau \rangle$, where~$\varphi$ is a CNF formula,~$\varphi(\alpha)=1$, and~$\tau$ is a correct~$Q$-proof of~$\neg\Reff_s(\varphi)$  for some $s\in\bbN$, then~$Q^\star$ outputs~$\neg \Reff_s(\varphi)$;
        \item if $\pi = \langle \varphi, \tau\rangle$ and $\varphi$ is not of the form $\neg\Reff_s(\psi)$ for any $s$ and~$\psi$, and~$\tau$ is a correct~$Q$-proof of~$\varphi$, then~$Q^\star$ outputs $\varphi$;
        \item in any other case, $Q^\star$ outputs a trivial tautology $p \lor \neg p$.
    \end{enumerate}
    We claim that $Q^\star$ is a Cook--Reckhow system that is both analyzable and optimal. First, it is sound, because~$Q$ is and every $Q^\star$-proof relies on correct $Q$-proofs; second, it is complete, because if $\varphi$ is not a $\neg\Reff$ formula, then one can always prove $\varphi$ via case (c) above using the completeness of $Q$, and if $\varphi = \neg\Reff_s(\psi)$ for some~$\psi$, then since $Q$ is complete, there is always a proof of $\neg \Reff_s(\psi)$ together with either a $Q$-proof of $\neg \psi$ or a satisfying assignment for $\psi$. Finally, $Q^\star$ is clearly polynomial-time computable because $Q$ is.

    It remains to argue that $Q^\star$ is both analyzable and optimal. Indeed, $\PAP_{Q^\star} \in \P$: by definition, if $\pi$ is a $Q^\star$-proof of $\neg \Reff_s(\varphi)$ for some $s$, then $\pi$ is of the form (a) or (b) above. If $\pi$ is of the form (a), then $\pi$ includes a proof of the unsatisfiability of~$\varphi$, so the analyzer immediately rejects after checking the correctness of this proof. If $\pi$ is of the form (b), then the analyzer can simply check that $\alpha$ is a correct satisfying assignment for $\varphi$ and conclude that $\varphi$ is satisfiable. At the same time, it is easy to see that $Q^\star$ is polynomially bounded, and hence optimal. Indeed, all non-$\neg\Reff$ formulas have polynomial-size proofs by case (c) as $Q$ is polynomially bounded. Similarly, in cases (a) and (b), there are always polynomial-size $\tau_1$ and $\tau_2$ to choose as $Q$ is polynomially bounded, so the entire $Q^\star$ is polynomially bounded. This means that the conjecture fails for $Q^\star$.
    
    Now, for the backwards direction, if the conjecture fails, we show that $\NP = \coNP$. For the conjecture to fail there must exist a proof system $S^\star$ that is analyzable yet optimal. Suppose $S^\star$ is analyzable by virtue of $\PAP_{S^\star}[n^c] \in \P$ for some $c > 0$. We then construct a non-deterministic polynomial-time procedure for the $\coNP$-complete set $\TUNSAT$. On an input 3-CNF formula $\varphi$ with $n$ variables and $m$ clauses, non-deterministically guess an $S^{\star}$-proof of the formula $\neg \Reff_{n^{c}}( \disj(\varphi, \PHP_n) )$. By \Cref{cor:EF-proofs}, polynomial-size such proofs exist in Extended Frege. Since $S^\star$ is optimal we have $S^\star \geq \EF$ and some polynomial-size proof is available in $S^\star$ too. Now, since $S^\star$ is analyzable, we can decide deterministically in polynomial-time whether $\varphi$ is satisfiable or not by analyzing this proof.
    
    We now prove (ii).
    If $\P = \NP$, then there exists a polynomially bounded and p-optimal proof system $Q^\star$. Since $\P = \NP$ the system $Q^\star$ is trivially analyzable, but it is p-optimal, so the conjecture fails for $Q^\star$.
    
    For the other direction, suppose $\P \neq \NP$ but \Cref{conj:original} fails. That means there exists a system that is analyzable yet p-optimal. Since $S$ is p-optimal, it follows that $S$ p-simulates $\EF$, and by \Cref{cor:PAPS-hard-above-EF} we have that $\PAP_S[n^c]$ is $\NP$-complete for every $c \in \bbN$. Since $\P \neq \NP$ we get that $\PAP_S[n^c] \not\in \P$ for any $c > 0$, a contradiction with the fact that $S$ is analyzable. \qedhere
\end{proof}

We finish by proving that under a standard assumption in circuit complexity, the analyzability of a propositional proof system is in fact equivalent to lower bounds on \emph{all} the non-trivial $\Reff$ formulas. The following statement makes this precise, and is the formal restatement of \Cref{thm:equivalence-PAP-AM-informal}. (The statement assumes some mild conditions on the behavior of propositional proof systems; see \Cref{prop:mild-properties} for formal definitions of these, and the paragraph before \Cref{cor:general-NP-hardness} for the definition of a system \emph{proving that Resolution is closed under restrictions}.)

\begin{theorem}
    \label{thm:equivalence-PAP-AM}
    Let $S$ be a propositional proof system that simulates Resolution, is constructively closed under modus ponens and contrapositions, and proves that Resolution is closed under restrictions. Assuming there exists a constant $\delta >0$ such that $\TSAT \not\in \ComplexityFont{io}\SIZE[2^{\delta n}]$, the following are equivalent:
    \begin{enumerate}[label=(\roman*)] \itemsep=0pt
        \item \label{item:PAP}  the proof system $S$ is analyzable;
        \item \label{item:AM} there are constants $c > 0$, $n_0 \geq 1$ and $\varepsilon > 0$ such that for every unsatisfiable 3-CNF formula $\varphi$ over $n \geq n_0$ variables, every $S$-proof of the formula $\neg \Reff_{n^{c}}(\varphi)$ requires size at least $2^{\varepsilon n}$.
    \end{enumerate}
\end{theorem}

\begin{proof}
    It is easy to see that if \cref{item:AM} holds for constant $c$, then $\PAP_S[n^c] \in \P$ (this is precisely the argument behind \Cref{prop:PAP_}, and does not require the circuit lower bound assumption), so we focus on the implication from \cref{item:PAP} to \cref{item:AM}. By contradiction: suppose $S$ is analyzable and hence $\PAP_S[n^c] \in \P$ for some constant $c > 0$, but \cref{item:AM} fails: then, in particular, for every $\varepsilon > 0$ there must exist a (possibly non-uniform) sequence of unsatisfiable 3-CNF formulas $\{ \psi_n\}_{n \in \bbN}$ such that for infinitely many $n$, the system $S$ can prove $\neg\Reff_{n^c}(\psi_n)$ in size at most $2^{\varepsilon n}$.

    The goal is now to construct a (possibly non-uniform) circuit family that solves $\TSAT$ in time $O(2^{\delta n})$ infinitely often by using the $\NP$-hardness reduction in \Cref{cor:PAPS-hard-above-EF} as follows. Suppose that $d$ is the constant such that $\PAP_S[n^c] \in \P$ by virtue of a family of Boolean circuits of size $n^d$, and let $r$ and $p$ be, respectively, the constants in the blow-up for proving that Resolution is closed under restrictions and the closure under under contrapositions. Let $\varepsilon \coloneq \delta/dpr$ and let $\{ \psi_n\}_{n \in \bbN}$ be the sequence of formulas as in the paragraph above for this particular choice of $\varepsilon$. Now, given a 3-CNF formula $\varphi$ over $n$ variables, suppose $n$ is one of the infinitely many values for which there is an $S$-proof $\pi_0$ of $\neg\Reff_{n^c}(\psi_n)$ in size $2^{\varepsilon n}$. From the fact that such a proof exists and $S$ proves that Resolution is closed under restrictions, it follows that there is also an $S$-proof $\pi$ of $\neg\Reff_{n^c}(\disj(\varphi, \psi_n))$ in size $2^{\varepsilon prn}$: apply first the fact that Resolution is closed under restrictions to get a proof of $\Reff_{n^c}(\disj(\varphi, \psi_n)) \to \Reff_{n^c}(\psi_n)$; then, contraposition with $\neg\Reff_{n^c}(\psi_n)$ yields the desired formula and the size of the proof is at most $|\pi_0|^{pr}$. Now, since $\PAP_S[n^c] \in \P$ and there exists a Boolean circuit of size $n^d$ solving this problem, we can use this circuit on the input $(\disj(\varphi, \psi_n), \pi, 1^{n^c})$, which we construct using $\pi_0$ as advice. The result of this computation tells us whether $\varphi$ is satisfiable, and the overall size of the circuit is $2^{\varepsilon dprn}= 2^{\delta n}$. This means $\TSAT \in \ComplexityFont{io}\SIZE[2^{\delta n}]$, contradicting our assumption.
\end{proof}

\section{The Atserias--Müller lower bound in $\PVO$}
\label{sec:AM20-in-PV}

In this section we undertake the task of formalizing the Atserias--Müller lower bound in bounded arithmetic to prove \Cref{thm:AM-in-PV-informal}. There are at least two possible approaches here, at least at a conceptual level. The first one is to formalize in $\PVO$ the correctness of the extraction algorithm from \Cref{sec:the-algorithm}, and use this to derive the lower bound along the lines of the argument in \Cref{thm:det-algo}, carefully verifying that we never exceed the reasoning power of the theory.

The alternative approach, which we choose to take, is to forget about the extraction algorithm at first glance and attempt a direct formalization of the theorem seeing it purely as a proof complexity lower bound. Then, from the form of the statement, witnessing theorems will recover the algorithm together with its proof of correctness. We opt for this second route, since it is in fact the original hint for why the algorithm should exist in the first place, and because restating the proof in this format can be insightful in itself.

Of course, the proof we choose to formalize deviates from the original one in a few crucial points, although the overall structure is preserved. The main difference is the deterministic restriction replacing the random restriction argument in \cite{AM20}. This deterministic procedure corresponds exactly to the greedy algorithm behind the width-reduction technique in \Cref{lemma:det-wr}.

We start by stating the lower bound in the language of $\PVO$, which posits that a correct Resolution refutation $\pi$ of $\Reff_s(\varphi)$ must have size at least $2^{\varepsilon s/n^2}$ whenever $\varphi \not\in \SAT$, for some fixed positive $\varepsilon$. Since we do not have exponentiation in $\PVO$, we scale down the bound and state is as $||\pi|| > \varepsilon s/n^2$, and the first-order formula becomes
\begin{equation}
\label{eq:foAM} \tag{$\foAM_{\varepsilon, n_0}$}
\begin{split}
    \foAM_{\varepsilon, n_0} \coloneq \forall \varphi \forall \pi \forall n \forall s \biggl( \Bigl( n \geq n_0 \land \operatorname{CNF}(\varphi, n) &\land \foReff(\Reff_s(\varphi),\pi) \\ &\land \forall \alpha \leq \varphi  \bigl(\neg \foSatf(\varphi, \alpha) \bigr)\Bigr) \to ||\pi|| > \varepsilon s/n^2 \biggr)\,.
\end{split}
\end{equation} 

Here the different predicates and function symbols all have the obvious intended meaning and it is readily verified that they are all computable in polynomial time and there exist function symbols for them in the language of $\PV$.

The formula itself is $\forall\Sigma^b_1$. This becomes clear when rewriting it in prenex form, which gives us the statement
\begin{equation}
    \forall \varphi \forall n \forall s \forall \pi \exists \alpha \leq \varphi \Big(n < n_0 \lor \neg \operatorname{CNF}(\varphi, n) \lor \neg\foReff(\Reff_s(\varphi), \pi)  \lor \foSatf(\varphi, \alpha) \lor||\pi|| > \varepsilon s/n^2 \Big)\,.
\end{equation}

If we were to apply Buss's witnessing theorem here, we would get a polynomial-time function witnessing the existential quantifier on $\alpha$, which corresponds precisely to the extraction algorithm.

We carry out the formalization in three subsections. \Cref{subsec:restriction-in-PV} proves the restriction argument, \Cref{subsec:block-width-lb-in-PV} carries out the block-width lower bound and \Cref{subsec:putting-it-together-PV} puts these together. We also recall that $\SOT(\PV)$, which we denote simply as $\SOT$, is $\forall\Sigma^b_1$-conservative over $\PVO$, so we will often carry out arguments in $\SOT$ instead of $\PVO$ and use the induction principles available there without further comment.

\subsection{The restriction argument}
\label{subsec:restriction-in-PV}

We want to prove the following $\forall \Sigma^b_1$ formula, stating the restriction argument:
\begin{align} \tag{$\operatorname{AM-Restriction}_c$}
    \operatorname{AM-Restriction}_c \coloneq \forall \varphi \forall n \forall s \forall \pi \biggl(\operatorname{CNF}(\varphi, n) \land \foReff(&\Reff_s(\varphi), \pi) \\ \notag \to \exists \rho \leq \pi \Bigl(& \bw(\pi_{\restriction\rho}) \leq c \cdot \left\lceil{\sqrt{s \log |\pi|}}\right\rceil
    \\ \notag &\land \Reff_s(\varphi)_{\restriction \rho}\neq \bot 
    \\& \notag \land  \operatorname{Disabling}(\rho, \varphi, n, s) \leq c/2 \cdot \left\lceil{\sqrt{s \log |\pi|}}\right\rceil \bigg)\,.
\end{align}

The formula $\operatorname{AM-Restriction}_c$ states that for every CNF formula $\varphi$ over $n$ variables and every correct Resolution refutation $\pi$ of the $\Reff_s(\varphi)$ formula, there exists a $d$-disabling restriction $\rho \in \{0, 1, *\}^N$ to the $N$ variables of $\Reff_s(\varphi)$ such that $\bw(\pi_{\restriction \rho}) \leq c\cdot \Big\lceil{\sqrt{s \log |\pi|}}\Big\rceil$ and $d \leq c/2 \cdot \Bigl\lceil{\sqrt{s \log |\pi|}}\Bigl\rceil$.

Throughout this section we freely use rationals and reals and notation like $\sqrt{\cdot}$, always assuming that some suitable rational approximation is used under the hood. This is standard for formalizations in these theories (see, for example, the standard style of the formalizations by \citeauthor{jerabek:phd-thesis} \cite{jerabek:phd-thesis}).

It is easy to see that there exists a $\PV$ function $W$ that given $\pi$ outputs the set of clauses in $\pi$ of block-width strictly larger than $c\cdot \Bigl\lceil{\sqrt{s \log |\pi|}}\Bigr\rceil$. Similarly, there is a $\PV$ function $W_{\restriction\rho}$ that given $\pi$ and $\rho$ defines the set of clauses in $\pi_{\restriction\rho}$ of block-width strictly larger than $c \cdot  {\Big\lceil{\sqrt{s \log |\pi|}}\Big\rceil}$, and the basic properties of these functions can be proven in $\SOT$.

For a given restriction $\rho$ that sets values of variables in $\ell$ different blocks, we will denote by $\tau \coloneq ((b_1, X_1), \dots, (b_\ell, X_\ell))$ an ordering $b_1, \dots, b_\ell \in [s]$ of the blocks that are mentioned by $\rho$, and, for each of them, an ordering $X_i$ of all the variables in block $b_i$. We call $\tau$ a \emph{trace} for $\rho$ and for every $k \in [\ell]$ and $S$ a set of variables of $\Reff_s(\varphi)$, we denote by $\rho_{k, S}$ the subrestriction of $\rho$ defined as
    \begin{equation}
        \rho_{k, S}(x) \coloneq \begin{cases}
            \rho(x) &\text{ if }x\text{ belongs in one of the blocks $b_1, \dots, b_k \in \tau$} \text{ or if }x\in S  \\
            * &\text{otherwise}
        \end{cases}
    \end{equation}
    and $\rho_{0, S} \coloneq *^N$. We will denote by $\rho_k$ the subrestriction $\rho_{k, \emptyset}$.

    \begin{definition}[Most-killing property]
    \label{def:most-killing}
    We will say that a restriction $\rho$ enjoys the \emph{most-killing property} with respect to a trace $\tau = ((b_1, X_1), \dots, (b_\ell, X_\ell))$ if it holds that for every $k \in[\ell]$, the block $i \in [s] \setminus \{ b_1, \dots, b_{k-1}\}$ that is the most frequent block mentioned in $W_{\restriction \rho_{k-1}}$ happens to be precisely $b_k$ and $\rho_k$ satisfies the following conditions.
    \begin{enumerate}
    \itemsep=0pt
        \item If $e_i$ appears positively in at least $1/3$ of all the clauses in $W_{\restriction \rho_{k-1}}$ that mention block $i$, then $\rho_k(e_i) = 1$ and $\rho_k(x) = *$ for every other variable $x \neq e_i$ in block $i$.
        
        \item If $e_i$ does \emph{not} appear positively in at least $1/3$ of all the clauses in $W_{\restriction \rho_{k-1}}$ that mention block $i$, then $\rho_k(e_i) = 0$ and for every other variable in block $i$, $\rho_k$ assigns values to them respecting the ordering in $X_i = ( e_i, x_{i,1}, x_{i, 2}, \dots)$ and with the following priority. For $j = 1, \dots, |X_i|-1$,
        \begin{enumerate}
        \itemsep=0pt
            \item if $x_{i, j}$ appears positively more often than negatively in $W_{\restriction \rho_{k-1, \{e_i, x_{i, 1}, \dots, x_{i, j-1}\}}}$, then $\rho_k(x_{i, j}) = 1$; 
            \item if $x_{i, j}$ appears negatively more often than positively in $W_{\restriction \rho_{k-1, \{e_i, x_{i, 1}, \dots, x_{i, j-1}\}}}$, then $\rho_k(x_{i, j}) = 0$.
        \end{enumerate}
    \end{enumerate}
    \end{definition}

    Note that verifying that $\rho$ enjoys the most-killing property with respect to $\tau$ is possible in polynomial-time and hence there is a $\PV$ function that performs the check and $\SOT$ can prove this.

    We now prove the following claim by induction.

    \begin{lemma}
    \label{lemma:induction-restriction-PV}
    Let $c \in \bbN$. The following is provable in $\SOT$. For every CNF formula $\varphi$  over $n$ variables and $\pi$ a correct Resolution refutation of $\Reff_s(\varphi)$, let $w \coloneq c\cdot \left\lceil \sqrt{s \log |\pi|} \right\rceil$ and let $W$ denote the set of clauses in $\pi$ of block-width larger than $w$. Then, for every $\ell$, the formula $\Psi(\varphi, \pi, s, \ell)$ defined as follows holds: if $\ell \leq w/2$, then there exists a restriction $\rho \in \{ 0,1,*\}^N$, where $N$ is the number of variables of the $\Reff_s(\varphi)$ formula, a trace $\tau = ((b_1, X_1), \dots, (b_\ell, X_\ell))$ and $d \leq \ell$ such that:
    \begin{enumerate}[label=(\roman*)]
    \itemsep=0pt
        \item $\rho$ is $d$-disabling;
        \item $\rho$ has the most-killing property according to $\tau$;
        \item $(3s)^\ell \cdot |W_{\restriction \rho}| \leq |W| \cdot (3s- w + \ell)^\ell$.
    \end{enumerate}
    \end{lemma}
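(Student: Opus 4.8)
The plan is to prove the lemma by length induction on $\ell$, closely following the greedy argument behind \Cref{lemma:det-wr} but keeping every step polynomial-time and every cardinality computation over a set that provably lies in $\Log$. First I would observe that $\Psi(\varphi,\pi,s,\ell)$ is (provably equivalent to) a $\Sigma^b_1$ formula: its matrix asserts the existence of a restriction $\rho\le\pi$, a trace $\tau$, and a number $d\le\ell$, and conditions (i)--(iii) are decidable by a $\PV$ predicate --- being $d$-disabling and having the most-killing property with respect to $\tau$ are polynomial-time checkable, and (iii) only refers to the powers $(3s)^\ell$ and $(3s-w+\ell)^\ell$, which are dominated by a suitable $\#$-term as long as $\ell\in\Log$, the only regime the induction ever reaches since $\ell\le w/2$. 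Hence $\textsc{LInd}$ for $\Sigma^b_1$-formulas, available in $\SOT$, applies, and by $\forall\Sigma^b_1$-conservativity the resulting proof descends to $\PVO$.

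For the base case $\ell=0$ take $\rho$ to be the restriction that merely enables all $s$ blocks, $\tau$ empty, and $d=0$. Conditions (i) and (ii) are immediate, and (iii) reduces to $|W_{\restriction\rho}|\le|W|$, which holds because applying a restriction can only delete clauses from $W$ or decrease their block-width.

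For the inductive step, assume $\Psi(\varphi,\pi,s,\ell)$ with witnesses $\rho$, $\tau=((b_1,X_1),\dots,(b_\ell,X_\ell))$, $d$, and suppose $\ell+1\le w/2$. If $W_{\restriction\rho}$ is already empty there is nothing to do but enable one more of the untouched blocks (this is case~1 of the most-killing property applied with zero surviving clauses), keep $d'=d$, and observe that (iii) is trivial since $|W_{\restriction\rho'}|=0$. Otherwise let $i$ be the most frequent block among $[s]\setminus\{b_1,\dots,b_\ell\}$ mentioned in $W_{\restriction\rho}$. Each clause of $W_{\restriction\rho}$ mentions more than $w$ blocks, at most $\ell$ of which lie in $\{b_1,\dots,b_\ell\}$, hence more than $w-\ell$ of them are fresh; averaging over $W_{\restriction\rho}$ then shows that $i$ is mentioned in at least a $\frac{w-\ell}{s-\ell}$ fraction of $W_{\restriction\rho}$. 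Processing block $i$ exactly as the most-killing property dictates --- either enabling it, which satisfies at least a third of the clauses mentioning $i$, or disabling it and then fixing its remaining variables one by one by majority vote, which again removes at least a third of them, for the reasons already spelled out in the proof of \Cref{lemma:det-wr} --- yields a restriction $\rho'$ with
\[ |W_{\restriction\rho'}| \;\le\; |W_{\restriction\rho}|\Bigl(1-\tfrac{w-\ell}{3(s-\ell)}\Bigr) \;\le\; |W_{\restriction\rho}|\cdot\frac{3s-w+\ell}{3s}. \]
Multiplying the inductive hypothesis (iii) through by $3s-w+\ell$ and using $3s-w+\ell\le 3s-w+(\ell+1)$ gives
\[ (3s)^{\ell+1}\,|W_{\restriction\rho'}| \;\le\; (3s)^{\ell}\,|W_{\restriction\rho}|\,(3s-w+\ell) \;\le\; |W|\,(3s-w+\ell)^{\ell+1} \;\le\; |W|\,(3s-w+(\ell+1))^{\ell+1}, \]
which is condition (iii) for $\ell+1$. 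Setting $\tau'$ to be $\tau$ with the pair $(i,X_i)$ appended (where $X_i$ records the order in which the variables of block $i$ were treated) and $d':=d$ or $d+1$ according as $i$ was enabled or disabled, one checks directly that $\rho'$ is $d'$-disabling with $d'\le\ell+1$ and has the most-killing property with respect to $\tau'$. This establishes $\Psi(\varphi,\pi,s,\ell+1)$ and completes the induction.

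The point demanding care --- and the one I expect to be the main obstacle --- is not the combinatorics, which is verbatim the argument of \Cref{lemma:det-wr}, but arranging that every counting step is legitimate inside $\SOT$. The key observation is that $W$, and each intermediate $W_{\restriction\rho}$, is a bounded definable subset of the set of clause-indices of $\pi$, an initial segment $[0,m]$ with $m\le|\pi|$; since $|\pi|$ is a length, $m\in\Log$, and the $\PV$ counting function $\operatorname{Count}$ therefore returns the exact cardinalities of $W_{\restriction\rho}$ and of all the sub-collections the averaging and majority arguments refer to (clauses mentioning a fixed block, clauses in which a given literal occurs positively, and so on). With exact counting in hand these arguments reduce to ordinary finite combinatorics; the remaining bookkeeping --- maintaining the trace $\tau$, checking the most-killing property, and handling the $\#$-bounded products in (iii) in the regime $\ell\le w/2$ --- is routine but must be carried out explicitly.
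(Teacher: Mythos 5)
Your overall structure — arguing $\Psi$ is $\Sigma^b_1$, invoking $\textsc{LInd}$, mirroring one step of the greedy algorithm in the inductive step, and reducing the combinatorics to exact counting over $\Log$-sized sets — matches the paper's proof closely, and your averaging bound $(w-\ell)/(s-\ell)$ is in fact tighter than the paper's $(w-(\ell+1))/(s-(\ell+1))$, with correct algebra for the step $(3s)^{\ell+1}\cdot|W_{\restriction\rho'}|\leq|W|\cdot(3s-w+\ell+1)^{\ell+1}$. However, there is a concrete gap in your base case that breaks the induction. You take $\rho$ at $\ell=0$ to be the restriction that enables \emph{all} $s$ blocks (which indeed is what Definition~\ref{def:disabling-restriction} of a $0$-disabling restriction literally requires), but then at $\ell=1$ the block $i$ you select is already enabled under $\rho$: the most-killing dichotomy either tells you to enable $i$, which is a no-op (the $e_i$-literals have already been eliminated from $W_{\restriction\rho}$, so no $1/3$ of the surviving wide clauses is satisfied) and you get $W_{\restriction\rho'}=W_{\restriction\rho}$, under which the inductive inequality $3s\cdot|W_{\restriction\rho'}|\leq|W_{\restriction\rho}|\cdot(3s-w+\ell+1)$ fails for $\ell+1\leq w/2$ and $w\geq 3$; or it tells you to disable $i$, which contradicts $\rho(e_i)=1$. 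And if instead you silently redefine $\rho'$ from scratch (flipping $e_i$ from $1$ to $0$), then $W_{\restriction\rho'}\not\subseteq W_{\restriction\rho}$ — a clause containing $e_i$ positively is satisfied by $\rho$ but not by $\rho'$ — so the multiplicative bound on $|W_{\restriction\rho'}|$ in terms of $|W_{\restriction\rho}|$ does not follow.

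The paper avoids this by taking $\rho := *^N$ at $\ell=0$ and, throughout the induction, letting $\rho$ assign values only to the variables of the $\ell$ trace blocks; this is also what the subrestrictions $\rho_k=\rho_{k,\emptyset}$ in the definition of the most-killing property refer to. The untouched blocks are enabled only once the induction has terminated, inside the proof of Lemma~\ref{lemma:restriction-in-PV}, which is where the full $d$-disabling restriction of Definition~\ref{def:disabling-restriction} is produced. (That does mean condition~(i) of the present lemma is being read loosely — $*^N$ is not literally $0$-disabling — but the induction would not close otherwise, since you must retain the freedom to disable a freshly selected block.) If you replace your base case with $\rho := *^N$ and let the inductive $\rho$ set only trace-block variables, the rest of your argument goes through as written.
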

    
    \begin{proof}
        Observe that the formula $\Psi$ is $\Sigma^b_1$. First, the existential quantifiers on $\rho$, $\tau$ and $d$ are bounded by $\pi$ and $\varphi$. The properties (i) and (ii) are checkable in polynomial time and hence there are $\PV$ relations for them and $\SOT$ can prove their basic properties. To show that item (iii) can be properly expressed by a $\PV$ formula,  note that because $\pi$ is a correct Resolution refutation of $\Reff_s(\varphi)$, it must hold that $s \leq |\pi|$ and 
        \begin{equation}
            w = c \left\lceil \sqrt{s\log |\pi|} \right\rceil \leq c\left\lceil \sqrt{s ||\pi||} \right\rceil \leq c  \left\lceil \sqrt{ |\pi| ||\pi||}\right\rceil \leq c|\pi|\,,
        \end{equation}
        where the last equality holds because $||\pi|| \leq |\pi|$. As a consequence,  $w \in \Log$, which means that for every $\ell \leq w/2$, $\ell \in \Log$ and thus $s^\ell$ exists because $s^\ell = 2^{|s|\ell} = s \# 2^\ell$.
        
        We show that $\SOT \vdash \forall \ell \Psi(\varphi, \pi, s, \ell)$. We proceed by induction on the parameter $\ell$, the length of the trace~$\tau$. Note that this corresponds to Length Induction over $\Sigma^b_1$ formulas and is hence available in $\SOT$.

        If $\ell = 0$, then we can take the empty restriction $\rho \coloneq *^N$, which disables $d = 0$ variables and the trace $\tau$ to be the empty sequence, which trivially satisfy the conditions.

        We assume now that $\Psi(\varphi, \pi, s, \ell)$ holds and we prove $\Psi(\varphi, \pi, s, \ell+1)$. If $\ell + 1 > w/2$, then we are done, so assume $\ell +1\leq w/2$. By induction hypothesis, there exist $\rho$, $d$ and $\tau$ satisfying the desired properties for $\ell$. We extend $\rho$ into $\rho'$ by following the conditions of the most-killing property. This amounts to applying one iteration of Algorithm~\ref{algorithm:greedy}. There is a $\PV$ function for this, as well as for extracting the most frequent block in the refutation and $\SOT$ can prove the basic properties of these functions. The trace is then extended by the most frequent block $b_{\ell + 1}$ mentioned in $W_{\restriction \rho}$ and the ordering of the variables can be any fixed ordering of the variables used in the construction of $\rho'$ by Algorithm~\ref{algorithm:greedy}.

        If $\rho$ was $d$-disabling for $d \leq \ell$, then we have that $\rho'$ is at most $(d+1)$-disabling, since one iteration of Algorithm~\ref{algorithm:greedy} disables at most one additional block, and we have $d+1\leq \ell + 1$. By construction, it is easy to see that $\rho'$ has the most-killing property with respect to $\tau'$. It is only left to verify that
        \begin{equation}
        (3s)^{\ell+1} \cdot |W_{\restriction \rho'}| \leq |W| \cdot (3s-w + \ell+1)^{\ell+1}\,.
        \end{equation}
        
        The set $W$ contains all the clauses in $\pi$ of block-width at least $w$. The restriction $\rho'$ sets the variables of $\ell + 1$ blocks, meaning that every clause in $W$ not trivialized by $\rho'$ has block-width at least $w - (\ell + 1)$, and these clauses are precisely the ones in  $W_{\restriction\rho'}$ by definition.

        Furthermore, the restriction $\rho'$ disables at most $\ell + 1$ blocks, so an averaging argument implies that there is a non-restricted block mentioned in at least $|W_{\restriction\rho}|(w-(\ell + 1))/(s-(\ell + 1))$ clauses. In particular, the most frequent non-restricted block must be mentioned at least that often. Since $|W|\leq |\pi|$ and $s \leq |\pi|$, we have that $|W|$ and $s$ are in $\Log$ and therefore we can use exact counting in $\SOT$ to carry out this averaging argument (see \Cref{subsec:prelim-exact-counting} for details).
        
        By the way we constructed $\rho'$ following Algorithm~\ref{algorithm:greedy} we are guaranteed to kill at least $1/3$ of all the clauses mentioning the most frequent unrestricted block $b_{\ell + 1}$. Indeed, by inspecting the conditions of the most-killing property (\Cref{def:most-killing}) if we enable block $b_{\ell+1}$ that is because it appeared in at least $1/3$ of all the clauses mentioning $b_{\ell +1 }$; and otherwise we are guaranteed to restrict at least $1/2$ of the remaining $2/3$ fraction of the clauses mentioning $b_{\ell + 1}$, which amounts to a $1/3$ fraction.

        Together, this means that
        \begin{equation}
            |W_{\restriction \rho'}| \leq |W_{\restriction \rho}| \cdot \left(1-\frac{w-(\ell + 1)}{3s}\right)\,, 
        \end{equation}
        which is the same as
        \begin{equation}
            3s \cdot |W_{\restriction \rho'}| \leq |W_{\restriction\rho}| \cdot (3s-w + \ell+1)\,.
        \end{equation}

        As $\rho'$ extends the restriction $\rho$, it is clear that $|W_{\restriction \rho'}| \leq |W_{\restriction \rho}|$, and by induction hypothesis we know that
        \begin{equation}
            (3s)^\ell \cdot |W_{\restriction \rho}| \leq |W| \cdot (3s-w + \ell)^\ell\,,
        \end{equation}
    meaning that
        \begin{align}
            3s \cdot |W_{\restriction\rho'}| &\leq |W_{\restriction\rho}| \cdot (3s-w + \ell+1) \\
            &\leq \frac{1}{(3s)^\ell}|W| \cdot (3s-w + \ell)^\ell \cdot (3s-w + \ell+1) \\
            &\leq \frac{1}{(3s)^\ell}|W|  \cdot (3s-w + \ell+1)^{\ell + 1}\,,
        \end{align}
        from which the desired $(3s)^{\ell+1} \cdot |W_{\restriction \rho'}| \leq |W| \cdot (3s-w + \ell+1)^{\ell+1}$ follows.
    \end{proof}
    
    Before we complete the proof, we have the following technical lemma that will help us deal with our bounds inside the theory.
    
    \begin{lemma}
        \label{lemma:exp-bound-PV}
        In $\PVO$, for all $n,m,p \in \Log$, if $n < m$ and $p \geq 1$, it holds that
        \[\biggl( 1 - \frac{n}{m}\biggr)^p \leq 2^{-\frac{n}{m}p}. \]
    \end{lemma}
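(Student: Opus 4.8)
The statement to prove is that, provably in $\PVO$, for all $n, m, p \in \Log$ with $n < m$ and $p \geq 1$, we have $(1 - n/m)^p \leq 2^{-(n/m)p}$. The natural approach is to combine \Cref{prop:PM-bound} (which gives $1 - n/m \leq 2^{-n/m}$) with monotonicity of the power function, i.e., the fact that raising both sides of an inequality between nonnegative quantities to the same power $p$ preserves the inequality. Concretely, the plan is: first establish the base inequality $0 \leq 1 - n/m \leq 2^{-n/m}$ in $\PVO$ using \Cref{prop:PM-bound}, noting that $n < m$ guarantees $1 - n/m > 0$ so both sides are legitimately nonnegative rationals; then apply the monotonicity lemma for $p$-th powers to conclude $(1 - n/m)^p \leq (2^{-n/m})^p = 2^{-(n/m)p}$, using the standard rational-exponent arithmetic identity $(2^{-a})^p = 2^{-ap}$ that is available in $\PVO$ under the usual conventions for handling rationals and dyadic rationals.

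The one point requiring care is the formalization of \emph{monotonicity of $x \mapsto x^p$ on nonnegative rationals} inside $\PVO$, together with the bookkeeping that all the quantities involved live in $\Log$. Since $n, m, p \in \Log$, the exponent $n/m$ is a rational with numerator and denominator in $\Log$, and $p$ is bounded by a value in $\Log$, so $2^{-(n/m)p}$ and $(1-n/m)^p$ are dyadic-rational (or rational) approximations whose bit-lengths remain polynomially bounded; hence there are $\PV$ function symbols computing them and $\PVO$ can reason about their basic arithmetic. The monotonicity claim itself, $\forall a\,\forall b\,(0 \leq a \leq b \to a^p \leq b^p)$, is proved by induction on $p$: the base case $p = 1$ is trivial, and the step uses $a^{p+1} = a \cdot a^p \leq b \cdot a^p \leq b \cdot b^p = b^{p+1}$, where the first inequality uses $a \leq b$ together with $a^p \geq 0$, and the second uses the induction hypothesis $a^p \leq b^p$ together with $b \geq 0$. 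Since $p \in \Log$, this is length induction, and the formula $a^p \leq b^p$ (for fixed $a, b$ given as parameters) is a $\Delta^b_1$ — in fact quantifier-free $\PV$ — predicate, so the induction is available in $\PVO$ (equivalently, in $\SOT$, which is $\forall\Sigma^b_1$-conservative over $\PVO$ and where such length induction is immediate).

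I would carry this out in three short steps. \emph{Step 1:} invoke \Cref{prop:PM-bound} to get $1 - n/m \leq 2^{-n/m}$ in $\PVO$, and separately observe $1 - n/m \geq 0$ from $n < m$. \emph{Step 2:} prove the auxiliary monotonicity lemma $0 \leq a \leq b \to a^p \leq b^p$ by length induction on $p$ as above, all within $\PVO$. \emph{Step 3:} instantiate the monotonicity lemma with $a := 1 - n/m$ and $b := 2^{-n/m}$, and simplify the right-hand side via $(2^{-n/m})^p = 2^{-(n/m)p}$ (a $\PVO$-provable identity about rational exponents), yielding the desired $(1 - n/m)^p \leq 2^{-(n/m)p}$.

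\textbf{Expected main obstacle.} The conceptual content is minimal; the only genuine friction is entirely in the \emph{formalization bookkeeping} — namely pinning down the precise $\PVO$-representation of rationals raised to $\Log$-bounded integer powers, confirming that the relevant quantities stay in $\Log$ so that the inductive predicate is quantifier-free (hence the length induction is unproblematic), and checking that the arithmetic identity $(2^{-a})^p = 2^{-ap}$ holds under whatever rational-arithmetic conventions the paper adopts (the same ones used implicitly in \Cref{prop:PM-bound} and in \Cref{lemma:restriction-in-PV}). None of this is deep, and in the write-up I would handle it at the same level of rigor as the surrounding formalizations, citing \Cref{prop:PM-bound} for the base inequality and treating the monotonicity-of-powers lemma as a routine length induction in $\PVO$.
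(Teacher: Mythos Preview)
Your proposal is correct and essentially matches the paper's proof: the paper performs length induction on $p$ directly on the target inequality, with the base case $p=1$ given by \Cref{prop:PM-bound} and the inductive step $(1-n/m)^{p+1} = (1-n/m)^p \cdot (1-n/m) \leq 2^{-(n/m)p} \cdot 2^{-n/m} = 2^{-(n/m)(p+1)}$. Your version merely factors out the monotonicity-of-powers step as a standalone lemma before instantiating it, which is the same argument in slightly more abstract packaging.
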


\begin{proof}
    Since $p \in \Log$, we take $n$ and $m$ to be fixed and proceed by Length Induction on $p$. The base case $p = 1$ is covered by a proposition of     \citeauthor{MP20} \cite[Proposition 2.5]{MP20}, who formalized in $\PVO$ that for all $n, m \in \Log$ such that $ n < m $, it holds that $1- n/m  \leq 2^{-n/m}$. If the bound holds for $p$, then for $p +1$ we have
    \begin{equation}
    \biggl( 1 - \frac{n}{m}\biggr)^{p+1} = \biggl( 1 - \frac{n}{m}\biggr)^{p} \cdot\biggl( 1 - \frac{n}{m}\biggr) \leq 2^{-\frac{n}{m}p} \cdot 2^{-\frac{n}{m}} = 2^{-\frac{n}{m}(p+1)}\,.
    \end{equation}
    The first equality holds because $(1 - n/m) \in \Log$ and $\PVO$ can prove the basic properties of powers for $\Log$-sized objects; the next inequality holds by induction hypothesis and the same proposition by \citeauthor{MP20} as in the base case. This completes the proof.    
\end{proof}

    We are ready to show that $\PVO$ proves the desired restriction argument.

\begin{lemma}[Formalized deterministic width-reduction in $\PVO$]
\label{lemma:restriction-in-PV}
    It holds that $\PVO \vdash \operatorname{AM-Restriction}_4$.
\end{lemma}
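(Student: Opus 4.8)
The plan is to reduce the statement to Lemma~\ref{lemma:induction-restriction-PV} by instantiating that lemma at a single, well-chosen value of the trace length. Fix $c=4$, put $w \coloneq 4\lceil\sqrt{s\log|\pi|}\rceil$, and take $\ell \coloneq \lfloor w/2\rfloor$. Since $\ell \le w/2$, for every CNF formula $\varphi$ over $n$ variables and every $\pi$ with $\foReff(\Reff_s(\varphi),\pi)$, Lemma~\ref{lemma:induction-restriction-PV} supplies a restriction $\rho$ (codeable by a term bounded in $\pi$), a trace $\tau$ of length $\ell$, and a number $d\le\ell$ such that $\rho$ is $d$-disabling, $\rho$ has the most-killing property along $\tau$, and $(3s)^\ell\cdot|W_{\restriction\rho}| \le |W|\cdot(3s-w+\ell)^\ell$, where $W$ is the set of clauses of $\pi$ whose block-width exceeds $w$. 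From $d\le\lfloor w/2\rfloor\le 2\lceil\sqrt{s\log|\pi|}\rceil$ we immediately obtain $\operatorname{Disabling}(\rho,\varphi,n,s)\le (c/2)\lceil\sqrt{s\log|\pi|}\rceil$. Moreover, a $d$-disabling restriction only assigns enabling variables and variables inside disabled blocks, so a direct inspection of the axiom schemas (\ref{axiom:must-appear-after-res-left})--(\ref{axiom:root-enabled}) of Definition~\ref{def:Reff-formulas} --- this is the polynomial-time-checkable content of the remark following Definition~\ref{def:disabling-restriction} --- shows that no axiom of $\Reff_s(\varphi)$ is falsified, so $\Reff_s(\varphi)_{\restriction\rho}\neq\bot$. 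Two of the three conjuncts of $\operatorname{AM-Restriction}_4$ are thus in hand.

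It remains to derive $\bw(\pi_{\restriction\rho})\le w$, equivalently $W_{\restriction\rho}=\emptyset$, from the counting inequality. If $W=\emptyset$ this is immediate, so assume $W\neq\emptyset$; then $w$ is smaller than the number $s$ of blocks (block-width never exceeds $s$), and in particular $3s-w+\ell>0$. It suffices to show $|W|\cdot(3s-w+\ell)^\ell<(3s)^\ell$, which forces $|W_{\restriction\rho}|<1$ and hence, being the cardinality of a bounded definable set, $|W_{\restriction\rho}|=0$. Rewriting the target as $|W|\cdot(1-(w-\ell)/3s)^\ell<1$, the estimate needed is the discretized geometric decay $(1-(w-\ell)/3s)^\ell\le 2^{-\lfloor\ell/k\rfloor}$ with $k=\lceil 3s/(w-\ell)\rceil$, which reduces to the integer inequality $(1-1/k)^k\le\tfrac12$; this in turn follows from Bernoulli's inequality $(1+t)^m\ge 1+mt$, provable in $\SOT$. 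The choice $c=4$ yields $w^2\ge 16\,s\log|\pi|$ and hence $\lfloor\ell/k\rfloor\ge\log_2|\pi|\ge\log_2|W|$ up to lower-order terms, which are absorbed because $W\neq\emptyset$ forces $\Reff_s(\varphi)$, and therefore $\pi$, to be sufficiently large. I expect this arithmetic to be the one genuinely delicate point. What makes it go through in $\PVO$ is that all relevant quantities lie in $\Log$: as observed in the proof of Lemma~\ref{lemma:induction-restriction-PV}, $|\pi|$, $s$ and $w$ are in $\Log$, hence so are $\ell\le w/2$, $k\le 3s$, and their products, so that the powers $(3s)^\ell$, $(3s-w+\ell)^\ell$ and $k^k$ are bounded by terms built from the smash function and thus exist, and exact counting applies to $|W|$ and $|W_{\restriction\rho}|$.

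Combining the three conjuncts gives $\operatorname{AM-Restriction}_4$. Finally, the whole derivation stays at the $\Sigma^b_1$ level: Lemma~\ref{lemma:induction-restriction-PV} is already proved in $\SOT$, and everything added on top of it --- the arithmetic manipulation, Bernoulli's inequality, and the falsifiability check for disabling restrictions --- uses only $\PV$-functions, quantifier-free reasoning, and exact counting on $\Log$-sized sets. Since $\operatorname{AM-Restriction}_4$ is a $\forall\Sigma^b_1$ sentence, $\forall\Sigma^b_1$-conservativity of $\SOT$ over $\PVO$ yields $\PVO\vdash\operatorname{AM-Restriction}_4$.
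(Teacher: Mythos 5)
Your proposal is correct and follows the same high-level route as the paper: instantiate Lemma~\ref{lemma:induction-restriction-PV} at $\ell=w/2$ with $w=4\lceil\sqrt{s\log|\pi|}\rceil$, observe that a $d$-disabling restriction cannot falsify any axiom of $\Reff_s(\varphi)$, read off the disabling bound $d\le\ell\le(c/2)\lceil\sqrt{s\log|\pi|}\rceil$, and then argue from the counting inequality that $W_{\restriction\rho}=\emptyset$. The one place you diverge is the auxiliary arithmetic: the paper invokes Lemma~\ref{lemma:exp-bound-PV} (proved in the appendix by $\Sigma^b_1$-length-induction from the inequality $1-n/m\le 2^{-n/m}$ borrowed from \cite{MP20}) to get $(1-\frac{w/2}{3s})^{w/2}\le 2^{-w^2/12s}$ directly, whereas you discretize into groups of $k=\lceil 3s/(w-\ell)\rceil$ iterations and reduce to $(1-1/k)^k\le\tfrac12$, which you derive from Bernoulli applied in the form $(1+\tfrac1{k-1})^k\ge 1+\tfrac{k}{k-1}\ge 2$. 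Both routes are valid in $\SOT$ because all relevant quantities lie in $\Log$; your version is more self-contained but pays for it with floor/ceiling bookkeeping that you rightly flag as the delicate point — the slack factor $16/12$ is exactly what absorbs the rounding losses, and it would be worth stating that explicitly rather than appealing to $W\ne\emptyset$ forcing $|\pi|$ large.
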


\begin{proof}
    Let $\varphi$ be a CNF formula over $n$ variables and $\pi$ a correct Resolution refutation of $\Reff_s(\varphi)$ for some~$s$. Apply \Cref{lemma:induction-restriction-PV} in $\SOT$ for $\ell \coloneq w/2$, where $w \coloneq 4 \cdot \left\lceil{\sqrt{s \log |\pi|}}\right\rceil$. This gives us a restriction $\rho$ that is $d$-disabling for $d \leq \ell$ and $(3s)^\ell \cdot |W_{\restriction \rho}| \leq |W| \cdot (3s- w + \ell)^\ell$. We want to show that $\bw(\pi_{\restriction\rho}) \leq w$, that $\Reff_s(\varphi)_{\restriction \rho}\neq \bot $  and that $\rho$ is $d$-disabling, for $d \leq w/2$.

     That $\rho$ is $d$-disabling is guaranteed by the lemma, and because it is $d$-disabling, the restriction only sets the values of enabling variables and, whenever a block is enabled, it does not restrict any other variables in the block. As a consequence, no axiom of $\Reff_s(\varphi)$ is ever falsified by $\rho$. 
     
     It is only left to verify that the block-width of $\pi_{\restriction \rho}$ is at most $w$, as desired. \Cref{lemma:induction-restriction-PV} guarantees that  $|W_{\restriction \rho}| \leq |W| \cdot \left(1 - (w - \ell)/{3s} \right)^\ell$, and substituting our choice of $\ell$ we have that
     \begin{align}
         |W_{\restriction \rho}| \leq |W| \cdot \left(1 - \frac{w/2}{3s}\right)^{w/2} \leq |W| \cdot 2^{-\frac{w^2}{12s}}\,,
     \end{align}
     where the last inequality is provable in $\SOT$, as shown in \Cref{lemma:exp-bound-PV}.

     We want $|W| \cdot 2^{-{w^2}/{12s}} < 1$, so
     \begin{align}
         |W| \cdot 2^{-\frac{w^2}{12s}} < 1 &\Leftrightarrow \log|W| < \frac{w^2}{12s} \\
         &\Leftrightarrow \sqrt{12s\log|W|} < w\,,
     \end{align}
     and this last inequality is verified for our choice of $w$.

    It follows that $W_{\restriction\rho} = \emptyset$ and hence $\bw(\pi_{\restriction\rho}) \leq w$, as desired. Since $\SOT$ is $\forall\Sigma^b_1$-conservative over $\PVO$, we have the sentence in $\PVO$. 
\end{proof}

\subsection{The block-width lower bound}
\label{subsec:block-width-lb-in-PV}
The crucial part of the formalization, and the place where we most clearly see how satisfying assignments can be extracted from Resolution refutations, is in the width lower bound. The restriction argument we just covered reduces the block-width of a refutation regardless of whether the underlying formula is satisfiable or not; it is the width lower bound that only holds when the formula is unsatisfiable.

Let us first state the formula we want to prove, which we encode in the following $\forall\Sigma^b_1$ statement:
\begin{equation}
\tag{$\operatorname{AM-WLB}$}
    \begin{split}
        \operatorname{AM-WLB} \coloneq \forall \varphi \forall n \forall s \forall \pi \biggl(\operatorname{CNF}(\varphi, n) \land &\foReff(\Reff_s(\varphi), \pi)  \land \forall \alpha \leq \varphi. \neg \SAT(\varphi, \alpha)\phantom{\to} \\ &\to \exists C \leq \pi \Bigl( C \in \pi \land \bw(C) \geq 1/3 (\lfloor{s/n}\rfloor - 1 \Bigr) \biggr).
    \end{split}
\end{equation}

When writing the formula in prenex form, pulling out the universal quantifier on $\alpha$ and turning it into an existential one, it becomes clear how the statement is $\forall\Sigma^b_1$ and the extraction of a satisfying  assignment can be performed by witnessing the first existential quantifier:
\begin{equation}
    \begin{split}
    \operatorname{AM-WLB} \equiv \forall \varphi \forall n \forall s \forall \pi \exists \alpha \leq \varphi \exists C \leq \pi\biggl(&\operatorname{CNF}(\varphi, n) \land \foReff(\Reff_s(\varphi), \pi)    \\ &\to \foSatf(\varphi, \alpha) \lor \Bigl(C \in \pi \land \bw(C) \geq 1/3 (\lfloor{s/n}\rfloor - 1 \Bigr)\biggr).
    \end{split}
\end{equation}

The crucial fact that makes the block-width lower bound go through, and which fundamentally distinguishes the $\SAT$ and $\UNSAT$ cases is that, if $\varphi \in \UNSAT$, then every width-$n$ clause over the variables of $\varphi$ can be weakened from some clause of $\varphi$, while a width-$n$ clause that is \emph{not} the weakening of any axiom will necessarily encode a satisfying assignment. This was earlier stated as \Cref{fact:weakening}. We now reprove the statement in $\SOT$.

\begin{lemma}[\Cref{fact:weakening} in $\SOT$]
    \label{fact-in-PV}
    Let $\varphi$ be a Boolean formula in CNF over $n$ variables. If $C$ is a width-$n$ clause over the variables of $\varphi$ that is not the weakening of any clause of $\varphi$, then $\neg C$ encodes a satisfying assignment for $\varphi$.
\end{lemma}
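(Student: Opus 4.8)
The plan is to give a direct, induction-free argument: set $\alpha \coloneq \neg C$, show it satisfies $\varphi$ by a one-line contradiction against the weakening hypothesis, and then observe that every object and predicate involved is polynomial-time, so the derivation formalizes in $\SOT$ (indeed in $\PVO$, since the statement is $\forall\Sigma^b_1$, and even in weaker fragments, as no induction axiom is invoked). Throughout I read ``width-$n$ clause over the $n$ variables of $\varphi$'' as ``clause containing exactly one literal per variable'' --- this is the reading under which the statement is true and the one used when \Cref{fact-in-PV} is applied in the proof of \Cref{lemma:extraction}, where invariant point (iii) already guarantees that the layer-$n$ block has exactly $n$ literals over $x_1,\dots,x_n$ with no two literals on the same variable. (If one only assumes $|C| = n$ and that $C$ is non-tautological, this reading follows from the injectivity of the map sending a literal of $C$ to its underlying variable, via the exact counting available in $\SOT$ for $\Log$-sized sets, using $n \leq |\varphi| \in \Log$; that reduction is itself provable in $\SOT$ and can be prepended if needed.)

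First I would define, by a $\PV$ function of $C$ and $n$, the assignment $\alpha = \neg C$ setting $\alpha(x_i)$ to the unique truth value that falsifies the literal of $C$ on $x_i$, and record the key equivalence, provable in $\SOT$: for every literal $\ell$ over a variable of $\varphi$, $\ell$ is false under $\alpha$ if and only if $\ell \in C$. The direction ``$\ell \in C$ implies $\ell$ false under $\alpha$'' is immediate from the definition of $\alpha$; conversely, if the literal $\ell$ on $x_i$ is false under $\alpha$, then $\ell$ must be the literal on $x_i$ that $\alpha$ was designed to falsify, which by construction is the one belonging to $C$.

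Then I would argue inside $\SOT$ that $\alpha$ satisfies $\varphi$. Fix an arbitrary clause $A$ of $\varphi$ (formally, an arbitrary $A \leq \varphi$ that is a clause of $\varphi$) and suppose towards a contradiction that $A$ is not satisfied by $\alpha$, i.e.\ every literal occurring in $A$ is false under $\alpha$. By the equivalence above, every literal of $A$ then lies in $C$, so $A \subseteq C$, i.e.\ $C$ is a weakening of the clause $A$ of $\varphi$ --- contradicting the hypothesis that $C$ is not the weakening of any clause of $\varphi$. Hence some literal of $A$ is true under $\alpha$, so $\alpha$ satisfies $A$; since $A$ was arbitrary, $\alpha$ satisfies every clause of $\varphi$, which is precisely $\foSatf(\varphi, \alpha)$. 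Thus $\neg C$ encodes a satisfying assignment for $\varphi$.

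There is essentially no mathematical obstacle here; the only care needed is bookkeeping: (i) fixing the convention above so that $\neg C$ is a well-defined total assignment, and (ii) checking that ``$C$ is a width-$n$ clause over the variables of $\varphi$'', ``$C$ is not the weakening of any clause of $\varphi$'', and ``$\foSatf(\varphi, \neg C)$'' are all defined by quantifier-free $\calL_\PV$-formulas, and that the operations used --- reading off the literal of $C$ on a given variable, evaluating a literal under $\neg C$, and testing subclause containment --- are given by $\PV$ function symbols, so that the contradiction step is a purely propositional inference available in $\SOT$. Since the core argument uses no induction and the sentence is $\forall\Sigma^b_1$, it moreover holds in $\PVO$ by the $\forall\Sigma^b_1$-conservativity of $\SOT$ over $\PVO$.
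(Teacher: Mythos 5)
Your proof is correct, and it takes a genuinely different and simpler route than the paper's. The paper proves \Cref{fact-in-PV} by $\Delta^b_1$-induction on the number $n$ of variables: restrict $x_{n+1}$ to the value falsifying its literal in $C$, verify that the restricted clause $C_0$ is still not a weakening of any clause of the restricted $\varphi_0$, and apply the induction hypothesis. You instead observe the statement is just a contrapositive in disguise: if $\alpha=\neg C$ fails to satisfy some clause $A$ of $\varphi$, then every literal of $A$ is falsified by $\alpha$, hence (by your key equivalence ``$\ell$ false under $\alpha$ iff $\ell\in C$'') every literal of $A$ lies in $C$, so $A\subseteq C$ and $C$ is a weakening of $A$ after all. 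This eliminates the induction entirely; the whole thing is a bounded-quantifier manipulation of $\PV$ predicates, so it formalizes in $\SOT$ (and in weaker fragments) with essentially no work. Your reading of ``width-$n$ clause'' as ``one literal per variable'' is the right one and matches how the lemma is invoked via invariant (iii) in the proof of \Cref{lemma:extraction}, and your handling of that point is careful. The one thing your write-up does that the paper's doesn't is make the non-tautologicality requirement explicit; the paper leaves this implicit. Overall your argument is a clean simplification of the paper's proof and would be a fine replacement for it.
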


\begin{proof}
    By $\Delta^b_1$-induction on the number $n$ of variables $x_1, \dots, x_n$ of $\varphi$, which is available in $\SOT$.

    If $n = 1$, then the only two width-$1$ clauses are $x_1$ and $\neg x_1$. The three possible CNF formulas are $x_1$, $\neg x_1$ and $x_1 \land \neg x_1$, and it is easy to check that the statement holds.

    Suppose the statement holds for formulas with up to $n$ variables, and let $\varphi$ be a formula over $n+1$ variables and let $C$ be a clause of width $n + 1$. Assume $x_{n+1} \in C$, and restrict $x_{n+1} \mapsto 0$. We obtain a new formula $\varphi_0$ and a clause $C_0$. For every clause $D$ in $\varphi$ we have the following cases:
    \begin{enumerate}\itemsep=0pt
        \item[(a)] the variable $x_{n+1}$ appeared in $D$ negatively, and hence $D$ is satisfied by setting $x_{n+1} \mapsto 0$;
        \item[(b)] the variable $x_{n+1}$ appeared in $D$ positively, or the variable did not appear at all; in both cases we have that this is a clause over variables $x_1, \dots, x_n$ and, by assumption, $C$ and hence also $C_0$ is not a weakening of it. By induction hypothesis, $\neg C_0$ and then also $\neg C$ encode a satisfying assignment to $D$.
        
    \end{enumerate}
    
    Hence, every clause in $\varphi$ is satisfied by $\neg C$.  If $\neg x_{n+1} \in D$ the the same argument goes through by restricting $x_{n+1} \mapsto 1$. This completes the proof.
\end{proof}

To prove the rest of the width lower bound, we show the following lemma first, which essentially corresponds to the invariant proved within \Cref{lemma:extraction} to derive the correctness of the assignment extraction algorithm. We restate the invariant here as a property of a path in the Resolution refutation.

\begin{definition}[Reservation invariant]
\label{def:reservation-invariant}
Let $\pi$ be a Resolution refutation of $\Reff_s(\varphi)$ for some CNF formula $\varphi(x_1, \dots, x_n)$, let $N$ be the number of variables of $\Reff_s(\varphi)$, $d \leq \operatorname{depth}(\pi)$, and let $\calC = (C_1, \dots, C_d)$ be a length-$d$ path in $\pi$ starting from $C_1 = \bot$. Let the $s$ blocks of $\Reff_s(\varphi)$ be arranged in a layered manner, so that there are $n$ layers, each containing $\lfloor s/n\rfloor$ blocks, with the remainder blocks left from the flooring operations collected all in the last layer, plus one additional layer on top with a single block corresponding to the root. We say that $\calC$ satisfies the \emph{reservation invariant} with respect to a sequence $A = (\alpha_1, \dots, \alpha_d)$ of restrictions $\alpha_i \in \{ 0,1,*\}^N$ if, for every $i \in [d]$,

\begin{enumerate}
\itemsep=0pt
    \item[(i)] the restriction $\alpha_i$ falsifies $C_i$;
    \item[(ii)] a block is only mentioned in $\alpha_i$ if it is either mentioned in $C_i$ or its parent according to $\alpha_i$ is mentioned in $C_i$, and, in particular, $\bw(\alpha_i)  \leq 3\bw(C_i)$;
    \item[(iii)] if $\alpha_i$ restricts some variable in block $B$ from layer $\ell$, then it does so in the following way: $B$ must contain exactly $\ell$ literals over the variables $x_1, \dots, x_\ell$ and no two literals for the same variable;
    \item[(iv)] the restriction $\alpha_i$ does not falsify any clause of $\Reff_s(\varphi)$;
    \item[(v)] the path determined by $\calC$ corresponds to $A$ in the following way: if $C_i$ was derived from $C_{i+1}$ by resolving over variable $v$, then $\alpha_{i+1}(v)$ is defined and $v$ appears in $C_{i+1}$ with polarity $1- \alpha_{i+1}(v)$.
\end{enumerate}
\end{definition}

The lemma now states that a path exists that satisfies the invariant or, alternatively, correctly leads to a wide clause or a satisfying assignment.

\begin{lemma}
    \label{lemma:induction-width-PV}
    The following is provable in $\SOT$. For every CNF formula $\varphi$ over $n$ variables and $\pi$ a correct Resolution refutation of $\Reff_s(\varphi)$, for every $d$, the formula $\Psi(\varphi, \pi, s, d)$ defined as follows holds: if $1\leq d \leq \operatorname{depth}(\pi)$, then there exists a sequence $\calC = (C_1, \dots, C_d)$ of clauses of $\pi$ and a sequence $A = (\alpha_1, \dots, \alpha_d)$ of restrictions such that at least one of the following conditions holds:
    \begin{enumerate}
    \itemsep=0pt
        \item[(a)] there is $C_i \in \calC$ that satisfies $\bw(C_i) \geq 1/3 (\lfloor{s/n}\rfloor - 1 )$;
        \item[(b)] there is $C_i$ in and its corresponding restriction $\alpha_i$ that contain a satisfying assignment to $\varphi$;
        \item[(c)] the sequence $\calC$ is a path in $\pi$ starting at $C_1 = \bot$ and the reservation invariant holds for $\calC$ with respect to $A$.
    \end{enumerate}
\end{lemma}

\begin{proof}
    The formula $\Psi$ is $\Sigma^b_1$ because the existential quantifiers on $\calC$ and $A$ are bounded by $\pi$ and items (a)-(c) are all checkable by $\PV$ functions and $\SOT$ proves their basic properties. We can then proceed by induction on the parameter $d$, which corresponds to Length Induction over a $\Sigma^b_1$ formula, available in $\SOT$.

    For the base case, $d = 1$ and we can pick $C_1 = \bot$ and $\alpha_1 = *^N$. It is immediate to verify that the reservation invariant is satisfied, so item (c) holds.

    Now, suppose the statement holds for $d$, and we prove it for $d + 1$. If $d+1 > \depth(\pi)$, then we are done. Otherwise, by induction hypothesis there exists a sequence $\calC = (C_1, \dots, C_d)$ and a series of restrictions $A=(\alpha_1, \dots, \alpha_d)$. If (a) or (b) hold for them, then we simply extend $\calC$ with $C_{d+1} \coloneq \bot$ and $\alpha_{d+1}\coloneq *^N$ and either (a) or (b) will still hold.

    If we are in case (c) but (a) and (b) failed, then $\alpha_d$ does not falsify any clause of $\Reff_s(\varphi)$ (as per item (iv) of the reservation invariant), and hence the clause was derived from some previous clause in $\pi$, either by weakening or by a resolution step. Here we simply move to one of the children in $\pi$ and extend $\alpha_d$ into $\alpha_{d+1}$ following the reservation strategy in Algorithm~\ref{algorithm:prover-delayer}. More precisely, we run one iteration of Algorithm~\ref{algorithm:prover-delayer} starting the traversal from $C_d$ and taking $\alpha_d$ as the restriction kept in memory by the algorithm (this corresponds to Steps 2-4 of Algorithm~\ref{algorithm:prover-delayer}). Crucially, there is a $\PV$ function that carries out this computation.

    Let us analyze the outcome of this procedure. If the iteration of Algorithm~\ref{algorithm:prover-delayer} succeeds in extending the reservation from $\alpha_d$ to $\alpha_{d+1}$, then by the way $\alpha_{d+1}$ is constructed from $\alpha_d$ and by the fact that the induction hypothesis guaranteed that $\alpha_d$ satisfies the invariant, we have that $\alpha_{d+1}$ will immediately satisfy the reservation invariant too. In this case, item (c) holds and we are done.
    
    We argue that the reservation process cannot fail. If it did, it must be that the process failed at Step 3b or 3c of Algorithm~\ref{algorithm:prover-delayer}. 
    \begin{itemize}
        \item If the process failed in Step 3b, the algorithm attempted the reservation of a block at layer $1 \leq i < n$, but there were no free blocks left. This means that $\alpha_d$ already reserved at least $\lfloor s/n \rfloor - 1$ blocks on that layer, and so $\bw(\alpha_d) \geq \lfloor s/n \rfloor - 1$, since each layer $i < n$ contains exactly $\lfloor s/n \rfloor $ blocks. By point (ii) of the reservation invariant we have that $\bw(\alpha_d) \leq 3\bw(C_d)$, so putting this together we have that $C_d$ has block-width at least $1/3 (\lfloor s/n\rfloor - 1)$, contradicting that we were not in case (a).

        \item If the process failed in Step 3c, this implies the reservation $\alpha_d$ had a clause $\calA$ encoded in a block at layer $n$, but it failed to find a clause of $\varphi$ that $\calA$ was a weakening of. By point (iii) of the invariant, since the block is at layer $n$, $\calA$ is a width-$n$ clause, and by \Cref{fact-in-PV} in $\SOT$, $\neg \calA$ encodes a satisfying assignment of $\varphi$, contradicting that we were not in case (b).
\end{itemize}
This completes the induction.
\end{proof}

We are now ready to prove $\operatorname{AM-WLB}$.

\begin{lemma}[Formalized block-width lower bound in $\PVO$]
\label{lemma:prover-delayer-in-PV}
    It holds that $\PVO \vdash \operatorname{AM-WLB}$.
\end{lemma}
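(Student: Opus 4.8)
The plan is to derive $\operatorname{AM-WLB}$ from the inductive statement already established in $\SOT$, namely \Cref{lemma:induction-width-PV}, by instantiating its induction parameter at the depth of $\pi$ and then using the unsatisfiability hypothesis to eliminate all but the desired outcome. Since $\operatorname{AM-WLB}$ is $\forall\Sigma^b_1$ and $\SOT$ is $\forall\Sigma^b_1$-conservative over $\PVO$, it suffices to argue in $\SOT$.

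First I would fix a CNF formula $\varphi$ over $n$ variables, a correct Resolution refutation $\pi$ of $\Reff_s(\varphi)$, and assume the hypothesis $\forall \alpha \leq \varphi\,\bigl(\neg\foSatf(\varphi,\alpha)\bigr)$. Set $d \coloneq \operatorname{depth}(\pi)$; this is a $\PV$-computable quantity with $1 \leq d \leq \prooflength{\pi}$, so \Cref{lemma:induction-width-PV} applies and produces a sequence $\calC = (C_1, \dots, C_d)$ of clauses of $\pi$ and a sequence $A = (\alpha_1, \dots, \alpha_d)$ of restrictions for which at least one of the outcomes (a), (b), (c) holds. The next two steps are to rule out (c) and (b). For (c): if $\calC$ were a path in the refutation dag of $\pi$ starting at $C_1 = \bot$ and having length $\operatorname{depth}(\pi)$, then $C_d$ would have to be a source of the dag, i.e.\ an initial clause, which is an axiom of $\Reff_s(\varphi)$; otherwise the path could be prolonged by a child of $C_d$, contradicting the maximality built into the definition of $\operatorname{depth}(\pi)$. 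This is a purely combinatorial fact about the polynomial-size dag associated with $\pi$, which is a first-order object over which $\SOT$ can reason directly. But then item (i) of the reservation invariant gives that $\alpha_d$ falsifies $C_d$, whereas item (iv) states that $\alpha_d$ falsifies no axiom of $\Reff_s(\varphi)$ — a contradiction, so (c) is impossible. For (b): some pair $C_i, \alpha_i$ carries a satisfying assignment $\beta$ for $\varphi$, and by \Cref{fact-in-PV} (the $\SOT$ version of \Cref{fact:weakening}) the theory proves that $\beta$ is indeed a satisfying assignment; moreover $\beta$ is coded below $\varphi$, so instantiating the hypothesis with $\alpha \coloneq \beta$ yields $\neg\foSatf(\varphi,\beta)$, contradicting $\foSatf(\varphi,\beta)$. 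Hence (b) is also impossible.

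Only outcome (a) survives: some $C_i$ on $\calC$ — hence a clause of $\pi$ with $C_i \leq \pi$ — satisfies $\bw(C_i) \geq 1/3\,(\lfloor s/n\rfloor - 1)$. This $C_i$ witnesses the inner existential quantifier of $\operatorname{AM-WLB}$, and since the whole argument went through in $\SOT$ (using only Length Induction over $\Sigma^b_1$ formulas inside \Cref{lemma:induction-width-PV} together with elementary $\PV$ manipulations), we conclude $\SOT \vdash \operatorname{AM-WLB}$, and therefore $\PVO \vdash \operatorname{AM-WLB}$ by $\forall\Sigma^b_1$-conservativity.

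The step I expect to be the main obstacle is the clean elimination of case (c): one has to be careful that instantiating \Cref{lemma:induction-width-PV} exactly at $d = \operatorname{depth}(\pi)$ genuinely forces the path $\calC$ to terminate at a leaf (this is where the precise vertex-versus-edge convention for the length of a path, and the degenerate cases $\operatorname{depth}(\pi) \leq 1$, must be handled), and that the statement ``$C_d$ is a source of the refutation dag of $\pi$'' is both expressible and $\SOT$-provable given that the dag has at most $\prooflength{\pi} \leq |\pi|$ vertices and is part of the first-order data $\pi$. A secondary point is to double-check in case (b) that the extracted assignment really lies within the bound $\leq \varphi$ of the hypothesis's bounded quantifier, which is immediate since an assignment to $n \leq |\varphi|$ variables is coded by a number below $\varphi$.
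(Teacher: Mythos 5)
Your proposal is correct and follows essentially the same route as the paper: instantiate \Cref{lemma:induction-width-PV} at $d = \operatorname{depth}(\pi)$, eliminate case (c) by the leaf/axiom contradiction between items (i) and (iv) of the reservation invariant, and finish via $\forall\Sigma^b_1$-conservativity of $\SOT$ over $\PVO$. The only cosmetic difference is that the paper argues in the prenex form (witnessing either $\alpha$ in case (b) or $C$ in case (a)), whereas you assume $\forall\alpha\le\varphi\,\neg\foSatf(\varphi,\alpha)$ and rule out case (b) directly — logically equivalent; your brief re-invocation of \Cref{fact-in-PV} in case (b) is unnecessary, since case (b) of \Cref{lemma:induction-width-PV} already asserts the assignment is satisfying.
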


\begin{proof}
    Working in $\SOT$, let $\pi$ be a correct Resolution refutation of $\Reff_s(\varphi)$ for a CNF formula $\varphi$ over $n$ variables. We can apply \Cref{lemma:induction-width-PV} for $d = \operatorname{depth}(\pi)$ to obtain a sequence of clauses $\calC=(C_1, \dots, C_d)$ of $\pi$ and a sequence of restrictions $A = (\alpha_1, \dots, \alpha_d)$, satisfying one of conditions (a)-(c) in the statement. We claim that condition (c) cannot occur. Suppose it did. If $\calC$ was really a path in $\pi$ starting at the root $C_1 = \bot$, since $\pi$ is a correct Resolution refutation this means the underlying graph of $\pi$ is a DAG and therefore the clause $C_d$ at depth $d$ must be a leaf. It is not hard to see that $\SOT$ can prove this. This implies that $C_d$ is a clause of $\Reff_s(\varphi)$. Now, items (i) and (iv) of the reservation invariant in \Cref{def:reservation-invariant} contradict each other: the restriction $\alpha_d$ must falsify $C_d$, which is a clause of $\Reff_s(\varphi)$; but item (iv) promised that $\alpha_d$ would not falsify any clause of the $\Reff_s(\varphi)$.

    The only viable option then is that we are in case (a) or (b). In case (a), some $C_i \in \pi$ that appears in the sequence $(C_1, \dots, C_d)$ has block-width at least $1/3 (\lfloor{s/n}\rfloor - 1 )$. In case (b), we immediately get a satisfying assignment for $\varphi$.

    Since $\operatorname{AM-WLB}$ is a $\forall\Sigma^b_1$ sentence and $\SOT$ is $\forall\Sigma^b_1$-conservative over $\PVO$, we get that $\PVO \vdash \operatorname{AM-WLB}$.
\end{proof}

\subsection{Formalization of the final lower bound statement}
\label{subsec:putting-it-together-PV}

We are ready to combine the restriction argument and the block-width lower to get a size lower bound.

\begin{theorem}
\label{thm:AM-in-PV}
   There exist a positive $\varepsilon \in \mathbb{Q}$ and $n_0 \in \bbN$ such that $\PVO \vdash \operatorname{AM}_{\varepsilon, n_0}$.
\end{theorem}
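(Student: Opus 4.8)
The plan is to put together the two formalized ingredients of the previous subsections — the deterministic width‑reduction argument \Cref{lemma:restriction-in-PV} and the block‑width lower bound \Cref{lemma:prover-delayer-in-PV} — and extract from the resulting ``contradiction in the large'' the quantitative bound $||\pi|| > \varepsilon s/n^2$. Since $\operatorname{AM}_{\varepsilon,n_0}$ is a $\forall\Sigma^b_1$ sentence and $\SOT$ is $\forall\Sigma^b_1$-conservative over $\PVO$, I will reason in $\SOT$. Fix an unsatisfiable CNF formula $\varphi$ over $n\geq n_0$ variables and a correct Resolution refutation $\pi$ of $\Reff_s(\varphi)$. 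I may assume $s$ is large relative to $n$: if $s < 34n$, then already $\varepsilon s/n^2 < 34\varepsilon/n < 1 \leq ||\pi||$ and we are done. Otherwise apply \Cref{lemma:restriction-in-PV} with $c = 4$ to obtain a $d$-disabling restriction $\rho \leq \pi$ with $d \leq 2\lceil\sqrt{s\log|\pi|}\rceil$ such that $\Reff_s(\varphi)_{\restriction\rho} \neq \bot$ and $w \coloneq \bw(\pi_{\restriction\rho}) \leq 4\lceil\sqrt{s\log|\pi|}\rceil$. All the quantities in play are $\Log$-bounded, since $s \leq |\pi|$ and $\log|\pi| \leq ||\pi|| \leq |\pi|$ force $w, d \leq c|\pi|$, so exact counting and the bounded arithmetic available in $\SOT$ apply throughout.

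Next I apply the block‑width lower bound to the restricted refutation. Because $\rho$ is $d$-disabling it merely deletes $d$ blocks and forces the pointers hanging into the remaining $s-d$ enabled blocks; relabelling the enabled blocks in increasing order shows that $\pi_{\restriction\rho}$ is, modulo already-satisfied literals in some axioms, a correct Resolution refutation of $\Reff_{s-d}(\varphi)$ — and this is exactly the situation the Prover--Delayer argument of \Cref{lemma:induction-width-PV} handles, parking the $d$ deleted blocks in the set $D$ of \Cref{algorithm:prover-delayer}, as the propositional \Cref{lemma:extraction} already does. Since $\varphi$ is unsatisfiable, the satisfying-assignment alternative of \Cref{lemma:induction-width-PV}/\Cref{lemma:prover-delayer-in-PV} cannot occur, so $\pi_{\restriction\rho}$ contains a clause of block‑width at least $\tfrac13\lfloor(s-d-n)/n\rfloor > \tfrac{s-d}{3n} - 1$. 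This whole step is carried out by $\PV$ functions and $\SOT$ proves the relevant basic facts.

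Finally I combine the two bounds. The wide clause has block‑width at most $w$, hence $\tfrac{s-d}{3n} - 1 < w \leq 4\sqrt{s\log|\pi|} + 4$, so $s - d < 12n\sqrt{s\log|\pi|} + 15n$; together with $d \leq 2\sqrt{s\log|\pi|} + 2$ this gives (using $n\geq 1$) an inequality of the shape $\tfrac{s}{n} < 14\sqrt{s\log|\pi|} + 17$. In the remaining case $s\geq 34n$ the additive $17$ is absorbed into $\tfrac{s}{2n}$, yielding $\tfrac{s}{2n} < 14\sqrt{s\log|\pi|}$, hence $\tfrac{s^2}{784n^2} < s\log|\pi|$, i.e. $\log|\pi| > \tfrac{s}{784 n^2}$, and therefore $||\pi|| \geq \log_2|\pi| > \tfrac{s}{784n^2} > \varepsilon s/n^2$ for $\varepsilon \coloneq 1/1000$ and $n_0$ chosen large enough (absorbing this and all earlier ``$n$ large enough'' requirements). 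This establishes $\operatorname{AM}_{\varepsilon,n_0}$ in $\SOT$, and by $\forall\Sigma^b_1$-conservativity it is provable in $\PVO$.

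I expect the main obstacle to be the step of transferring the block‑width lower bound from refutations of $\Reff_s(\varphi)$ to the restricted refutation $\pi_{\restriction\rho}$, i.e. making rigorous inside $\SOT$ that a $d$-disabling restriction merely removes $d$ blocks — so that the Prover--Delayer argument runs on the remaining $s-d$ blocks with the deleted ones isolated in $D$ — rather than breaking the $\Reff$ encoding or the layered structure the reservation strategy relies on. Once this compatibility is pinned down, what is left is the (somewhat tedious but entirely routine) bookkeeping of the arithmetic with $\sqrt{\cdot}$, floors and the additive slack, together with the trivial/non‑trivial case split on $s$ versus $n$, all of which stays comfortably within the fragment of arithmetic available to $\PVO$ given that $\log|\pi| \in \Log$.
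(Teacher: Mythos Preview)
Your proposal is correct and follows essentially the same route as the paper: apply \Cref{lemma:restriction-in-PV} with $c=4$, then the formalized block-width lower bound \Cref{lemma:prover-delayer-in-PV} to the restricted refutation, and combine the two inequalities to extract $||\pi|| > \varepsilon s/n^2$. The one place where the paper is cleaner is precisely the obstacle you flag: rather than rerunning the Prover--Delayer argument with the deleted blocks parked in $D$, the paper simply extends $\rho$ into a restriction $\rho'$ that additionally enables all untouched blocks and fixes the pointers that dangle into disabled blocks, so that $\Reff_s(\varphi)_{\restriction\rho'}$ is literally an instance of $\Reff_{s-d}(\varphi)$ and \Cref{lemma:prover-delayer-in-PV} applies verbatim.
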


 \begin{proof}
    Let $\varepsilon \in \mathbb{Q}$ and $n_0 \in \bbN$ be universal constants that can be computed from the rest of the argument.
    Working in $\PVO$, let $\pi$ be a Resolution refutation of $\Reff_s(\varphi)$ for a CNF formula $\varphi$ over $n$ variables, with $n \geq n_0$. We apply \Cref{lemma:restriction-in-PV} to conclude that $\PVO\vdash \operatorname{AM-Restriction}_4$, from which we get in $\PVO$ that there is some restriction $\rho$ such that $\pi_{\restriction \rho}$ is a Resolution refutation of $\Reff_s(\varphi)_{\restriction\rho}$ and the block-width of $\pi_{\restriction \rho}$ is at most $4\cdot\sqrt{s\log|\pi|}$. Furthermore, $\rho$ is $d$-disabling for some $d \leq 2 \lceil \sqrt{s\log|\pi|} \rceil$.
    
    Before proceeding, extend $\rho$ into a restriction $\rho'$ that further enables all blocks not touched by $\rho$ and sets the values of pointers that are pointing at disabled blocks. In this way, $\Reff_s(\varphi)_{\restriction \rho'}$ becomes exactly $\Reff_{s-d}(\varphi)$, as $\rho$ was $d$-disabling, and $\pi_{\restriction \rho'}$ is a refutation of $\Reff_{s-d}(\varphi)$.

    Assume now that $\varphi$ is unsatisfiable, or else we are already done. By \Cref{lemma:prover-delayer-in-PV} we have that $\PVO \vdash \operatorname{AM-WLB}$. Since we assume $\varphi$ is unsatisfiable, applying this block-width lower bound to $\pi_{\restriction\rho'}$ we conclude that there exists a clause in $\pi_{\restriction\rho'}$ of block-width at least $1/3 (\lfloor (s-d)/n\rfloor - 1)$. It then holds that 
    \begin{equation}
    1/3 (\lfloor (s-d)/n\rfloor - 1) \leq 4\cdot \Bigl\lceil{\sqrt{s\log|\pi|}}\Bigr\rceil\,,
    \end{equation}
    and it is not hard to see that $\PVO$ can derive from this inequality that $||\pi|| > \varepsilon s/n^2$ for some small enough $\varepsilon$ and large enough $n$.
 \end{proof}

\section{Pudlák's upper bound in Resolution}
\label{sec:Pudlak-UB}

The fact that the $\Reff_s(\varphi)$ formulas admit short Resolution refutations whenever $\varphi$ is satisfiable is originally due to Pudlák \cite[Theorem 4.1]{Pudlak03}. However, technically speaking, our definition of $\Reff$ is in the \emph{relativized} form as used by Atserias and Müller, who crucially showed that the upper bound still works even in the presence of the disabling variables \cite[Lemma 11]{AM20}, and the proof goes through for the encoding with pointer variables in binary too \cite[Lemma 2.1.i]{dRGNPRS21}.

Our goal is to show that the upper bound construction can be proven correct in Resolution itself. The main technicality to overcome is that we want to show that from a satisfying assignment $\alpha$ to a CNF formula $\varphi$, one can construct not only a Resolution refutation of $\Reff(\varphi)$, but that this refutation can itself be encoded as a satisfying assignment to the formula $\Reff(\Reff(\varphi))$, and that the correctness of this can be certified in Resolution.  

Let $P(\varphi, \alpha, s)$ denote the circuit that constructs a refutation of $\Reff_s(\varphi)$ given the satisfying assignment~$\alpha$. Intuitively, we want to derive a propositional formula that states
\begin{equation}
\left({\Satf(\varphi, \alpha) \land \pi =P(\varphi, \alpha, s)}\right) \to \Reff(\Reff_s(\varphi), \pi)\,.
\end{equation}
Our goal is to refute the negation of this formula in Resolution, which will not be a CNF formula. However, since Resolution is implicationally complete, the implication can be obtained by showing instead that there is a derivation of the form
\begin{equation}
  \Satf(\varphi, \alpha) \land \pi =P(\varphi, \alpha, s)  \vdash_{\Res}  \Reff(\Reff_s(\varphi), \pi)\,.\label{eq:pneg-dnf}
\end{equation}
The main obstacle then will be to show that the circuit $P$ can be presented in a simple way that can also be handled by Resolution.

In \Cref{subsec:natural-lang-description} we describe the construction in natural language, in a style that streamlines the previous existing proofs. In \Cref{subsec:Pudlak-as-ckt} we describe the circuit $P$, which turns out to be a very low depth circuit that Resolution will be able to reason about. Finally, \Cref{subsec:correct-Res} puts it together to derive the correctness of the construction in Resolution.

\subsection{Description of the construction}
\label{subsec:natural-lang-description}
The first step in the construction of the upper bound is to describe the conjunct $\pi = P(\varphi, \alpha, s)$ in the formula~(\ref{eq:pneg-dnf}). The circuit $P$ relates $\varphi$ and $\alpha$, which are variables of $\Satf(\varphi, \alpha)$, to the variables $\pi$ of $\Reff(\Reff_s(\varphi))$. Before we describe the circuit $P$, we give a natural language explanation of what the upper bound construction is doing.

We remark that in the following description, 
the formula $\varphi$ and 
a satisfying assignment $\alpha$ 
have 
been fixed. 
This means that there are no longer any $\alit^A_\ell$ variables encoding the formula (see \Cref{def:satf}). 
To simplify notation, we sometimes view $\alpha$ as a function over literals so that $\alpha(x_i) = \alpha_i$ and $\alpha(\neg x_i) = \neg \alpha_i$.
 
\paragraph{General structure of the construction.}
The goal of the refutation is to derive, for every $B \in [s]$, the clause
\[ \tag{$\True(B, \alpha)$} \True(B, \alpha) \coloneq \neg \enable^B \lor \bigvee_{\substack{\ell \in \Lit_n\\\alpha(\ell) = 1}} \lit^B_\ell \,, \]
encoding that if block $B$ is enabled then it contains a clause that is satisfied by $\alpha$.
Each $\True(B, \alpha)$ is derived from $\True(C, \alpha)$ for all $C \in [B-1]$, and 
from $\True(s, \alpha)$ one can easily derive the empty clause (by resolving with the unit clauses $\enable^s$ and $\neg \lit^s_\ell$ for all $\ell$ such that $\alpha(\ell)=1$).

We will derive the clause $\True(B, \alpha)$ by first deriving the clauses
\begin{equation}
 \derived^B \lor \True(B, \alpha) \qquad \text{ and } \qquad  \neg \derived^B \lor \True(B, \alpha)
\end{equation}
and then applying one Resolution step.

\paragraph{The derivation of $\derived^B \lor \True(B, \alpha)$.}

For every $A \in [m]$ pick an arbitrary
literal $\ell$ in clause $A$ of $\varphi$ which is made true by~$\alpha$. By weakening the axiom (\ref{axiom:must-appear-after-weak}, $B, A, \ell$), which is
$\neg \enable^B \lor \neg \weak^B_A  \lor \lit^B_\ell$
since the variables $\alit^A_\ell$ are no longer present once a formula has been fixed,
we obtain the clause
\[\neg \enable^B \lor \neg \weak^B_A  \lor \bigvee_{\substack{\ell \in \Lit_n\\\alpha(\ell) = 1}}  \lit^B_\ell \label{D-2} \tag{$L_1(B,A)$} \,.\]
Now cut successively (\ref{axiom:must-weaken}, $B$) with (\ref{D-2}) for $A\in [m]$ to get $\derived^B \lor \True(B, \alpha)$. That is, there will be $m$ lines $L_2(B,1)$ to $L_2(B,m)$, each of the form
\[ \neg \enable^B \lor \derived^B \lor \bigvee_{i = A+1}^m \weak^B_i \lor 
\bigvee_{\substack{\ell \in \Lit_n\\\alpha(\ell) = 1}}  
\lit^B_\ell 
\label{Li}\tag{$L_2(B,A)$} \,,\]
such that $L_2(B,A+1)$ is obtained by resolving $L_2(B,A)$ with $L_1(B,A+1)$
over variable $\weak^B_{A+1}$. Note that $L_2(B,m)$ is precisely $\derived^B \lor \True(B, \alpha)$.

It is not hard to see that, for each $B\in [s]$, this derivation of $\derived^B \lor \True(B, \alpha)$ consists of $\Theta(m)$ resolution steps. 

\paragraph{The derivation of $\neg \derived^B \lor \True(B, \alpha)$.}
We assume that for every $C \in [B-1]$ we have derived $\operatorname{True}(C, \alpha)$.
We first carry out the following derivation for every $C \in [B - 1]$ and $i \in [n]$, leading to auxiliary clauses (\ref{R-2}) defined below.

Let us suppose that $\alpha(x_i) = 0$ (the dual case is analogous but following the right-hand side pointers). 
For $j\in [n]$, let $\ell_j = x_j$ if 
$\alpha(x_j) = 1$ and 
$\ell_j = \lnot x_j$ if 
$\alpha(x_j) = 0$. Note that for each $j\in [n]$ we have the axiom (\ref{axiom:must-appear-after-res-left}, $B,C,i,\ell_j$), which is
\[ \neg \enable^B \lor \res^B_{x_i} \lor \neg \lpoint^B_C \lor \neg \lit^C_{\ell_j} \lor \lit^B_{\ell_j} 
\tag{$\axiomMEARL(B, C, i, j)$}
\label{R-must-appear-left} \,.\]
Successively resolving $\True(C, \alpha)$ with (\ref{R-must-appear-left}) 
over variable $\lit^C_{\ell_j}$
for 
$j\in[n]$
we get
\[ \neg \enable^B \lor \neg \enable^C \lor \neg \res^B_{x_i} \lor \neg \lpoint^B_C \lor \bigvee_{\substack{\ell \in \Lit_n\\\alpha(\ell) = 1}}  \lit^B_\ell \label{R-1} \tag{$R_1(B, C, i)$} \,.\]
This consists of $n$ lines,
which we refer to as $R_1(B, C, i,j)$ for $j\in [n]$. Note that $R_1(B, C, i,n)$ refers to the same line (\ref{R-1}). 
We cut (\ref{R-1}) over variable $\enable^{C}$ with the clause $\axiomMEL \coloneqq \lnot \enable^B \lor \lnot \lpoint^B_{C} \lor \enable^{C}$, which is (\ref{axiom:must-enable-left}$, B, C$), to obtain 
\[ \neg \enable^B \lor \neg \res^B_{x_i} \lor \neg \lpoint^B_C \lor \bigvee_{\substack{\ell \in \Lit_n\\\alpha(\ell) = 1}}  \lit^B_\ell  \label{R-2} \tag{$R_2(B, C, i)$} \,.\]
We now cut (\ref{axiom:must-resolve}, $B$) with (\ref{R-2}) 
over variable $\res^B_{x_i}$
for every $i \in [n]$ to get
\[ \neg \enable^B \lor \neg \derived^B \lor \neg \lpoint^B_C \lor \bigvee_{\substack{\ell \in \Lit_n\\\alpha(\ell) = 1}}  \lit^B_\ell  \label{R-3} \tag{$R_3(B, C, i)$} \,.\]
Finally cutting over the variable $\lpoint^B_{C}$
the clause $\axiomMPL \coloneqq \lnot \enable^B \lor \lnot \derived^B \lor \bigvee_{C \in [B-1]} \lpoint^B_{C}$, which is 
(\ref{axiom:must-point-left}, $B$), with (\ref{R-3}) for each $C \in [B-1]$, we get
\[ \neg \enable^B \lor \neg \derived^B \lor \bigvee_{\substack{\ell \in \Lit_n\\\alpha(\ell) = 1}}  \lit^B_\ell \,,
 \label{R-4} \tag{$R_4(B)$}
\]
which is exactly $\neg \derived^B \lor \True(B, \alpha)$. 
This step consists of $B-1$ lines, which we denote by $\axiomMPLC{C}$ for  $C \in [B-1]$, where the last line $\axiomMPLC{B-1}$ is (\ref{R-4}).

We note that, for each $B\in [s]$, we derive $\neg \derived^B \lor \True(B, \alpha)$ in $\Theta(sn^2)$ resolution steps. 

\paragraph{Contradiction from $\True(s)$.}
Once we have derived $\True(s)$, we resolve it with (\ref{axiom:root-enabled}) to get 
\begin{equation}
    \bigvee_{\substack{\ell \in \Lit_n\\\alpha(\ell) = 1}} \lit^B_{\ell}\,,
    \end{equation}
which we then resolve with the axioms (\ref{axiom:root-empty}, $\ell$) for the $n$ literals $\ell$ such that $\alpha(\ell) = 1$ to get the empty clause.

We observe that the total number of clauses in this resolution refutation of $\Reff_s(\varphi)$, for a satisfiable formula $\varphi$, is $\Theta(s(m+sn^2))$.

\subsection{The construction as a low-depth circuit}
\label{subsec:Pudlak-as-ckt}
The expression $\pi = P(\varphi, \alpha, s)$ in the formula (\ref{eq:pneg-dnf}) is a short-hand for a conjunction of clauses describing the upper bound construction for refuting $\Reff_s(\varphi)$ as presented in Section~\ref{subsec:natural-lang-description}.
As noted above, the number of clauses in this refutation is $\tau = \Theta(s(m+sn^2))$, where $n$ is the number of variables and $m$ is the number of clauses of $\varphi$.
We only define $\pi = P(\varphi, \alpha, s)$ for $\pi$ being a proof of length $t\geq \tau$, as we will only consider this short-hand in this parameter regime. This subsection is dedicated to the description of the conjunction of clauses encoding $\pi = P(\varphi, \alpha, s)$.
Each clause~$\calC$ that appears in the construction, including the axioms of $\Reff$ that are used in cuts,
is mapped to some $\calB \in [\tau]$ which corresponds to its position in the proof.
Since this a one-to-one mapping, we sometimes abuse notation and identify $\calC$ with $\calB$. 

The key observation is that each variable of $\pi$ is a function of at most two variables of $\varphi$ and $\alpha$,
and thus the construction can be expressed as a depth-$2$ circuit. This is because if step $\calB$ of the construction is, say, to derive a clause $\calC$ (over variables of $\Reff_s(\varphi)$), then we will add the unit clauses
\begin{equation}
    \enable^\calB   \land \bigwedge_{z\in \calC} \lit_z^\calB \land \bigwedge_{z \not\in \calC}\neg \lit_z^\calB \,,
    \label{eq:setting-enbl-and-lit}
\end{equation}
encoding that clause $\calC$ is the clause at step $\calB$ in the proof.
If the construction obtains this clause by a Resolution step, we add the conjunct $\derived^\calB$ and if it is by a weakening of an axiom, we add the conjunct $\neg \derived^\calB$. Similarly, we include appropriate unit clauses of the variables of type $\lpoint$, $\rpoint$ and $\weak$. 
For the most part, the construction is a ready-made template that only depends on $\alpha$ and $\varphi$ in a few crucial points. In this case, we will include clauses that encode this dependence.
For example, suppose that at step~$\calB$ we derive the clause 
\begin{equation}
\bigvee_{\substack{\ell \in \Lit_n\\\alpha(\ell) = 1}} \lit^B_\ell
\end{equation}
that does depend on $\alpha$. Here $B$ is a block of the inner $\Reff$ formula, while this clause itself will be instantiated as a block $\calB$ of the outer $\Reff$. We then add, for $i\in [n]$, the clauses encoding
\begin{equation}
    \lit^\calB_{\lit^B_{x_i}} \leftrightarrow \alpha_i \qquad \text { and } \qquad \lit^\calB_{\lit^B_{\neg x_i}} \leftrightarrow \neg \alpha_i\,.
    \label{eq:litalpha}
\end{equation}

We first describe in more detail how the variables related to the lines (\ref{D-2}) and (\ref{Li}) are defined. Line (\ref{D-2}) is the only one that depends on both $\alpha$ and $\varphi$; the rest of the construction depends solely on~$\alpha$.
For subsequent lines in the construction, we simply describe the parts that depend on $\alpha$ as the others are completely determined by the ready-made template, as explained above.

Fix $A\in [m]$, and let $\Lblock \in [t]$ denote
the block corresponding to (\ref{D-2}).
We include the unit clauses
\begin{equation}
    \enable^{\Lblock} \land \lnot \derived^{\Lblock} 
    \label{L1i-detail}
\end{equation}
determining that this line is enabled and is not derived. We also include unit clauses $\lnot \res^{\Lblock}_{z}$, $\lnot \lpoint^{\Lblock}_{\calB}$ and $\lnot \rpoint^{\Lblock}_{\calB}$ for all $\calB \in [t]$ (since $\Lblock$ is not obtained by a Resolution step).
We set the $\lit$ variables as explained above, that is,
we have clauses encoding, for $i\in[n]$,
\begin{equation}
    \lit^{\Lblock}_{\lit^B_{x_i}}  \leftrightarrow \alpha_i \qquad \text{ and } \qquad \lit^{\Lblock}_{\lit^B_{\neg x_i}}  \leftrightarrow \neg\alpha_i \,,
\end{equation}
the unit clauses
\begin{equation}
\lit^{\Lblock}_{\neg \enable^B} \land \lit^{\Lblock}_{\neg \weak_A^B} \,,
\end{equation}
and the unit clauses $\lnot \lit^{\Lblock}_{\ell'}$ encoding that no other literal $\ell'$ (not appearing above) is present in the clause.
It remains to define the $\weak^{\Lblock}$
variables.
Note that (\ref{D-2})
can be obtained by weakening any of the axioms (\ref{axiom:must-appear-after-weak}, $B, A, \ell$), for any $\ell$ that appears in clause $A$ of $\varphi$ and that is made true by~$\alpha$. 
We therefore set, for every $i\in[n]$, 
\begin{equation}
    \weak^{\Lblock}_{(\ref{axiom:must-appear-after-weak}, B, A, x_i)}  \leftrightarrow \alit^A_{x_i} \land \alpha_i \qquad \text{ and } \qquad 
    \weak^{\Lblock}_{(\ref{axiom:must-appear-after-weak}, B, A, \neg x_i)}  \leftrightarrow \alit^A_{\neg x_i} \land \neg\alpha_i \,.
    \label{lit-and-alpha}
\end{equation}

Here the variables $\alit^A_{x_i}$ and $\alit^A_{\neg x_i}$ are part of the $\Satf(\varphi, \alpha)$ formula (as in \Cref{def:satf}), and determine which literals appear in what clauses of $\varphi$.
Observe that we are technically establishing that the clause (\ref{D-2}) will have many weakening pointers. While this is somewhat unusual in real Resolution refutations, the refutation is still sound and none of the axioms in \Cref{def:Reff-formulas} are violated. The advantage is that we do not have to establish which is the, say, first satisfied literal of every axiom. Even though (potentially) having multiple weakening pointers is not strictly necessary, it 
simplifies the clauses needed to express $\pi = P(\varphi, \alpha, s)$.

Now fix $A\in [m]$ and let $\Lblock$ be the block corresponding to (\ref{Li}). We add the unit clauses
\begin{equation}
    \enable^{\Lblock} \land \derived^{\Lblock} \land \res^{\Lblock}_{\weak^B_{A}}
    \land \bigwedge_{z \neq \weak^B_{A}} \lnot \res^{\Lblock}_{z}
    \label{Li-detail}
\end{equation}
that determine that this line is enabled, it is derived and obtained by resolving over variable $\weak^B_{A}$. We also include unit clauses enforcing all $\weak^{\Lblock}$ variables to be $0$ (since $\Lblock$ is not a weakening of any axiom). The $\lit^{\Lblock}$ variables are encoded as explained above.
As for the pointers, we include the unit clauses
\begin{equation}
    \rpoint^{\Lblock}_{L_1(B,A)} \land \lpoint^{\Lblock}_{L_2(B,A-1)} \land \bigwedge_{\calB \neq L_1(B,A)} \lnot \rpoint^{\Lblock}_{\calB}\land \bigwedge_{\calB \neq L_2(B,A-1)} \lnot \lpoint^{\Lblock}_{\calB}\,.
    \label{Li-more-detail}
\end{equation}

For the second set of clauses, used in the derivation of $\neg \derived^B \lor \True(B, \alpha)$, we describe the parts that depend on $\alpha$.
These can be split into two groups, the parts that depend only on whether we consider $\lpoint$ or $\rpoint$, that is, only on $\alpha_i$, and those that depend also on $\ell_j$.
We start with the former.
If in the construction we described above a clause on a line corresponding to, say, block $\calR$
contains the literal $\neg \lpoint^B_{C}$ for the case when $\alpha_i = 0$, then we
include the clauses encoding
\begin{equation}
    \lit^{\calR}_{\lnot \rpoint^B_{C}}  \leftrightarrow \alpha_i \qquad \text{ and } \qquad \lit^{\calR}_{\lnot \lpoint^B_{C}}  \leftrightarrow \neg\alpha_i \,, \label{eq:litRpoint}
\end{equation}
so that the literal $\lnot \rpoint^B_{C}$ is present iff $\alpha(x_i) = 1$ and 
$\lnot \lpoint^B_{C}$ is present if and only if $\alpha(x_i) = 0$. 
Similarly if it contains the literal $ \lpoint^B_{C}$, we include $\lit^{\calR}_{ \rpoint^B_{C}}  \leftrightarrow \alpha_i$
and $\lit^{\calR}_{ \lpoint^B_{C}}  \leftrightarrow \neg\alpha_i$.
For $\calR = \axiomMPLC{C}$ (one of the intermediate steps to derive \ref{R-4}), we include clauses
\begin{equation}
    \res^{\calR}_{\rpoint^B_C} \leftrightarrow \alpha_i \qquad \text{ and } \qquad \res^{\calR}_{\lpoint^B_C} \leftrightarrow \neg\alpha_i \,,
    \label{eq:resRpoint}
\end{equation}
encoding that the resolved variable is either 
$\rpoint^B_C$ or $\lpoint^B_C$, depending on $\alpha_i$. For $\calR = \axiomMEL$ we include
\begin{equation}
    \weak^{\calR}_{(\ref{axiom:must-enable-right}, B, C)} \leftrightarrow \alpha_i \qquad \text{ and } \qquad 
    \weak^{\calR}_{(\ref{axiom:must-enable-left}, B, C)} \leftrightarrow \neg\alpha_i \,, \label{eq:weakRpoint}
\end{equation}
and for $\calR = \axiomMPL$ we include
\begin{equation}
    \weak^{\calR}_{(\ref{axiom:must-point-right}, B)} \leftrightarrow \alpha_i \qquad \text{ and } \qquad 
    \weak^{\calR}_{({\ref{axiom:must-point-left}}, B)} \leftrightarrow \neg\alpha_i \,,
    \label{eq:weakRRef56}
\end{equation}
encoding, in each case, which axiom $\calR$ is a weakening of.

We now move to the parts that (also) depend on $\ell_j$.
If the line corresponding to block $\calR$ contains $\lit^{B'}_{\ell_j}$ for some block ${B'}$
we include 
 \begin{equation}
    \lit^{\calR}_{\lit^{B'}_{x_j}}
    \leftrightarrow \alpha_j  \qquad \text{ and } \qquad 
    \lit^{\calR}_{\lit^{B'}_{\neg x_j}}\leftrightarrow \neg \alpha_j \,.
\end{equation}
Similarly, if it contains $\neg \lit^{B'}_{\ell_j}$ for some block ${B'}$
we include $\lit^{\calR}_{\neg \lit^{B'}_{x_j}}\leftrightarrow \alpha_j $ and
$\lit^{\calR}_{\neg \lit^{B'}_{\neg x_j}}\leftrightarrow \neg \alpha_j $.
If $\calR$ is a line 
$R_1(B,C,i,j)$ (one of the intermediate steps to derive \ref{R-1}) that is obtained by resolving over variable $\lit_{\ell_j}^C$, we include the clauses
 \begin{equation}
    \res^{\calR}_{\lit_{x_j}^C}  \leftrightarrow \alpha_j \qquad \text{ and } \qquad \res^{\calR}_{\lit_{\neg x_j}^C}  \leftrightarrow \neg\alpha_j \,.
    \label{eq:resRlit}
\end{equation}
Finally, for $\calR$ being \ref{R-must-appear-left}, we include
 \begin{equation}
    \weak^{\calR}_{(\ref{axiom:must-appear-after-res-right}, B,C,i,x_j)}  \leftrightarrow \alpha_i \land \alpha_j \,,
    \qquad 
    \text{  } \qquad \weak^{\calR}_{(\ref{axiom:must-appear-after-res-left}, B,C,i, x_j)}  \leftrightarrow \neg\alpha_i \land \alpha_j \,,
    \label{eq:weakR2alphas1}
\end{equation}
 \begin{equation}
    \weak^{\calR}_{(\ref{axiom:must-appear-after-res-right}, B,C,i,\neg x_j)}  \leftrightarrow \alpha_i  \land \neg\alpha_j \,,
    \qquad \text{ and } \qquad \weak^{\calR}_{(\ref{axiom:must-appear-after-res-left}, B,C,i,\neg x_j)}  \leftrightarrow \neg\alpha_i  \land \neg\alpha_j \,.
    \label{R-more-detail-last}
\end{equation}

We have thus far described how we assign variable referring to blocks $\calB$ of $\pi$ for $\calB \leq \tau$ (where we recall $\tau = \Theta(s(m+sn^2))$ is the number of clauses in the refutation of $\Reff_s(\varphi)$ described in Section~\ref{subsec:natural-lang-description}). 
If $t>\tau$, we can extend the construction easily by copying all
clauses on block $\tau$ to the last block $t$ (to make sure we satisfy that the last block is enabled and contains the empty clause, and is derived from the correct previous clauses) and for all other $\calB \in [t-1]$ with $\calB > \tau$ we include unit clauses setting all variables to $0$, in particular, we include clauses $\neg \enable^{\calB}$ encoding that these blocks are not used in the refutations described by $P(\varphi, \alpha, s)$.

We denote by $\pi = P(\varphi, \alpha, s)$ all the conjuncts describing the refutation of $\Reff_s(\varphi)$ as an assignment to the outer $\Reff$ formula. 

\subsection{Correctness of the construction in Resolution}
\label{subsec:correct-Res}



Before we formally show that
the construction presented above is provably correct in Resolution,
we need to explicitly describe the formula
$\Reff_t(\Reff_s(\varphi), \pi)$.
This formula states that $\pi$ is a valid
Resolution refutation of the formula $\Reff_s(\varphi)$, stating that $\varphi$
has a size-$s$ Resolution refutation.
Recall that the variables of a formula $\Reff_t(\psi, \pi)$
are the variables describing $\pi$ and those describing $\psi$.
When $\psi = \Reff_s(\varphi)$,
most of the variables of $\psi$ are
already set, and the only remaining ones
are the variables describing $\varphi$.
These variables determine which clauses of
$\Reff_s(\varphi)$ exist,
that is, they indicates which of the (\ref{axiom:must-appear-after-weak}) axioms 
are present in $\Reff_s(\varphi)$. The other axioms describing $\Reff_s(\varphi)$ (see \Cref{def:Reff-formulas}) do not depend on $\varphi$.

Recall that the formula $\varphi$ has $n$ variables and $m$ clauses, and that the literals that appear in each clause are determined by the variables $\alit$,
that is, $\alit^A_\ell$ is set to $1$ if the literal $\ell$ appears in clause $A$ of $\varphi$.
For every literal $\ell\in \Lit_n$ and for every $A\in[m]$, if $\alit^A_\ell$ is set to $1$
then the axiom $( \enable^B \land \weak^B_A ) \to \lit^B_\ell $,
which is (\ref{axiom:must-appear-after-weak}) with $\alit^A_\ell = 1$, 
appears in $\Reff_s(\varphi)$,
and if $\alit^A_\ell$ is set to $0$ then it does not appears in $\Reff_s(\varphi)$.

In order to write
the propositional formula $\Reff_t(\Reff_s(\varphi), \pi)$, we need to know how many clauses appear in 
$\Reff_s(\varphi)$. This, in turn, depends on
$\varphi$. To deal with this, we include all potential axioms $( \enable^B \land \weak^B_A ) \to \lit^B_\ell $, and we let the variables $\alit$ deactivate the axiom that should not exist, which can be done by enforcing that
no block in $\pi$ can be a weakening of
that axiom. Formally,
let ${(\ref{axiom:must-appear-after-weak}, B, A, \ell)}$ indicate the position in $\Reff_s(\varphi)$ of the potential (\ref{axiom:must-appear-after-weak}) axiom defined over $B, A$ and $\ell$.
For every block $\calB$ of $\pi$,
for every $B\in[s], A\in[m]$ and $\ell\in \Lit_n$,
we include in $\Reff_t(\Reff_s(\varphi), \pi)$ the clause
$\alit^A_\ell \lor \neg \weak^\calB_{(\ref{axiom:must-appear-after-weak}, B, A, \ell)}$.

We are now ready to show that the construction is provably correct in Resolution. 

\begin{theorem}[Pudlák's upper bound in Resolution]
    \label{thm:ub-in-Res}
    For every $n, m, s \in \bbN$, 
    there is some $\tau=\Theta(s(m+sn^2))$ such that for $t\geq \tau$
    there exist uniform polynomial-size Resolution derivations of the form
    \[ \Satf(\varphi, \alpha) \land \pi =P(\varphi, \alpha, s)  \vdash_{\Res}  \Reff_t(\Reff_s(\varphi), \pi) \,,\]
    where the $\Satf$ formula is for CNF formulas with $n$ variables and $m$ clauses, and $\pi = P(\varphi, \alpha, s)$ stands for the conjunction of clauses describing the upper bound construction, as per \Cref{subsec:Pudlak-as-ckt}. Moreover, the derivations can be described uniformly in polynomial time.
\end{theorem}

\begin{proof}
Let $\tau = \Theta(s(m+sn^2))$ be the number of clauses in the refutation of $\Reff_s(\varphi)$ described in Section~\ref{subsec:natural-lang-description}, and let $\mathfrak{C}$ denote the set of clauses in the formula $\Reff_t(\Reff_s(\varphi), \pi)$. 
%
%
The clauses $\pi = P(\varphi, \alpha, s)$ encode the correct computation of the construction, meaning that if $\Satf(\varphi, \alpha)$ holds, then indeed $\pi$ encodes a correct Resolution refutation, so all clauses $c \in \frakC$ are sound inferences: for every $c \in \frakC$, we have $\Satf(\varphi, \alpha) \land \pi =P(\varphi, \alpha, s) \models c$. We only need to verify that there are efficient derivations in Resolution for each $c \in \frakC$.

Observe that the clauses of the form $\alit^A_\ell \lor \neg \weak^\calB_{(\ref{axiom:must-appear-after-weak}, B, A, \ell)}$ are either present in 
$\pi =P(\varphi, \alpha, s)$ (by \ref{lit-and-alpha}), if $\calB$ denotes
the block corresponding to (\ref{D-2}),
or otherwise can be obtained by weakening the unit clause $\neg \weak^\calB_{(\ref{axiom:must-appear-after-weak}, B, A, \ell)}$, which in this case is part of $\pi = P(\varphi, \alpha, s)$.
For all other clauses $c \in \frakC$,
we argue that it is efficiently derivable in Resolution by considering different cases, according 
to which type of axiom (\ref{axiom:must-appear-after-res-left} - \ref{axiom:root-enabled}) 
it is, 
as well as to the block $\calB \in [t]$ it refers to,
where
we say that a clause $c \in \frakC$ \emph{refers to block $\calB \in [t]$} if 
the $\enable$ variable at the beginning of the clause is $\enable^\calB$
(and clauses (\ref{axiom:root-empty}) and
(\ref{axiom:root-enabled}) refer to the root $t$).
For most cases, we show that Resolution can obtain $c$ from only $\pi =P(\varphi, \alpha, s)$ (namely, the clauses in $\Satf(\varphi, \alpha)$ are not necessary for the derivation), either by weakening a clause or by a constant-size derivation.
There are only some clauses $c$ of type (\ref{axiom:must-weaken}) which 
are not semantically implied by 
$\pi =P(\varphi, \alpha, s)$---these, we argue,
can be derived from
$\Satf(\varphi, \alpha) \land \pi =P(\varphi, \alpha, s)$ in Resolution in size $O(n)$.


Recall that the clauses in $\pi = P(\varphi, \alpha, s)$ are such that each variable of $\pi$ is a function of at most two variables of $\varphi$ and $\alpha$. By inspecting the description of the refutation as a circuit in \Cref{subsec:Pudlak-as-ckt}, it is clear that the only variables of $\pi$ that depend on both $\alit$ and $\alpha$ variables are precisely the $\weak^\calL_{(\ref{axiom:must-appear-after-weak}, B, A, \ell)}$ variables in line~(\ref{lit-and-alpha}). All other variables are either enforced to be true or false by unit clauses or depend only on $\alpha$ variables.
We first argue that for all clauses $c \in \mathfrak{C}$ that do not contain the $\weak^\calL_{(\ref{axiom:must-appear-after-weak}, B, A, \ell)}$ variables in line~(\ref{lit-and-alpha}), it holds that
$\pi =P(\varphi, \alpha, s) \vdash_{\Res} c$. In these cases, the argument is very straightforward, and $c$ will follow by weakening or by one application of the Resolution rule followed by weakening.

Before we split into cases according to the axiom type of the clause $c$, 
we make one simple observation regarding the blocks $\calB \in [t]$ which are larger than $\tau$.
Recall that the output of $P$ is a refutation of size $\tau$, but we may set $t$ to be a larger value. In this case, all non-root blocks larger than $\tau$ will simply be disabled.
This implies that any clause $c$ that refers to a block $\calB$ such that $\tau < \calB <t$, it holds that $c$ contains the literal $\neg \enable^{\calB}$ and therefore can be obtained by weakening the unit clause $\neg \enable^{\calB}$, 
which will appear in $\pi =P(\varphi, \alpha, s)$.
We therefore assume $\calB \leq \tau$ or $\calB = t$ (in which case $\pi =P(\varphi, \alpha, s)$ contains the unit clause $\enable^{\calB}$).

\begin{description}
    \item[If $c$ is a clause  (\ref{axiom:root-empty}) or (\ref{axiom:root-enabled}):] that is, either 
    $$ \neg \lit^t_\ell 
        \quad\text{or}\quad \enable^t \,,$$
    then these clauses are present in
    $\pi =P(\varphi, \alpha, s)$. Indeed, since the $\tau$-th clause of the refutation of $\Reff_s(\varphi)$ as presented in Section~\ref{subsec:natural-lang-description} is the empty clause, by (\ref{eq:setting-enbl-and-lit}) we have that the clauses $\neg \lit^\tau_\ell $ and $\enable^\tau$ are present in $\pi =P(\varphi, \alpha, s)$, and therefore so are $\neg \lit^t_\ell $ and $\enable^t$, since we copied such clauses for $t$.
    
    \item[If $c$ is a clause (\ref{axiom:must-point-left})
    or (\ref{axiom:must-point-right}):] that is, either
        $$\left( \enable^{\calB} \land \derived^{\calB} \right) \to \bigvee_{\substack{{\calB}' \in [s] \\ {\calB}' < {\calB}}} \lpoint^{\calB}_{{\calB}'} \quad\text{or}\quad \left( \enable^{\calB} \land \derived^{\calB} \right) \to \bigvee_{\substack{{\calB}' \in [s] \\ {\calB}' < {\calB}}} \rpoint^{\calB}_{{\calB}'} \,,$$
 then we have two cases. If ${\calB}$ corresponds to a clause obtained from a weakening of an axiom, then we have that the unit clause $\neg \derived^{\calB}$ is in $\pi =P(\varphi, \alpha, s)$ (by (\ref{L1i-detail})), and therefore the clause $c$ can be derived by weakening. If ${\calB}$ corresponds to a clause obtained from a derivation step, 
 then $\pi =P(\varphi, \alpha, s)$ contains unit clauses $\lpoint^{\calB}_{{\calB}'}$ and $\rpoint^{\calB}_{{\calB}'}$ for some $\calB'$ (by (\ref{Li-more-detail})) and we again conclude that $c$ can be derived by weakening.

    \item[If $c$ is a clause (\ref{axiom:must-resolve}):] that is,
    $$\left( \enable^{\calB} \land \derived^{\calB} \right ) \to \bigvee_{x} \res^{\calB}_{x}\,,$$
    we have a similar case. Either $\calB$ was not obtained by a derivation (and the clause $\lnot \derived^{\calB}$ is in $\pi =P(\varphi, \alpha, s)$), or  there is an $x$ for which the unit clause $\res^{\calB}_{x}$ is present in $\pi =P(\varphi, \alpha, s)$ or can be derived from two clauses $\res^{\calB}_{x} \lor \alpha_i$ and $\res^{\calB}_{x} \lor \lnot \alpha_i$ which are in $\pi =P(\varphi, \alpha, s)$ (by (\ref{eq:resRpoint})). The clause~$c$ can then be obtained by weakening either $\lnot \derived^{\calB}$ or $\res^{\calB}_{x}$.

\item[If $c$ is a clause (\ref{axiom:must-enable-left}) or (\ref{axiom:must-enable-right}):] that is, either
    $$\left( \enable^{\calB} \land  \lpoint^{\calB}_{{\calB}'} \right) \to \enable^{{\calB}'}  \quad\text{or}\quad
        \left( \enable^{\calB} \land  \rpoint^{\calB}_{{\calB}'} \right) \to \enable^{{\calB}'} \,, $$
    then either $\calB$ was obtained by a derivation step involving the clause $\calB'$, in which case  $\pi =P(\varphi, \alpha, s)$ contains the unit clause $\enable^{{\calB}'}$, or it was not,
    in which case  $\pi =P(\varphi, \alpha, s)$ contains the unit clauses $\lnot \lpoint^{\calB}_{{\calB}'}$ and $\lnot \rpoint^{\calB}_{{\calB}'}$. Either way, $c$ can be derived by a weakening step.

    \item[If $c$ is a clause (\ref{axiom:must-appear-after-res-left})
    or (\ref{axiom:must-appear-after-res-right}):] that is, either
         $$\left( \enable^{\calB} \land \res^{\calB}_{x_i} \land \lpoint^{\calB}_{{\calB}'} \land \lit^{{\calB}'}_\ell \right) \to \lit^{{\calB}}_\ell  
        \quad\text{or}\quad \left( \enable^{\calB} \land \res^{\calB}_{x_i} \land \rpoint^{\calB}_{{\calB}'} \land \lit^{{\calB}'}_\ell \right) \to \lit^{{\calB}}_\ell  \,,$$
    then, in the only case when $c$ cannot be obtained by weakening, it can be derived by first resolving over $\alpha_i$ the clauses $\neg \lit^{\calB'}_\ell \lor \alpha_i$ and $ \lit^{\calB}_\ell \lor \neg \alpha_i$ (or $\neg \lit^{\calB'}_\ell \lor \neg \alpha_i$ and $ \lit^{\calB}_\ell \lor \alpha_i$), to obtain 
    $\neg \lit^{\calB'}_\ell \lor \lit^{\calB}_\ell$, and then weakening.
    
    \item[If $c$ is a clause (\ref{axiom:must-appear-after-weak}) that does not contain $\weak^{\Lblock}_{(\ref{axiom:must-appear-after-weak}, B, A, \ell)}$:] that is, 
     $$\left( \enable^{\calB} \land \weak^{\calB}_A \land \alit^A_{\ell}\right) \to \lit^{\calB}_\ell \,, $$
     then either it can be obtained by weakening, or it can be derived by  resolving a clause encoding $\lit^{\calB}_\ell \leftrightarrow \alpha$ or $\lit^{\calB}_\ell \leftrightarrow \lnot \alpha$ of the type (\ref{eq:litRpoint}) with a clause encoding $\weak^{\calB}_A \leftrightarrow \alpha_i$ or $\weak^{\calB}_A \leftrightarrow \lnot \alpha_i$ of the type (\ref{eq:weakRpoint}) or (\ref{eq:weakRRef56}), and then weakening.
     
     \item[If $c$ is a clause (\ref{axiom:must-weaken}) that does not contain $\weak^{\Lblock}_{(\ref{axiom:must-appear-after-weak}, B, A, \ell)}$:] that is, 
     $$\left( \enable^{\calB} \land  \neg \derived^{\calB} \right) \to \bigvee_{A\in[m]} \weak^{\calB}_A  \,,$$
     then either it can be obtained by weakening, or it can be obtained by resolving a clause encoding $\weak^{\calB}_A \leftrightarrow \alpha_i$ with a clause encoding $\weak^{\calB}_A \leftrightarrow \lnot \alpha_i$, for two clauses of the type (\ref{eq:weakRpoint}) or of the type (\ref{eq:weakRRef56}), and then weakening.
\end{description}

It remains to argue that if $c$ is one of the axiom (\ref{axiom:must-appear-after-weak}) or (\ref{axiom:must-weaken}) that contains the variable $\weak^{\Lblock}_{(\ref{axiom:must-appear-after-weak}, B, A, \ell)}$ from (\ref{lit-and-alpha}), 
Resolution can derive $c$ efficiently from $\Satf(\varphi, \alpha) \land \pi =P(\varphi, \alpha, s)$.
We first argue that Resolution can efficiently derive axioms $c$ of type (\ref{axiom:must-appear-after-weak}). Observe that the only case when $c$ cannot be obtained by weakening is when $c$ is
\begin{equation}
    \left( \enable^{\Lblock} \land \weak^{\Lblock}_{(\ref{axiom:must-appear-after-weak}, B, A, \ell)} \right) \to \lit^{\Lblock}_{\ell'} \,,
\end{equation}
for $\ell' = \lit^B_{\ell}$. Indeed, 
$\alit^{(\ref{axiom:must-appear-after-weak}, B, A, \ell)}_{\ell'} = 0$ for all $\ell'$ that are not
$\neg \enable^{B}, \neg \weak^{B}_{A}$ or 
$\lit^{B}_{\ell}$ (since the axiom $(\ref{axiom:must-appear-after-weak}, B, A, \ell)$ without the $\alit^{A}_{\ell}$ variable is $( \enable^{B} \land \weak^{B}_{A} ) \to \lit^{B}_{\ell}$) and 
$\pi =P(\varphi, \alpha, s)$ contains the
clauses $\lit^\calL_{\neg \enable^B}$ and $\lit^\calL_{\neg \weak^B_A}$,
since $\Lblock$ corresponds to clause (\ref{D-2}),
that is,
\begin{equation}
    \neg \enable^B \lor \neg \weak^B_A  \lor \bigvee_{\substack{\ell \in \Lit_n\\\alpha(\ell) = 1}}  \lit^B_\ell \,. \label{axiomRef3}
\end{equation}
Now when $\ell' = \lit^B_{\ell}$, we can derive the clause $\neg \weak^{\Lblock}_{(\ref{axiom:must-appear-after-weak}, B, A, \ell)} \lor \lit^{\Lblock}_{\ell'}$ by 
resolving a clause of (\ref{eq:litalpha}) 
with a clause of (\ref{lit-and-alpha}),
and then (\ref{axiomRef3}) can be obtained by weakening.

Finally, we argue that if $c$ is one of the axiom (\ref{axiom:must-weaken}) that depend on (\ref{lit-and-alpha}), Resolution can derive it in $O(n)$ steps. Recall that, by (\ref{lit-and-alpha}), for all $i\in [n]$ and $A\in[m]$, the formula $\pi = P(\varphi, \alpha, s)$  contains the clause
\begin{equation}
    \weak^{\Lblock}_{(\ref{axiom:must-appear-after-weak}, B, A, x_j)}  \lor \neg \alit^A_{x_j} \lor \neg \alpha_j
    \label{eq:clause1}
\end{equation} 
and the clause
\begin{equation}
    \weak^{\Lblock}_{(\ref{axiom:must-appear-after-weak}, B, A, \neg x_j)}  \lor \neg \alit^A_{\neg x_j} \lor \alpha_j \,.
    \label{eq:clause2}
\end{equation} 
By resolving the clause (\ref{eq:clause1}) with (\ref{sat-1}, $A, x_j$) and then with (\ref{sat-2}, $A, j$), we obtain the clause 
\begin{equation}
\weak^{\Lblock}_{(\ref{axiom:must-appear-after-weak}, B, A, x_i)}  \lor \sat^A_{x_i}\,.
\label{eq:clause3}
\end{equation} 
Similarly, we can derive 
\begin{equation}
\weak^{\Lblock}_{(\ref{axiom:must-appear-after-weak}, B, A, x_i)}  \lor \sat^A_{\lnot x_i}
\label{eq:clause4}
\end{equation} 
by resolving (\ref{eq:clause2}) with (\ref{sat-1}, $A, \neg x_i$) and then with (\ref{sat-3}, $A, i$).
Resolving (\ref{sat-4}) with each (\ref{eq:clause3}) and (\ref{eq:clause4}) for all $i\in [n]$, we obtain 
\begin{equation}
    \bigvee_{i \in [n]} \left(\weak^{\Lblock}_{(\ref{axiom:must-appear-after-weak}, B, A, x_i)} \lor \weak^{\Lblock}_{(\ref{axiom:must-appear-after-weak}, B, A, \neg x_i)} \right) \,,
\end{equation} 
from which $c$ can be derived by weakening.
This completes the proof.
\end{proof}

\section{Extended Frege proves that automating Resolution is $\NP$-hard}
\label{sec:AM-in-EF}

We now have the lower bound (\Cref{thm:AM-in-PV}) needed to show the $\NP$-hardness of automatability formalized in $\PVO$. This is a $\forall\Sigma^b_1$ sentence, meaning that the existential quantifiers can be witnessed by polynomial-time functions and via Cook's translation turned into propositional formulas with short Extended Frege proofs.

\begin{theorem}[Extraction algorithm in $\EF$]
\label{thm:lb-formal}
    There exists a uniform family of polynomial-size circuits $\{ E_{n,m,s,t} \}_{n,m,s,t \in \bbN}$ such that for every CNF formula $\varphi$ over $n$ variables and $m$ clauses, if $\varphi$ is satisfiable and $\pi$ is a Resolution refutation of $\Reff_s(\varphi)$ for $s \geq n^3$, then the circuit $E_{n,m,s,t}(\varphi, \pi)$ outputs a satisfying assignment for $\varphi$. Furthermore, this is provable in Extended Frege, that is,  there exists a polynomial $\ell(n, m, s, t)$ such that
    \[ \EF \vdash_{\ell(n,m,s,t)} \Reff_{t}(\Reff_s(\varphi), \pi) \to \Satf(\varphi, E_{n,m,s,t}(\varphi, \pi)) \,,\]
    and these proofs can be generated uniformly in polynomial time.
\end{theorem}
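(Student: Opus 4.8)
The plan is to obtain \Cref{thm:lb-formal} as the propositional image of the bounded-arithmetic formalization \Cref{thm:AM-in-PV}, by chaining Buss's witnessing theorem with Cook's propositional translation. Recall that \Cref{thm:AM-in-PV} produces $\varepsilon\in\bbQ$ and $n_0\in\bbN$ with $\PVO\vdash\foAM_{\varepsilon,n_0}$, and that $\foAM_{\varepsilon,n_0}$ is a $\forall\Sigma^b_1$ sentence; in prenex form it reads
\[
\forall\varphi\,\forall n\,\forall s\,\forall\pi\,\exists\alpha\le\varphi\;\Theta(\varphi,n,s,\pi,\alpha),
\]
where $\Theta$ is quantifier-free over $\calL_{\operatorname{BA}}(\PV)$ and states that either one of the hypotheses ($n\ge n_0$, $\operatorname{CNF}(\varphi,n)$, $\foReff(\Reff_s(\varphi),\pi)$) fails, or $\foSatf(\varphi,\alpha)$ holds, or $\|\pi\|>\varepsilon s/n^2$.

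First I would apply Buss's witnessing theorem (\Cref{thm:buss-witnessing}, which holds for $\PVO$ as recalled in \Cref{subsec:prelim-witnessing}) to $\PVO\vdash\forall\varphi\,\forall n\,\forall s\,\forall\pi\,\exists\alpha\,\Theta$, obtaining a $\PV$ function symbol $e$ with $\PVO\vdash\forall\varphi\,\forall n\,\forall s\,\forall\pi\;\Theta(\varphi,n,s,\pi,e(\varphi,n,s,\pi))$. By construction $e$ is an implementation of the extraction algorithm of \Cref{thm:det-algo}. After this substitution the matrix is a quantifier-free $\calL_{\operatorname{BA}}(\PV)$-formula proved universally in $\PVO$, so Cook's correspondence (\Cref{thm:translation}(ii)) yields a polynomial-time function $f$ that, for each grounding of the first-order variables to strings of fixed length, outputs a polynomial-size $\EF$-proof of the propositional translation $\llbracket\Theta(\varphi,n,s,\pi,e(\varphi,n,s,\pi))\rrbracket$.

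The bulk of the remaining work is bookkeeping. Fix $n\ge n_0$, $m$, $s\ge n^3$, and a proof length $t$; ground $\varphi$ to the vector of $\alit$-variables of a CNF with $n$ variables and $m$ clauses (every such vector encodes a legal CNF, so $\llbracket\operatorname{CNF}(\varphi,n)\rrbracket$ simplifies to $\top$), and ground $\pi$ to the $N'=N'(n,m,s,t)$-bit vector of variables of $\Reff_t(\Reff_s(\varphi))$. Under this grounding the terms $\|\pi\|$, $n$, $s$ become constants, so $\llbracket\,\|\pi\|>\varepsilon s/n^2\,\rrbracket$ is one of the constants $\bot$ or $\top$. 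Next I would verify, by the standard unwinding of Cook's translation, that $\llbracket\foReff(\Reff_s(\varphi),\pi)\rrbracket$ is $\EF$-provably equivalent to the native CNF encoding $\Reff_t(\Reff_s(\varphi),\pi)$ of \Cref{def:Reff-formulas} (here it is essential that $\EF$, unlike Resolution, handles negations of CNFs), and that $\llbracket\foSatf(\varphi,e(\varphi,n,s,\pi))\rrbracket$ is $\EF$-provably equivalent to $\Satf(\varphi,E_{n,m,s,t}(\varphi,\pi))$, where $E_{n,m,s,t}$ is the $\P$-uniform polynomial-size Boolean circuit computing $e$ on inputs of these lengths. In the principal case $\llbracket\,\|\pi\|>\varepsilon s/n^2\,\rrbracket=\bot$ — which, since $s\ge n^3$, covers all polynomial $t$ — deleting the $\top$- and $\bot$-disjuncts from the $\EF$-proof produced by $f$ leaves a polynomial-size $\EF$-proof of $\neg\Reff_t(\Reff_s(\varphi),\pi)\lor\Satf(\varphi,E_{n,m,s,t}(\varphi,\pi))$, i.e.\ of $\Reff_t(\Reff_s(\varphi),\pi)\to\Satf(\varphi,E_{n,m,s,t}(\varphi,\pi))$, with $\ell(n,m,s,t)$ the running time of $f$. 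In the remaining (doubly-exponential) regime $\llbracket\,\|\pi\|>\varepsilon s/n^2\,\rrbracket=\top$ the input length already satisfies $N'\ge 2^{\varepsilon n}$, so I would instead take $E_{n,m,s,t}$ to be brute-force $\SAT$-search on $\varphi$, a circuit of size $\poly(n,m)\cdot 2^n\le\poly(N')$, which meets the functional claim; the $\EF$-provability of the implication there reduces, as above, to $\neg\Reff_t(\Reff_s(\varphi),\pi)$ being a tautology, which holds for $t$ below the threshold at which the Atserias--Müller lower bound (\Cref{thm:AM-original}) still forces $\Reff_t(\Reff_s(\varphi))$ to be unsatisfiable whenever $\varphi$ is unsatisfiable.

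The hard part will be precisely this translation bookkeeping: matching $\PVO$'s internal arithmetization of ``$\pi$ is a Resolution refutation of $\Reff_s(\varphi)$'' and of ``$\alpha$ satisfies $\varphi$'' with the native propositional formulas $\Reff_t(\Reff_s(\varphi),\pi)$ and $\Satf(\varphi,\cdot)$ up to short $\EF$-equivalences, and checking that the witnessing $\PV$ term $e$ really does unfold to a $\P$-uniform polynomial-size circuit family $\{E_{n,m,s,t}\}_{n,m,s,t\in\bbN}$. The conceptual core — that the lower-bound statement is $\forall\Sigma^b_1$ and therefore witnessable by a feasible function — is already supplied by \Cref{thm:AM-in-PV}, so the proof is essentially a matter of turning that provability into the claimed uniform $\EF$ proofs.
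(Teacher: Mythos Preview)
Your overall plan—witness the $\forall\Sigma^b_1$ sentence of \Cref{thm:AM-in-PV} via Buss's theorem, then push through Cook's translation and reconcile the encodings—is exactly the paper's proof, and your treatment of the principal regime (where $\llbracket\|\pi\|>\varepsilon s/n^2\rrbracket=\bot$) is correct.

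Your large-$t$ argument, however, does not work. You write that $\EF$-provability there ``reduces to $\neg\Reff_t(\Reff_s(\varphi),\pi)$ being a tautology, which holds for $t$ below the threshold''—but you are precisely in the regime where $t$ is \emph{above} that threshold, and in any case $\neg\Reff_t(\Reff_s(\varphi),\pi)$ with $\varphi$ free is never a tautology once $t$ exceeds the Pudl\'ak bound $\tau(n,m,s)$ of \Cref{thm:ub-in-Res}: for every satisfiable $\varphi$ the inner $\Reff_s(\varphi)$ has a size-$\tau$ Resolution refutation, so $\Reff_t(\Reff_s(\varphi))$ is satisfiable. The paper avoids splitting the circuit at the propositional level and instead folds the case analysis into a single $\PV$ function $E$ at the first-order stage (if $|\pi|\le 2^{\varepsilon n}$ run the witnessed $E_0$, otherwise brute-force), arguing in $\PVO$ that this unified $E$ succeeds whenever a satisfying assignment exists, and only then applies Cook's translation. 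This keeps $\{E_{n,m,s,t}\}$ a single uniform family and lets the brute-force case be justified inside $\PVO$ rather than by an ad hoc propositional claim that, as you state it, is false.
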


\begin{proof}
    By \Cref{thm:AM-in-PV}, we know there exists a positive $\varepsilon \in \bbQ$ and $n_0 \in \bbN$ such that $\PVO \vdash \foAM_{\varepsilon, n_0}$, the first-order statement encoding the lower bound on $\Reff$ formulas (\ref{eq:foAM}). Since $\foAM_{\varepsilon, n_0}$ is a $\forall\Sigma^b_1$ formula, by Buss's witnessing theorem (\Cref{thm:buss-witnessing}), there exists a $\PV$ function $E_0$ such that 
    \begin{equation}
    \label{eq:pv-trans}
    \begin{split}
    \PVO \vdash \forall \varphi \forall n \forall s \forall \pi \Big(n < n_0 \lor \neg \operatorname{CNF}(\varphi, n) &\lor \neg \foReff(\Reff_{s}(\varphi), \pi) \\ &\lor \foSatf(\varphi, E_0(\varphi, n, s, \pi)) \lor||\pi|| > \varepsilon s/n^2 \Big)\,.
    \end{split}
    \end{equation}

    This means that, when plugging a formula $\varphi$ that is indeed a CNF formula over $n > n_0$ variables and the length of $\pi$ is $|\pi| \leq 2^{\varepsilon n}$ and $s\geq n^3$, then the witnessing function $E_0$ correctly succeeds in finding a satisfying assignment of $\varphi$, and this is all provable in $\PVO$. Building on this $E_0$, we can further define a new polynomial-time function $E$ that first checks whether $|\pi|\leq 2^{\varepsilon n}$, and then runs $E_0$, and otherwise performs a brute-force check for a satisfying assignment for $\varphi$. It is not hard to see, building on \cref{eq:pv-trans}, that $\PVO$ can now prove that this function $E$ is a polynomial-time algorithm that always succeeds in finding a satisfying assignment if one exists, regardless of the size of $\pi$.

    By applying Cook's translation (\Cref{thm:translation}) to \cref{eq:pv-trans} and restricting $s \geq n^3$, we obtain a uniform family of $\EF$ proofs showing the correctness of $E$ as an extraction algorithm. Note that $\foReff(\varphi, \pi)$ is the first-order analogue of the propositional formula $\Reff_s(\varphi, \pi)$ that we have been studying until now. Cook's translation on $\foReff(\varphi, \pi)$ gives a propositional formula with the same behavior as $\Reff_s(\varphi)$. While $\trans{\foReff}_{n,  s}$ may not be syntactically the same as  $\Reff_s(\varphi, \pi)$, it is not hard to see that $\PVO \vdash \forall n \forall s \left(\foTautf(\trans{\foReff}_{n,s}) \leftrightarrow \foTautf \left(\ulcorner {\Reff_s(\varphi)} \urcorner \right) \right)$, making them effectively equivalent.

    Since the proofs generated by Cook's translation are all polynomial-size and uniform, we conclude that there exists a polynomial $\ell(n,m,s,t)$ such that for every $s\geq n^3$,
    \begin{equation}
        \EF \vdash_{\ell(n,m,s,t)} \Reff_{t}(\Reff_s(\varphi, \tau), \pi) \to \Satf(\varphi, E_{n,m, s, t}(\varphi, \pi))\,,
    \end{equation}
where $E_{n,m,s,t}$ is the polynomial-size circuit encoding the computation of the extraction algorithm $E$ on CNF formulas with $n$ variables and $m$ clauses, size parameter $s$ and the refutation being analyzed consist of $t$ clauses.
\end{proof}

Similarly, the upper bound construction of Pudlák, which we formalized in Resolution, clearly goes through in stronger systems too (\Cref{thm:ub-in-Res}).

Suppose now that $A = \{ A_{n, m, s}\}_{n \in \bbN}$ is a family of circuits that automate Resolution, meaning that there is a constant $c \in \bbN$ such that on input a CNF formula $\varphi$ over $n$ variables and $m$ clauses, the circuit $A_{n, m,s}(\varphi)$ outputs a Resolution refutation of $\varphi$ of size $s^c$ if a refutation of size $s$ exists. Consider the propositional formulas stating the correctness of this algorithm,
\begin{equation}
    \Autf^A_{n, m, s} \coloneq \Reff_s(\varphi, \pi) \to \Reff_{s^c}(\varphi, A_{n,m,s}(\varphi)). \label{form:aut-EF} \tag{$\Autf^A$}
\end{equation}

We want to prove that if Extended Frege can efficiently derive the (\ref{form:aut-EF}) formulas above for some sequence of circuits $\{ A_{n, m,s}\}_{n,m,s \in \bbN}$, then it can also show that there is a family of small circuits $\{C_{n}\}_{n\in\bbN}$ solving $\SAT$ on 3-CNF formulas. We encode this as
\[ \Satf^C_{n} \coloneq \Satf(\varphi, \alpha) \to \Satf(\varphi, C_{n}(\varphi)). \label{form:C-sat} \tag{$\Satf^C$}\]

We show that $\EF$ can derive (\ref{form:C-sat}) from (\ref{form:aut-EF}).

\begin{theorem}[Resolution is $\NP$-hard to automate, in $\EF$]
    Suppose Resolution is automatable by a sequence of circuits $A = \{ A_{n,m, s}\}_{n,m,s \in \bbN}$, and suppose $\EF \vdash_{\mathrm{poly}} \Autf^A_{n, m, s}$. Then, there exists a family of circuits $C =\{C_{n}\}_{n\in\bbN}$ of size $n^{O(1)}$ such that $\EF \vdash_{n^{O(1)}} \Satf^C_{n}$.
\end{theorem}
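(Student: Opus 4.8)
The plan is to obtain $C = \{C_n\}$ by composing the three ingredients already available --- Pudlák's upper bound \emph{inside Resolution} (\Cref{thm:ub-in-Res}), the automating family $A$ together with the hypothesis $\EF \vdash_{\mathrm{poly}} \Autf^A$, and the extraction circuits of \Cref{thm:lb-formal} --- and then to check that $\EF$ verifies this composition implication by implication. Concretely, fix a $3$-CNF $\varphi$ over $n$ variables and $m = \poly(n)$ clauses, set $s \coloneq n^3$, and let $N, M = \poly(n)$ be the number of variables and clauses of $\Reff_s(\varphi)$ (see \Cref{rem:num_of_vars}). Let $\tau = \Theta(s(m + sn^2)) = \poly(n)$ be the length of Pudlák's refutation template from \Cref{thm:ub-in-Res}, and let $c$ be the automating exponent of $A$. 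I would then define
\[
C_n(\varphi) \coloneq E_{n,m,s,\tau^c}\bigl(\varphi,\; A_{N,M,\tau}(\Reff_s(\varphi))\bigr),
\]
which is a circuit of size $n^{O(1)}$ since $\Reff_s(\cdot)$, $A_{N,M,\tau}$, and $E_{n,m,s,\tau^c}$ are all polynomial-size.

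For the $\EF$-provability of $\Satf^C_n$, I would chain three polynomial-size $\EF$ implications. First, \Cref{thm:ub-in-Res} gives polynomial-size Resolution (hence $\EF$) refutations of $\Satf(\varphi,\alpha) \land \pneg\Reff_\tau(\Reff_s(\varphi),\pi) \land (\pi = P(\varphi,\alpha,s))$; substituting the circuit $P(\varphi,\alpha,s)$ for $\pi$ via extension variables (legitimate since $\EF$ is closed under substitution, which kills the conjunct $\pi = P(\varphi,\alpha,s)$) and using that $\EF$ proves $\neg\pneg\Psi \to \Psi$ for any CNF $\Psi$ (take the explicit circuit-valued witness of the $\mistake$-variables picking out a falsified clause of $\Psi$), one gets
\[
\EF \vdash_{\mathrm{poly}} \Satf(\varphi,\alpha) \to \Reff_\tau\bigl(\Reff_s(\varphi),\, P(\varphi,\alpha,s)\bigr).
\]
Second, the hypothesis $\EF \vdash_{\mathrm{poly}} \Autf^A_{N,M,\tau}$, with the formula replaced by $\Reff_s(\varphi)$ and $\pi$ replaced by $P(\varphi,\alpha,s)$, yields
\[
\EF \vdash_{\mathrm{poly}} \Reff_\tau\bigl(\Reff_s(\varphi),\, P(\varphi,\alpha,s)\bigr) \to \Reff_{\tau^c}\bigl(\Reff_s(\varphi),\, A_{N,M,\tau}(\Reff_s(\varphi))\bigr).
\]
Third, \Cref{thm:lb-formal} (applicable since $s = n^3 \ge n^3$), instantiated with $\pi \coloneq A_{N,M,\tau}(\Reff_s(\varphi))$ and $t \coloneq \tau^c$, gives
\[
\EF \vdash_{\mathrm{poly}} \Reff_{\tau^c}\bigl(\Reff_s(\varphi),\, A_{N,M,\tau}(\Reff_s(\varphi))\bigr) \to \Satf\bigl(\varphi,\, C_n(\varphi)\bigr).
\]

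Composing these three implications by \emph{modus ponens} inside $\EF$ gives $\EF \vdash_{\mathrm{poly}} \Satf(\varphi,\alpha) \to \Satf(\varphi,C_n(\varphi))$, i.e.\ $\EF \vdash_{n^{O(1)}} \Satf^C_n$. Uniformity is immediate since each of the three $\EF$ proofs is generated uniformly in polynomial time (the first from \Cref{thm:ub-in-Res}, the second from the hypothesis, the third from \Cref{thm:lb-formal}).

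I expect the only real obstacle to be bookkeeping rather than mathematics: one must ensure that the \emph{inner} formula $\Reff_s(\varphi)$ appearing in Pudlák's Resolution refutation (\Cref{thm:ub-in-Res}), in the propositional form produced by Cook's translation inside \Cref{thm:lb-formal}, and in the instance of $\Autf^A$ all denote the same propositional formula up to an $\EF$-provable equivalence --- a point already addressed in the proof of \Cref{thm:lb-formal} --- and that the size parameters $s = n^3$, $\tau$, and $\tau^c$ are threaded consistently through all three statements so that the implications actually compose. The step turning Pudlák's refutation of the pseudo-negated formula into the clean implication in the first display is routine for $\EF$ (indeed one does not even need the Resolution-specific \Cref{lemma:contraposition-Res}, since $\EF$-lines are circuits and negations are free), but it should be spelled out carefully.
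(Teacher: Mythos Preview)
Your proposal is correct and follows essentially the same approach as the paper: define $C_n$ as the composition $E \circ A \circ \Reff_{n^3}(\cdot)$, and chain the three $\EF$-provable implications (Pudl\'ak's upper bound, the automating hypothesis, and the extraction correctness from \Cref{thm:lb-formal}) via \emph{modus ponens}. Your parameter bookkeeping is in fact slightly more careful than the paper's (you correctly index $A$ by $N,M$, the parameters of $\Reff_s(\varphi)$, rather than those of $\varphi$), and your explicit remark on converting the pseudo-negated Resolution refutation of \Cref{thm:ub-in-Res} into a clean $\EF$ implication is a detail the paper glosses over.
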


\begin{proof}
    Let $E= \{E_{n,m,s, t}\}_{n,m,s,t \in \bbN}$ be the sequence of polynomial-size circuits carrying out the computation of the extraction algorithm, as in \Cref{thm:lb-formal}, and let $\{P_{n,m, s}\}_{n,m,s\in\bbN}$ be the sequence of circuits constructing the canonical refutations of Pudlák. By \Cref{thm:lb-formal}, \Cref{thm:ub-in-Res} and the assumptions in the statement, we have that Extended Frege can efficiently prove
    \begin{enumerate}[label=(\roman*)]
    \itemsep=0pt
        \item the correctness of the upper bound on $\Reff$ formulas, $\EF \vdash_{\mathrm{poly}} \Satf_{n}(\varphi, \alpha) \to \Reff_{t}(\Reff_s(\varphi), P(\varphi, \alpha, s))$,
        for any $t \geq \tau$, where $\tau=\tau(n,m,s) = \poly(n,m,s)$ is the parameter in \Cref{thm:ub-in-Res};
        \item the correctness of the automating circuits $A$, $
        \EF \vdash_{\mathrm{poly}} \Reff_s(\varphi, \pi) \to \Reff_{s^c}(\varphi, A_{n,m, s}(\varphi))$;
        \item the correctness of the extraction algorithm $E$, $\EF \vdash_{\mathrm{poly}} \Reff_{t}(\Reff_s(\varphi), \pi) \to \Satf_{n}(\varphi, E_{n,m,s, t}(\varphi, \pi))$, as long as $s \geq n^3$.
    \end{enumerate}

    The circuit $C_{n}$ solving $\SAT$ will be $C_{n}(\varphi) \coloneq E_{n,8n^3, n^3, t^c}(\varphi, A_{n,m,t}(\Reff_{n^3}(\varphi)))$ for a large enough $t \geq \tau = \poly(n)$, where $\tau$ is again the parameter in \Cref{thm:ub-in-Res}. Namely, we use the automating circuit $A$ to produce a Resolution refutation from which we can extract a satisfying assignment using the extraction algorithm $E$. In between, the upper bound statement guarantees that a small refutation exists, and thus $A$ will succeed in finding a not much larger one. The parameters are set up so that the three statements above correctly fit with each other: given a 3-CNF formula $\varphi$ with $n$ variables (and at most $8n^3$ clauses), we know that since $\varphi$ is satisfiable, there is a size-$\tau$ refutation of $\Reff_s(\varphi)$, where $\tau = \poly(n, s)$ and $s = n^3$. The extraction algorithm is guaranteed to work when $s \geq n^3$, so we choose to run the automating algorithm on $\Reff_{n^3}(\varphi)$. The automating algorithm is guaranteed to find a refutation of size $t^c$, and the extraction algorithm obtains a satisfying assignment from it. The fact that $C$ is provably correct in $\EF$ then follows immediately by a chain of \emph{modus ponens} on items (i)-(iii) above.
\end{proof}

\section{Universality of $\Reff$ formulas}
\label{sec:Ref-formulas-applications}

The formalizations in \Cref{thm:AM-in-PV} and \Cref{thm:ub-in-Res} have the interesting consequence of tightly relating the provability of arbitrary formulas to the provability of associated Resolution lower bounds. In \Cref{subsec:prop-fragments} we make this precise for Extended Frege and stronger systems. In \Cref{subsec:automatability} we exploit this characterization to show that looking for proofs of tautologies in strong proof systems is equivalent to efficiently looking for proofs of Resolution lower bounds. Finally, \Cref{subsec:unprovability} exploits similar ideas to provide examples of true exponential Resolution lower bounds that are unprovable in different propositional systems and first-order theories.

Our results apply to a wide range of proof systems, assuming some mild conditions on their behaviour. 

\begin{definition}
    We say that a proof system $S$ is \emph{reasonably strong} if it is polynomially equivalent to $\EF + A$ for some set $A$ of tautologies recognizable in polynomial time.
\end{definition}

Recall that for every Cook--Reckhow system $S$, we have $\EF + \Reflf^S \psim S$ (see also \Cref{subsec:prelim-Frege} for the definition of systems of the form $\EF + A$). Reasonably strong proof systems have some features that make them nice to work with. We state them here for convenience.

\begin{proposition}[{\cite[Theorem 2.4.4]{KrajicekPC}}]
    \label{prop:mild-properties}
   The following hold for every reasonably strong proof system $S$:
    \begin{enumerate}[label=(\roman*)]
    \itemsep=0pt
        \item the system $S$ is constructively closed under formula substitutions, i.e., given a proof of $\varphi(x_1, \dots, x_n)$ in size $s$ and formulas $\psi_1, \dots, \psi_n$, we can obtain a proof of $\varphi(\psi_1, \dots, \psi_n)$ in time $\poly(s, n, |\psi_1|, \dots, |\psi_n|)$;
        \item the system $S$ is constructively closed under \emph{modus ponens}, i.e., given an $S$-proof of $\varphi$ in size $s$ and an $S$-proof of $\varphi \to \psi$ in size $t$, we can obtain an $S$-proof of $\psi$ in time $\poly(s, t)$;
        \item the system $S$ is constructively closed under contrapositions, i.e., given an $S$-proof of $\neg \psi$ in size $s$ and an $S$-proof of $\varphi \to \psi$ in size $t$, we can obtain an $S$-proof of $\neg \varphi$ in time $\poly(s, t)$;
        \item the system $S$ \emph{implicationally simulates Resolution}: for every two CNF formulas $\varphi$ and $\psi$ such that Resolution performs the derivation $\varphi \vdash_\Res \psi$ in size $s$, there is an $S$-proof of $\varphi \to \psi$ in size $\poly(s)$.
    \end{enumerate}
\end{proposition}




We need one more definition before we proceed.

\begin{definition}
    Let $\varphi(x_1, \dots, x_n)$ be a CNF formula and let $ 1 \leq k \leq n$. We define $\varphi[k]$ to be the CNF formula $\varphi \land \mathsf{Ext}^k_n$, where $\mathsf{Ext}^k_n$ consists of all clauses encoding the extensions
    \[ y_{\{\ell_1, \dots, \ell_k\}} \leftrightarrow \ell_1 \land \dots \land \ell_k \]
    for every set of literals $\{\ell_1, \dots, \ell_k\}$ over the variables $x_1, \dots, x_n$, where the $y$-variables are fresh variables.
\end{definition}

It is well-known that if $\Res(k)$ refutes $\varphi$ in size $s$, then Resolution refutes $\varphi[k]$ in size $O(ks)$, and if Resolution refutes $\varphi[k]$ in size $s$, then $\Res(k)$ refutes $\varphi$ in size $O(ks)$ \cite[Lemmas 1-2]{AB04}.

The following lemma is an important consequence of the formalization of Pudlák's upper bound.

\begin{lemma}
    \label{lemma:UB-consequence-basic}
    Let $S \geq \Res$ be a propositional proof system closed under literal substitutions, modus ponens, contrapositions, and implicationally simulating Resolution.
    Then, there exists $p(n, m, s) = \poly(n,m, s)$ such that for every $n$-variate CNF formula $\varphi$ with $m$ clauses, and every $t\geq p(n,m,s)$ if $S \vdash \neg \Reff_{t}(\Reff_s(\varphi))$ in size $\ell$, then $S \vdash \neg \varphi[2]$ in size $\poly(n, m, s, t, \ell)$.
\end{lemma}

\begin{proof}
    Assume that $S \vdash \neg \Reff_{t}(\Reff_s(\varphi), \pi)$ in size $\ell$. By \Cref{thm:ub-in-Res}, we have that Resolution has polynomial-size derivations of the form 
    \begin{equation}
    \SAT_{\restriction{\varphi}}(\alpha) \land  \pi = P_{\restriction{\varphi}}(\alpha) \vdash_\Res  \Reff_{t}(\Reff_s(\varphi), \pi)\,,
    \end{equation}
        where $\varphi$ is fixed but $\alpha$ is encoded by free variables, and we define $p(n, m, s) = \tau(n, m, s) = \poly(n, m, s)$ to be the parameter in \Cref{thm:ub-in-Res}. Since $S$ implicationally simulates Resolution, it follows that $S$ has polynomial-size refutations of
        \begin{equation}
    \SAT_{\restriction{\varphi}}(\alpha) \land  \pi = P_{\restriction{\varphi}}(\alpha)\to  \Reff_{t}(\Reff_s(\varphi), \pi)\,.
    \end{equation}
        Since $S$ is also closed under contrapositions, we have that there are polynomial-size $S$-proofs of
        \begin{equation}
\neg\left(\SAT_{\restriction{\varphi}}(\alpha)  \land  \pi = P_{\restriction{\varphi}}(\alpha)\right)\,.
        \end{equation}
        Now, by \Cref{prop:sat-to-native-Res}.(i), we can use a substitution on the variables of $\Satf$ to get a proof of $\neg (\varphi(\alpha)\land \pi = P_{\restriction{\varphi}}(\alpha))$. By the way we defined the clauses in $\pi = P_{\restriction{\varphi}}(\alpha)$ in \Cref{subsec:Pudlak-as-ckt}, once $\varphi$ has been fixed, there are no variables of type $\alit$ left. By inspecting all the extension axiom from~(\ref{L1i-detail}) to~(\ref{R-more-detail-last}) we see that all extension axioms relate variables of $\pi$ to literals over the $\alpha$ variables. Crucially, the width of these extensions is at most two, given by the extensions~(\ref{eq:weakR2alphas1}) and~(\ref{R-more-detail-last}). Furthermore, all possible extensions over two $\alpha$ literals are available, meaning that $ \left(\varphi(\alpha)\land \pi = P_{\restriction{\varphi}}(\alpha) \right) \equiv \varphi \land \mathsf{Ext}^2_n = \varphi[2]$, so this is in fact a proof of $\neg\varphi[2]$. 
\end{proof}

\subsection{Propositional fragments of the Atserias--Müller lower bound}
\label{subsec:prop-fragments}

As a consequence of \Cref{thm:lb-formal} and \Cref{thm:ub-in-Res}, we can characterize the exact fragment of the Atserias--Müller lower bound efficiently provable by every strong enough propositional proof system. Namely, if $\EF$ can argue that $\varphi$ is unsatisfiable, then it can also argue that $\Reff_{n^3}(\varphi)$ is hard for Resolution; and, conversely, if $\Reff_{n^3}(\varphi)$ is provably hard for Resolution, then $\EF$ can argue that $\varphi$ itself is unsatisfiable. The key insight here is that the statement \say{$\Reff_{n^3}(\varphi)$ is hard for Resolution} is exactly a $\Reff$ formula itself, of the form $\Reff(\Reff(\varphi))$.

The following is a formal restatement of \Cref{thm:characterization-informal}.
\begin{theorem}
\label{thm:fragment-characterization}
Let $S$ be a reasonably strong proof system. There is $p(n, m, s) =\poly(n,m, s)$ such that for every CNF formula $\varphi(x_1, \dots, x_n)$ with $m$ clauses, every $t \in \bbN$, and every $s \geq n^3$,
\begin{enumerate}[label=(\roman*)]
\itemsep=0pt
    \item if $S \vdash \neg \varphi$ in size $\ell$, then $S \vdash \neg \Reff_t(\Reff_{s}(\varphi_n))$ in size $\poly(n, m, s, t, \ell)$;
    \item if $S \vdash \neg \Reff_{p(n,m,s)}(\Reff_s(\varphi_n))$ in size $\ell$, then $S \vdash \neg \varphi_n$ in size $\poly(n, m, s, \ell)$.
\end{enumerate}
Furthermore, in both cases, given an $S$-proof of the left-hand side statement, one can obtain a proof of the right-hand side statement in polynomial time.
\end{theorem}

\begin{proof} In the following $\Reff$ formulas $\varphi$ is a fixed formula, while the only free variables are the $\pi$ and $\alpha$ variables.
    \begin{enumerate}[label=(\roman*)]
    \itemsep=0pt
        \item If $S \vdash \neg \varphi$ via some proof $\pi$ of size $\ell$, then by \Cref{prop:sat-to-native-Res}, $S$ also proves $\neg \SAT_{\restriction{\varphi}}(\alpha)$ and we can easily substitute the extraction circuit $E_{\restriction{\varphi}}(\pi)$ into $\alpha$, getting a proof $\pi'$ of $\neg \SAT_{\restriction{\varphi}}(E_{\restriction{\varphi}}(\pi))$. By \Cref{thm:lb-formal}, $\EF$ proves $\Reff_{t}(\Reff_s(\varphi, \tau), \pi) \to \Satf(\varphi, E(\varphi, \pi))$ and by the p-simulation $S \psim \EF$ these proofs are also available in $S$. Hence, by contraposition and the restriction on $\varphi$, $S$ derives $\neg \Reff_{t}(\Reff_s(\varphi, \tau), \pi)$, and it is easy to see that the total blow-up incurred by these additional derivations is at most $\poly(n,m,s,t, \ell)$.

        \item This backwards direction follows from \Cref{lemma:UB-consequence-basic} above. Since $S$ is reasonably strong, the conditions of \Cref{lemma:UB-consequence-basic} apply, giving us polynomial-size refutations of $\varphi[2]$. The system $S$ is in particular closed under clause substitutions, so we can substitute the extension axioms introduced by $\varphi[2]$ to get a refutation of $\varphi$. \qedhere
    \end{enumerate}
\end{proof}

    \subsection{Automatability in terms of $\Reff$ formulas}
\label{subsec:automatability}

We will say that a proof system $S$ is \emph{(weakly) automatable on $\Reff$ formulas} if there is an algorithm $A$ that behaves like an automating algorithm whenever the input formula is a $\Reff_s(\varphi)$ formula for some $\varphi$ and $s$, and is allowed to behave in any other way otherwise.

We remark that these are the $\Reff$ formulas \emph{for Resolution}, meaning that proof search in arbitrarily strong proof systems can be completely characterized by the proof search of Resolution lower bounds! The following theorems are the formal restatements of \Cref{thm:aut-for-Ref-informal}.

\begin{theorem}
    \label{thm:aut-on-Ref-EF}
    Let $S$ be a reasonably strong proof system. Then, the following are equivalent:
    \begin{enumerate}
        \item[(i)] $S$ is automatable;
        \item[(ii)] $S$ is automatable on $\Reff$ formulas.
    \end{enumerate}
\end{theorem}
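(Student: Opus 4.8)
The forward implication (i) $\Rightarrow$ (ii) needs no argument: an algorithm that automates $S$ on all tautologies in particular automates it on the subclass of formulas $\neg\Reff_s(\psi)$. All the work lies in the converse, and the plan is to bootstrap a hypothetical algorithm $A$ that automates $S$ \emph{only} on $\Reff$ formulas into a genuine automating algorithm for $S$ on all of $\TAUT$, using the two-way, constructive translation of \Cref{thm:fragment-characterization}.

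So assume $A$ automates $S$ on $\Reff$ formulas. Given a tautology $\neg\varphi$, where $\varphi$ is an unsatisfiable CNF with $n$ variables and $m$ clauses, the composed algorithm would set $s \coloneq n^3$ and $t \coloneq p(n,m,s)$ for $p$ the polynomial of \Cref{thm:fragment-characterization}, form the CNF formula $\Reff_s(\varphi)$ and then $\psi \coloneq \Reff_t(\Reff_s(\varphi))$, and observe that $\psi$ is itself a $\Reff$ formula --- the $\Reff$ formula (for Resolution) of the CNF $\Reff_s(\varphi)$ with size parameter $t$ --- so that $A$ is applicable to $\neg\psi$. One first checks that $\neg\psi$ is a genuine tautology: if $\Reff_s(\varphi)$ is satisfiable it has no Resolution refutation at all, so $\psi$ is unsatisfiable; and if $\Reff_s(\varphi)$ is unsatisfiable, then because $s = n^3$ the Atserias--Müller lower bound (\Cref{thm:AM-original}) forces every Resolution refutation of it to have size $2^{\Omega(n^2)}$, hence more than $t = \poly(n,m)$ for $n$ large (small $n$ being handled by brute force), so $\psi$ is again unsatisfiable. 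The algorithm then runs $A$ on $\neg\psi$ to obtain an $S$-proof $\pi_\psi$ of $\neg\psi$, and finally applies the polynomial-time transformation of \Cref{thm:fragment-characterization}(ii) --- whose required size parameter is exactly $t = p(n,m,s)$ --- to convert $\pi_\psi$ into an $S$-proof $\pi_\varphi$ of $\neg\varphi$, which it outputs.

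For the analysis, write $\ell \coloneq \size{S}{\neg\varphi}$. Since $s$ and $t$ are polynomial in $n$ and $m$, \Cref{thm:fragment-characterization}(i) gives $\size{S}{\neg\psi} \le \poly(n,m,s,t,\ell) = \poly(|\varphi|,\ell)$, and $|\psi| = \poly(|\varphi|)$ by \Cref{rem:num_of_vars}. As $A$ automates $S$ on $\Reff$ formulas, $A(\neg\psi)$ halts in time $(|\psi| + \size{S}{\neg\psi})^{O(1)} = \poly(|\varphi|,\ell)$ and returns $\pi_\psi$ of that size; the transformation of \Cref{thm:fragment-characterization}(ii) then runs in time polynomial in its input, producing $\pi_\varphi$ of size $\poly(n,m,s,|\pi_\psi|) = \poly(|\varphi|,\ell)$. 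The composed procedure thus outputs a valid $S$-proof of $\neg\varphi$ in time $\poly(|\varphi|, \size{S}{\neg\varphi})$, which is exactly what automatability of $S$ asks for.

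I expect the difficulty to be entirely in the bookkeeping rather than in any new idea: one must make sure that $\neg\psi$ is always a legitimate tautology before handing it to $A$ (done above via the lower bound on nested $\Reff$ formulas), that the nesting parameter $t$ is chosen to match exactly the value $p(n,m,s)$ demanded by the backward direction of \Cref{thm:fragment-characterization}, and that the three polynomial blow-ups --- translating $\neg\varphi$ forward into $\neg\Reff_t(\Reff_s(\varphi))$, running $A$ on it, and translating the resulting refutation back --- compose into a single polynomial in $|\varphi|$ and $\size{S}{\neg\varphi}$. It is worth stressing in the write-up that the $\Reff$ formulas appearing here are about Resolution, so that $A$ need only be effective on Resolution-lower-bound statements while $S$ may be arbitrarily strong; and that this argument genuinely uses the assumption that $S$ simulates $\EF$, since the backward direction of \Cref{thm:fragment-characterization} rests on the $\EF$-formalizable extraction algorithm (\Cref{thm:lb-formal}) and is not available for systems lying strictly between $\Res$ and $\EF$.
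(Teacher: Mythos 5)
Your proposal is correct and follows essentially the same route as the paper's proof: both directions of \Cref{thm:fragment-characterization} are used to translate $\neg\varphi$ into the $\Reff$ formula $\neg\Reff_{p(n,m)}(\Reff_{n^3}(\varphi))$, run the partial automating algorithm $A$ on it, and translate the resulting $S$-refutation back. The extra bookkeeping you do — verifying that the nested $\Reff$ formula is a genuine tautology before handing it to $A$, and tracking the three polynomial compositions explicitly — is sound and fills in details the paper leaves implicit, but it is the same argument, not a different one.
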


\begin{proof}
One direction is trivial. For the other implication, let $A$ be an automating algorithm that is only guaranteed to work on tautologies of the form $\neg \Reff_s(\cdot)$, and let $\varphi$ be an unsatisfiable CNF formula with $n$ variables and $m$ clauses. We describe a new algorithm $A'$ that correctly automates $S$ on all inputs, meaning that if $\neg \varphi$ has an $S$-refutation in size $t$, then $A'$ outputs a refutation in time $t^{O(1)}$.

By the formalization of the correctness of the extraction algorithm in \Cref{thm:lb-formal}, we obtained \Cref{thm:fragment-characterization} from where it follows that if $S$ has a refutation of $\varphi$ in size $t$, then there exists an $S$-refutation of $\Reff_{p(n, m)}(\Reff_{n^3}(\varphi))$ for a fixed polynomial $p(n,m)$, in size $\poly(n,m,t)$. This is precisely a $\Reff(\cdot)$ formula, so we can run $A(\Reff_{p(n,m)}(\Reff_{n^3}(\varphi)))$ to obtain a refutation $\pi$ in $S$ with at most a fixed polynomial blow-up. Now, by the backwards direction of \Cref{thm:fragment-characterization}, we known that $S$ has a refutation of $\varphi$ in size $\poly(n,m, t)$ and, we can obtain this efficiently from $\pi$. Thus, we have an automating algorithm that given $\varphi$, which had a refutation in $S$ in size $t$, outputs a refutation in size $t^{O(1)}$.
\end{proof}

Since nothing in the argument above prevents us from talking about proofs in stronger systems, the result immediately carries over to weak automatability. For the following statement, we use the well-known fact that a proof system $S$ is weakly automatable if, and only if, there exists an automatable proof system $Q$ that simulates $S$. If we restrict ourselves to $\Reff$ formulas, then it is easy to see that $S$ is weakly automatable on $\Reff$ formulas if, and only if, there is a proof system $Q$ automatable on $\Reff$ formulas and simulating $S$ over $\Reff$ formulas.

\begin{theorem}
    \label{thm:W-aut-on-Ref-EF}
    Let $S$ be a reasonably strong proof system. Then, the following are equivalent:
    \begin{enumerate}
        \item[(i)] $S$ is weakly automatable;
        \item[(ii)] $S$ is weakly automatable on $\Reff$ formulas.
    \end{enumerate}
\end{theorem}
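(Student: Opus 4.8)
\textbf{Proof plan for Theorem~\ref{thm:W-aut-on-Ref-EF}.}
The plan is to mirror the proof of Theorem~\ref{thm:aut-on-Ref-EF} but carry the argument through in the presence of an auxiliary simulating system, exactly as the statement's preamble sets up. One direction is again immediate: if $S$ is weakly automatable, then there is an automatable system $Q$ simulating $S$; restricting attention to $\Reff$ formulas, $Q$ automates $S$ there as well, so $S$ is weakly automatable on $\Reff$ formulas. The content is the converse, and the key point is that all the ingredients used in the proof of Theorem~\ref{thm:aut-on-Ref-EF} are ``transparent'' to the passage from $S$ to a simulating system $Q$: the characterization in Theorem~\ref{thm:fragment-characterization} is \emph{constructive} (given an $S$-proof of one side it produces an $S$-proof of the other in polynomial time), and the constructions it relies on (Pudlák's upper bound in Resolution, hence in any system simulating Resolution, and the extraction circuit of Theorem~\ref{thm:lb-formal}, provably correct in $\EF$ hence in any reasonably strong system) are system-agnostic.

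Concretely, suppose $S$ is weakly automatable on $\Reff$ formulas; unwinding the definition stated just before the theorem, there is a proof system $Q$ that simulates $S$ on $\Reff$ formulas and is automatable on $\Reff$ formulas. We want to produce a proof system $Q'$ that simulates $S$ (on all tautologies) and is automatable (on all tautologies); then $S$ is weakly automatable. First I would take $Q'$ to be (a reasonable closure of) $Q + \EF + \Reflf^S$, so that $Q'$ simulates $S$ everywhere while still simulating $S$ efficiently on $\Reff$ formulas via the $Q$-part — one must check that adding $\EF$ and the reflection axioms does not destroy the automatability of $Q'$ \emph{on $\Reff$ formulas}, which follows because proof search for $Q'$ on a $\Reff$ input can first try the $Q$-automating algorithm (guaranteed to find the short proof when one exists, by Theorem~\ref{thm:fragment-characterization}) and otherwise fall back to brute force. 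Then the automating algorithm $A'$ for $Q'$ on arbitrary input $\neg\varphi$ proceeds exactly as in the proof of Theorem~\ref{thm:aut-on-Ref-EF}: by Theorem~\ref{thm:fragment-characterization}(i), if $S$ (hence $Q'$) refutes $\varphi$ in size $t$, then $Q'$ refutes $\Reff_{p(n,m)}(\Reff_{n^3}(\varphi))$ in size $\poly(n,m,t)$; run the $\Reff$-automating algorithm for $Q'$ on this formula to get such a refutation $\pi$ with polynomial blow-up; then apply the constructive backward direction of Theorem~\ref{thm:fragment-characterization}(ii) to turn $\pi$ into a $Q'$-refutation of $\varphi$ of size $\poly(n,m,t)$. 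Hence $Q'$ is automatable and simulates $S$, so $S$ is weakly automatable.

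The main obstacle I anticipate is bookkeeping around the precise meaning of ``weakly automatable on $\Reff$ formulas'' and ``simulates $S$ over $\Reff$ formulas'': one has to be careful that the simulating system $Q$ supplied by the hypothesis can be replaced by a system that is simultaneously (a) reasonably strong / closed under the operations needed to invoke Theorems~\ref{thm:lb-formal}, \ref{thm:ub-in-Res}, and~\ref{thm:fragment-characterization}, (b) still automatable on $\Reff$ formulas, and (c) still a simulator of $S$ globally. Adding $\EF+\Reflf^S$ handles (a) and (c) cleanly (using $\EF + \Reflf^S \psim S$), but one must verify (b) survives — the fallback-to-$Q$-then-brute-force argument does this, but it requires that membership ``$\psi$ is of the form $\neg\Reff_s(\varphi)$'' be polynomial-time decidable, which it is by the shape of the encoding in Definition~\ref{def:Reff-formulas}. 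A secondary subtlety is the nested-$\Reff$ parameters: the inner size parameter must be at least $n^3$ for Theorem~\ref{thm:lb-formal} to apply and the outer one at least $p(n,m,s)$ for Theorem~\ref{thm:fragment-characterization}(ii); these are fixed polynomials, so the automating algorithm can simply hard-wire them, but it is worth stating explicitly that the resulting $\Reff(\Reff(\varphi))$ formula still has size polynomial in $n,m$ so that a polynomial blow-up in the $\Reff$-automating step remains polynomial in the original input. None of these points are deep, but they are where a careless write-up would go wrong, so I would spell them out.
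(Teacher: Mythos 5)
Your overall plan mirrors the paper's argument closely (Theorem~\ref{thm:fragment-characterization} in both directions, run $Q$'s $\Reff$-automator, lift into a reasonably strong system containing $Q$ and $\Reff^S$), and your choice of $Q' := \EF + \Reflf^Q + \Reflf^S$ is exactly the right target system. However, there is a genuine gap in the step where you claim $Q'$ is automatable. You argue that $Q'$ is automatable on $\Reff$ formulas because proof search can ``first try the $Q$-automating algorithm and otherwise fall back to brute force.'' That does not establish automatability of $Q'$: on a $\Reff$ formula where $Q'$ has a short proof but $Q$ does not, the $Q$-automator need not succeed in polynomial time and the brute-force fallback is exponential, so the resulting procedure is not polynomial in $\size{Q'}{\cdot}$. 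The citation of Theorem~\ref{thm:fragment-characterization} does not close this gap either, since that theorem relates $S$-proof sizes of $\varphi$ and of $\Reff(\Reff(\varphi))$; it says nothing about $Q$ having short proofs whenever $Q'$ (which adds $\EF + \Reflf^S$, possibly far stronger than $Q$ off $\Reff$ formulas) does. The same confusion surfaces in the final step: your $A'$ runs in time $\poly(\size{S}{\neg\varphi})$, but since $Q' \geq_{\operatorname{p}} S$, one can have $\size{Q'}{\neg\varphi} \ll \size{S}{\neg\varphi}$, so $A'$ is not an automator for $Q'$ in the formal sense of running in time polynomial in $\size{Q'}{\neg\varphi}$.

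The fix, which is what the paper does, is to drop the claim that $Q'$ itself is automatable and instead invoke the \emph{equivalent} formulation of weak automatability recalled in Section~\ref{subsec:prelim-automatability}: $S$ is weakly automatable iff there is an algorithm that, on input $\neg\varphi$, runs in time $\poly(|\varphi| + \size{S}{\neg\varphi})$ and outputs a proof of $\neg\varphi$ in some fixed system $R \psim S$. Your construction already produces exactly this: by Theorem~\ref{thm:fragment-characterization}(i) applied to $S$ (reasonably strong), $S$ has a $\poly(t)$-size proof of $\neg\Reff_{p(n,m)}(\Reff_{n^3}(\varphi))$ whenever $\size{S}{\neg\varphi}=t$; since $Q$ simulates $S$ on $\Reff$ formulas, $Q$ has such a proof too, and the $Q$-$\Reff$-automator finds a $Q$-proof in time $\poly(t)$; p-simulate this into $\EF + \Reflf^Q + \Reflf^S$ (reasonably strong) and apply Theorem~\ref{thm:fragment-characterization}(ii) there to obtain a proof of $\neg\varphi$. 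This produces $R$-proofs with $R = \EF + \Reflf^Q + \Reflf^S \psim S$ in time $\poly(t)$, which is all that is needed. So the substance of your proof is right; the error is in insisting on $Q'$-automatability rather than appealing to the equivalent characterization, and in the unjustified brute-force fallback.
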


\begin{proof}
    The argument is identical. The only difference is that instead of an automating algorithm for $S$ we have a proof system $Q$ that simulates $S$ on $\Reff$ formulas and is automatable on $\Reff$ formulas. Then, if $\neg \varphi$ can be refuted in size $t$ in $S$, we can still guarantee (by \Cref{thm:fragment-characterization}) that there is a proof of $\Reff_{p(n,m)}(\Reff_{n^3}(\varphi))$, for a fixed polynomial $p(n,m)$, in size $\poly(n,m,t)$. We look for these proofs with the automating algorithm for $Q$. Since $Q$ itself may not be reasonably strong itself, we turn this $Q$-proof into a proof in the system $\EF + \Reflf^Q$, which p-simulates $Q$, and for which \Cref{thm:fragment-characterization} does apply.
\end{proof}

The argument above crucially requires that $S$ is at least as strong as Extended Frege. The reason is that we require the correctness statement of the extraction algorithm to be provable in $S$. If $\varphi$ has a refutation of size $t$, then \Cref{thm:fragment-characterization}.(i) guarantees that there is a size-$t^{O(1)}$ refutation of $\Reff(\Reff(\varphi))$, so the algorithm can focus on looking for these latter refutations instead. For an arbitrary proof system $S \geq \Res$ that does not simulate $\EF$, if $S$ proves $\neg \varphi$ in size $t$, then we cannot guarantee that it also proves $\neg \Reff(\Reff(\varphi))$ in size~$t^{O(1)}$.

This problem can be solved if $S$ is Resolution itself; then, the semantics of the $\Reff$ formula compensate for the weakness of the system, since now the $\Reff$ formulas are about the proof system itself.

\begin{theorem}
    \label{thm:aut-on-Ref-Res}
    The following hold for Resolution:
    \begin{enumerate}
        \item[(i)] Resolution is automatable if, and only if, it is automatable on $\Reff$ formulas;
        \item[(ii)] Resolution is weakly automatable if, and only if, it is weakly automatable on $\Reff$ formulas.
    \end{enumerate}
\end{theorem}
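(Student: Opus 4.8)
The plan is to imitate the proofs of \Cref{thm:aut-on-Ref-EF} and \Cref{thm:W-aut-on-Ref-EF} almost verbatim, replacing the single ingredient that forced $S\geq\EF$ there. That ingredient is the forward direction of \Cref{thm:fragment-characterization}, which used the $\EF$-provable correctness of the extraction algorithm (\Cref{thm:lb-formal}) together with $S\psim\EF$ to pass from an $S$-refutation of $\varphi$ to an $S$-refutation of $\Reff(\Reff(\varphi))$. The key point is that for $S=\Res$ this step needs no formalized extraction: since the $\Reff$ formulas speak about Resolution, the assertion ``$\Res\vdash\neg\varphi$ in size $t$'' is literally ``$\varphi$ has a Resolution refutation of size $t$'', hence ``$\Reff_s(\varphi)$ is satisfiable'' for every $s\geq t$, and then $\Reff_p(\Reff_s(\varphi))$ admits a polynomial-size Resolution refutation by Pudlák's upper bound used purely as a combinatorial statement. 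The backwards direction --- extracting a $\Res$-refutation of $\varphi$ from a $\Res$-refutation of $\Reff_p(\Reff_s(\varphi))$ --- already goes through for $\Res$ via \Cref{lemma:UB-consequence-basic}, since Resolution is closed under literal substitutions and, by \Cref{lemma:contraposition-Res}, under contraposition for pseudo-negations.

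For part (i), let $A$ automate Resolution on $\Reff$ formulas and fix the polynomial $p(n,m,s)$ from \Cref{lemma:UB-consequence-basic} (equivalently, from \Cref{thm:ub-in-Res}). The automating algorithm $A'$ for $\Res$, on input an $n$-variable, $m$-clause CNF formula $\varphi$, runs over the doubling sequence $s=n^3,2n^3,4n^3,\dots$: for each $s$ it sets $p\coloneq p(n,m,s)$, runs $A$ on the $\Reff$ formula $\Reff_{p}(\Reff_s(\varphi))$ for $\poly(n,m,s)$ steps, and whenever $A$ returns a valid Resolution refutation $\sigma$, it applies the constructive content of \Cref{lemma:UB-consequence-basic} to turn $\sigma$ into a Resolution refutation of $\varphi[2]$ and then substitutes away the extension axioms to obtain a Resolution refutation $\rho$ of $\varphi$, which it checks and outputs. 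Correctness: if $\varphi$ is unsatisfiable with shortest Resolution refutation of size $t$, then for every $s\geq t$ the inner formula $\Reff_s(\varphi)$ is satisfiable (a satisfying assignment is read off any size-$\leq s$ refutation of $\varphi$ by disabling unused blocks), so $\Reff_{p}(\Reff_s(\varphi))$ has a Resolution refutation of size $\poly(n,m,s)$ by Pudlák's upper bound; hence $A$ finds such a $\sigma$ within $\poly(n,m,s)$ steps and \Cref{lemma:UB-consequence-basic} yields $\rho$ of size $\poly(n,m,s)$. The first $s\geq t$ in the sequence is $O(t)$, so the total running time is $\poly(|\varphi|+t)$, and the final validity check guarantees $A'$ never outputs an incorrect refutation. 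For the intermediate $s<t$ the formula $\Reff_s(\varphi)$ is unsatisfiable and, by \Cref{thm:AM-original}, requires Resolution size $2^{\Omega(s/n)}$, which for $s\geq n^3$ exceeds $p$, so $\Reff_{p}(\Reff_s(\varphi))$ is again unsatisfiable and running $A$ on it is well-posed.

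Part (ii) is the same argument carried out up to simulation. If $Q$ is automatable on $\Reff$ formulas and simulates Resolution on $\Reff$ formulas, I would take $Q'\coloneq\EF+\Reflf^Q$, which p-simulates $Q$ and is reasonably strong, hence closed under restrictions and \emph{modus ponens} and therefore under contraposition for pseudo-negations. Run the automating algorithm for $Q$ over the same doubling sequence on $\Reff_{p}(\Reff_s(\varphi))$; when it returns a $Q$-refutation $\sigma$, p-simulate it into a $Q'$-refutation $\sigma'$ and apply \Cref{thm:fragment-characterization}(ii) --- available since $Q'$ is reasonably strong and $s\geq n^3$ --- to obtain a $Q'$-refutation of $\varphi$, which is checked and output. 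When $s\geq t$ the formula $\Reff_{p}(\Reff_s(\varphi))$ has a short Resolution refutation, hence a short $Q$-refutation because $Q$ simulates Resolution on $\Reff$ formulas, so the automating algorithm for $Q$ succeeds in polynomial time; since $Q'\geq\EF\geq\Res$ and the existence of an automatable proof system that simulates $\Res$ is exactly the weak automatability of Resolution, this finishes the argument.

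The main obstacle --- and the reason this case is stated separately --- is conceptual rather than technical: one must recognize that the $\EF$-strength in \Cref{thm:aut-on-Ref-EF} was used only to supply the implication ``$S\vdash\neg\varphi$ quickly $\Rightarrow$ $S\vdash\neg\Reff(\Reff(\varphi))$ quickly'', and that for Resolution this implication is free, because ``$\Res$ refutes $\varphi$ in small size'' and ``$\Reff_s(\varphi)$ is satisfiable'' are the same statement, so only the easy, unformalized direction of Pudlák's upper bound is needed there. The remaining points are routine bookkeeping: ensuring $\Reff_p(\Reff_s(\varphi))$ is a genuine tautology at every $s$ in the search (handled by the size bound from \Cref{thm:AM-original} for $s\geq n^3$, with the finitely many small-$n$ cases settled by brute force), discarding any output of $A$ that is not a valid refutation, and arranging the doubling search so the running time stays polynomial in $|\varphi|$ and the size of the shortest Resolution refutation.
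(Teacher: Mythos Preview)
Your argument for (ii) matches the paper's proof essentially verbatim: dovetail over $s$, find a $Q$-refutation of $\Reff_{p}(\Reff_s(\varphi))$, p-simulate into $\EF+\Reflf^Q$, and apply \Cref{thm:fragment-characterization}(ii) there. Fine.

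For (i), however, there is a real gap. \Cref{lemma:UB-consequence-basic} yields a Resolution refutation of $\varphi[2]$, not of $\varphi$, and the step ``substitute away the extension axioms'' does not go through in Resolution: Resolution is closed only under \emph{literal} substitutions, and no literal substitution for the extension variables $y_{\{\ell_1,\ell_2\}}\leftrightarrow\ell_1\wedge\ell_2$ simultaneously validates all extension clauses. This is not a technicality --- recall that Resolution refutes $\varphi[2]$ efficiently iff $\Res(2)$ refutes $\varphi$ efficiently, and $\Res(2)$ is strictly stronger than Resolution, so the passage from short Resolution refutations of $\varphi[2]$ to short Resolution refutations of $\varphi$ fails in general.

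The paper does not attempt a constructive argument for (i) at all: it observes that an algorithm automating Resolution on $\Reff$ formulas decides $\SAT$ in polynomial time (exactly the original reduction of \cite{AM20}), hence $\P=\NP$, hence every proof system --- Resolution included --- is automatable. Your constructive route can be repaired, but not via \Cref{lemma:UB-consequence-basic}: instead, take $p$ at least cubic in the number of variables of $\Reff_s(\varphi)$ and apply the extraction algorithm (\Cref{thm:det-algo}) directly to the Resolution refutation of $\Reff_p(\Reff_s(\varphi))$. For $s\geq t$ the inner formula $\Reff_s(\varphi)$ is satisfiable, and a satisfying assignment to it \emph{is} a Resolution refutation of $\varphi$ of size at most $s$ --- exactly the semantic shortcut you exploited in the forward direction, now used backwards.
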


\begin{proof}
    The first item is easy, while the second one requires a bit more work.
    \begin{enumerate}
        \item[(i)] This first item is a somewhat trivial byproduct of the hardness of automating Resolution \cite{AM20}: if Resolution is automatable on $\Reff$ formulas, this can be used to decide $\SAT$ and hence $\P = \NP$; but then every proof system, including Resolution, is automatable.
        
        \item[(ii)] Suppose Resolution is weakly automatable on $\Reff$ formulas. This means there exists a proof system~$Q$ that simulates Resolution on $\Reff$ formulas and is automatable on $\Reff$ formulas. Now, suppose a CNF formula $\varphi$ with $n$ variables and $m$ clauses has a Resolution refutation in size $t$. Then, run the automating algorithm for $Q$ on formulas of the form $\Reff_{p(n, m, s)}({\Reff_s(\varphi)})$ for $p(n, m,s)$ the polynomial from \Cref{thm:fragment-characterization} for $s = 1$, $s = 2$, and so on, via dovetailing, until a refutation is found in $Q$. Note that, since there exists a refutation of $\varphi$ in size $t$, that means that $\Reff_{t}(\varphi)$ is satisfiable and hence $\Reff_{p(n,m, t)}({\Reff_{t}(\varphi)})$ is unsatisfiable but easy to refute in Resolution. Hence, the algorithm automating $Q$ on $\Reff$ formulas must find a refutation $\pi$ in size $\poly(n, m,t)$.

        At this point we would like to turn this refutation $\pi$ into a refutation of $\varphi$. We can turn $\pi$ into a refutation in $\EF + \Reflf^Q$, which p-simulates $\EF$ and is closed under \emph{modus ponens}. Since \Cref{thm:fragment-characterization} applies to these stronger systems, we just described an algorithm that finds a refutation of $\varphi$ in time $\poly(n,m, t)$ in the system $\EF + \Reflf^Q \psim \Res$, where $t$ is the size of the shortest Resolution refutation. We can conclude that Resolution is weakly automatable. \qedhere
    \end{enumerate}
\end{proof}

By the classical result showing that Resolution is weakly automatable if and only if $\Res(k)$ is weakly automatable for every constant $k \geq 1$ \cite[Theorem 8]{AB04}, we have the following corollary.
\begin{corollary}
    For every $k \geq 1$, $\Res(k)$ is weakly automatable if, and only if, $\Res(k)$ is weakly automatable on $\Reff$ formulas.
\end{corollary}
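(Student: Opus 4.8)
The plan is to combine \Cref{thm:aut-on-Ref-Res} with the classical equivalence between weak automatability of Resolution and weak automatability of $\Res(k)$. The corollary asserts two implications for each fixed $k \geq 1$, and both will be routine once the right bridge lemmas are in place.

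First I would recall the standard fact, cited in the excerpt right after \Cref{thm:aut-on-Ref-Res} as \cite[Theorem 8]{AB04}: Resolution is weakly automatable if and only if $\Res(k)$ is weakly automatable, for every constant $k \geq 1$. Combined with \Cref{thm:aut-on-Ref-Res}(ii), this already gives a chain
\[
\Res(k)\text{ weakly automatable} \iff \Res\text{ weakly automatable} \iff \Res\text{ weakly automatable on }\Reff\text{ formulas}.
\]
So the only thing that remains is to show that ``$\Res$ is weakly automatable on $\Reff$ formulas'' is equivalent to ``$\Res(k)$ is weakly automatable on $\Reff$ formulas.'' For the forward direction of the corollary, suppose $\Res(k)$ is weakly automatable on $\Reff$ formulas, witnessed by a system $Q$ automatable on $\Reff$ formulas with $Q \geq \Res(k)$ on $\Reff$ formulas. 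Since $\Res(k) \geq \Res$, this $Q$ also simulates $\Res$ on $\Reff$ formulas, so $\Res$ is weakly automatable on $\Reff$ formulas; by \Cref{thm:aut-on-Ref-Res}(ii) $\Res$ is weakly automatable, and by \cite[Theorem 8]{AB04} so is $\Res(k)$.

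For the converse, suppose $\Res(k)$ is weakly automatable, hence by \cite[Theorem 8]{AB04} $\Res$ is weakly automatable, hence by \Cref{thm:aut-on-Ref-Res}(ii) $\Res$ is weakly automatable on $\Reff$ formulas. It then suffices to observe that weak automatability on $\Reff$ formulas is upward closed along simulations in the relevant sense: if $\Res$ is weakly automatable on $\Reff$ formulas via a system $Q$ that is automatable on $\Reff$ formulas and simulates $\Res$ on $\Reff$ formulas, then — because $\Res(k)$-refutations of a formula $\psi$ translate into $\Res$-refutations of $\psi[k]$ of size $O(k\cdot s)$, and $\Reff$ formulas for $\Res(k)$ can be handled by passing to the $[k]$-extended version — one can compose the $\psi \mapsto \psi[k]$ map with the automating algorithm for $Q$, obtaining a system that is automatable on $\Reff$ formulas and simulates $\Res(k)$ on $\Reff$ formulas. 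Alternatively, and more cleanly, I would simply note that the statement to be proved is a conjunction of equivalences that all factor through the single statement ``$\Res$ is weakly automatable'' via \Cref{thm:aut-on-Ref-Res}(ii) on one side and \cite[Theorem 8]{AB04} on the other, so the corollary is immediate by transitivity of ``$\iff$''.

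The main obstacle — really the only subtlety — is making sure the notion of ``weakly automatable on $\Reff$ formulas'' for $\Res(k)$ is interpreted consistently: here the $\Reff$ formulas in question are still the ones from \Cref{def:Reff-formulas} talking about \emph{Resolution} refutations (not $\Res(k)$ refutations), exactly as in \Cref{thm:aut-on-Ref-Res}, so no new reflection machinery is needed and the $\psi \mapsto \psi[k]$ translation is a red herring: the corollary follows purely by chaining \Cref{thm:aut-on-Ref-Res}(ii) with \cite[Theorem 8]{AB04}. I would therefore keep the proof to two sentences, explicitly invoking both results.
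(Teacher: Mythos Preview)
Your proposal is correct and matches the paper's approach: the paper's proof is a single sentence invoking \cite[Theorem 8]{AB04} together with \Cref{thm:aut-on-Ref-Res}(ii), which is exactly the chain you identify. Your detour through $\psi \mapsto \psi[k]$ is indeed a red herring (as you note), and your ``converse'' paragraph is overcomplicated since the implication from full weak automatability to weak automatability on $\Reff$ formulas is trivial by restriction; the clean two-sentence version you propose at the end is the right one.
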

    
    \subsection{Unprovability of Resolution lower bounds}
\label{subsec:unprovability}

The power of \Cref{thm:fragment-characterization} can be applied to study the provability of Resolution lower bounds. In what follows, we will say that a formula $\neg \Reff_s(\varphi)$ constitutes a \emph{true Resolution lower bound} if $\varphi$ is unsatisfiable and does not have Resolution refutations of size $s$. Could there be a proof system that is polynomially bounded on all such formulas? And, in the first-order setting, is there perhaps a theory of arithmetic capable of proving all true Resolution lower bounds?

\subsubsection{Propositional unprovability}
We show that this is not the case: assuming $S$ is strong enough and not polynomially bounded, there will be true Resolution lower bounds that $S$ cannot derive in polynomial size.

We first exemplify this by giving explicit exponential lower bounds on $\Reff$ formulas for bounded-depth Frege. The following is a formal restatement of \Cref{cor:PHP-Ref-ACO-informal}.

\begin{theorem}[Hard $\Reff$ formulas for bounded-depth Frege]
    For every $d \leq O(\log n / \log \log n)$, there are polynomials $p(n)$ and $q(n)$ such that the formulas in the sequence $ \{ \neg \Reff_{p(n)} (\Reff_{q(n)}(\PHP_n)) \}_{n\in \bbN} $ are all tautological but require size $\exp(\Omega({n^{1/(4d-2)}}))$ to be proven in depth-$d$ Frege systems.
\end{theorem}

\begin{proof}
    The formula $\PHP_n$ has $N=\Theta(n^2)$ variables and $\Theta(n^3)$ clauses. Consider first the formulas $\Reff_{q(n)}(\PHP_n)$. Here, to apply the original bound on $\Reff$ formulas (\Cref{thm:AM-original}), $q(n)$ has to be at least a square in the number of variables of $\PHP_n$, so we define $q(n) \coloneq N^2 =\Theta(n^4)$. 
    Let the size parameter $p(n)$ of the outer $\Reff$ formula be the polynomial $p$ from \Cref{lemma:UB-consequence-basic} applied to $N$ and $q(n)$.
    That the formulas $\Reff_{p(n)}(\Reff_{q(n)}(\PHP_n))$ are tautologies follows form the fact that $\PHP_n$ is unsatisfiable, meaning that the lower bound on $\Reff$ formulas guarantees that $\Reff_{q(n)}(\PHP_n)$ is exponentially hard for Resolution and hence the second nesting of $\Reff$ remains unsatisfiable. The statement then follows from \Cref{lemma:UB-consequence-basic}: if depth-$d$ Frege refutes $\Reff_{p(n)}(\Reff_{q(n)}(\PHP_n))$ in size $\ell$, then depth-$d$ Frege refutes $\PHP_n[2]$ in size $\poly(n,\ell)$. Substituting these $2$-extension axioms increases the depth of every line in the proof by at most one, giving a depth-$(d+1)$ Frege refutation of  $\PHP_n$ in size $\poly(n,\ell)$, but, by \Cref{thm:Hastad-PHP}, depth-$(d+1)$ Frege requires size $\exp({\Omega(n^{1/(4d-2)})})$ to refute $\PHP_n$.
\end{proof}

There is nothing special about the pigeonhole principle in the argument above, and we can generalize this to any reasonably strong system where some lower bound is known. The following is a formal restatement of \Cref{thm:prop-unprovable-informal}.

\begin{theorem}[Propositional unprovability of Resolution lower bounds]
    \label{cor:prop-unprovability}
    For every propositional proof system $S \geq \Res$ implicationally simulating Resolution and closed under contrapositions and clause substitutions, if $S$ is not polynomially bounded, there exists a family of unsatisfiable CNF formulas $\{\psi_n\}_{n\in\bbN}$ on $N = \poly(n)$ variables and of size $|\psi_n| = \poly(n)$ such that
    \begin{enumerate}
        \item[(i)] they require size at least $2^{N^{\Omega(1)}}$ to be refuted in Resolution;
        \item [(ii)] there is a polynomial $p(n)$ such that the formulas $\{ \neg \Reff_{p(n)}(\psi_n)\}_{n\in\bbN}$ are tautological but do not have polynomial-size proofs in $S$.
    \end{enumerate}
\end{theorem}

\begin{proof}
    Since $S$ is not polynomially bounded, there exists a sequence of unsatisfiable 3-CNF formulas $\{ \varphi_n \}_{n\in\bbN}$, each over $n$ variables, that cannot be refuted by $S$ in polynomial size. This means that the formulas of the form $\Reff_{n^2}(\varphi_n)$ are unsatisfiable, or else the satisfying assignment would be a correct Resolution refutation which could be turned into a correct refutation in $S$. By the lower bound on $\Reff$ formulas (\Cref{thm:AM-original}), $\Reff_{n^2}(\varphi_n)$ requires size $2^{\Omega(n)}$ to be refuted in Resolution, meaning that $\Reff_{p(n)}(\Reff_{n^2}(\varphi_n))$ is unsatisfiable for every polynomial $p(n)$. If we choose $p$ to be the polynomial in \Cref{lemma:UB-consequence-basic}, it now follows from this same lemma that $S$ cannot possibly have polynomial-size proofs of the family $\{\neg \Reff_{p(n)}(\Reff_{n^2}(\varphi_n))\}_{n \in \bbN}$, or else there would be polynomial-size $S$-proofs of $\{ \neg \varphi_n[2]\}_{n\in\bbN}$. Since $S$ is closed under clause substitutions, we could substitute the $2$-extensions of $\neg\varphi_n[2]$ and get polynomial-size proofs of $\{ \neg \varphi_n\}_{n\in\bbN}$, a contradiction.
    
    The formula family $\{ \psi_n\}_{n\in\bbN}$ in the statement is then precisely given by $\psi_n \coloneq \Reff_{n^2}(\varphi_n)$. By \Cref{rem:num_of_vars} the formula $\psi_n$ has $N \coloneq \poly(n)$ variables and polynomial size, which yields the rest of the parameters in the statement. \qedhere
\end{proof}

The informal statement in \Cref{thm:prop-unprovable-informal} is obtained in particular by taking $p(n)$ to be $N^2$.

\begin{remark}
    Since every proof system $S$ can be p-simulated by $\EF + \Reflf^S$, and the latter is closed under \emph{modus ponens} and clause substituions, that means that the previous statement applies to every proof system $S$. This implies that, unless $\NP = \coNP$, no proof systems can efficiently derive all true Resolution lower bounds. We note, however, that this follows already from the classical work of \citeauthor{Iwama97} \cite{Iwama97}, who showed that the Proof Size Problem for Resolution ($\PSP_\Res$) is $\NP$-complete. His many-one reduction mapped a 3-CNF formula $\varphi$ over $n$ variables to a new formula $\Phi$ over $O(n \log n)$ variables, such that $\varphi$ is satisfiable if and only if $\Phi$ has a Resolution refutation of size $S(n) = O(n^3\log n)$. In the case when $\varphi$ is unsatisfiable, however, Iwama only proves a lower bound of $S(n) + g(n)$, for some function $g(n) = O(S(n)/n)$. This suffices for his purposes, but it is certainly not enough to get the superpolynomial gap needed for hardness of automatability obtained in~\cite{AM20}. In our statement the parameters achieve almost optimal range: the formulas are exponentially hard, but a slightly subquadratic lower bound is not efficiently provable.
\end{remark}
    \subsubsection{First-order unprovability of Resolution lower bounds}

Our main result here is that a first-order version of the propositional unprovability results above hold  unconditionally in the context of strong enough theories of arithmetic. The formal statement and proof follow.

\begin{theorem}
    \label{thm:fo-unprovable}
    Let $T$ be a consistent first-order extending $\SOT$ with a set of axioms recognizable in polynomial time and admitting $\bbN$ as a model. Then, there exits a sequence of unsatisfiable propositional formulas $\{ \psi_{k, s} \}_{{k, s} \in \bbN}$ described uniformly by a polynomial-time algorithm given $k$ and $s$ in unary, where $\psi_{k,s}$  has $N = \poly(k, s)$ variables, and such that
    \begin{enumerate} \itemsep=0pt
        \item[(i)] Resolution refutations of the formula $\psi_{k, s}$ require size $2^{N^{\Omega(1)}}$;
        \item[(ii)] there are constants $c >0$ and $N_0 \in \bbN$ such that the lower bound expressed by the first-order sentence $\forall k \forall s\forall \pi \left(  N > N_0 \land \operatorname{Ref}_{\Res}(\psi_{k, s}, \pi) \to |\pi| >  N^{c} \right)$ is independent of $T$.
    \end{enumerate}
\end{theorem}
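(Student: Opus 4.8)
The plan is to mimic the propositional unprovability result (\Cref{cor:prop-unprovability}) but to replace the \say{hard formula for $S$} with an unconditionally unprovable first-order sentence coming from Gödel's second incompleteness theorem, channelled through the strong proof system $P_T$ of $T$ (see \Cref{subsec:prelim-strong-proof-system-of-T}). First I would fix the sequence of formulas: let $\operatorname{Con}_T^{(s)}$ be the natural propositional formula asserting that there is no $T$-proof of $0=1$ of length $s$; equivalently, work with the soundness statements $\neg\Reflf^{P_T}_{k,m,s}$ of the strong proof system $P_T$ (\Cref{def:refl-formulas}), which encode that $P_T$-proofs of size $s$ of formulas with $k$ variables and $m$ clauses are sound. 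By the second fact on strong proof systems in \Cref{subsec:prelim-strong-proof-system-of-T}, $T$ does not prove the soundness of $P_T$, and since $T$ extends $\SOT$ which corresponds to a subsystem of $\EF$ and $\SOT$-provable $\forall\Sigma^b_1$ statements translate to polynomial-size $\EF$-proofs, a polynomial-size $P_T$-proof of $\neg\Reflf^{P_T}_{k,m,s}$ would, after translation, give $T$ a proof of its own reflection — contradiction.

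The core of the argument is then the standard two-step reduction, exactly as in \Cref{cor:prop-unprovability}, but tracked on the first-order side. Define $\chi_{k,s}$ to be an unsatisfiable CNF encoding (via the usual Tseitin transformation) of the negation of $\Reflf^{P_T}_{k,m,s}$ — concretely $\chi_{k,s} \coloneq \Satf(\varphi,\alpha)\wedge\Reff^{P_T}(\varphi,\pi)$ with the parameters bundled so that it is a single polynomial-size CNF family described uniformly in unary $k,s$. Since $T$ (which contains $\SOT$ and hence, via Cook's translation, corresponds to a system at least as strong as $\EF$, and $P_T \psim \EF$) cannot refute $\chi_{k,s}$ in polynomial size — this is precisely the statement that $T$ does not prove the reflection principle for $P_T$ — we are in the situation of a family of CNFs hard for $P_T$. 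Now set $\psi_{k,s}\coloneq \Reff_{N^2}(\chi_{k,s})$ where $N=\operatorname{poly}(k,s)$ is the number of variables of $\chi_{k,s}$. Since $\chi_{k,s}$ is unsatisfiable (its satisfying assignment would be a $P_T$-refutation of a true formula's negation... more carefully: $\chi_{k,s}$ is unsatisfiable because $P_T$ is sound in the real world), and $P_T$ has no short refutation of it, the formula $\psi_{k,s}$ is unsatisfiable, and by \Cref{thm:AM-original} it requires Resolution size $2^{\Omega(N)}=2^{N^{\Omega(1)}}$, giving (i). For (ii), apply \Cref{lemma:UB-consequence-basic} in its first-order-provable form (the upper bound \Cref{thm:ub-in-Res} is provable in Resolution and hence in $T$): if $T$ proved $\forall k\forall s\forall\pi(N>N_0\wedge\operatorname{Ref}_{\Res}(\Reff_{p(N)}(\psi_{k,s}),\pi)\to|\pi|>N^{c})$ — i.e. $T$ proves the $\Reff(\Reff(\chi))$ lower bound — then by the formalized contraposition of Pudlák's upper bound, $T$ proves $\neg\chi_{k,s}[2]$, hence $\neg\chi_{k,s}$, hence (unpacking the Tseitin encoding and the definition of $\chi$) $T$ proves $\Reflf^{P_T}$, i.e. the soundness of its own strong proof system, contradicting Gödel. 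The quantitative statement in (ii) refers to the lower bound on $\psi_{k,s}=\Reff_{N^2}(\chi_{k,s})$ itself at a polynomial threshold $N^{c}$; one shows that even a $\Omega(N^c)$ lower bound on $\psi_{k,s}$ cannot be proven, by choosing the encoding so that the upper-bound-to-small-refutation argument of \Cref{lemma:UB-consequence-basic} still goes through at this polynomial scale (this is the remark after \Cref{thm:fo-unprovable-informal}: tweaking the encoding of $\psi_{k,s}$ lets one push $c$ to any fixed constant).

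The step I expect to be the main obstacle is the careful bookkeeping of \emph{which} lower bound statement is unprovable and at \emph{what} threshold. The clean contradiction \say{$T\vdash$ (some $\Reff$ lower bound) $\Rightarrow T\vdash \operatorname{Con}_T$} requires that the formalized version of \Cref{lemma:UB-consequence-basic} — namely that a short refutation of $\Reff_{p(N)}(\Reff_{N^2}(\chi))$ yields a short refutation of $\chi[2]$, and that this implication is provable in a weak theory — be stated as a genuine $T$-provable implication between the relevant first-order $\Reff$/$\Reflf$ sentences, not just as a meta-theoretic fact about proof sizes. Concretely one must check: (a) the propositional construction $P$ of \Cref{subsec:Pudlak-as-ckt} and its Resolution-correctness proof (\Cref{thm:ub-in-Res}) translate into $\SOT$-provable (hence $T$-provable) statements under Cook's translation, so that the reasoning \say{if $\Reff(\Reff(\chi))$ has a small refutation then $\chi$ is refutable in small size} is available inside $T$; (b) the arithmetization is set up so that \say{$\chi_{k,s}$ refutable in polynomial size (in Resolution, hence in $P_T$, hence in $T$)} implies \say{$T$ proves $\operatorname{Ref}_T(\ulcorner 0=1\urcorner,\cdot)$ is empty at the relevant length}, which by a reflection/provable-$\Sigma_1$-completeness argument contradicts Gödel; and (c) the polynomial slack between \say{$N^2$} (needed for \Cref{thm:AM-original}) and \say{$p(N)$} (needed for \Cref{lemma:UB-consequence-basic}) is managed so the two applications compose, and the final unprovable threshold $N^c$ is extracted honestly. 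The rest — uniformity of the family $\{\psi_{k,s}\}$, the exponential Resolution lower bound (i) via \Cref{thm:AM-original}, and the observation that $\bbN$ being a model of $T$ guarantees soundness so that $\chi_{k,s}$ is genuinely unsatisfiable — is routine.
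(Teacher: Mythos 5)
Your construction of the formula family is the same as the paper's: you take $\chi_{k,s}$ to be the reflection principle $\Reflf^{P_T}_{k,s} = \Satf(\varphi,\alpha)\land\Reff^{P_T}(\varphi,\pi)$ of the strong proof system of $T$, and set $\psi_{k,s} = \Reff_{n^2}(\chi_{k,s})$. Your arguments that the $\psi_{k,s}$ are unsatisfiable (soundness of $P_T$ in $\bbN$, plus the fact that Resolution does not simulate $P_T$) and that (i) follows from \Cref{thm:AM-original} also match.

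Where you diverge is in the unprovability step (ii), and this is a genuine difference of route. The paper's argument is \emph{model-theoretic}: by Gödel, $T$ has a model $M$ in which the reflection of $P_T$ fails, so some nonstandard $\varphi_0,\tau_0,\alpha_0 \in M$ witness both $\operatorname{Ref}_{P_T}(\varphi_0,\tau_0)$ and $\foSatf(\varphi_0,\alpha_0)$; because Pudlák's upper bound is formalized in $\SOT\subseteq T$ and therefore holds in $M$, one can apply it \emph{inside} $M$ to the nonstandard satisfying assignment $\alpha_0$ and directly produce a refutation of $\psi_{k_0,s_0}$ of size $\le c_P(8n^5+n^6)\le N^{O(1)}$, falsifying the universal lower-bound sentence in $M$. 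Your argument is the \emph{proof-theoretic} dual: assume $T\vdash$ (lower bound on $\psi_{k,s}$), compose with the $T$-provable contrapositive of the first-order Pudlák upper bound to derive $T\vdash \forall k\,\forall s\,(N>N_0 \to \chi_{k,s}\text{ unsat})$, patch the finitely many small cases by brute force, and conclude $T$ proves its own reflection, contradicting the second incompleteness theorem. Both routes are correct and, by completeness, logically equivalent; they differ in where the technical burden sits. The model-theoretic route is self-contained once you have the $\SOT$-formalized upper bound stated as a first-order sentence that transfers to $M$; you just plug in a single nonstandard instance and read off the size bound. Your proof-theoretic route additionally requires you to carry the implication chain out \emph{inside} $T$, track that the threshold $q(N)\le N^c$ is $T$-provably compatible with the assumed lower bound, and handle the $N\le N_0$ corner cases explicitly — exactly the bookkeeping you flag as the main obstacle. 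One simplification you can make: your detour through $\chi_{k,s}[2]$ via \Cref{lemma:UB-consequence-basic} is unnecessary here; in the first-order setting you can use the direct contrapositive of \cref{eq:pudlak-fo}-style formulation of \Cref{thm:ub-in-Res} — \say{$\foSatf(\varphi,\alpha)$ implies a refutation of $\Reff_{n^2}(\varphi)$ of size $\le q(n)$ exists} — without ever passing through the $2$-extended encoding, which is what the paper does. Finally, note that the paper also explicitly records that the lower-bound sentence is \emph{true in $\bbN$} (hence not refutable), completing the independence claim; your writeup concentrates on unprovability, and you should make the unrefutability side explicit even though it is immediate from (i) and $\bbN\models T$.
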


\begin{proof}
    Given the theory $T$, consider the strong proof system of $T$, denoted $P_T$, as defined in \Cref{subsec:prelim-strong-proof-system-of-T}. Since $T$ is polynomial-time axiomatizable, the system $P_T$ is a well-defined Cook--Reckhow proof system. For convenience, we see $P_T$ as a refutation system, and we consider the propositional formulas encoding its reflection principle $\neg\Reflf^{P_T}_{k, s} \coloneq \Reff^{P_T}_{s}(\varphi, \tau) \land \Satf_{k}(\varphi, \alpha)$, as defined in \Cref{def:refl-formulas}, where $k$ denotes the number of variables of $\varphi$ and $s$ is the size of the proofs considered. We assume for convenience that $\varphi$ is always a 3-CNF formula and has hence $O(k^3)$ clauses. Let us also assume that the formula $\neg\Reflf^{P_T}_{k, s}$ itself is written as a 3-CNF formula and consists of $n\coloneq \poly(k, s)$ variables. Consider now the propositional formula $\psi_{k, s} \coloneq \Reff_{n^2}(\neg\Reflf^{P_T}_{k,s})$ stating that there is a size-$n^2$ Resolution refutation of $\neg\Reflf^{P_T}_{k, s}$. It is easy to verify following \Cref{rem:num_of_vars} that  $\psi_{k, s}$ has $N \coloneq \poly(n)$ variables and size $\poly(n)$.
    
    Observe, first, that the sequence $\{ \psi_{k, s} \}_{{k, s}\in\mathbb{N}}$ can be generated uniformly in polynomial time given $k$ and $s$ in unary. Next, note that all the formulas $\psi_{k, s}$ are unsatisfiable: otherwise, there would be size-$n^2$ Resolution refutations of the reflection principle of $P_T$, and this would amount to Resolution polynomially simulating $P_T$. In more detail, suppose $F$ is a 3-CNF formula over $k$ variables with a size-$s$ refutation in $P_T$. If Resolution can refute $\neg\Reflf^{P_T}_{k, s}(\varphi, \tau, \alpha)$ in size $n^2$, then by restricting $\varphi$ by $F$ and $\tau$ by the size-$s$ $P_T$-refutation in question, we get a size-$n^2$ refutation of $\Satf(F, \alpha)$. Resolution can then refute $F$ in its usual \say{native} encoding via \Cref{prop:sat-to-native-Res}, so we get a Resolution refutation of $F$ in size $n^{O(1)} = \poly(k, s)$, concluding $\Res \geq P_T$. This is a contradiction, because since $T \supseteq \SOT$ we know that $P_T$ simulates Extended Frege \cite[Section 4.2, Fact 2]{Pudlak20}, and it is well-known that Resolution is strictly weaker than $\EF$.

    Finally, because all of the formulas $\neg\Reflf^{P_T}_{k, s}$ are unsatisfiable, the lower bound on $\Reff$ formulas (\Cref{thm:AM-original}) guarantees that for large enough $n = \poly(k, s)$, the formula $\psi_{k, s} = \Reff_{n^2}(\neg\Reflf^{P_T}_{k, s})$ requires size $2^{\Omega(n^2/n)} = 2^{\Omega(n)}$ to be refuted in Resolution. Since $\psi_{k, s}$ has $N = \poly(n)$ variables, the size lower bound in terms of $N$ is $2^{N^{\Omega(1)}}$. In particular, that means that for every $c >0$, there is some $N_0 \in \bbN$ such that the much weaker lower bound statement $\forall k \forall s \forall \pi \left( N > N_0 \land |\pi| \leq N^c \to  \neg\foReff(\psi_{k, s}, \pi)\right)$ holds in the standard model $\bbN$, which is a model of $T$. Thus the lower bound is consistent with $T$.

    We now provide a model where the sentence fails. By Gödel's second incompleteness theorem, $T$ cannot prove the soundness of $P_T$ \cite[Section 4.2, Fact 3]{Pudlak20}. Thus, there is a model $M$ of $T$ where the reflection principle of $P_T$ fails. We claim that there is $c > 0$ and a suitable $N_0 \in \bbN$ such that the sentence $\forall k \forall s \forall \pi \left(N > N_0 \land |\pi| \leq  N^{c} \to  \neg\foReff(\psi_{k, s}, \pi)\right)$ fails in this model as well. Indeed, since the reflection of $P_T$ fails in $M$, there exist some nonstandard 3-CNF formula $\varphi_0 \in M \setminus \bbN$, a $P_T$-refutation $\tau_0 \in M \setminus \bbN$ and $\alpha_0\in M \setminus \bbN$  such that $M \models \operatorname{Ref}_{P_T}(\varphi_0, \tau_0) \land \foSatf(\varphi_0, \alpha_0)$. This formula $\varphi_0$ has some nonstandard number of variables $k_0 \in M \setminus \bbN$, and similarly $\tau_0$ has some length $s_0 \coloneq |\tau_0|\in M \setminus \bbN$. We then have that in the model $M$, the formula $\neg\Reflf^{P_T}_{k_0, s_0}(\varphi, \tau, \alpha)$ is satisfiable.
    
    Now, since $M$ is a model of $T$ and $T$ extends $\SOT$, by \Cref{thm:ub-in-Res} we have that the upper bound on $\Reff$ formulas holds in $M$. That is,
    \begin{equation}
    \label{eq:pudlak-fo}
        \begin{split}
        M \models \forall \varphi \forall n \forall m \forall \alpha \forall t \Bigr( \operatorname{CNF}(\varphi, n, m) &\land \foSatf(\varphi, \alpha) \\ &\to \exists \pi \left( |\pi| \leq c_P \cdot t(m+tn^2) \land \operatorname{Ref}_{\Res}( \Reff_t ( \varphi ), \pi)\right) \Bigr)
        \,,
        \end{split}
    \end{equation}
    where $c_P$ is the constant in the Big-Theta term $\Theta(t(m+tn^2))$ bounding the size of upper bound construction in \Cref{thm:ub-in-Res}. (Technically speaking \Cref{thm:ub-in-Res} proved the previous sentence in Resolution, but a simple inspection of the proof reveals that the construction is perfectly uniform and can be immediately formalized in $\SOT$ ---and, for that matter, even in much weaker theories.)
    
    Hence, substituting $\varphi$ for the 3-CNF formula $\neg\Reflf^{P_T}_{k_0, s_0}$ on $n$ variables and $m \leq 8n^3$ clauses, and $t$ for $n^2$, we get
    \begin{equation}
    M \models \exists \pi \left( |\pi| \leq c_P \cdot (8n^5 + n^6) \land \operatorname{Ref}_{\Res}( \psi_{k_0, s_0}, \pi)\right)\,.
        \end{equation}
    Since the number $N$ of variables of $\psi_{k_0, s_0}$ is $N = n^{\Theta(1)}$, the bound on $|\pi|$ is $c_P \cdot (8n^5 + n^6)  \leq N^{O(1)}$ for large enough $N$, it follows that there is $c > 0$ and $N_0 \in \bbN$ such that the first-order sentence $\forall k \forall s \forall \pi \left(N > N_0 \land |\pi| \leq  N^{c} \to  \neg\foReff(\psi_{k, s}, \pi)\right)$ fails in $M$.
\end{proof}

With a couple of extra tricks, we can extend the result to any theory $T$ containing just Robinson Arithmetic rather than all of $\SOT$. We remark as well that there is nothing essential about $T$ being polynomial-time axiomatizable, and one could generalize this to any recursively enumerable extension of $\mathsf{Q}$ via a padding trick \cite{Craig53}. The following is a formal version of \Cref{thm:fo-unprovable-informal}.

\begin{corollary}
    \label{cor:fo-unprovable}
        Let $T$ be a consistent first-order extending $\mathsf{Q}$ with a set of axioms recognizable in polynomial time. Then, there exits a sequence of unsatisfiable propositional formulas $\{ \psi_{k, s} \}_{{k, s} \in \bbN}$ described uniformly by a polynomial-time algorithm given $k$ and $s$ in unary, where $\psi_{k,s}$ has $N = \poly(k, s)$ variables, and such that
    \begin{enumerate} \itemsep=0pt
        \item[(i)] Resolution refutations of the formula $\psi_{k, s}$ require size $2^{N^{\Omega(1)}}$;
        \item[(ii)] there is a constant $c > 0$ and $N_0 \in \bbN$ such that the lower bound expressed by the first-order sentence $\forall k \forall s\forall \pi \left(N > N_0 \land \operatorname{Ref}_{\Res}(\psi_{k, s}, \pi) \to |\pi| >  N^{c} \right)$ is unprovable in $T$, but true in $\bbN$.
    \end{enumerate}
\end{corollary}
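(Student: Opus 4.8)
The plan is to reduce \Cref{cor:fo-unprovable} to \Cref{thm:fo-unprovable} by means of a padding argument that converts a theory $T \supseteq \mathsf{Q}$ into a theory that contains $\SOT$ while preserving consistency, $\bbN$-soundness, and polynomial-time axiomatizability. Concretely, I would let $T' := T + \SOT$, i.e.\ the theory axiomatized by the union of the axioms of $T$ and the finite list of $\operatorname{BASIC}$ axioms plus the $\Sigma^b_1$-$\textsc{PInd}$ scheme (equivalently, one may work in the common definitional extension where the language is enlarged by $|\cdot|$, $\lfloor\cdot/2\rfloor$, $\#$ and, for the $\PV$-function version, symbols for all polynomial-time functions). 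Since $\SOT$ is a true theory (it holds in $\bbN$), $T'$ still admits $\bbN$ as a model, hence is consistent; and since the axioms of $\SOT$ are recognizable in polynomial time, so are those of $T'$. Thus \Cref{thm:fo-unprovable} applies to $T'$, yielding the family $\{\psi_{k,s}\}$ and the sentence $\forall k\forall s\forall\pi(N>N_0\wedge \operatorname{Ref}_{\Res}(\psi_{k,s},\pi)\to|\pi|>N^c)$, which is independent of $T'$ and true in $\bbN$. Because $T\subseteq T'$, anything unprovable in $T'$ is a fortiori unprovable in $T$, so the same sentence is unprovable in $T$; and properties (i) and the $\bbN$-truth in (ii) are stated purely about Resolution and the standard model, so they transfer verbatim.

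The first step, then, is simply to observe that $T' = T + \SOT$ inherits consistency, $\bbN$-soundness and polynomial-time axiomatizability, and to invoke \Cref{thm:fo-unprovable} on $T'$. The second step is to transfer unprovability downward along $T \subseteq T'$: if $T$ proved the lower-bound sentence, so would $T'$, contradicting its independence from $T'$. The third, mostly bookkeeping, step is to check that the conclusion we obtain is literally the statement of \Cref{cor:fo-unprovable}: the formulas $\psi_{k,s}$ are the same, $N = \poly(k,s)$, the Resolution size lower bound $2^{N^{\Omega(1)}}$ is unchanged, and the sentence in (ii) is exactly the one produced by \Cref{thm:fo-unprovable}. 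One subtlety to address explicitly is the language: $T$ is formulated in $\mathcal{L}_{\PA}$ (or whatever language extends that of $\mathsf{Q}$), whereas $\SOT$ lives in $\mathcal{L}_{\mathrm{BA}}$ (and we often use $\mathcal{L}_{\mathrm{BA}}(\PV)$); I would handle this by passing to the definitional extension, noting that $|\cdot|$, $\lfloor\cdot/2\rfloor$ and $\#$ are $\Delta_0$-definable over $\mathsf{Q}$ with provably total graphs in $\SOT$, so adding them (and the $\PV$-symbols) is conservative and does not affect consistency, axiomatizability, or the class of sentences in the language of the original $T$. The arithmetization $\operatorname{Ref}_{\Res}$, $\foReff$, etc.\ can be taken in this common language, matching the usage in \Cref{thm:fo-unprovable}.

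I would also remark, as the excerpt does, that the hypothesis ``polynomial-time axiomatizable'' can be weakened to ``recursively enumerable'' by a padding trick in the style of Craig \cite{Craig53}: replace each r.e.\ axiom $\varphi$ by the sequence $\varphi \wedge (0=0) \wedge \cdots \wedge (0=0)$ with a number of trivial conjuncts equal to the stage at which $\varphi$ is enumerated, obtaining a logically equivalent but polynomial-time decidable axiom set. This remark is orthogonal to the main argument and can be stated in one sentence after the proof.

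The main obstacle here is not conceptual depth but rather making sure the ``lift to $\SOT$'' step is airtight: one must confirm that every ingredient of \Cref{thm:fo-unprovable} that was stated for theories extending $\SOT$ genuinely only needs $\SOT$ and not, say, some amount of $\mathsf{S}^1_2$-strength beyond what the definitional extension provides, and that the formula classes and arithmetizations line up across the language change. In particular, the proof of \Cref{thm:fo-unprovable} uses the formalization of Pudlák's upper bound in $\SOT$ (\Cref{thm:ub-in-Res} and the comment following it) and the facts about $P_T$ from \cite{Pudlak20}; I would double-check that these go through over $T' = T + \SOT$ exactly as over a theory explicitly assumed to extend $\SOT$, which they do since $T'$ \emph{does} extend $\SOT$. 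Once that is verified, the corollary follows with essentially no further work.
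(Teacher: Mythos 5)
Your proof has a genuine gap. The hypothesis of \Cref{cor:fo-unprovable} only asks that $T$ be a \emph{consistent} polynomial-time axiomatized extension of $\mathsf{Q}$; unlike \Cref{thm:fo-unprovable}, it does \emph{not} assume that $\bbN \models T$. This matters because your plan takes $T' := T + \SOT$ and claims "since $\SOT$ is a true theory, $T'$ still admits $\bbN$ as a model, hence is consistent." That inference is only valid if $\bbN \models T$ to begin with, which is exactly the hypothesis the corollary drops. A consistent $T \supseteq \mathsf{Q}$ need not be sound: it may have axioms that are false in $\bbN$ (e.g.\ $\mathsf{Q} + \lnot\Con_{\mathsf{Q}}$), and indeed it may even refute theorems of $\SOT$, so $T + \SOT$ can fail to be consistent, let alone sound. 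In either case \Cref{thm:fo-unprovable} does not apply to $T'$, and the argument collapses at step one.

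The paper avoids this by defining $T' := \SOT + \PORefl_T$ (the $\Pi_1$-reflection schema for $T$ added to $\SOT$), \emph{not} $T + \SOT$. This $T'$ is sound because $\bbN \models \SOT$ and, since $T$ is consistent and extends the $\Sigma_1$-complete theory $\mathsf{Q}$, every $\Pi_1$-sentence provable in $T$ is true in $\bbN$, so $\PORefl_T$ holds in $\bbN$ as well. The downward transfer of unprovability is then slightly more delicate than your "$T \subseteq T'$" step (which is not available, as $T'$ does not contain the axioms of $T$): one argues instead that for $\Pi_1$-sentences $\sigma$, if $T \vdash \sigma$ then $\SOT$ verifies the existence of a $T$-proof of $\sigma$ by $\Sigma_1$-completeness, and then the $\PORefl_T$ axiom instance $\exists\pi\,\foPrff_T(\ulcorner\sigma\urcorner,\pi) \to \sigma$ yields $T' \vdash \sigma$; contraposing, unprovability in $T'$ gives unprovability in $T$. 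Your bookkeeping remarks about language extensions and the Craig padding trick are fine, but they do not affect the main issue: the construction of $T'$ itself must produce a sound theory from a merely consistent one, which $T + \SOT$ does not do.
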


\begin{proof}
    Define $T' \coloneq \SOT + \PORefl_T$. Observe that $T'$ satisfies all the conditions of \Cref{thm:fo-unprovable}. First, the theory is consistent because $\bbN \models T'$. This is because $\bbN \models \SOT$ and, as $T$ is consistent and extends $\mathsf{Q}$, which refutes every false $\Pi_1$-sentence, $\PORefl_T$ is a true sentence. Second, $T'$ is clearly recursively axiomatizable because $\SOT$ is, and in particular the axioms can be recognized in polynomial time. Hence, \Cref{thm:fo-unprovable} guarantees a sequence of unsatisfiable propositional formulas $\{ \psi_{k, s} \}_{k, s\in \bbN}$ with the desired properties. In particular, the statement $\forall k \forall s\forall \pi \left(N > N_0 \land |\pi| \leq N^{c} \to  \neg\operatorname{Ref}_{\Res}(\psi_{k, s}, \pi)\right)$ is unprovable in $T'$, where $c$ and $N_0$ are again given by \Cref{thm:fo-unprovable}. On the one hand, this means the sentence is not provable in $T'$. Thus it cannot be proven in $T$ either, since, by construction, $T' = \SOT + \PORefl_T$ is $\Pi_1$-conservative over $T$
    and the sentence is a $\Pi_1$-sentence. On the other hand,
    as argued in the proof of \Cref{thm:fo-unprovable}, the lower bound statement is true in $\bbN$ by the lower bound on $\Reff$ formulas (\Cref{thm:AM-original}).
\end{proof}

\section*{Acknowledgements}
\label{sec:acknowledgements}
\addcontentsline{toc}{section}{\nameref{sec:acknowledgements}}

We are indebted to Ján Pich for insightful initial discussions on the problem and particularly for the idea behind the proof of \Cref{thm:np-hardness-informal}. A sketch of his original idea for the proof, based on a reduction from MCSP, is outlined in Section B.2 of the conference version of this paper \cite{AAdRK25-FOCS}.
We thank Emil \citeauthor{JerabekStack} \cite{JerabekStack} for the technical insight on how to simulate contraposition inferences in Resolution, implicit in the proof of \Cref{thm:ub-in-Res}.
We also thank Jonas Conneryd for useful comments and careful proofreading of a draft of this work, and Anupam Das, Stefan Grosser, Antonina Kolokolova, Jan Krajíček, Jakob Nordström, Kilian Risse, and Rahul Santhanam, as well as the anonymous FOCS reviewers, for helpful comments and suggestions. 

This collaboration started at the
Proof Complexity and Beyond workshop at Mathematisches Forschungs\-institut Oberwolfach, in March 2024.
Part of this work was carried out during the Proof Complexity Workshop at the University of Oxford in September of 2024 and during the \emph{Kaleidoscope de la complexité} spring school in April 2025 at the Centre International de Rencontres Mathématiques (CIRM) in Marseille, France. We also gratefully acknowledge that we have benefited greatly from being part of the Basic Algorithms Research Centre (BARC) environment financed by the Villum Investigator grant~54451.

Noel Arteche was supported by the Wallenberg AI, Autonomous Systems and Software Program (WASP) funded by the Knut and Alice Wallenberg Foundation. Albert Atserias was supported by the Spanish Research Agency through grant PID2022-138506NB-C22 (PROOFS BEYOND) and the Severo Ochoa and María de Maeztu Program for Centers and Units of Excellence in R\&D (CEX2020-001084-M). Susanna F. de Rezende received funding from the Knut and Alice Wallenberg grant \mbox{KAW 2023.0116}, ELLIIT, and the Swedish Research Council grant \mbox{2021-05104}. Erfan Khaniki was supported by the Royal Society University Research Fellowship URF$\backslash$R1$\backslash$211106 \say{Proof complexity and circuit complexity: a unified approach}.


\printbibliography[
        heading=bibintoc
]

\end{document}